\documentclass{article}

\usepackage[hidelinks,hypertexnames=false,pagebackref]{hyperref}

\renewcommand*{\backref}[1]{}
\renewcommand*{\backrefalt}[4]{$\hookrightarrow$~\ifnum#1=1{p.}\else{pp.}\fi~#2}
\usepackage[numbers,sort]{natbib}
\usepackage{enumitem}
\usepackage{geometry}
\usepackage{mathtools}
\usepackage{amsthm}
\usepackage{amsfonts}
\usepackage{amssymb}

\newcommand{\diff}{\mathop{}\!\mathrm d}
\newcommand{\dx}{\diff x}
\newcommand{\dy}{\diff y}
\newcommand{\R}{\mathbb R}
\newcommand{\N}{\mathbb N}
\newcommand{\Z}{\mathbb Z}
\newcommand{\I}{\mathbf 1}
\newcommand{\Tor}{\mathbb T}
\newcommand{\Holder}{\mathcal C}
\newcommand{\wick}[1]{\mathord{\mathopen: #1 \mathclose:}}
\newcommand{\abs}[1]{|#1|}
\newcommand{\bigabs}[1]{\left|#1\right|}
\newcommand{\norm}[1]{\|#1\|}
\newcommand{\bignorm}[1]{\left\|#1\right\|}
\newcommand{\inorm}[1]{\langle #1 \rangle}
\newcommand{\E}{\mathord{\mathbb E}}
\newcommand{\Prob}{\mathord{\mathbb P}}
\DeclareMathOperator{\Laplace}{\mathrm{\Delta}}
\DeclareMathOperator{\Hess}{Hess}

\newcommand{\ExtF}{\mathcal{\tilde E}}
\newcommand{\proj}{\mathrm P}

\newtheorem{theorem}{Theorem}[section]
\newtheorem{lemma}[theorem]{Lemma}

\newtheorem{proposition}[theorem]{Proposition}
\theoremstyle{definition}
\newtheorem{definition}[theorem]{Definition}
\newtheorem{remark}[theorem]{Remark}
\newtheorem{assumption}{Assumption}
\numberwithin{equation}{section}

\title{$L^2$ convergence rates for singular stochastic wave dynamics}
\author{Nikolay Barashkov%
\footnote{Max Planck Institute for Mathematics in the Sciences, Germany}
\and Giacomo Di Gesù%
\footnote{Department of Mathematics, Sapienza Università di Roma, Italy}
\and Petri Laarne%
\footnote{Department of Mathematics and Statistics, University of Helsinki, Finland}}
\date{}

\begin{document}

\maketitle
\begin{abstract}
\noindent
We establish quantitative hypocoercivity for singular stochastic damped wave equations arising in stochastic quantization.
Building on the finite-dimensional argument of Cao, Lu, and Wang [Arch.~Rat.~Mech.~Anal.~2023],
we derive a dimension-free criterion that replaces uniform semiconvexity by a log-Sobolev inequality together with exponential lower-tail bounds for the bottom of the Hessian spectrum. We verify this criterion for the two-dimensional $\Phi^4_2$ and sine-Gordon models and for the three-dimensional $\Phi^4_3$ model, through estimates on the lowest eigenvalue of associated singular random Schr\"odinger operators. Combined with the continuum limit established in a companion paper, this yields exponential $L^2$ convergence to equilibrium for the $\Phi^4_2$ dynamics and for sine-Gordon with $\beta^2<2\pi$. For $\Phi^4_2$, we verify that, in a weak-coupling, small-mass regime, optimizing the damping yields a kinetic relaxation rate of at least order $m$. This yields a genuine acceleration over the order-$m^2$ relaxation rate of the overdamped dynamics.
The lattice estimates are uniform in the discretization for sine-Gordon throughout $\beta^2<4\pi$, and for $\Phi^4_3$ at fixed volume, including regimes beyond those for which the continuum dynamics is presently constructed.

\medskip\noindent\textbf{MSC (2020):}
Primary 60H15, 
82C21; 
Secondary 35H10, 
81S20 

\medskip\noindent\textbf{Keywords:}
Stochastic quantization;
Convergence to equilibrium;
Hypocoercivity;
Log-Sobolev inequality;
Random Schrödinger operators
\end{abstract}

\section{Introduction}
Stochastic quantization is a method for constructing infinite-dimensional probability measures as stationary measures of stochastic evolution equations~\cite{parisi_perturbation_1981}. A central class of examples arises in Euclidean quantum field theory, where one is led to Gibbs measures of the form
\begin{equation}\label{eq:generic measure continuum}
\mu(du)
=
Z^{-1}
\exp\left(
-\int
\left(
V(u(x))
+\frac{m^2}{2}u(x)^2
+\frac12|\nabla u(x)|^2
\right)\,dx
\right)
\prod_x du(x).
\end{equation}
Here $m>0$ is the mass parameter and $V$ is the interaction potential.
The associated stochastic quantization dynamics, with stationary measure $\mu$, is the stochastic heat equation
\begin{equation}\label{eq:sdnlw continuum parabolic}
\partial_t u(x,t)+(m^2-\Delta)u(x,t)
=
-V'(u(x,t))+\sqrt{2}\,\xi(x,t),
\end{equation}
where $\xi$ is space-time white noise. 
Adjoining a velocity variable leads to the kinetic counterpart, given by the stochastic damped wave equation~\cite{ryang_canonical_1985}
\begin{equation}\label{eq:sdnlw continuum}
\partial_{tt}u(x,t)
+\gamma\partial_tu(x,t)
+(m^2-\Delta)u(x,t)
=
-V'(u(x,t))+\sqrt{2\gamma}\,\xi(x,t),
\end{equation}
where $\gamma>0$ is the damping parameter. Its invariant measure on phase space is $\mu\otimes\kappa$, where $\kappa$ is the Gaussian white noise measure in the velocity variable.

Equation~\eqref{eq:sdnlw continuum} is infinite-dimensional kinetic Langevin dynamics,
whereas \eqref{eq:sdnlw continuum parabolic} is its overdamped counterpart: after the natural rescaling of time, the latter arises from the former in the large-damping, or Smoluchowski--Kramers, limit~\cite{cerrai_smoluchowski_2006,zine_smoluchowski_2025}.
In the overdamped case, exponential relaxation in $L^2(\mu)$ is governed by the Poincaré constant $\sigma_p$ of the equilibrium measure, equivalently the spectral gap of the corresponding reversible dynamics. The situation in the kinetic case is more subtle: noise and dissipation act directly only on the velocity variable, while relaxation of the position variable is induced through its coupling with the velocity.
The resulting mechanism of convergence to equilibrium is known as hypocoercivity and is by now a classical subject; see for instance the earlier monographs \cite{helffer_nier_2005,villani_hypocoercivity_2009} and the references therein.
For a more recent overview of the extensive literature on quantitative hypocoercivity for kinetic Langevin dynamics, see e.g.\ the introductions of \cite{cao_explicit_2023}
and~\cite{lu_2026_quantitative}.

Of particular importance for us is the finite-dimensional result of
Cao, Lu, and Wang~\cite{cao_explicit_2023}, which gives $L^2$
convergence rates in terms of the Poincaré constant, the damping parameter, and quantitative control of the Hessian of the potential.
Under a semiconvexity assumption their estimates are dimension-free. Moreover, if the semiconvexity constant is bounded by a constant multiple of $\sigma_p$, as in particular in the convex case, then optimizing the damping yields a kinetic rate of order $\sqrt{\sigma_p}$, compared with the optimal overdamped rate $\sigma_p$. This represents a substantial acceleration in the small-gap regime.

The observed square-root acceleration can be viewed as an instance of a more general diffusive-to-ballistic phenomenon associated with non-reversible lifts of reversible dynamics. In particular, Eberle and L\"orler~\cite{eberle_non_2026} show that for a general class of second-order lifts, the relaxation time can improve at most by a square-root factor, with kinetic Langevin dynamics as a canonical example.

These considerations naturally raise the question whether hypocoercive acceleration persists in infinite dimensions, in particular for singular stochastic quantization models. 
In the singular parabolic setting, exponential relaxation has been established
by several complementary methods, including functional inequalities,
semigroup contraction estimates, and coupling or synchronization
arguments~\cite{bauerschmidt_logsobolev_2021,
bauerschmidt_logsobolev_2024,kunick_gradienttype_2025,
tsatsoulis_spectral_2018,duch_ergodicity_2025}.
The singular hyperbolic setting is considerably less developed. Existing results for stochastic wave equations range from unique ergodicity to exponential mixing in dual-Lipschitz or Wasserstein-type distances~\cite{forlano_2024_on,liu_2026_spectral,martirosyan_2014_exponential,seong_2024_exponential,tolomeo_2020_unique}. To our knowledge, however, quantitative $L^2$ hypocoercive estimates for singular stochastic quantization dynamics that allow a direct comparison with the overdamped dynamics have not previously been available. The main contribution of this paper is to obtain such a comparison in the singular hyperbolic setting.

Our approach is to start from the finite-dimensional argument of Cao, Lu, and Wang and derive a dimension-independent criterion for quantitative $L^2$ hypocoercivity in which semiconvexity is replaced by a log-Sobolev inequality and exponential lower-tail control of the bottom of the Hessian spectrum. This replacement is crucial for the singular models considered here, for which uniform semiconvexity is lost along the lattice approximations. The verification of the new criterion leads naturally to spectral estimates for singular random Schrödinger operators.

\subsection{Setting and main results}

We consider~\eqref{eq:sdnlw continuum} on the $d$-dimensional torus
$\Tor_L^d$ of side length $L\geq 1$, for $d=2,3$.
In two dimensions we study two models:
\begin{itemize}
\item the $\Phi^4_2$ model, corresponding to the choice
$V(u)=\lambda\wick{u^4}$ for $\lambda>0$, and
\item the sine-Gordon model, corresponding to the choice
$V(u)=\lambda\wick{\cos(\beta u)}$ for $\lambda>0$.
\end{itemize}
Here $\wick{\cdot}$ denotes Wick ordering with respect to the Gaussian
field with covariance $(m^2-\Laplace)^{-1}$;
see Appendix~\ref{sec:wick} for the relevant definitions and identities.
Throughout the paper, the sine-Gordon parameter $\beta$ is fixed and satisfies
\begin{equation}
0<\beta^2<4\pi.
\end{equation}
In three dimensions we consider the $\Phi^4_3$ model,
corresponding to the choice $V(u)=\lambda\wick{u^4}$ for $\lambda>0$,
but requiring an additional renormalization: at lattice scale $\delta$,
the interaction takes the form
$V_\delta(u)=\lambda\wick{u^4}+\lambda^2 a_\delta\wick{u^2}$,
where $a_\delta$ diverges logarithmically as $\delta\to0$.

For each of the three models, we denote by $\mu$ the corresponding
continuum Gibbs measure, whose existence is well established.

Global stationary solutions
$(u(t),\partial_t u(t))\in
C([0,T],H^{-\varepsilon}(\Tor_L^2)\times H^{-1-\varepsilon}(\Tor_L^2))$,
for every $T>0$ and $\varepsilon>0$, were constructed for the
$\Phi^4_2$ model in~\cite{gubinelli_global_2022}, and for the
sine-Gordon model with $\beta^2<2\pi$ in~\cite{oh_invariant_2021}.
For sine-Gordon, the construction of stationary solutions has
subsequently been extended slightly beyond $\beta^2=2\pi$
in~\cite{zine_hyperbolic_2025}.
Their fixed-time law is $\mu\otimes\kappa$, where $\kappa$ is
spatial white noise.
At fixed times, the corresponding Sobolev regularity fails
almost surely at $\varepsilon=0$; in particular, the position
field is distribution-valued, and the nonlinearities require
renormalization.

For the $\Phi^4_3$ model, global solutions preserving the invariant
measure have been constructed for~\eqref{eq:sdnlw continuum}
with $\gamma=0$ in~\cite{bringmann_invariant_2024}.
For $\gamma>0$, the construction of the corresponding continuum
dynamics with space-time white noise remains open.
Our results in three dimensions concern lattice approximations.

To state our main result, we first recall the log-Sobolev
inequalities established for lattice approximations of the
equilibrium measures $\mu$ of these three models
in~\cite{bauerschmidt_logsobolev_2021,bauerschmidt_logsobolev_2024}.
For $\delta>0$ with $L/\delta\in\N$, let
$\Lambda_{\delta,L}=\Tor_L^d\cap\delta\Z^d$ be the periodic lattice,
and denote by $\mu_\delta$ the corresponding lattice approximation
of $\mu$, see~\eqref{eq:generic measure} below.
We write $\Lambda_\delta=\Lambda_{\delta,L}$ when $L$ is understood.
For $\sigma>0$, the measure $\mu_\delta$ satisfies a log-Sobolev
inequality with constant $\delta^d\sigma$ if
\begin{equation}\label{eq:lsi-lattice}
\int f^2\log(f^2)\diff\mu_\delta
\leq \frac{2}{\delta^d\sigma}
\int\abs{\nabla f}^2\diff\mu_\delta
\end{equation}
for every $f\in C_b^\infty(\R^{\Lambda_\delta})$ with
$\int f^2\diff\mu_\delta=1$.
For the $\Phi^4_2$ and $\Phi^4_3$ measures, Bauerschmidt and
Dagallier~\cite{bauerschmidt_logsobolev_2024} prove that the
log-Sobolev inequality~\eqref{eq:lsi-lattice} holds with
$\sigma>0$ independent of $\delta$ for arbitrary coupling at
fixed finite volume, and with $\sigma$ independent also of $L$
throughout the high-temperature phase, characterized by
susceptibility bounded uniformly in the lattice regularization
and volume.
More quantitatively, in the weak-coupling regime
$\lambda\leq\lambda_0(m)$, which is contained in this phase,
their estimates give a $\sigma$ bounded below by
$c^{-1}m^2$ for some universal constant $c>0$, uniformly in $L$.

For the sine-Gordon measure, Bauerschmidt and
Bodineau~\cite{bauerschmidt_logsobolev_2021} prove, for
$\beta^2<6\pi$ and fixed $m,L,\lambda$, a log-Sobolev inequality
with a positive $\sigma$ uniform in the lattice regularization.
Moreover, if $Lm\geq1$ and
$\lambda\leq c_\beta m^{2-\beta^2/(4\pi)}$ for $c_\beta>0$
sufficiently small, their estimate gives a $\sigma$
bounded below by a constant multiple of $m^2$, uniformly in $L$.

We shall also consider the lattice Poincar\'e inequality
\begin{equation}\label{eq:poincare-lattice}
\int\abs{\nabla f}^2\diff\mu_\delta
\geq \delta^d\sigma_p\int f^2\diff\mu_\delta
\end{equation}
for every $f\in C_b^\infty(\R^{\Lambda_\delta})$ with
$\int f\diff\mu_\delta=0$, where $\sigma_p>0$. The choice $\sigma_p=\sigma$ is always
admissible whenever $\sigma$ satisfies~\eqref{eq:lsi-lattice}.

Our main result concerns the rate of convergence to equilibrium
in $L^2(\mu\otimes\kappa)$ of the continuum dynamics~\eqref{eq:sdnlw continuum}
for the $\Phi^4_2$ model and the
sine-Gordon model with $\beta^2<2\pi$.
We denote by $(P_t)_{t\geq 0}$ the associated Markov semigroup
on $L^2(\mu\otimes\kappa)$, whose action is given,
for $(\mu\otimes\kappa)$-almost every $(u_0,v_0)$, by
\begin{equation}
P_t f(u_0,v_0)
= \E\bigl[f(u(t),\partial_t u(t)) \mid
(u(0),\partial_t u(0))=(u_0,v_0)\bigr].
\end{equation}

\begin{theorem}[Equilibration of dynamics]\label{thm:main phi4}
For every $L\geq1$ and $m,\lambda>0$, consider the
$\Phi^4_2$ model or the sine-Gordon model with fixed
$\beta^2<2\pi$ on $\Tor_L^2$.
Fix $\sigma_p\geq\sigma>0$ such that
\eqref{eq:lsi-lattice} and~\eqref{eq:poincare-lattice}
hold uniformly in $\delta$.

There exist finite constants $C>0$ and
$R=R(m,\lambda,\sigma)$, with $C$ universal and $R$
possibly depending on $L$ only through $\sigma$, such that
for every $\gamma>0$, $t\geq0$, and
$f\in L^2(\mu\otimes\kappa)$ with
$\|f\|_{L^2(\mu\otimes\kappa)}=1$,
\begin{equation}\label{eq:main-convergence}
\left\|P_t f-\int f\diff(\mu\otimes\kappa)\right\|_{L^2(\mu\otimes\kappa)}
\leq C\exp\left(
-\frac{\sigma_p\gamma t}
{C(\sqrt{\sigma_p}+R+\gamma)^2}
\right).
\end{equation}
Moreover, for the $\Phi^4_2$ model and every
$\varepsilon>0$, there exists a constant
$C_{L,\varepsilon}>0$ such that the estimate
\eqref{eq:main-convergence} also holds with
$R\leq C_{L,\varepsilon}m$, for all
$0<m\leq1$ and
$0<\lambda\leq\lambda_0(m)\wedge m^{4+\varepsilon}$.
\end{theorem}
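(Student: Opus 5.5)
The plan is to prove the estimate~\eqref{eq:main-convergence} uniformly for the lattice approximations and then pass to the continuum limit.

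\textbf{Step 1: uniform lattice estimate.} For each lattice scale $\delta$, consider the finite-dimensional kinetic Langevin dynamics on $\R^{\Lambda_\delta}\times\R^{\Lambda_\delta}$. Its invariant measure is $\mu_\delta\otimes\kappa_\delta$, with generator $\mathcal L_\delta = v\cdot\nabla_u - \nabla H_\delta(u)\cdot\nabla_v + \gamma(\Delta_v - v\cdot\nabla_v)$, suitably $\delta^d$-weighted.

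I would run the Cao--Lu--Wang space-time Poincaré argument. Their proof controls the time-averaged $L^2$ norm of $f$ by the velocity dissipation. It does so through a divergence-type equation whose solution must satisfy $H^1$/$H^2$-type elliptic bounds with respect to $\mu_\delta$. The only place semiconvexity enters is the Bochner-type identity for the Witten Laplacian $L=\Delta-\nabla H_\delta\cdot\nabla$:
\[
\int |\nabla^2 \phi|^2 + \nabla\phi\cdot \Hess H_\delta\,\nabla\phi \,d\mu_\delta = \int (L\phi)^2 d\mu_\delta .
\]
Here one needs a bound on the negative part of $\Hess H_\delta$. I would replace the uniform lower bound by the random quantity $\lambda_{\min}(u)$, the lowest eigenvalue of the Schrödinger operator $-\Delta+m^2+V''_\delta(u)$.

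To bound $\int \lambda_{\min}^-|\nabla\phi|^2\,d\mu_\delta$, I would use the entropy (Donsker--Varadhan) inequality with the log-Sobolev inequality~\eqref{eq:lsi-lattice}. Applied to $|\nabla\phi|^2$, this reduces the bound to an exponential moment $\log\int e^{t\lambda_{\min}^-}d\mu_\delta$. It also produces an error term $\sigma^{-1}\int|\nabla^2\phi|^2$ that can be absorbed. This produces an effective constant $R$ that plays the role of the semiconvexity constant in Cao--Lu--Wang, and that depends only on $\sigma$ and the exponential lower-tail bound. Inserting it into their final rate gives~\eqref{eq:main-convergence} for $P^\delta_t$, with $C$ universal.

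\textbf{Step 2: exponential lower-tail bounds.} I need $\int e^{t\lambda_{\min}^-}d\mu_\delta$ bounded uniformly in $\delta$.
\begin{itemize}[nosep]
\item For $\Phi^4_2$, $V''=12\lambda\wick{u^2}$. I would write $\wick{u^2}\geq -c$ plus an error: decompose $u$ into low and high modes, and bound the high-mode negative part via a Gaussian (Nelson-type) estimate in a negative Sobolev norm. This controls $\inf\operatorname{spec}$ through the form bound $\langle\psi,\wick{u^2}\psi\rangle\geq -\epsilon\|\nabla\psi\|^2-C_\epsilon(u)\|\psi\|^2$.
\item For sine-Gordon, $V''=-\lambda\beta^2\wick{\cos\beta u}$. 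The analogous form bound holds when $\wick{\cos}$ lies in $H^{-s}$ with $s<1$, which is where $\beta^2<4\pi$ enters.
\end{itemize}
In both cases Gaussian concentration and the density bounds of $\mu_\delta$ relative to the free field give the exponential moments. For the weak-coupling $\Phi^4_2$ statement, I would track constants: $\lambda\leq m^{4+\varepsilon}$ makes the negative part $O(m)$-sized in the relevant exponential moment, so $R\le C_{L,\varepsilon}m$.

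\textbf{Step 3: continuum limit.} I would use the convergence of the lattice dynamics to the continuum dynamics of~\cite{gubinelli_global_2022,oh_invariant_2021}, as established in the companion paper. This gives $P_t^\delta f\to P_tf$ for bounded continuous cylinder $f$, together with weak convergence $\mu_\delta\otimes\kappa_\delta\to\mu\otimes\kappa$. The $L^2$ norms then pass to the limit by lower semicontinuity and uniform integrability, and density extends the estimate to all of $L^2(\mu\otimes\kappa)$.

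\textbf{Main obstacle.} I expect the main obstacle to be Step 2. It requires the uniform-in-$\delta$ exponential tail for the bottom of the spectrum of the singular random Schrödinger operator, with the explicit $m$-dependence needed for the weak-coupling claim. A secondary difficulty is making the absorption in Step 1 close without dimension-dependent constants.
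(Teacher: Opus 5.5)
Your overall architecture matches the paper's. The entropy-plus-log-Sobolev step in Step~1 is equivalent to the NGS bound (Lemma~\ref{thm:ngs bound}) that the paper inserts into the Bochner identity in Lemma~\ref{thm:clw 2.3}. Step~3 is essentially the paper's passage to the continuum limit through Theorem~\ref{thm:lattice-convergence}.

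There is, however, a genuine gap in the first assertion, which requires $R$ to depend on $L$ only through $\sigma$. You work with the global quantity $Q_\delta(u)=-\lambda_{\min}(\Hess U(u))$ on all of $\Tor_L^2$, and this cannot have $L$-uniform exponential moments. The bottom of the spectrum of the full Hessian is essentially a maximum, over $\sim L^2$ weakly dependent unit cells, of local negative fluctuations of $V_\delta''(u)$ (e.g.\ of $-12\lambda\wick{u^2}$ tested against a unit-scale bump). These fluctuations have unbounded upper tails, so $Q_\delta$ is not bounded uniformly in $L$. By Jensen, $\log\int e^{4Q_\delta/\sigma}\diff\mu_\delta\geq\frac4\sigma\int Q_\delta\diff\mu_\delta$ then grows with $L$; the paper remarks exactly this after Assumption~\ref{ass:hessian finite}. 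Your proposed source of the exponential moments, density bounds of $\mu_\delta$ relative to the free field, is also intrinsically volume dependent, since the $L^p$ norms of the Radon--Nikodym derivative grow exponentially in $L^2$. So even when completed, your argument yields an $R$ depending on $L$ beyond $\sigma$.

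The missing idea is localization, i.e.\ Assumption~\ref{ass:potential-intro}. In the Bochner identity one sets $h(x)=\partial_{u(x)}g$ and splits $h^2=\sum_i\chi_i^2h^2$ with a partition of unity subordinate to the balls $B(i,2)$. Half of the Dirichlet form is then spent on the commutator, $-\abs{\chi_i\nabla_\delta h}^2\leq-\frac12\abs{\nabla_\delta(\chi_i h)}^2+\abs{(\nabla\chi_i)\tilde\chi_i h}^2$. This produces local Schr\"odinger operators with potential $\chi(\cdot-i)V_\delta''$ and half the Laplacian. Their ground-state energies $-Q_{\delta,i}$ have exponential moments bounded uniformly in $i$ and $L$, at the cost of an order-one additive error, $\tilde R^2=R^2+C_d$.

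Those uniform moments must then be proved with volume-uniform tools rather than density bounds:
\begin{itemize}
\item for $\Phi^4_2$, the Bou\'e--Dupuis formula with a polynomial weight $\rho$ and the Euler--Lagrange equation of the optimal drift;
\item for sine-Gordon, the coupling $\mu_\delta=\operatorname{Law}(W+I)$ with $\norm{I}_{\Holder^1}\leq A(m,\lambda)$ uniformly in $L$ and $\delta$.
\end{itemize}
Because $C_d$ is of order one, localization would destroy the bound $R\leq C_{L,\varepsilon}m$. This is why the paper uses the global criterion, Assumption~\ref{ass:hessian finite}, only for the second, fixed-volume, weak-coupling claim. Your global approach is the right one for that part, but the theorem needs both versions of the criterion, and your proposal contains only the fixed-volume one.
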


\begin{remark}\label{remark:comparison}
The lattice log-Sobolev and Poincar\'e inequalities pass to the
continuum limit, so that $\sigma$ and $\sigma_p$ are also
admissible constants for the corresponding inequalities under
$\mu$. Denoting by $\sigma_p^{\mathrm{opt}}(\mu)$ the optimal
Poincar\'e constant of $\mu$, we thus have
$\sigma_p^{\mathrm{opt}}(\mu)\geq\sigma_p\geq\sigma$.
For the $\Phi^4_2$ model in the weak-coupling regime of
Theorem~\ref{thm:main phi4}, the susceptibility estimates of
Bauerschmidt and Dagallier~\cite{bauerschmidt_logsobolev_2024}
give $\sigma_p^{\mathrm{opt}}(\mu)\lesssim m^2$,
uniformly in $L$, by testing the Poincar\'e inequality with
the spatial average of the field. Since one may choose
$\sigma_p=\sigma\gtrsim m^2$, it follows that
$\sigma_p^{\mathrm{opt}}(\mu) \simeq m^2$.
Together with the bound $R\leq C_{L,\varepsilon}m$,
Theorem~\ref{thm:main phi4} then yields a kinetic relaxation
rate of at least
\begin{equation}
\frac{m^2\gamma}{C_{L,\varepsilon}(m+\gamma)^2}.
\end{equation}
Choosing $\gamma=m$ yields a kinetic relaxation rate of at least
order $m$, compared with the optimal overdamped relaxation rate
$\sigma_p^{\mathrm{opt}}(\mu) \simeq m^2$. 
This establishes the announced acceleration in the small-mass regime.
\end{remark}

\begin{remark}\label{rem:regular-one-dimensional}
In one spatial dimension, analogous $L^2$ convergence estimates
follow from the dimension-free result of Cao, Lu, and
Wang~\cite{cao_explicit_2023}. For a smooth, bounded-below
interaction potential $V$ satisfying suitable growth conditions
and $\Hess V\geq-R^2$, their result yields, provided the
corresponding lattice Gibbs measures satisfy a uniform
Poincar\'e inequality with constant $\sigma_p>0$,
a relaxation rate
\[
\alpha\geq
\frac{\sigma_p\gamma}
{C(\sqrt{\sigma_p}+R+\gamma)^2},
\]
uniformly in the lattice discretization.
The corresponding continuum estimate follows by passing
to the limit as in~\cite{LATTICE}.
\end{remark}

 \bigskip\noindent
 We now formulate a general, dimension-free hypocoercivity
criterion for lattice kinetic Langevin dynamics, which will
serve as the basis for the proof of Theorem~\ref{thm:main phi4}.
We work on the periodic lattice $\Lambda_\delta$ defined above,
allowing more generally a lattice-dependent interaction
potential $V_\delta:\R\to\R$.
We assume that $V_\delta$ is bounded below for each $\delta$,
without requiring a uniform lower bound.
The corresponding lattice Gibbs measure $\mu_\delta$ and
Gaussian velocity measure $\kappa_\delta$ are given by
\begin{equation}\label{eq:generic measure}
\begin{aligned}
\mu_\delta(du)
&\propto \exp(-U_\delta(u))
\prod_{x\in\Lambda_\delta}du(x),\\
\text{where}\quad U_\delta(u)
&=\delta^d\sum_{x\in\Lambda_\delta}
\left[
V_\delta(u(x))
+\frac{m^2}{2}u(x)^2
+\frac12\abs{\nabla_\delta u(x)}^2
\right],\\
\kappa_\delta(dv)
&\propto
\exp\left(
-\frac12\sum_{x\in\Lambda_\delta}v(x)^2
\right)
\prod_{x\in\Lambda_\delta}dv(x).
\end{aligned}
\end{equation}
For brevity, we write $U=U_\delta$ whenever $\delta$ is understood.
The product measure $\mu_\delta\otimes\kappa_\delta$
is invariant under the lattice kinetic Langevin dynamics
\begin{equation}\label{eq:sde lattice}
\partial_{t't'}^2 u_\delta
+\tilde\gamma\,\partial_{t'}u_\delta
+\delta^d(m^2-\Laplace_\delta)u_\delta
=-\delta^d V_\delta'(u_\delta)
+\sqrt{2\tilde\gamma}\,\xi_\delta,
\end{equation}
where $\tilde\gamma>0$ and $t'$ is an auxiliary time variable,
to be rescaled below to match the continuum dynamics.
Here $\Laplace_\delta$ is the periodic lattice Laplacian on
$\Lambda_\delta$, and
$(\xi_\delta(x,\cdot))_{x\in\Lambda_\delta}$ are independent
white noises in time, normalized by
$\E[(\int_0^t\xi_\delta(x,s)\diff s)^2]=t$ for $t\geq0$.
We denote by $(P_t^\delta)_{t\geq0}$ the corresponding
Markov semigroup on $L^2(\mu_\delta\otimes\kappa_\delta)$.
Our first assumption concerns the functional inequalities
recalled above.

\begin{assumption}\label{ass:log-sobolev}
There exist constants $\sigma_p\geq\sigma>0$, independent
of $\delta$, such that the measures $\mu_\delta$ satisfy
the log-Sobolev inequality~\eqref{eq:lsi-lattice} and the
Poincar\'e inequality~\eqref{eq:poincare-lattice}.
\end{assumption}

We next introduce two alternative conditions controlling
the lower spectral tails of the Hessian of $U$,
one global and the other localized.
Here and below, $\Hess U$ denotes the Hessian with respect
to the lattice $L^2(\Lambda_\delta)$ inner product, so that
\begin{equation}\label{eq:hessian}
\Hess U(u)=V_\delta''(u)+m^2-\Laplace_\delta.
\end{equation}

\renewcommand{\theassumption}{\arabic{assumption}A}
\begin{assumption}[Simple version]\label{ass:hessian finite}
For every $\delta$, the interaction potential
$V_\delta$ belongs to $C^2(\R)$.
There exists a finite constant $R\geq0$, independent of
$\delta$, such that
\[
\log\int
\exp\left(\frac{4}{\sigma}Q_\delta(u)\right)
\diff\mu_\delta(u)
\leq\frac{4R^2}{\sigma},
\]
where $Q_\delta(u)=-\lambda_{\min}(\Hess U(u))$ is the
negative of the lowest eigenvalue of $\Hess U(u)$,
and $\sigma$ is as in Assumption~\ref{ass:log-sobolev}.
\end{assumption}
Assumption~\ref{ass:hessian finite} is weaker than the
uniform semiconvexity condition used in~\cite{cao_explicit_2023}.
For the models considered here, however, one cannot generally
choose $R$ uniformly in $L$.
To obtain volume-uniform estimates, we introduce the following
localized alternative.

\addtocounter{assumption}{-1}
\renewcommand{\theassumption}{\arabic{assumption}B}
\begin{assumption}[Uniform version]\label{ass:potential-intro}
For every $\delta$, the interaction potential
$V_\delta$ belongs to $C^2(\R)$.
There exist functions
$Q_{\delta,i}:\R^{\Lambda_\delta}\to\R$,
indexed by $i\in\Z^d$, and a finite constant $R\geq0$,
independent of $\delta$ and $L$ except through $\sigma$,
such that for every $i\in\Z^d$, $u\in\R^{\Lambda_\delta}$,
and $h\in\R^{\Lambda_\delta}$ supported in $B(i,2)$, we have
\begin{equation}\label{eq:assumption on hessian}
\delta^d\sum_{x\in\Lambda_\delta}
\left[
\bigl(V_\delta''(u(x))+m^2\bigr)h(x)^2
+\frac12\abs{\nabla_\delta h(x)}^2
\right]
\geq -Q_{\delta,i}(u)\norm{h}_{L^2(\Lambda_\delta)}^2,
\end{equation}
and
\begin{equation}\label{eq:potential assumption exponential}
\sup_{i\in\Z^d}
\log\int
\exp\left(\frac{4}{\sigma}Q_{\delta,i}(u)\right)
\diff\mu_\delta(u)
\leq\frac{4R^2}{\sigma},
\end{equation}
where $\sigma$ is as in Assumption~\ref{ass:log-sobolev}.
\end{assumption}
\renewcommand{\theassumption}{\arabic{assumption}}

Like Assumption~\ref{ass:hessian finite},
Assumption~\ref{ass:potential-intro} does not require uniform
semiconvexity. Unlike the global condition, however,
\eqref{eq:assumption on hessian} retains only half of
the gradient term in~\eqref{eq:hessian}.
This permits localization and yields estimates whose dependence
on the volume $L$ enters only through $\sigma$.
The price is an additional term in Lemma~\ref{thm:clw 2.3}:
the resulting localization error contributes an order-one
term to the effective Hessian parameter in the hypocoercivity
estimate. Consequently, this argument alone does not yield
the $O(m)$ bound needed for small-mass acceleration uniformly
in $L$. The acceleration result discussed in
Remark~\ref{remark:comparison} is therefore established
here only for fixed finite volume.

Our general criterion yields the following dimension-free
$L^2$ convergence estimate.

\begin{theorem}[Lattice hypocoercivity]
\label{thm:decay of variance}
Under Assumption~\ref{ass:log-sobolev} and either
Assumption~\ref{ass:hessian finite} or~\ref{ass:potential-intro}, together with
the technical Assumption~\ref{ass:technical} stated
in Section~\ref{sec:clw}, there exists a universal
finite constant $C>0$ such that for every
$\delta>0$ with $L/\delta\in\N$,
$\tilde\gamma>0$, $t'\geq0$, and
$f\in L^2(\mu_\delta\otimes\kappa_\delta)$,
\[
\left\|P_{t'}^\delta f-\int f\diff(\mu_\delta\otimes\kappa_\delta)\right\|_{L^2(\mu_\delta\otimes\kappa_\delta)}
\leq C e^{-\alpha_\delta t'}
\left\|f-\int f\diff(\mu_\delta\otimes\kappa_\delta)\right\|_{L^2(\mu_\delta\otimes\kappa_\delta)},
\]
where
\[
\alpha_\delta=
\frac{\sigma_p\delta^d\tilde\gamma}
{C\bigl[\delta^{d/2}(\sqrt{\sigma_p}+\tilde R)
+\tilde\gamma\bigr]^2},
\]
with $\tilde R=R$ under Assumption~\ref{ass:hessian finite}
and $\tilde R=\sqrt{R^2+C_d}$ under
Assumption~\ref{ass:potential-intro},
where $C_d$ only depends on the dimension of the lattice~$\Lambda$.
\end{theorem}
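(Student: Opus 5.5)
The plan is to follow the space-time Poincar\'e (Armstrong--Mourrat / Cao--Lu--Wang) strategy at a fixed lattice scale $\delta$, tracking every constant so that only $\sigma_p$, $\tilde R$ and $\tilde\gamma$ appear. First, reduce to the case $\int f=0$ and rescale: the dynamics \eqref{eq:sde lattice} with the $\delta^d$-weighted potential is a standard kinetic Langevin equation with Hamiltonian $U+\frac12|v|^2$. Its generator splits as $\mathcal L=\tilde\gamma\mathcal S+\mathcal T$, with $\mathcal S=\Delta_v-v\cdot\nabla_v$ and $\mathcal T=v\cdot\nabla_u-\nabla U\cdot\nabla_v$ the Liouville part. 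The key dimension-free input is a space-time Poincar\'e inequality on $[0,\tau]\times\R^{\Lambda_\delta}\times\R^{\Lambda_\delta}$ under $\lambda_\tau\otimes\mu_\delta\otimes\kappa_\delta$. For $g$ of zero mean it should read
\[
\|g\|^2\lesssim \tau^2\|\nabla_v g\|^2+\|(\partial_t-\mathcal T)g\|^2_{L^2_{t,u}H^{-1}_v},
\]
with a time horizon $\tau\simeq(\delta^{d}\sigma_p)^{-1/2}$. Once this holds, the standard Gr\"onwall argument, using $\frac{d}{dt}\|P_tf\|^2=-2\tilde\gamma\|\nabla_vP_tf\|^2$ on time windows of length $\tau$, gives decay at rate $\alpha\simeq \tilde\gamma/(\tau^2\tilde\gamma^2+\tau\cdot\text{const}+1)\cdot\tau^{-2}$. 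With the scalings above this is exactly $\sigma_p\delta^d\tilde\gamma/[\delta^{d/2}(\sqrt{\sigma_p}+\tilde R)+\tilde\gamma]^2$.

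The proof of the space-time Poincar\'e inequality reduces, as in \cite{cao_explicit_2023}, to solving a divergence equation. Given $g$ projected onto $v$-independent functions with zero mean, we must construct $(F_0,F_1)$ with $\partial_tF_0+\nabla_u^*F_1=g$, where $\nabla_u^*$ is the adjoint in $L^2(\mu_\delta)$. The pair must have $\|F\|_{H^1}$-type bounds. The construction goes through a time-augmented elliptic problem: solve $(-\partial_t^2+\nabla_u^*\nabla_u)\phi=g$ with Neumann-type boundary conditions in $t$. The bound $\|\phi\|+\|\nabla\phi\|\lesssim\|g\|$ follows from the Poincar\'e inequality \eqref{eq:poincare-lattice}. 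The delicate part is the second-order bound $\|\nabla_u^2\phi\|_{HS}$. By Bochner's identity,
\[
\int|\nabla_u\nabla_u\phi|^2\diff\mu_\delta=\int(\nabla_u^*\nabla_u\phi)^2-\int\langle\nabla_u\phi,\Hess U\,\nabla_u\phi\rangle.
\]
The last term is bounded by $\int Q_\delta|\nabla_u\phi|^2$. This is where semiconvexity was used in \cite{cao_explicit_2023}, and where Assumption~\ref{ass:hessian finite} must replace it.

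The main obstacle is controlling $\int Q_\delta|\nabla_u\phi|^2\diff\mu_\delta$ without uniform semiconvexity. I would use the entropy (Donsker--Varadhan) inequality: for any $\eta>0$,
\[
\int Q_\delta\, w\diff\mu_\delta\leq \eta^{-1}\Bigl(\int w\log w\diff\mu_\delta\Bigr)+\eta^{-1}\log\int e^{\eta Q_\delta}\diff\mu_\delta,
\]
applied to $w=|\nabla_u\phi|^2/\int|\nabla_u\phi|^2$, with $\eta=4/\sigma$ after the $\delta^d$ normalization. The entropy term is bounded by the log-Sobolev inequality \eqref{eq:lsi-lattice} applied to $|\nabla_u\phi|$. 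This yields $\frac{\sigma}{4}\cdot\frac{2}{\delta^d\sigma}\int|\nabla|\nabla_u\phi||^2\le\frac12\delta^{-d}\|\nabla^2_u\phi\|^2$ by Kato's inequality, which can be absorbed into the left-hand side; the exact constant $4$ in Assumption~\ref{ass:hessian finite} is chosen to make this closable. The exponential moment term gives $R^2\|\nabla_u\phi\|^2$. The result is $\|\nabla^2\phi\|^2\lesssim\|g\|^2+R^2\|\nabla\phi\|^2$, hence $\tilde R=R$.

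Under Assumption~\ref{ass:potential-intro}, I would localize with a smooth partition of unity $\sum_i\chi_i^2=1$, with $\chi_i$ supported in $B(i,2)$ and $|\nabla\chi_i|\lesssim1$. Apply \eqref{eq:assumption on hessian} to $h=\chi_i\partial_e\phi$ for each direction. The discarded half of the gradient term is used, via the IMS localization formula, to absorb $\sum_i|\nabla\chi_i|^2|\nabla\phi|^2\le C_d|\nabla\phi|^2$. The entropy argument is then run with $Q_{\delta,i}$ and the weights $w_i=\chi_i^2|\nabla_u\phi|^2$, using the uniformity of \eqref{eq:potential assumption exponential} in $i$. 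This produces $R^2+C_d$, i.e.\ $\tilde R=\sqrt{R^2+C_d}$. Assumption~\ref{ass:technical} is invoked only to justify the Bochner identity and the integrations by parts, through domain and regularity of $\phi$ on the noncompact space.
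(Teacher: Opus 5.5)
Your proposal is essentially the paper's proof: it keeps the Cao--Lu--Wang machinery and replaces only their Lemma~2.3 (the paper's Lemma~\ref{thm:clw 2.3}), bounding the Bochner term $\delta^d\int Q_\delta\abs{\nabla_u\phi}^2$ by the entropy inequality plus the log-Sobolev inequality. This is the paper's NGS bound, Lemma~\ref{thm:ngs bound}, which the paper applies componentwise to each $\partial_{u(x)}g$ rather than to $\abs{\nabla_u\phi}$ via Kato; under Assumption~\ref{ass:potential-intro}, a partition of unity then contributes the additive $C_d$. The inaccuracies are in the bookkeeping you sketch: in the space-time Poincar\'e inequality the factor $\tau^2$ belongs on the $H^{-1}_v$ term and the $\norm{\nabla_v g}^2$ term carries the dimensionless factor $1+\tilde R^2/\sigma_p$, so the Gr\"onwall step gives $\alpha\simeq\tilde\gamma/(1+\tilde R^2/\sigma_p+\tau\tilde\gamma+\tau^2\tilde\gamma^2)$ (no stray $\tau^{-2}$), which is indeed $\alpha_\delta$; and the localization error is not absorbed by the spare half-gradient but simply adds $C_d$ to $R^2$.
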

 
 We next establish the spectral estimates needed to apply
Theorem~\ref{thm:decay of variance} to the singular models
considered here. These estimates concern the lowest
eigenvalues of associated random Schr\"odinger operators;
see Remark~\ref{rem:bottom-spectrum} below.
In two dimensions, we obtain localized
estimates uniform in the volume, verifying
Assumption~\ref{ass:potential-intro}.
For the $\Phi^4_2$ model, we also establish a global estimate
at fixed volume in a weak-coupling regime, which verifies
Assumption~\ref{ass:hessian finite} with the $O(m)$ bound
needed for small-mass acceleration.

\begin{theorem}[Spectral estimates in two dimensions]
\label{thm:integrability}
Let $d=2$, and consider the lattice $\Phi^4_2$ model or
the lattice sine-Gordon model with fixed $0<\beta^2<4\pi$.

\begin{enumerate}
\item[(i)] For every $L\geq1$, $m,\lambda>0$, and
$\delta>0$ with $L/\delta\in\N$, there exist functions
$Q_{\delta,i}:\R^{\Lambda_\delta}\to\R$,
indexed by $i\in\Z^2$, such that for every
$i\in\Z^2$, $u\in\R^{\Lambda_\delta}$,
$h\in\R^{\Lambda_\delta}$ supported in $B(i,2)$,
and $q>0$,
\begin{equation}\label{eq:integrability bounds}
\begin{split}
\delta^2\sum_{x\in\Lambda_\delta}
\left[
\bigl(V_\delta''(u(x))+m^2\bigr)h(x)^2
+\frac12\abs{\nabla_\delta h(x)}^2
\right]
&\geq
-Q_{\delta,i}(u)\norm{h}_{L^2(\Lambda_\delta)}^2,\\
\sup_{i\in\Z^2}\log\int
\exp\bigl(qQ_{\delta,i}(u)\bigr)\diff\mu_\delta(u)
&\leq \mathcal R(q,m,\lambda),
\end{split}
\end{equation}
where $\mathcal R(q,m,\lambda)<\infty$ is independent
of $\delta$ and $L$.

\item[(ii)] For the lattice $\Phi^4_2$ model, let
$Q_\delta(u)=-\lambda_{\min}(\Hess U(u))$,
as in Assumption~\ref{ass:hessian finite}.
There exists a universal constant $c_0>0$ such that
for every $L\geq1$ and $\varepsilon>0$, there is a
constant $C_{L,\varepsilon}<\infty$ with the following
property: for all $0<m\leq1$ and
$0<\lambda\leq\lambda_0(m)\wedge m^{4+\varepsilon}$,
\begin{equation}\label{eq:global spectral estimate phi4}
\sup_{\substack{\delta>0\\L/\delta\in\N}}
\log\int
\exp\left(\frac{4}{c_0m^2}Q_\delta(u)\right)
\diff\mu_\delta(u)
\leq C_{L,\varepsilon}.
\end{equation}
\end{enumerate}
\end{theorem}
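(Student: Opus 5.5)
The proof rests on a Da Prato--Debussche type decomposition of the field under $\mu_\delta$. We write $u=\phi+w$, where $\phi$ is the lattice Gaussian free field with covariance $(m^2-\Laplace_\delta)^{-1}$ and $w$ is a more regular remainder. This can be realized via the Boué--Dupuis/Polchinski coupling of \cite{bauerschmidt_logsobolev_2024}, or simply through moment bounds for $w$ in a positive-regularity norm. The potential $V_\delta''(u)$ then expands into Wick-ordered terms. For $\Phi^4_2$ we get $12\lambda(\wick{\phi^2}+2\phi w+w^2)$, where $w^2\ge0$ can be dropped since it only helps the lower bound. For sine-Gordon we get $-\lambda\beta^2\,\mathrm{Re}(e^{i\beta w}\wick{e^{i\beta\phi}})$. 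In both cases the singular part is a distribution $\Xi$ of regularity $-\kappa$, with $\kappa$ arbitrarily small for $\Phi^4_2$ and $\kappa=\beta^2/(4\pi)<1$ for sine-Gordon. Its localized Besov norms $\norm{\chi_i\Xi}_{B^{-\kappa}_{p,p}}$ have stretched-exponential or Gaussian-type tails, uniformly in $\delta$ and in $i$ by translation invariance.

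For (i), fix a smooth cutoff $\chi_i$ equal to one on $B(i,2)$ and supported in $B(i,3)$. The task is to bound
\[
\delta^2\sum_x \Xi(x) h(x)^2 \geq -\tfrac14\norm{\nabla_\delta h}^2 - C(\norm{\chi_i\Xi}_{B^{-\kappa}})\norm{h}^2 .
\]
We use duality, $|\langle \chi_i\Xi,h^2\rangle|\le \norm{\chi_i\Xi}_{B^{-\kappa}_{p,p}}\norm{h^2}_{B^{\kappa}_{p',p'}}$, followed by a fractional Leibniz rule and interpolation $\norm{h^2}_{B^\kappa_{1,1}}\lesssim \norm{h}_{H^{\kappa+\epsilon}}\norm h_{L^2}$ together with $\norm{h}_{H^{s}}\le\norm h^{1-s}\norm{h}_{H^1}^{s}$. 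Young's inequality then gives $Q_{\delta,i}=C\bigl(1+\norm{\chi_i\Xi}^{2/(1-\kappa')}\bigr)$ plus the remainder terms. Since $\kappa<1$, the exponent $2/(1-\kappa')$ is finite, and the $m^2$ term is harmless. The cross term $\phi w$ is handled the same way using the positive regularity of $w$. The main obstacle is integrability: we need $\exp(qQ_{\delta,i})$ integrable for every $q$, while $Q_{\delta,i}$ is a superlinear power of a Wick polynomial norm.

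For $\Phi^4_2$ this superlinear growth is lethal unless we exploit positivity. We therefore avoid treating $\wick{\phi^2}$ as a signed distribution of negative regularity. Instead we write $\wick{\phi^2}=\phi_N^2 - c_N+(\wick{\phi^2}-\wick{\phi_N^2})$ with a frequency cutoff $N$ chosen depending on $u$. Here $c_N\sim\log N$ is the only negative piece, and the high-frequency error is small in $B^{-\kappa}$ with Gaussian-chaos tails at rate $N^{-\kappa}$. Optimizing $N$ converts the superlinear power into roughly $\log$ of the norm, and hence into an exponentially integrable quantity. The $w$ part is controlled under $\mu_\delta$ by comparing with the Gaussian measure through the $\exp(-\lambda\wick{u^4})$ density, using Nelson-type bounds.

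For sine-Gordon, $\Xi$ is bounded by $\lambda\beta^2 |\wick{e^{i\beta\phi}}|$, and the measure is absolutely continuous with respect to the GFF with exponentially integrable density at fixed $\lambda$. It therefore suffices to show that $\norm{\chi_i\wick{e^{i\beta\phi}}}_{B^{-\kappa}}$ has moments of all orders growing slowly enough, namely $\E\norm{\cdot}^p\le (Cp)^{p\theta}$ with $\theta\cdot 2/(1-\kappa)<1$. This is the delicate point near $\beta^2\to4\pi$, which we would address with the same cutoff trick: smooth at scale $N$, using $|e^{i\beta\phi_N}|\le N^{\beta^2/4\pi}$ pointwise.

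For (ii), we use the global version on the fixed torus, which keeps the full gradient and therefore gives $Q_\delta\le -m^2+12\lambda\sup_{\norm h=1}(-\langle\wick{u^2},h^2\rangle)$. The same cutoff argument gives $Q_\delta\lesssim \lambda\log(\ldots)$ up to $-m^2$. When $\lambda\le m^{4+\varepsilon}$, the prefactor $4/(c_0m^2)$ times $\lambda\cdot(\text{polylog or power of }\lambda^{-1})$ stays bounded, with constants depending on $L$ through the volume of the torus.
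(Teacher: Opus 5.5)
\textbf{Gap: volume uniformity in (i).} Your argument does not make $\mathcal R$ independent of $L$, and that independence is exactly what (i) asserts and what Assumption~\ref{ass:potential-intro} needs. In both models you pass from Gaussian estimates to $\mu_\delta$ through the density: Nelson-type bounds on $e^{-\lambda\wick{u^4}}$ for $\Phi^4_2$, and an exponentially integrable density for sine-Gordon. Any such step, e.g.\ $\int e^{qQ_{\delta,i}}\diff\mu_\delta\leq\norm{e^{qQ_{\delta,i}}}_{L^{p'}(\nu_\delta)}\norm{\diff\mu_\delta/\diff\nu_\delta}_{L^p(\nu_\delta)}$, involves a ratio of partition functions that grows exponentially in $L^2$; the two measures become mutually singular as $L\to\infty$. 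The locality of $Q_{\delta,i}$ does not rescue this unless you also prove volume-uniform bounds on local marginal densities.

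The decomposition $u=\phi+w$ from your first paragraph is the right kind of tool, but you never supply volume-uniform control of $w$. For $\Phi^4_2$, route the cross term through your cutoff as well, by completing the square $\phi_N^2+2\phi_Nw+w^2=(\phi_N+w)^2\geq0$; the direct duality--Young bound you propose for $2\phi w$ is again a superlinear power. Then the cutoff gives $e^{qQ_{\delta,i}}\lesssim N^{Cq\lambda}$, with a random $N$ polynomial in localized norms of $\phi$ and $w$. So you need all polynomial moments of localized positive-regularity norms of $w$ under $\mu_\delta$, uniformly in $\delta,L,i$, for instance from weighted a priori bounds for the stationary SPDE. With that input your route would be a legitimate alternative.

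The paper gets uniformity differently. For $\Phi^4_2$ it applies Bou\'e--Dupuis to both partition functions and inserts the minimizer $\zeta^Q$ of the tilted problem into the untilted one. This gives $\log\int e^{qQ}\diff\mu_\delta\leq q\E_{\zeta^Q}Q(W+I)$, with $W$ exactly Gaussian and $I$ controlled through the $\rho$-weighted Euler--Lagrange identity and Hellmann--Feynman. Only polynomial moments enter, and $L$ appears only through $\norm{\rho}_{L^1}$. For sine-Gordon the paper uses the FBSDE coupling $\mu_\beta=\operatorname{Law}(W+I)$ with the deterministic bound $\sup_{L,\delta}\norm{I}_{L^\infty(\Prob;\Holder^1)}\leq A(m,\lambda)$.

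\textbf{Sine-Gordon tails.} This estimate is also missing, and part of the difficulty you perceive near $\beta^2=4\pi$ comes from a slip. With $\norm{h}_{L^2}=1$, your own duality and interpolation give $\abs{\langle\Xi,h^2\rangle}\lesssim\norm{\Xi}\norm{h}_{H^1}^{\kappa'}$. Young's inequality then yields $Q_{\delta,i}\lesssim\norm{\Xi}^{2/(2-\kappa')}$, an exponent below $2$, not $2/(1-\kappa')$. Gaussian-type tails for the localized norms of $\wick{\cos(\beta W)}$ and $\wick{\sin(\beta W)}$ therefore suffice for every $\beta^2<4\pi$.

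You do not prove such tails, however. Your proposed cutoff, using $\abs{\wick{e^{i\beta\phi_N}}}\lesssim N^{\beta^2/4\pi}$, produces a polynomial rather than logarithmic factor in $N$, so it does not reduce the problem to polynomial moments as it does for $\Phi^4_2$. The paper obtains the tails from Bou\'e--Dupuis on the GFF, combined with
\[
\norm{\wick{\sin(\beta(W+I))}}+\norm{\wick{\cos(\beta(W+I))}}\lesssim\bigl(\norm{\wick{\cos(\beta W)}}+\norm{\wick{\sin(\beta W)}}\bigr)\bigl(1+\norm{I}_{H^1}\bigr).
\]
Its resonant term uses exactly $\beta^2<4\pi$. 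After raising this to the power $2-\varepsilon'<2$, the cost $\frac12\int\norm{h_t}^2\diff t$ absorbs the drift.

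\textbf{Part (ii).} Your fixed-volume density-plus-cutoff route is a plausible alternative, provided you make two repairs. First, keep the gradient term inside the infimum: your displayed bound $Q_\delta\leq -m^2+12\lambda\sup(\ldots)$ omits it, and without it the right-hand side is of order $\lambda c_\delta\to\infty$. Second, make every Young remainder $O(\lambda)$ rather than $O(1)$, since it gets multiplied by $q\sim m^{-2}$. With these repairs, the zero mode $1/(L^2m^2)$ in $c_N$ contributes $\lambda/m^4$, which is where $\lambda\leq m^{4+\varepsilon}$ enters in your argument.
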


\begin{proof}
Item~(i) is proved in Section~\ref{sec:measures}.
Item~(ii) follows immediately, since the additional $\tfrac 1 2 \abs{\nabla_\delta h}^2$ term
only improves the estimates.
\end{proof}

\begin{remark}\label{rem:global-spectral-estimate}
The global estimate in Theorem~\ref{thm:integrability}(ii)
verifies Assumption~\ref{ass:hessian finite} with
$R=O(m)$ at fixed volume.
Indeed, in the weak-coupling regime under consideration,
$\sigma_*=c_0m^2$ is an admissible log-Sobolev constant,
uniformly in $\delta$.
By~\eqref{eq:global spectral estimate phi4},
Assumption~\ref{ass:hessian finite} therefore holds with
\[
R^2=\frac{\sigma_*}{4}C_{L,\varepsilon}
=\frac{c_0C_{L,\varepsilon}}{4}m^2.
\]
This is the bound needed to establish the small-mass
acceleration discussed in Remark~\ref{remark:comparison}.
\end{remark}
Theorem~\ref{thm:integrability}(i) also covers the
sine-Gordon model throughout $2\pi<\beta^2<4\pi$.
Combined with the lattice log-Sobolev inequality
of~\cite{bauerschmidt_logsobolev_2021}
and
Theorem~\ref{thm:decay of variance}, this yields
$L^2$ convergence estimates uniform in the lattice
discretization, including values of $\beta$ for which
the continuum dynamics has not yet been constructed.

For the $\Phi^4_3$ model, we likewise obtain
$L^2$ convergence estimates uniform in the lattice
discretization, but restrict here to fixed volume $L=1$.
The key ingredient is a global spectral estimate for
the Hessian of the lattice potential.
The additional renormalization counterterm $a_\delta$
is specified e.g.\ in~\cite{gubinelli_pde_2021}.

\begin{theorem}[Spectral estimate in three dimensions]
\label{thm:3 dimensions}
Let $d=3$ and $L=1$. For every $m,\lambda>0$ and
$\delta>0$ with $1/\delta\in\N$, consider the
lattice $\Phi^4_3$ model with interaction
$
V_\delta(u)=\lambda\wick{u^4}
+\lambda^2a_\delta\wick{u^2}.
$
Let $Q_\delta(u)=-\lambda_{\min}(\Hess U(u))$, so that
\begin{equation}\label{eq:global hessian phi3}
Q_\delta(u)=
-\inf_{\|\psi\|_{L^2(\Lambda_\delta)}=1}
\delta^3\sum_{x\in\Lambda_\delta}
\left[
\bigl(V_\delta''(u(x))+m^2\bigr)\psi(x)^2
+\abs{\nabla_\delta\psi(x)}^2
\right].
\end{equation}
For every $q>0$,
\[
\log\int\exp\bigl(qQ_\delta(u)\bigr)
\diff\mu_\delta(u)
\leq \mathcal R(q,m,\lambda),
\]
where $\mathcal R(q,m,\lambda)<\infty$ is independent
of $\delta$.
Moreover,
\[
\sup_{\substack{
0<m\leq1,\ 0<\lambda\leq m^{3}\\
0<q\leq m^{-2}
}}
\mathcal R(q,m,\lambda)<\infty.
\]
\end{theorem}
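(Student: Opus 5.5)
We need to bound exponential moments of $Q_\delta(u)$, the negative bottom of the spectrum of the random Schrödinger operator $H_u=m^2-\Laplace_\delta+V_\delta''(u)$. For $\Phi^4_3$ we have
\[
V_\delta''(u)=12\lambda\wick{u^2}+2\lambda^2a_\delta .
\]
The plan is to use a paracontrolled/Da Prato--Debussche decomposition of the field under $\mu_\delta$. One standard route is the Barashkov--Gubinelli variational (Boué--Dupuis) description of $\mu_\delta$. In it the field is written as
\[
u=X+Z,
\]
where $X$ is the lattice Gaussian free field with covariance $(m^2-\Laplace_\delta)^{-1}$ and $Z$ is a more regular remainder.

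\textbf{Step 1: expanding the potential.} This gives
\[
\wick{u^2}=\wick{X^2}+2XZ+Z^2 .
\]
Here $\wick{X^2}\in\Holder^{-1-\varepsilon}$ and $Z\in\Holder^{1/2-\varepsilon}$ roughly (after subtracting the explicit cubic term $Z\approx -4\lambda\,\mathcal I(\wick{X^3})+\dots$). The quantity to control is
\[
-Q_\delta(u)=\inf_{\|\psi\|=1}\Bigl[\norm{\nabla_\delta\psi}^2+m^2\norm{\psi}^2+12\lambda\langle\wick{X^2},\psi^2\rangle+24\lambda\langle XZ,\psi^2\rangle+12\lambda\langle Z^2,\psi^2\rangle+2\lambda^2a_\delta\norm{\psi}^2\rangle\Bigr].
\]
The term $Z^2\ge0$ can be dropped. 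The counterterm $a_\delta\to-\infty$ logarithmically must be absorbed together with the singular part of $24\lambda XZ$: the resonant product $X\circ\mathcal I(\wick{X^3})$ has a log-divergent mean, which is exactly what $a_\delta$ renormalizes.

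\textbf{Step 2: the naive term.} For the term $\wick{X^2}$ alone one uses duality,
\[
|\langle f,\psi^2\rangle|\le\norm{f}_{B^{-1-\varepsilon}_{\infty,\infty}}\norm{\psi^2}_{B^{1+\varepsilon}_{1,1}} .
\]
This fails, since $\psi^2\in B^{1+\varepsilon}_{1,1}$ needs more than $H^1$. This is the main obstacle: in $d=3$ the potential $\wick{X^2}$ has regularity $-1-\varepsilon$, which is borderline-subcritical for the form domain $H^1$. To overcome it I would perform a ground-state (exponential/gauge) transformation. Set
\[
Y=(m^2+k^2-\Laplace_\delta)^{-1}\bigl(12\lambda\wick{X^2}\bigr)\in\Holder^{1-\varepsilon},
\]
with a large auxiliary mass $k$ making $\norm{Y}_{\Holder^{1-\varepsilon}}$ small relative to the noise size. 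Write $\psi=e^{Y}\phi$ (or use the Hairer--Labbé-type conjugation). The singular potential then becomes
\[
-\Laplace Y+|\nabla Y|^2\ \text{ terms},
\]
where $\wick{X^2}$ cancels against $\Laplace Y$. This leaves $|\nabla Y|^2$, which needs renormalization ($\nabla Y\in\Holder^{-\varepsilon}$, and its Wick square has log-divergent mean of order $\lambda^2$). That divergence, together with the resonant part from $XZ$, is exactly matched by $\lambda^2a_\delta$, up to finite terms; this consistency is the delicate renormalization check. The remainders are then products of objects in $\Holder^{-\varepsilon}$ or better, which are controlled in the quadratic form by
\[
\abs{\langle g,\phi^2\rangle}\le\eta\norm{\nabla\phi}^2+C_\eta\norm{g}^{p}_{\Holder^{-s}}\norm{\phi}^2\quad(s<1),
\]
with polynomial power $p$ depending on $s$. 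Conjugation by $e^{Y}$ distorts norms only by $e^{2\norm{Y}_\infty}$, and $\norm{Y}_\infty$ is small after choosing $k$ random (depending on the noise norms) as in the usual high-frequency cutoff trick.

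\textbf{Step 3: integrability.} The result is
\[
Q_\delta(u)\le C\bigl(1+\Xi^{p}+\lambda^{a}\norm{Z}^{b}\bigr),
\]
where $\Xi$ collects norms of the finitely many renormalized stochastic trees (Gaussian chaoses of bounded order, hence with stretched-exponential moments). The needed integrability $\int e^{qQ_\delta}\diff\mu_\delta<\infty$ for all $q$ requires $p$ small enough. Use quadratic-chaos trees ($\wick{X^2}$ and $|\nabla Y|^2$-type), which have exponential moments of their norms to power below $1$. To achieve this, the high-frequency cutoff $k$ is chosen optimally, so that the dependence is like $\lambda\norm{\cdot}$ to power strictly less than one; this is standard in Hairer--Labbé bounds, giving $\log$ or power-$<1$ growth. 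Remainder terms in $Z$ are controlled by the coercive quartic estimates of the variational approach, which give $\E_{\mu_\delta}e^{c\lambda\norm{Z}^{\cdot}}$-type bounds uniform in $\delta$ (positivity of $\lambda\int Z^4$ in the drift cost). Finally, for the uniform statement I would track constants: with $\lambda\le m^3$ and $q\le m^{-2}$, each tree carries a factor $\lambda$ while its Gaussian norms scale at worst like negative powers of $m$. The exponent $qQ_\delta$ then stays $O(1)$ times random variables with uniformly bounded exponential moments, giving the stated supremum.
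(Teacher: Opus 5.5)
Your Step~3 does not work, and the device the paper uses to get around this is missing from your proposal. You bound $Q_\delta$ pathwise by tree norms, integrate against $\mu_\delta$, and hope for powers below one. But any pathwise bound through $\norm{\wick{X^2}}_{\Holder^{-1-\varepsilon}}$ must be superlinear. A well of depth $C\ell^{-2}$ and width $\ell$ has norm of order $\ell^{-1+\varepsilon}$ and eigenvalue of order $-\ell^{-2}$, so nothing better than $(\lambda\norm{\wick{X^2}})^{2/(1-\varepsilon)}$ is possible. Your gauge transform, which replaces the paper's paracontrolled square completion, produces exactly this power. With $(m^2+k^2-\Laplace_\delta)Y=12\lambda\wick{X^2}$, the conjugated potential contains $k^2Y$, of size $k^{1+\varepsilon}\lambda\norm{\wick{X^2}}$. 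Keeping $\norm{Y}_{L^\infty}\lesssim1$ forces $k\gtrsim(\lambda\norm{\wick{X^2}})^{1/(1-\varepsilon)}$. Since $\norm{\wick{X^2}}$ is a second-chaos quantity with tails no lighter than $e^{-cx}$, such a bound gives no exponential moment for any $q>0$. You also give no argument that $\lambda^a\norm{Z}^b$ has exponential moments of every order.

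The paper's bound in Proposition~\ref{3d-form-wick-square} is equally superlinear. It is harmless there because of the tilt in Lemma~\ref{lem:hairer-steele}, $\log\int e^{qQ}\diff\mu_\delta\le q\int Q\diff\nu^{qQ}$, which reduces the problem to a first moment under $\nu^{qQ}$. That measure is the invariant law of \eqref{eq:sq-phi43}. In $u=W+W^{[3]}+K$ the trees $W,W^{[3]}$ do not feel the tilt, so only their polynomial moments are needed. The tilt enters only the energy estimate for $K$, where Hellmann--Feynman gives it a good sign: $-q\langle\nabla Q(u),K\rangle=24\lambda q\,\delta^3\sum_x uK(\psi^u)^2\ge24\lambda q\,\delta^3\sum_x(W+W^{[3]})K(\psi^u)^2$. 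The right-hand side is then absorbed using the a priori bound on $\norm{\psi^u}_{H^{1-\varepsilon}}$ from Proposition~\ref{bound-3d-ground}, which rests on the competitor of Lemma~\ref{lem:competitor}. A Bou\'e--Dupuis version is possible, as the paper remarks. It requires putting $qQ(W+I)$ inside the variational functional and dominating its $I$-dependence by the coercive terms, not integrating a pathwise bound.

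Your renormalization bookkeeping is also wrong. $X\circ\mathcal I(\wick{X^3})$ has zero mean, being a product of first- and third-chaos variables. So $24\lambda XZ$ needs no counterterm; the paper only uses $\sup_\delta\E\norm{WW^{[3]}}^p_{\Holder^{-1/2-\varepsilon}}<\infty$. The only $\lambda^2$-divergence is the second-order one: your $\abs{\nabla Y}^2$, or $-144\lambda^2\norm{Z_\varphi}^2$ in the paper. It must be cancelled by $2\lambda^2a_\delta$ alone, and this is not a routine check.

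Write $G=(m^2-\Laplace_\delta)^{-1}$ and $\int G^3=\delta^3\sum_yG(y)^3$. Since $\E\norm{(1-\Laplace_\delta)^{-1/2}\wick{X^2}}_{L^2}^2=2\int(1-\Laplace_\delta)^{-1}G^2$, the Hessian needs $2\lambda^2a_\delta=288\lambda^2\int G^3+O(1)$. The trial function $1-12\lambda(1-\Laplace_\delta)^{-1}\wick{X^2}$ shows this is a genuine divergence of $\lambda_{\min}$, not an artefact of the lower bound. The $\Phi^4_3$ counterterm, however, is fixed by the divergent part $6c_\delta W$ of $\wick{W^2}\,\mathcal I(\wick{W^3})$ in the $K$-equation. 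That gives $2\lambda^2a_\delta=288\lambda^2c_\delta$, where $c_\delta=\int_0^\infty\delta^3\sum_y e^{-s(m^2-\Laplace_\delta)}(y)\,\Gamma_s(y)^2\diff s$ and $\Gamma_s=e^{-s(m^2-\Laplace_\delta)}G$. Since $\partial_s\Gamma_s=-e^{-s(m^2-\Laplace_\delta)}$, this equals $\frac13\int G^3$.

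By this count the cancellation is off by a factor of three, and $Q_\delta$ retains a divergence $192\lambda^2\int G^3\sim\lambda^2\log\delta^{-1}$. The paper builds the same cancellation into $W^{2\diamond2}$ without verifying it, so following the paper does not settle this step either.
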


\begin{proof}
Section~\ref{sec:phi43}.
\end{proof}

\begin{remark}\label{rem:bottom-spectrum}
The functions $Q_{\delta,i}$ in
Theorem~\ref{thm:integrability}(i) can be chosen as
negatives of the lowest eigenvalues of modified discrete
Schr\"odinger operators with localized potentials.
By contrast, $-Q_\delta$ in
Theorems~\ref{thm:integrability}(ii)
and~\ref{thm:3 dimensions} is the lowest eigenvalue
of $\Hess U$.
These results thus provide exponential lower-tail
estimates for the spectra of singular random
Schr\"odinger operators.

In the $\Phi^4_3$ model, the potential $V_\delta''(u)$
is non-Gaussian and renormalized, with spatial regularity
analogous to that of two-dimensional white noise.
Its spectral analysis is therefore closely related
to that of the Anderson Hamiltonian; see
\cite{allez_continuous_2015,gubinelli_semilinear_2020,
labbe_continuous_2019}.
For the white-noise Anderson Hamiltonian, exponential
lower-tail estimates for the lowest eigenvalue are known;
see~\cite{allez_continuous_2015,labbe_continuous_2019}.
To our knowledge, such estimates have not previously been established for the non-Gaussian, renormalized potential considered here. They may be of independent interest.
\end{remark}

\begin{remark}
To obtain convergence estimates uniform also in the volume
for the $\Phi^4_3$ model, one would need to use the
localized criterion of
Assumption~\ref{ass:potential-intro}.
This requires retaining only half of the gradient term,
as in~\eqref{eq:assumption on hessian}.
However, the renormalization counterterm in $V_\delta''$
is tuned to the full Laplacian in $\Hess U$.
Reducing its coefficient changes the required
renormalization, so the cancellation of divergences
is lost.
It would therefore be interesting to develop a refinement
of Theorem~\ref{thm:decay of variance} that yields
volume-uniform estimates without modifying the
renormalized Hessian.
\end{remark}

To prove Theorem~\ref{thm:main phi4}, it remains to pass
to the continuum limit. We first rescale time and damping
in~\eqref{eq:sde lattice} by setting
$\tilde\gamma=\delta^{d/2}\gamma$ and
$t'=\delta^{-d/2}t$.
This leads to the lattice equation
\begin{equation}\label{eq:sdnlw lattice}
\partial_{tt}u_\delta(x,t)
+\gamma\partial_tu_\delta(x,t)
+(m^2-\Laplace_\delta)u_\delta(x,t)
=-V_\delta'(u_\delta(x,t))
+\sqrt{2\gamma}\,\eta_\delta(x,t),
\end{equation}
where $\eta_\delta$ denotes lattice space-time white noise,
normalized so that its components at distinct sites are
independent and
\begin{equation}
\E\left[
\left(\int_0^t\eta_\delta(x,s)\diff s\right)^2
\right]=\delta^{-d}t.
\end{equation}
By Theorem~\ref{thm:decay of variance}, the corresponding
lattice dynamics converges to equilibrium in $L^2$
with a relaxation rate bounded below by
\begin{equation}
\alpha=
\frac{\sigma_p\gamma}
{C(\sqrt{\sigma_p}+\tilde R+\gamma)^2},
\end{equation}
uniformly in $\delta$.
To pass to the continuum limit in two dimensions,
we use the following result, established in the
companion paper~\cite{LATTICE}.
We denote by $\ExtF$ the Fourier extension operator.

\begin{theorem}[Lattice approximation]
\label{thm:lattice-convergence}
Let $d=2$ and $L\geq1$, and consider either the
$\Phi^4_2$ model or the sine-Gordon model with
$0<\beta^2<2\pi$.
Let $(u,\partial_tu)$ and
$(u_\delta,\partial_tu_\delta)$ be the corresponding
stationary continuum and lattice solutions of
\eqref{eq:sdnlw continuum} and
\eqref{eq:sdnlw lattice}, constructed on a common
probability space.
For every $T>0$ and $\varepsilon>0$, the following
hold almost surely as $\delta\to0$:
\begin{itemize}
\item
$\ExtF u_\delta\to u$ in
$C([0,T],H^{-\varepsilon}(\Tor_L^2))$;
\item
$\ExtF\partial_tu_\delta\to\partial_tu$ in
$C([0,T],H^{-1-\varepsilon}(\Tor_L^2))$.
\end{itemize}
\end{theorem}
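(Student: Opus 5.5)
The plan is to prove Theorem~\ref{thm:lattice-convergence} by a stability argument on the Da Prato--Debussche (first-order) expansion, run uniformly in the lattice scale. Both solutions are built on a common probability space, with the lattice noise $\eta_\delta$ obtained by averaging the continuum white noise $\xi$ over lattice cells, or by Fourier projection onto the lattice modes. Write
\[
u_\delta=\Psi_\delta+w_\delta,\qquad u=\Psi+w,
\]
where $\Psi_\delta$ and $\Psi$ solve the linear damped wave equations, with the same damping $\gamma$, started from the Gaussian part of the stationary initial data. The first step is to show that $\ExtF\Psi_\delta\to\Psi$ almost surely in $C([0,T],H^{-\varepsilon})$, together with the corresponding statement for $\partial_t\Psi$ in $H^{-1-\varepsilon}$. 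The same must hold for the renormalized stochastic objects: the Wick powers $\wick{\Psi_\delta^k}$, $k\le3$, for $\Phi^4_2$, and the imaginary Gaussian multiplicative chaos $\wick{e^{i\beta\Psi_\delta}}$ for sine-Gordon, all in $C([0,T],\Holder^{-\varepsilon})$-type norms. For these I would compute second moments via the lattice covariance and compare with the continuum covariance $(m^2-\Laplace)^{-1}$. Hypercontractivity (Wiener chaos) gives higher moments in the $\Phi^4$ case. For sine-Gordon I would use the explicit moment formula for charges, which requires $\beta^2<2\pi$ for the relevant second moment in $\Holder^{-\alpha}$ with $\alpha<1$. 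Kolmogorov in time, together with Borel--Cantelli along dyadic $\delta$ (then all $\delta$ via interpolation), upgrades convergence in probability to almost sure convergence. The discrepancy between the lattice and continuum symbols, $|\lambda_\delta(k)-|k|^2|\lesssim \delta^{\theta}|k|^{2+\theta}$, supplies the small factor $\delta^\theta$ at the cost of $\theta$ derivatives.

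The second step is deterministic. The remainder $w$ solves a damped wave equation with forcing
\[
-\lambda\sum_{k}\tbinom{3}{k}\wick{\Psi^{3-k}}w^k
\]
for $\Phi^4_2$, or $-\lambda\beta\,\mathrm{Im}\bigl(e^{i\beta w}\wick{e^{i\beta\Psi}}\bigr)$ for sine-Gordon. This lies in $H^{1-\varepsilon}$-regularity (energy-type) spaces. I would prove a lattice-uniform local well-posedness and Lipschitz stability estimate for the map taking (stochastic data, initial remainder) to $w$, in $C([0,T_0],H^{1-\varepsilon}\times H^{-\varepsilon})$. This uses Strichartz-free energy estimates for the wave propagator, paraproduct and product bounds in Besov spaces, and uniform (in $\delta$) boundedness of $\ExtF$ on these spaces, with its failure to commute with products controlled by a commutator estimate costing $\delta^\theta$. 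The stability estimate then gives $\|\ExtF w_\delta-w\|\to0$ on $[0,T_0]$ once the stochastic data converge.

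The third step globalizes to $[0,T]$, and this is where I expect the main obstacle. The local time $T_0$ depends on the size of $w$, so one needs a priori bounds uniform in $\delta$. For sine-Gordon the nonlinearity is bounded in $w$, so Gronwall suffices. For $\Phi^4_2$ I would follow the invariant-measure (Bourgain) globalization strategy of~\cite{gubinelli_global_2022}, since the lattice measures $\mu_\delta\otimes\kappa_\delta$ are invariant. Uniform-in-$\delta$ moment bounds on $\sup_{t\le T}$ norms of $w_\delta$ then follow from stationarity, local theory on short intervals, and a union bound. This is possible because the lattice Gibbs measures have uniformly controlled tails in $H^{-\varepsilon}$.

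Combining the three steps: outside an event whose probability is small uniformly in $\delta$, all solutions stay bounded on $[0,T]$. Iterating the stability estimate over finitely many intervals of length $T_0$ yields convergence, and a Borel--Cantelli argument makes it almost sure. The initial data couple as well, since $\mu_\delta\otimes\kappa_\delta$ pushed forward by $\ExtF$ converges to $\mu\otimes\kappa$ and can be realized almost surely via the same decomposition of the initial field.
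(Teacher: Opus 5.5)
The paper does not prove this theorem; it quotes it from the companion paper~\cite{LATTICE}, so there is no in-paper argument to compare with. Your architecture (Da Prato--Debussche splitting, lattice-uniform local theory and stability, almost sure convergence of enhanced data) is the natural one, but Step~2 is wrong for sine-Gordon.

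\textbf{Sine-Gordon regularity.} The chaos $\wick{e^{i\beta\Psi}}$ has regularity only $-\beta^2/4\pi-\varepsilon$. So the forcing $\lambda\beta\,\mathrm{Im}(e^{i\beta w}\wick{e^{i\beta\Psi}})$ lies at best in $H^{-\beta^2/4\pi-\varepsilon}$, and the damped-wave Duhamel term gains exactly one derivative. Hence $w\in C_tH^{1-\beta^2/4\pi-\varepsilon}$, not $H^{1-\varepsilon}$. The product $e^{i\beta w}\cdot\wick{e^{i\beta\Psi}}$ then needs $(1-\beta^2/4\pi)-\beta^2/4\pi>0$, i.e.\ $\beta^2<2\pi$; this deterministic condition is where the hypothesis enters. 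Your attribution of $\beta^2<2\pi$ to the second-moment computation is incorrect: the chaos has moments of all orders in $\Holder^{-\beta^2/4\pi-\kappa}$ for all $\beta^2<4\pi$ (Theorem~\ref{moment-gff-sine}). Indeed, if $w$ were in $H^{1-\varepsilon}$, your fixed point would appear to close up to $\beta^2<4\pi$, far beyond the range where the continuum dynamics is known. Also, the nonlinearity is not bounded in the relevant norm: $\norm{e^{i\beta w}}_{H^s}$ grows linearly in $\norm{w}_{H^s}$, which still suffices for Gronwall.

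\textbf{Initial data.} This is the real crux, and you compress it into your last sentence. Since $\mu$ is not Gaussian, ``the Gaussian part of the initial data'' only makes sense through a decomposition $\mu_\delta=\operatorname{Law}(W_\delta+I_\delta)$. Here $W_\delta$ is a lattice free field independent of the noise and $I_\delta$ is regular uniformly in $\delta$. Theorem~\ref{prop:coupling} gives such a decomposition for sine-Gordon, and a F\"ollmer-type drift with $\delta$-uniform $H^1$ moments gives one for $\Phi^4_2$. Weak convergence of $\ExtF_\#(\mu_\delta\otimes\kappa_\delta)$ plus Skorokhod only yields a.s.\ convergence of the data in $H^{-\varepsilon}$, and Wick powers of the linear flow are not continuous in that topology. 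You need joint convergence of $(W_\delta,I_\delta,\text{noise},\text{Wick powers})$ with $\ExtF I_\delta\to I$ in the remainder space. This can be obtained by tightness and Skorokhod on the whole tuple, identifying $\operatorname{Law}(W+I)=\mu$.

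\textbf{Step 3.} This step is unnecessary. The lattice solutions are global, and the continuum remainder is already global and continuous in the energy space by~\cite{gubinelli_global_2022,oh_invariant_2021}. Once the enhanced data and $\ExtF I_\delta$ converge, a bootstrap around the limit using your lattice-uniform stability estimate gives convergence on all of $[0,T]$. If you keep the uniform Bourgain argument, note that tails of $\mu_\delta$ in $H^{-\varepsilon}$ do not control the local existence time. That time is governed by the Wick powers of the linear flow restarted from $u_\delta(t_j)$. Their uniform tails require invariance together with $\delta$-uniform $L^p$ bounds on $\diff\mu_\delta/\diff\nu_\delta$. Moreover, invariance does not directly bound $w_\delta(t_j)=u_\delta(t_j)-\Psi_\delta(t_j)$.

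\textbf{Minor point.} Borel--Cantelli can be run directly over $N=L/\delta$ using high moments. Interpolating between different lattices is not meaningful.
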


Theorem~\ref{thm:lattice-convergence} implies convergence
of the corresponding two-time correlations for bounded
continuous observables on phase space.
We may therefore pass to the limit in the uniform lattice
$L^2$ convergence estimate in bilinear form.
By duality and density in $L^2(\mu\otimes\kappa)$,
this yields Theorem~\ref{thm:main phi4}.

\subsection{Discussion}
Quantitative $L^2$ hypocoercivity is already available
in certain regular infinite-dimensional settings.
In particular, Eisenhuth and
Grothaus~\cite{eisenhuth_grothaus_2024} establish
exponential $L^2$ convergence with explicitly computable
rates for classes of infinite-dimensional Langevin equations,
using a Hilbert-space approach based on the Kolmogorov
generator. This complements the one-dimensional observation
in Remark~\ref{rem:regular-one-dimensional}, where analogous
estimates follow by applying the dimension-free result of
Cao, Lu, and Wang~\cite{cao_explicit_2023} and passing to
the continuum limit.
The main difficulty addressed in the present paper is instead
the singularity of the interactions.

Our method appears to be comparatively insensitive to
ultraviolet divergences, as it yields uniform estimates
even in regimes where the limiting hyperbolic dynamics
has not yet been constructed. Indeed, the hypocoercive
estimates are obtained from functional inequalities
and spectral bounds for the lattice Hamiltonians,
rather than from a pathwise analysis of the limiting
dynamics. This point of view is reminiscent of
generator-based approaches to singular stochastic
equations~\cite{graefner_2023_energy,graefner_2024_weak,
graefner_2025_critical,gubinelli_2018_energy,
gubinelli_2020_the}, and it would be interesting to
investigate to what extent such methods can be developed
for stochastic wave equations.

In parameter regimes where our estimates are also uniform
in the volume, they can be passed to the infinite-volume limit,
provided a suitable convergence of the stationary dynamics
is established; see, for instance,~\cite{oh_2026_hyperbolic}
for the construction of hyperbolic dynamics on the plane.
Such an $L^2$ theory, however, only yields convergence
for initial laws with $L^2$ density with respect to the
equilibrium Gibbs measure. In infinite volume, this is
a severe restriction, since genuinely different spatially
extended states are typically mutually singular.
 A natural further question is
therefore to understand convergence from more general
initial laws. In the parabolic setting, Duch, Hairer,
Yi, and Zhao~\cite{duch_ergodicity_2025} recently
established exponential synchronization for the
infinite-volume $\Phi^4_3$ dynamics at high temperature,
implying in particular uniqueness of the invariant measure.
It would be interesting to investigate whether the
Holley--Stroock uniqueness method, recently implemented
in the parabolic setting in~\cite{bauerschmidt2025holley},
can be adapted to this hyperbolic problem.
In view of the recent hypocoercive entropy-decay estimates
of~\cite{lu_2026_sharp}, this direction appears
particularly promising in the one-dimensional convex case.

\bigskip\noindent
\textbf{Acknowledgements.}
PL was financially supported by the Finnish Centre of Excellence in Randomness and Structures (FiRST).

\bigskip\noindent
\textbf{AI usage disclosure.} ChatGPT 5.6 Sol was used for checking some arguments.
The initial draft and the final revision of the manuscript were fully done by the authors.

\section{Generic estimate on the convergence rate}\label{sec:clw}
The purpose of this section is to establish Theorem~\ref{thm:decay of variance}.
Throughout this section, besides Assumptions~\ref{ass:log-sobolev}
and~\ref{ass:hessian finite}/\ref{ass:potential-intro}
we will make the following qualitative assumption on the potential for finite discretizations,
as in~\cite{cao_explicit_2023}. 

\begin{assumption}\label{ass:technical}
We require that both:

\begin{enumerate}
\item
    $\Laplace_u U(u) \leq M + \dfrac \alpha 2 \abs{\nabla_u U(u)}^2$
    for some $M \geq 1$ and $\alpha \in (0, 1)$, both \emph{possibly dependent} on $\delta$.
\item
    The embeddings $H^1(\mu_\delta) \hookrightarrow L^2(\mu_\delta)$ are compact for all $\delta > 0$.
\end{enumerate}
\end{assumption}

The log-Sobolev inequality implies the following \emph{NGS bound};
see \cite[Theorem~7]{gross_logarithmic_1975} for proof.
We present the result in a rearranged form that matches its usage in Lemma~\ref{thm:clw 2.3}.

\begin{lemma}[NGS~bound]\label{thm:ngs bound}
Assume that the measure $\diff\mu$ satisfies a log-Sobolev inequality with constant $\sigma$.
Let $\psi \colon \R^\Lambda \to \R$ be a continuous function such that
\[
\int \exp\left( \frac{2}{\sigma} \psi(u) \right) \diff\mu(u) \leq K
\]
for some $K > 0$.
Then for any $f \in C^\infty_b(\R^\Lambda)$ we have the estimate
\[
\delta^d \int \psi(u) \abs{f(u)}^2 \diff\mu(u)
\leq \frac{\sigma \delta^d}{2} \log(K) \int \abs{f(u)}^2 \diff\mu(u)
    + \int \abs{\nabla f(u)}^2 \diff\mu(u).
\]
\end{lemma}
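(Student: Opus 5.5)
The plan is to derive the bound from the entropy (Gibbs) variational inequality, with the log-Sobolev inequality \eqref{eq:lsi-lattice} controlling the entropy term. By homogeneity we may assume $\int \abs{f}^2\diff\mu = 1$ (the case $f\equiv 0$ is trivial). Then $\nu = \abs{f}^2\mu$ is a probability measure; this is where the $C_b^\infty$ hypothesis enters, making the density bounded and smooth. For any measurable $g$ with $\int e^{g}\diff\mu<\infty$, the duality formula for relative entropy gives
\[
\int g\,\abs{f}^2\diff\mu \leq \int \abs{f}^2\log \abs{f}^2\diff\mu + \log\int e^{g}\diff\mu .
\]
We apply this with $g = \frac{2}{\sigma}\psi$. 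Continuity of $\psi$ together with the exponential moment hypothesis makes both sides meaningful; if $\psi$ is not bounded above, we first truncate to $\psi\wedge N$ and pass to the limit by monotone convergence. This yields
\[
\frac{2}{\sigma}\int\psi\abs{f}^2\diff\mu \leq \int \abs{f}^2\log\abs{f}^2\diff\mu + \log K .
\]

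Next we insert the log-Sobolev inequality in the normalization \eqref{eq:lsi-lattice}, namely $\int \abs{f}^2\log\abs{f}^2\diff\mu\leq \frac{2}{\delta^d\sigma}\int\abs{\nabla f}^2\diff\mu$. Here $\abs{f}$ in place of $f$ is harmless: for real $f$ this is the same quantity, and in general one may use $\abs{\nabla\abs{f}}\le\abs{\nabla f}$, approximating $\abs{f}$ by $\sqrt{f^2+\epsilon}$ if needed. Combining the two inequalities gives
\[
\frac{2}{\sigma}\int\psi\abs{f}^2\diff\mu \leq \frac{2}{\delta^d\sigma}\int\abs{\nabla f}^2\diff\mu + \log K .
\]

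Multiplying through by $\sigma\delta^d/2$ gives exactly the claimed estimate in the normalized case. Undoing the normalization, by scaling $f\mapsto f/\norm{f}_{L^2(\mu)}$, restores the factor $\int\abs{f}^2\diff\mu$ multiplying $\log K$. No step is a real obstacle. The only care needed is the integrability and truncation justification in the variational inequality, and keeping track of the $\delta^d$ normalization so that the constants match the rearranged form used in Lemma~\ref{thm:clw 2.3}.
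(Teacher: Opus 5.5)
Your argument is correct and is essentially the classical proof behind \cite[Theorem~7]{gross_logarithmic_1975}, which the paper cites instead of proving: apply the entropy (Gibbs) variational inequality to $\nu=\abs{f}^2\mu$ with $g=\frac{2}{\sigma}\psi$, then bound the entropy by the lattice log-Sobolev inequality. Your choice to read the hypothesis as \eqref{eq:lsi-lattice} (constant $\delta^d\sigma$ in the paper's convention) is the reading that makes the $\delta^d$ factors match, and the constants then give exactly the form used in Lemma~\ref{thm:clw 2.3}.
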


Below the time parameter $T$ is as in~\cite{cao_explicit_2023}.
We denote by $\diff\tau$ the normalized Lebesgue measure over the interval $[0,T]$.

\begin{lemma}[{{Analogue of \cite[Lemma~2.1]{cao_explicit_2023}}}]\label{thm:clw 2.1}
Since $\sigma_{p} \geq \sigma$, Assumption~\ref{ass:log-sobolev}
implies that we have the following tensorised Poincaré inequality
for $g\in H^1(\tau \otimes \mu_\delta)$:
\[
\norm{g - {\textstyle\int g \diff(\tau \otimes \mu)}}_{L^2(\tau \otimes \mu)}^2
\leq \max\!\left\{ \frac{1}{\delta^{d}\sigma_{p}}, \frac{T^2}{\pi^2} \right\}
    \left( \norm{\partial_t g}_{L^2(\tau \otimes \mu)}^2
        + \norm{\nabla_u g}_{L^2(\tau \otimes \mu)}^2 \right).
\]
\end{lemma}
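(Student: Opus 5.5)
The plan is to tensorize two one-dimensional spectral gap estimates: the Poincaré inequality for $\mu_\delta$ from Assumption~\ref{ass:log-sobolev}, and the Poincaré inequality for the normalized Lebesgue measure $\diff\tau$ on $[0,T]$. The latter is the Neumann-Laplacian gap on an interval of length $T$, equal to $\pi^2/T^2$. The phrase ``Since $\sigma_p\geq\sigma$'' only records that the Poincaré constant $\sigma_p$ is admissible; indeed the log-Sobolev inequality with constant $\sigma$ already gives a Poincaré inequality with $\sigma$, and Assumption~\ref{ass:log-sobolev} supplies the better constant $\sigma_p$. So only \eqref{eq:poincare-lattice} is needed, read as $\operatorname{Var}_\mu(f)\le (\delta^d\sigma_p)^{-1}\int\abs{\nabla f}^2\diff\mu$.

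The key step is the standard variance decomposition. For $g\in H^1(\tau\otimes\mu_\delta)$, write $\bar g(t)=\int g(t,u)\diff\mu_\delta(u)$. Then
\[
\operatorname{Var}_{\tau\otimes\mu}(g)=\int \operatorname{Var}_{\mu}(g(t,\cdot))\diff\tau(t)+\operatorname{Var}_\tau(\bar g),
\]
which is an orthogonal splitting of $g-\int g$ into $g-\bar g$ and $\bar g-\int\bar g\,\diff\tau$. These two pieces are bounded separately.

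For the first term, apply \eqref{eq:poincare-lattice} for each fixed $t$ and integrate in $\tau$. This gives the bound $(\delta^d\sigma_p)^{-1}\norm{\nabla_u g}^2_{L^2(\tau\otimes\mu)}$.

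For the second term, use the interval Poincaré inequality $\operatorname{Var}_\tau(h)\le (T^2/\pi^2)\int\abs{h'}^2\diff\tau$. This follows by expanding $h$ in the cosine basis $\cos(k\pi t/T)$, whose lowest nonzero eigenvalue is $\pi^2/T^2$. Apply it with $h=\bar g$, noting $\bar g'(t)=\int\partial_t g\,\diff\mu$, and then use Jensen (Cauchy–Schwarz) to get $\abs{\bar g'}^2\le\int\abs{\partial_t g}^2\diff\mu$. The resulting bound is $(T^2/\pi^2)\norm{\partial_t g}^2$.

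Summing the two bounds and replacing each prefactor by the maximum yields the claim. The only genuinely delicate point is justifying the differentiation under the integral and the Fubini manipulations for general $g\in H^1$. I would handle this by proving the inequality for smooth bounded $g$, such as $C^\infty([0,T])\otimes C_b^\infty$ functions or products, and then extending by density in $H^1(\tau\otimes\mu_\delta)$. Both sides are continuous in the $H^1$ norm, so this step is routine.
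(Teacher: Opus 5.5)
Your proposal is correct and is the standard tensorization of Poincaré inequalities that the paper invokes by citing \cite[Chapter~4]{bakry_analysis_2014}. It splits the variance into the fibrewise $\mu_\delta$-variance, bounded by the lattice Poincaré inequality with constant $\delta^d\sigma_p$, and the $\tau$-variance of $\bar g$, bounded by the Neumann gap $\pi^2/T^2$ via Jensen; you simply write out the details the paper leaves to that reference.
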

\begin{proof}
See e.g.~\cite[Chapter~4]{bakry_analysis_2014}.
\end{proof}

\begin{lemma}[{{Analogue of \cite[Lemma~2.3]{cao_explicit_2023}}}]\label{thm:clw 2.3}
Let $g \in H^2(\tau \otimes \mu_\delta)$ be such that
$(\partial_t g, \nabla_u g) \in H^1_0(\tau \otimes \mu_\delta)^{1 + \abs\Lambda}$.
Then
\[
\begin{gathered}
\norm{D^2 g}_{L^2(\tau \otimes \mu)}^2
\leq 2\norm{(-\partial_{tt} + \nabla_u^\ast \nabla_u) g}_{L^2(\tau \otimes \mu)}^2
    + 2(R^2+2^d) \delta^d \norm{\nabla_u g}_{L^2(\tau \otimes \mu)}^2, \text{ and}\\
\norm{\nabla_u^2 g}_{L^2(\tau \otimes \mu)}^2
\leq 2\norm{\nabla_u^\ast \nabla_u g}_{L^2(\tau \otimes \mu)}^2
    + 2(R^2+2^d) \delta^d \norm{\nabla_u g}_{L^2(\tau \otimes \mu)}^2.
\end{gathered}
\]
Here $D^2$ is the Hessian with respect to all of
$t$ and $u(x)$, $x \in \Lambda$.
If  Assumption~\ref{ass:hessian finite} holds,
then the above bounds hold with $R^2+2^d$ replaced by $R^2$.
\end{lemma}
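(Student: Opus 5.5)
The plan follows the proof of \cite[Lemma~2.3]{cao_explicit_2023}: a Bochner-type integration by parts, with the Hessian term handled by the NGS bound (Lemma~\ref{thm:ngs bound}) instead of by a uniform semiconvexity constant.

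\textbf{Step 1: Bochner identity.} Write $\nabla_u^\ast = -\nabla_u\cdot + \nabla_u U\cdot$ for the adjoint in $L^2(\mu_\delta)$. For $g$ with the stated boundary conditions, integrate by parts twice in $t$ and $u$. The conditions $(\partial_t g,\nabla_u g)\in H^1_0$ kill the boundary terms at $t=0,T$, which gives
\[
\norm{(-\partial_{tt}+\nabla_u^\ast\nabla_u)g}^2
= \norm{D^2 g}^2 + \int \inorm{\nabla_u g, \Hess_u U\,\nabla_u g}\diff(\tau\otimes\mu).
\]
Here $\Hess_u U$ is the Hessian with respect to the coordinates $u(x)$. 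With the paper's normalization this equals $\delta^d$ times the $L^2(\Lambda_\delta)$ Hessian~\eqref{eq:hessian}, so it carries a factor $\delta^{d}$ against the Euclidean gradient. The same computation with $\partial_t$ omitted gives the second identity, for $\nabla_u^2 g$ and $\nabla_u^\ast\nabla_u g$. Technical Assumption~\ref{ass:technical} plus density justifies all these manipulations.

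\textbf{Step 2: the global case.} Under Assumption~\ref{ass:hessian finite} we have the pointwise bound
\[
\inorm{h,\Hess_u U\, h} \geq -\delta^d Q_\delta(u)\abs{h}^2 .
\]
Apply Lemma~\ref{thm:ngs bound} to each component $f=\partial_{u(x)} g$ with $\psi=2Q_\delta$. Then $\int e^{(2/\sigma)\psi}\diff\mu=\int e^{(4/\sigma)Q_\delta}\diff\mu\leq e^{4R^2/\sigma}$, so we may take $\log K=4R^2/\sigma$. Summing over $x$ yields
\[
2\delta^d\int Q_\delta\abs{\nabla_u g}^2 \leq 2R^2\delta^d\norm{\nabla_u g}^2+\norm{\nabla_u^2 g}^2 .
\]
Hence $\delta^d\int Q_\delta\abs{\nabla_u g}^2\le R^2\delta^d\norm{\nabla_u g}^2+\tfrac12\norm{\nabla_u^2g}^2$. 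The $\nabla_u^2 g$ term is at most $\tfrac12\norm{D^2g}^2$, so absorbing it into the left side of the Bochner identity gives
\[
\tfrac12\norm{D^2g}^2\le\norm{(-\partial_{tt}+\nabla_u^\ast\nabla_u)g}^2+R^2\delta^d\norm{\nabla_u g}^2,
\]
which is the claim with constant $R^2$. The NGS bound is stated for $C_b^\infty$ functions, so it is extended to $H^1$ components by approximation.

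\textbf{Step 3: the localized case; this is the main obstacle.} Here the Hessian must be split spatially. Choose a partition of unity $\sum_{i\in\Z^d}\chi_i^2=1$ with $\chi_i$ supported in $B(i,2)$, $\abs{\nabla\chi_i}\lesssim 1$, and overlap multiplicity at most $2^d$. For fixed $(t,u)$ set $h=\nabla_u g$, viewed as a function on $\Lambda_\delta$ after the $\delta^{-d}$ rescaling. The IMS localization formula gives
\[
\sum_x\abs{\nabla_\delta h}^2=\sum_i\sum_x\abs{\nabla_\delta(\chi_ih)}^2-\sum_x\sum_i\abs{\nabla_\delta\chi_i}^2h^2 ,
\]
up to lattice commutator corrections, which are controlled since $\chi_i$ varies on scale $1\gg\delta$. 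The error term is at most $2^d\abs{h}^2$ in suitable units. Keep half of the gradient as a nonnegative remainder and apply~\eqref{eq:assumption on hessian} to each $\chi_ih$. This gives
\[
\inorm{h,\Hess U\,h}\geq-\delta^d\sum_i Q_{\delta,i}\abs{\chi_ih}^2-2^d\delta^d\abs{h}^2 .
\]
Next apply Lemma~\ref{thm:ngs bound} to each $\chi_i\partial_{u(x)}g$; this is legitimate because $\chi_i$ depends only on the site $x$, not on $u$. Then~\eqref{eq:potential assumption exponential} controls $\sum_i\int Q_{\delta,i}\abs{\chi_i\nabla_u g}^2$ by $R^2\delta^d\norm{\nabla_ug}^2+\tfrac12\sum_i\norm{\chi_i\nabla_u^2g}^2$. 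Since $\sum_i\chi_i^2=1$, the last sum equals $\tfrac12\norm{\nabla_u^2g}^2$. Absorbing as in Step 2 gives the constant $R^2+2^d$. The delicate part of the whole argument is the bookkeeping of the constant $2^d$ coming from $\abs{\nabla\chi_i}^2$ and the overlaps, together with making the lattice IMS identity exact enough.
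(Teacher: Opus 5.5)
Your proposal is correct and follows the paper's proof: the Bochner identity, the NGS bound applied to each $\partial_{u(x)}g$ (resp.\ $\chi_i\partial_{u(x)}g$) with $\psi=2Q$ and $\log K=4R^2/\sigma$, absorption of $\tfrac12\norm{\nabla_u^2 g}^2$, and a quadratic partition of unity $\sum_i\chi_i^2=1$ in the localized case. The only variation is that you localize the gradient with the IMS formula, where the paper uses the product rule with auxiliary cutoffs $\tilde\chi_i$ and Young's inequality. Your version is cleaner: on the lattice the IMS identity is exact edge by edge, with error $\delta^{d-2}\sum_{x\sim y}h(x)h(y)\sum_i(\chi_i(x)-\chi_i(y))^2$ bounded by $C_d\norm{h}^2_{L^2(\Lambda_\delta)}$, so only a dimensional constant is at stake, just as the paper's argument really gives $2^dC$. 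Moreover, the IMS step keeps the full gradient on each piece, so the factor $\tfrac12$ in \eqref{eq:assumption on hessian} is not actually needed for your argument.
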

\begin{proof}
We only prove the second statement, the first is proved analogously.
As in \cite[Eq.~(20)]{cao_explicit_2023}, we start from the following Bochner identity,
also known as the $\Gamma_2$ identity \cite[Section~1.16]{bakry_analysis_2014}:
\begin{equation}\label{eq:clw 2.3 beginning}
\begin{split}
\norm{\nabla_u^2 g}_{L^2(\tau \otimes \mu)}^2
&= \sum_{x,y \in \Lambda} \iint \abs{\partial_{u(x)} \partial_{u(y)} g(u)}^2 \diff\mu_\delta(u) \diff\tau(t)\\
&= \norm{\nabla_u^\ast \nabla_u g}_{L^2(\tau \otimes \mu)}^2\\
&\qquad -\iint
    (\nabla_u g(t,u))^\top (\nabla_u^2 U(u)) (\nabla_u g(t,u)) \diff\mu_\delta(u) \diff\tau(t).
\end{split}
\end{equation}
We estimate the second term from above.
By computing the Hessian of the energy $U$, we find that the second term equals
\begin{equation}
-\iint \delta^d \sum_{x \in \Lambda} \Bigl[ (V''(u(x)) + m^2) (\partial_{u(x)} g(t,u))^2
        + \abs{\nabla_\delta \partial_{u(x)} g(t,u)}^2 \Bigr]
    \diff\mu_\delta(u) \diff\tau(t).
\end{equation}
We now decompose $h(x)^2 \coloneqq \partial_{u(x)} g(t,u)^2$ as $\sum_{i\in \Z^d} h_i^2$
given by $h_i=\partial_{u(x)} g(t,u) \chi_{i}$,
where $\chi^{2}_i$ is a suitable partition of unity supported on $B(i,2)$ for $i\in \Z^d$.
We will also require a family $\tilde{\chi}_i$ of smooth cutoff functions supported in $B(i,4)$
and equal to $1$ on the support of $\chi_i$,
such that $\sum_{i\in \Z^d} \tilde{\chi}^2_i = 2^d$.
Then we can write the above expression as 
\begin{equation}
-\sum_{i \in \Z^d}\iint \delta^d \sum_{x \in \Lambda} \Bigl[ V''(u(x)) h_i^2
        + m^2 h_i^2 + \abs{\chi_i \nabla_\delta h}^2 \Bigr]
    \diff\mu_\delta(u) \diff\tau(t).
\end{equation}
We may write
\begin{equation}
\chi_i \nabla_\delta h
= \chi_i \nabla_\delta (\tilde\chi_i h)
= \nabla (\chi_i \tilde\chi_i h) - (\nabla \chi_i) \tilde\chi_i h
= \nabla (\chi_i h) - (\nabla \chi_i) \tilde\chi_i h.
\end{equation}
Therefore we can estimate with Young's inequality
\begin{equation}
-\abs{\chi_i \nabla_\delta h}^2
\leq -\frac 1 2 \abs{\nabla_\delta (\chi_i h)}^2
    + \abs{(\nabla \chi_i) \tilde\chi_i h}^2.
\end{equation}
By construction $\abs{\nabla \chi_i} \leq C$ for some uniform, dimension-dependent constant.
Therefore
\begin{equation}
\begin{split}
&\mathrel{\phantom{=}}
-\iint \delta^d \sum_{x \in \Lambda} \Bigl[ V''(u(x)) h_i(x)^2 + m^2 h_i(x)^2
        + \abs{ \chi_i \nabla_\delta h}^2 \Bigr]
    \diff\mu_\delta(u) \diff\tau(t)\\
&\leq -\iint \delta^d \sum_{x \in \Lambda} \Bigl[ V''(u(x)) h_i(x)^2 + m^2 h_i(x)^2
        + \frac{1}{2}\abs{ \nabla_\delta h_i}^2 \Bigr] + C \delta^d \sum_{x \in \Lambda} \tilde{\chi}^{2}_{i}  h^2 \diff\mu_\delta(u) \diff\tau(t).
\end{split}
\end{equation}
Now by Assumption~\ref{ass:potential-intro} this can be bounded by
\begin{equation}\label{eq:clw 2.3 sum}
\sum_{x \in \Lambda} \delta^d \iint (Q_i(u)) (h_i)^2 
    \diff\mu_\delta(u) \diff\tau(t) + C \delta^d \sum_{x \in \Lambda} \tilde{\chi}^{2}_{i}  h^2 \diff\mu_\delta(u) \diff\tau(t)
\end{equation}
Now by Lemma~\ref{thm:ngs bound} (taking $\psi(u) = 2 Q(u)$ and $f(u) = 2^{-1/2} h_i(x)$
with $x$ and $i$ fixed)
and $R^2$ from Assumption~\ref{ass:potential-intro}, the integral is bounded by
\begin{equation}
\int \Bigl[ \delta^d R^2 \int
        (h_i(t,x,u))^2 \diff\mu_\delta(u)
    + \frac 1 2 \int \abs{\nabla_u h_i(t,x,u)}^2 \diff\mu_\delta(u) \Bigr] \diff\tau(t)+C \delta^d \sum_{x \in \Lambda} \tilde{\chi}^{2}_{i}  h^2 \diff\mu_\delta(u) \diff\tau(t)
\end{equation}
Summing over $x \in \Lambda$ and $i\in \Z^d$ we obtain 
\begin{equation}
\begin{split}
&\mathrel{\phantom{=}}
    \norm{\nabla_u^2 g}_{L^2(\tau \otimes \mu)}^2\\
&\leq \norm{\nabla_u^\ast \nabla_u g}_{L^2(\tau \otimes \mu)}^2
    + \delta^d (R^2+2^d C) \iint \abs{\nabla_u g(t,u)}^2 \diff\mu_\delta(u) \diff\tau(t)
    + \frac{1}{2} \norm{\nabla_u^2 g}_{L^2(\tau \otimes \mu)}^{2}.
\end{split}
\end{equation}
Subtraction of the last term on the right-hand side gives the desired bound. 
Under Assumption~\ref{ass:hessian finite} the proof proceeds
without using the partition of unity,
which removes the $2^d C$~term on the right-hand side. 
\end{proof}

With Lemma~\ref{thm:clw 2.3} we may now follow \cite{cao_explicit_2023} to obtain the decay of variance estimate in Theorem~\ref{thm:decay of variance}.

\begin{proof}[Proof of Theorem~\ref{thm:decay of variance}]
Our Assumption~\ref{ass:log-sobolev} is stronger than the
Poincaré inequality assumed in~\cite{cao_explicit_2023}.
This is discussed in Lemma~\ref{thm:clw 2.1},
which replaces~\cite[Lemma~2.1]{cao_explicit_2023}.
We use the NGS~bound in our Lemma~\ref{thm:clw 2.3},
which replaces~\cite[Lemma~2.3]{cao_explicit_2023}.

The only remaining dimension dependency is in~\cite[Lemma~2.2]{cao_explicit_2023},
which uses Item~1 of Assumption~\ref{ass:technical}.
With our proof of Lemma~\ref{thm:clw 2.3} this technical result is only used in
qualitative way at the end of~\cite[Lemma~2.4]{cao_explicit_2023}.
Therefore these modifications already suffice to make the proof uniform in $\delta > 0$.

In~\cite{cao_explicit_2023} $f_0$ is assumed to be in~$L^2(\mu;\, H^1_\kappa)$,
but since the inequality is stated in terms of $L^2$ norm,
we can argue by density that the result holds also for $f_0 \in L^2(\mu \otimes \kappa)$.
\end{proof}

\section{The two-dimensional models}\label{sec:measures}

The goal of this section is to prove Theorem~\ref{thm:integrability}.
Thus we estimate the exponential moment~\eqref{eq:potential assumption exponential}
for the $\Phi_2^4$ and sine-Gordon models.
We begin by introducing the variational approach
to computing expectations of the form $\E \exp(-Q)$.

\subsection{The variational approach}

Consider a space-time white noise $X$ on $\mathbb{R} \times \Lambda$,
and denote by $\Prob$ its law on $\mathcal{S}'([0,T]\times \Lambda)$.
We define the space of progressively measurable processes (drifts)
\begin{equation}
\mathbb{H}_a = \left\{ h \text{ adapted to } X
    \colon h \in L^2([0,T]\times \Lambda) \text{ almost surely} \right\}.
\end{equation}
Let $\mathcal{P}$ denote the space of probability measures on
$\mathcal{S}'([0,T]\times \Lambda)\times L^2([0,T]\times \Lambda)$.
We set
\begin{equation}
\begin{split}
\mathcal{X}
&= \left\{ \zeta \in \mathcal{P} \colon
    \exists h \in \mathbb{H}_a \text{ such that }
    \zeta = \operatorname{Law}(X,h) \right\},\\
\overline{\mathcal{X}}
&= \left\{ \zeta \in \mathcal{P} \colon
    \zeta_n \to \zeta \text{ weakly for some } \zeta_n\in\mathcal X
    \text{ with }
    \sup_{n\in\mathbb{N}}\E_{\zeta_n}
    \norm{h}^{2}_{L^2([0,T]\times\Lambda)}<\infty
    \right\}.
\end{split}
\end{equation}
We will define the regularizing operator
\begin{equation}
J_t = (1-\Laplace_\delta)^{-1/2} \I_{t \in [0,1]}
\end{equation}
as Fourier multiplier, and
\begin{equation}
I_t(h) = \int^t_0 J_s h_s \diff s,\qquad
W_t = \int^t_0 J_s \diff X_s.
\end{equation}
In case $t=1$ we will drop the subscript and write $I=I_1(h)$, $W=W_1$.
In fact we use this parametrization only for the $\Phi^4_2$ measure.
We define a different regularization for the sine-Gordon construction
in Section~\ref{sec:sine-gordon measure}.
In the sequel we denote by $\nu_{\delta}=\operatorname{Law}(W_1)$
the Gaussian free field on $\Lambda_\delta$ with mass $1$.
It will be convenient to use mass $1$ for technical reasons.
Since this is the reference measure, we add a term
$\sum_{x\in\Lambda_{\delta}}\phi^2$ to the potential.

\begin{theorem}[Boué--Dupuis formula]\label{thm:boue-dupuis}
If $F \colon \R^{\Lambda} \to \R$ is continuous and bounded below,
then
\[
\begin{split}
-\log \int \exp(-F(u)) \diff \nu_{\delta}(u)
&= \inf_{h \in \mathbb H_{a}} \E_{\mathbb{P}} \left[
    F(W + I[h]) + \frac 1 2 \int_0^1 \norm{h_s}_{L^2}^2 \diff s
    \right]\\
&= \min_{\zeta \in \overline{\mathcal X}} \E_\zeta \left[
    F(W + I[h]) + \frac 1 2 \int_0^1 \norm{h_s}_{L^2}^2 \diff s
    \right].
\end{split}
\]
\end{theorem}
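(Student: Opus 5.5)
The plan is to derive the formula from the Donsker--Varadhan (Gibbs) variational principle on path space, combined with the Girsanov theorem and the martingale representation theorem; this is the standard Boué--Dupuis argument, here in the finite-dimensional setting $\R^{\Lambda}$. First, it suffices to treat $F$ bounded and continuous, since the general case of $F$ continuous and bounded below follows by applying the bounded case to $F\wedge N$ and letting $N\to\infty$. The left side converges by monotone convergence. The infimum on the right is nondecreasing in $N$ and bounded above by its value for $F$. For the reverse inequality, take near-optimal drifts $h^N$ with uniformly bounded energy, pass to a weakly convergent subsequence of their laws in $\overline{\mathcal X}$, and conclude by lower semicontinuity.

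For bounded $F$, I would write $\tilde F(X)=F(W_1)$ as a functional of the white noise $X$ on $[0,1]\times\Lambda$. Only the time interval $[0,1]$ matters, since $J_t=0$ for $t>1$. The Gibbs variational principle gives
\[
-\log\E_{\Prob}e^{-\tilde F}=\inf_{\mathbb Q\ll\Prob}\bigl\{\E_{\mathbb Q}\tilde F+H(\mathbb Q\mid\Prob)\bigr\}.
\]
The next step is to identify relative entropy with drift energy. For any $\mathbb Q\ll\Prob$ with finite entropy, the martingale representation theorem applied to $\diff\mathbb Q/\diff\Prob$ produces an adapted $h$ such that, under $\mathbb Q$, $X_t-\int_0^t h_s\diff s$ is a white noise and $H(\mathbb Q\mid\Prob)=\tfrac12\E_{\mathbb Q}\int_0^1\norm{h_s}^2\diff s$. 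Under $\mathbb Q$ we then have $W_1=\tilde W_1+I[h]$, where $\tilde W$ is built from the new noise. Conversely, for a bounded adapted drift $h$, Girsanov's theorem shows that the law of $(X+\int h,h)$ yields a $\mathbb Q$ with entropy bounded by the energy. These two directions give the ``$\leq$'' and ``$\geq$'' inequalities respectively.

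The main obstacle is the direction in which one must realize the drift on the \emph{original} probability space: the drift $h$ obtained from $\mathbb Q$ is adapted to the $\mathbb Q$-noise filtration, not necessarily to that of $\tilde X$. I would handle this as Boué and Dupuis do. The optimal measure $\diff\mathbb Q^*=e^{-\tilde F}/\E e^{-\tilde F}\,\diff\Prob$ can be approximated by measures whose densities are smooth cylindrical functionals bounded away from zero. For these, the drift is a Lipschitz functional of the path, so the SDE $Y=X+\int h(Y)$ has a strong solution and the drift becomes adapted to $X$. Entropy and $\E\tilde F$ converge along the approximation because $F$ is bounded and continuous. This yields the infimum formula over $\mathbb H_a$.

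Finally, for the minimum over $\overline{\mathcal X}$, take a minimizing sequence $\zeta_n=\operatorname{Law}(X,h^n)$. The energies are bounded, since $F$ is bounded below and the infimum is finite. Hence the laws are tight in $\mathcal S'\times L^2_{\mathrm{weak}}$, and a subsequence converges weakly to some $\zeta\in\overline{\mathcal X}$. The map $h\mapsto I[h]$ is compact on $L^2$ in finite dimensions, so $F(W+I[h])$ converges in law. The energy term is weakly lower semicontinuous. It follows that $\zeta$ attains a value at most the infimum. The reverse inequality, that the infimum over $\overline{\mathcal X}$ is not below the $\mathbb H_a$ infimum, follows by the same continuity applied to approximating sequences. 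Hence the minimum is attained and equals the infimum.
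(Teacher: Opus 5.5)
The paper gives no argument of its own: it cites \cite[Lemmas~2 and~8]{barashkov_variational_2023}, and your outline is the standard Bou\'e--Dupuis route behind those lemmas. For bounded continuous $F$ it is fine, but two steps fail as written.

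The first gap is the passage from bounded to bounded-below $F$ in the $\mathbb H_a$ formula. Truncation gives $-\log\int e^{-F}\diff\nu_\delta\leq\inf_{\mathbb H_a}$ for free. Your argument for the reverse inequality, however, produces a weak limit $\zeta\in\overline{\mathcal X}$. It therefore only shows $\min_{\overline{\mathcal X}}\leq-\log\int e^{-F}\diff\nu_\delta$, and lower semicontinuity cannot bring you back into $\mathbb H_a$. Nor can you reuse the truncated drifts. A drift realizing the optimizer $Q_N\propto e^{-F\wedge N}\nu_\delta$ has cost $+\infty$ for $F$ as soon as $\int F\diff\nu_\delta=\infty$, because $\int(F-N)_+\diff Q_N=e^{-N}Z_N^{-1}\int(F-N)_+\diff\nu_\delta$. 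For the same reason, approximating densities bounded away from zero do not help for unbounded $F$.

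The fix uses the fact that $F(W)$ depends only on the terminal value $X_1$ of a finite-dimensional Brownian motion. Put $g(x)=e^{-F((1-\Laplace_\delta)^{-1/2}x)}/Z$ and take the F\"ollmer drift $v_t=\nabla\log G_t$, where $G_t(x)=\E\,g(x+X_1-X_t)$. Since $g$ is bounded and strictly positive, $v$ is smooth on $[0,1)\times\R^{\Lambda}$. Hence $\diff Y_t=v_t(Y_t)\diff t+\diff X_t$ has a unique strong solution, and non-explosion follows by identifying its law with that of $X$ under $\diff Q^*=g(X_1)\diff\Prob$. Then $h_t=v_t(Y_t)\in\mathbb H_a$. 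By F\"ollmer's energy identity its cost is $\E_{Q^*}[F(W)]+H(Q^*\mid\Prob)=-\log\int e^{-F}\diff\nu_\delta$, which is finite because $Fe^{-F}$ is bounded. This handles bounded-below $F$ directly and makes your cylindrical approximation unnecessary.

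The second gap is your claim that $\E_\zeta\bigl[F(W+I[h])+\tfrac12\int_0^1\norm{h_s}_{L^2}^2\diff s\bigr]\geq-\log\int e^{-F}\diff\nu_\delta$ for $\zeta\in\overline{\mathcal X}\setminus\mathcal X$ "follows by the same continuity". It does not, because the energy is only lower semicontinuous under weak convergence. Approximating sequences therefore bound the functional at $\zeta$ from above, not from below.

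The missing idea is that non-anticipativity survives weak limits. Under each $\zeta_n$ the future increments $(X_s-X_t)_{s\geq t}$ are independent of $(X_r,\int_0^r h)_{r\leq t}$, and this condition is closed under weak convergence. So under $\zeta$ the drift $h$ has a version that is progressively measurable for a filtration in which $X$ is still a Brownian motion. Your Girsanov/entropy bound $H(\operatorname{Law}_\zeta(W+I[h])\mid\nu_\delta)\leq\tfrac12\E_\zeta\int_0^1\norm{h_s}_{L^2}^2\diff s$ then holds for such weak controls; finite energy comes from lower semicontinuity. Combined with the Gibbs principle, this gives the missing bound. With both points supplied, your minimizing-sequence argument produces a minimizer in $\overline{\mathcal X}$, and the three quantities coincide.
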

\begin{proof}
Both versions can be found in \cite[Lemmas~2 and~8]{barashkov_variational_2023}.
\end{proof}

We take $Q(u)$ to be
\begin{equation}\label{eq:Q}
Q(u)
= -\inf_{\norm{\psi}_{L^2(\Lambda)} = 1}
  \delta^2 \sum_{x\in\Lambda} 
  \left[
    \chi(x) V''(u(x))\psi(x)^2
    + \frac 1 2 \abs{\nabla_\delta \psi(x)}^2
    + m^2 \psi(x)^2
  \right].
\end{equation}
with positive $\chi \in C^\infty_c$ with support in
$B(0,3)$ such that $\chi\equiv 1$ on $B(0,2)$.
Observe that $Q$ is the bottom of the spectrum of a discrete Schrödinger operator. 
Then by construction we have
\begin{equation}
\delta^2 \sum_{x \in \Lambda}
  \chi \Bigl[V''(u)\psi^2
  + \frac 1 2 \abs{\nabla_\delta \psi}^2
  + m^2\psi^2 \Bigr]
\geq - Q(u) \norm{\psi}^2_{L^2(\Lambda)}.
\end{equation}

We can now rewrite the left-hand side of \eqref{eq:integrability bounds} as
\begin{equation}\label{eq:log-diff}
\begin{split}
&\mathrel{\phantom{=}}
    \log \int e^{q Q(u)} \diff\mu_\delta(u)\\
&= \log \int e^{q Q(u) - \sum_{x \in \Lambda}[ V(u)-(1-m^2):u^2:]} \diff\nu_\delta(u)
    - \log \int e^{-\sum_{x \in \Lambda}[ V(u)-(1-m^2):u^2:]} \diff\nu_\delta(u).
\end{split}
\end{equation}
This form is suitable for approximation with the Boué--Dupuis formula.

To treat the $\phi^4$ models, we will need the following characterisation of  the gradient of $Q(u)$ in terms of an optimization problem.

\begin{lemma}[Hellman-Feynman formula]\label{lem:Grad-bottom}
Assume $V \in C^3(\R)$.
For any $\delta > 0$, we have
\[
-\partial_{u(x)} Q(u) = \chi(x) V'''(u(x))(\psi^u(x))^2,
\]
where $\psi^u$ of unit $L^2(\Lambda)$ norm minimizes
\[
    \psi \mapsto
    \sum_{x \in\Lambda}
    \left(\chi V''(u)\psi^2
    +m^2\psi^2+ \frac 1 2 \abs{\nabla_\delta \psi}^2\right).
\]
\end{lemma}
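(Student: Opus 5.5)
This is a Hellmann--Feynman computation for the lowest eigenvalue of a finite symmetric matrix. Write
\[
H(u)=\delta^2\bigl(\operatorname{diag}(\chi V''(u)+m^2)-\tfrac12\Laplace_\delta\bigr)
\]
(up to the normalization of the quadratic form). It is a real symmetric $\abs{\Lambda}\times\abs{\Lambda}$ matrix depending on $u$, and $-Q(u)=\lambda_{\min}(H(u))=\min_{\norm{\psi}=1}\langle\psi,H(u)\psi\rangle$. Since $V\in C^3$, each entry of $H(u)$ is $C^1$ in $u$, and $\partial_{u(x)}H(u)$ is the diagonal matrix $\delta^2\chi(x)V'''(u(x))e_xe_x^\top$.

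The plan is to show first that the minimizer $\psi^u$ is unique up to sign, so that $\lambda_{\min}$ is a simple eigenvalue. The off-diagonal part of $H(u)$ is $\tfrac{\delta^2}{2}$ times the negative lattice Laplacian, whose off-diagonal entries are $\leq0$ and which is irreducible on the connected periodic lattice. Adding a large multiple of the identity therefore gives a matrix whose negative is entrywise nonnegative off the diagonal, up to a shift. Perron--Frobenius (equivalently, the standard argument that replacing $\psi$ by $\abs{\psi}$ does not increase the Dirichlet form) then shows that the ground state is simple and can be taken strictly positive. This is the only real obstacle. If $\lambda_{\min}$ were degenerate, $Q$ would only be Lipschitz and the formula would hold only as a one-sided derivative; positivity settles this.

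Simplicity and the implicit function theorem, or analytic perturbation theory for simple eigenvalues of symmetric matrices, make $\lambda_{\min}(H(u))$ and a normalized eigenvector $\psi^u$ $C^1$ functions of $u$. Differentiating $\lambda=\langle\psi^u,H\psi^u\rangle$ gives
\[
\partial_{u(x)}\lambda=\langle\psi^u,(\partial_{u(x)}H)\psi^u\rangle+2\langle\partial_{u(x)}\psi^u,H\psi^u\rangle .
\]
Since $H\psi^u=\lambda\psi^u$ and $\langle\partial\psi^u,\psi^u\rangle=\tfrac12\partial\norm{\psi^u}^2=0$, the last term vanishes. Hence $\partial_{u(x)}(-Q)=\chi(x)V'''(u(x))\psi^u(x)^2$, up to the factor $\delta^2$, which is absorbed by the same normalization convention as in the statement's unnormalized functional. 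This is the claimed identity.

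If one prefers to avoid perturbation theory, there is an alternative envelope-theorem argument. For fixed $u$ and a direction $e_x$, minimality gives $\lambda(u+se_x)\leq\langle\psi^u,H(u+se_x)\psi^u\rangle$ and $\lambda(u)\leq\langle\psi^{u+se_x},H(u)\psi^{u+se_x}\rangle$. Dividing by $s$ and letting $s\to0^\pm$, and using continuity of $\psi^{u+se_x}\to\psi^u$ (which follows from simplicity and compactness of the unit sphere), squeezes the difference quotient to $\chi(x)V'''(u(x))\psi^u(x)^2$.
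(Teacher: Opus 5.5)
Your proposal is correct and is essentially the paper's argument: simplicity of the ground state (the paper's comparison principle is your Perron--Frobenius step), then Kato perturbation theory for differentiability, then the Hellmann--Feynman identity. In the last step the paper does not differentiate the eigenvector; it plugs $\psi^u$ into the infimum at $u+tv$ and combines the one-sided bounds for $v$ and $-v$ with the already-known differentiability, which is close to your envelope variant. The factor $\delta^2$ you leave vague disappears once $\partial_{u(x)}Q$ is read as the $L^2(\Lambda)$-gradient with $\psi^u$ normalized in $L^2(\Lambda)$, which is how the paper uses it when it writes $\langle\nabla_u Q(u),v\rangle_{L^2(\Lambda)}$.
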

\begin{proof}
The gradient of $Q(u)$ is well-defined by Kato perturbation theory.
Indeed Q is the bottom eigenvalue of a discrete Schrödinger operator,
which is known to be simple by comparison principle,
so Kato perturbation theory implies that $Q$ is smooth if $V$ depends smoothly on $u$~\cite[Chapter~2]{kato_1996}. 
For any $v \in L^2(\Lambda)$ and $t > 0$,
we can plug $\psi^u$ into both infima to estimate
\begin{equation}
\begin{split}
&\mathrel{\phantom{=}} \frac{1}{t} \inf_{\norm{\psi}_{L^2(\Lambda)} = 1}
    \left( \sum_{x \in\Lambda} \chi V'' (u + t v) \psi^2
        + \frac 1 2 \abs{\nabla_\delta \psi}^2
        + m^2 \psi^2  \right)\\
&\qquad - \frac{1}{t} \inf_{\norm{\psi}_{L^2(\Lambda)} = 1}
    \left( \sum_{x \in\Lambda} \chi V'' (u) \psi^2
        + \frac 1 2 \abs{\nabla_\delta \psi}^2
        + m^2 \psi^2   \right)\\
&\leq \frac{1}{t} \left( \sum_{x \in\Lambda} \chi V''(u + t v) (\psi^u)^2 
    - \sum_{x \in\Lambda} \chi V''(u) (\psi^u)^2  \right)\\
&\to \sum_{x \in\Lambda} \chi V'''(u)(\psi^u)^2 v
\end{split}
\end{equation}
in the $t \to 0$ limit.
We obtain the converse inequality by replacing $v$ with $-v$.
Therefore
\begin{equation}
-\langle \nabla_u Q (u), v \rangle_{L^2 (\Lambda)}
= \langle \chi V''' (u) (\psi^u)^2, v \rangle_{L^2 (\Lambda)},
\end{equation}
from which the statement follows by duality. 
\end{proof}
In the following we view  $\nabla Q(u)$ as a function given by $\nabla Q(u)(x)=\partial_{u(x)}Q(u)$.

\subsection{The \texorpdfstring{$\Phi^4_2$}{Phi42} model}\label{sec:phi42 measure}

In this case $Q$ is given by
\begin{equation}
Q(u) = -\inf_{\norm{\psi}_{L^2} = 1} \delta^2 \sum_{x \in \Lambda} \left[
    12\lambda \chi(x) \wick{u(x)^2} \psi(x)^2
    + \frac 1 2 \abs{\nabla_\delta \psi(x)}^2 + m^2 \psi(x)^2
\right].
\end{equation}
By \eqref{eq:log-diff} and Theorem~\ref{thm:boue-dupuis} we can write
\begin{equation}
\begin{split}
&\mathrel{\phantom{=}}
\log \int e^{q Q(u)} \diff\mu_{\delta}\\
&= -\min_{\zeta \in \overline{\mathcal{X}}} \E_\zeta \left[
    -q Q(W + I[h])
    + \lambda \sum_{x \in \Lambda} \wick{(W + I[h])^4}
    - (1-m^2)\sum_{x \in \Lambda}\wick{(W+I[h])^2}
    + \frac 1 2 \int_0^1 \norm{h_s}_{L^2}^2 \diff s \right]\\
&\qquad + \min_{\zeta \in \overline{\mathcal{X}}} \E_\zeta \left[
    \lambda \sum_{x \in \Lambda} \wick{(W + I[h])^4}
    - (1-m^2)\sum_{x\in \Lambda} \wick{(W+I[h])^2}
    + \frac 1 2 \int_0^1 \norm{h_s}_{L^2}^2 \diff s \right].
\end{split}
\end{equation}
If $\zeta^Q$ minimizes the first term
\begin{equation} \label{eq:functional Q}
\zeta \mapsto \E_\zeta \left[
    -q Q(W + I[h])
    + \lambda \sum_{x \in \Lambda} \wick{(W + I[h])^4}
    - (1-m^2)\sum_{x\in \Lambda} \wick{(W+I[h])^2}
    + \frac 1 2 \int_0^1 \norm{h_s}_{L^2}^2 \diff s \right],
\end{equation}
then we can again plug it into both minima to bound
\begin{equation}\label{eq:quantity}
\log \int e^{q Q(u)} \diff\mu_{\delta}
\leq q \E_{\zeta^Q} Q(W + I[h]) .
\end{equation}

Before we proceed let us briefly sketch the strategy to estimate~\eqref{eq:quantity}.
Note by definition of~$Q$ it suffices to bound from below 
\begin{equation}\label{eq:lower-bound-Q}
12\lambda\sum_{x\in\Lambda} \chi\bigl(\wick{W^2} + 2WI + I^2\bigr)(\psi^{W+I})^2
\geq 12\lambda\sum_{x\in\Lambda} \chi\bigl(\wick{W^2} + 2WI\bigr)(\psi^{W+I})^2 .
\end{equation}
To bound the right-hand side,
we first derive an \emph{a priori} bound on~$\psi^{W+I}$ in terms of~$I$
and Wick powers of~$W$ in Lemma~\ref{lem:estimate-psi-phi42}.
We then bound~$I$ in terms of the Wick powers of~$W$
in a few steps culminating in Lemma~\ref{lemma:apriori}.
This gives a bound that only depends on $W$ and its powers, and the model parameters.

To achieve the \emph{a priori} bound on~$I$ uniformly in the volume,
we will use an Euler--Lagrange type equation for the minimizer of~\eqref{eq:functional Q}.
We equip $\R^2$ with a weight
$\rho(x) \coloneqq (1+ K_2\abs{x}^2)^{-K_1/2}$.
We will choose $K_1 > 0$ to be large enough that all fractional powers of~$\rho$ appearing in the proof are integrable,
and $K_2 > 0$ small enough to satisfy~\eqref{eq:weight derivative}.

We observe that the first variation of the functional~\eqref{eq:functional Q}
at $\zeta^Q$, in the direction
$\zeta^{Q,t}=\operatorname{Law}_{\zeta^Q}(W,h+t\rho h)$, is
\begin{equation}
\begin{split}\label{eq:el phi4}
0
&= \E_{\zeta^Q} \left[
    -q \delta^2\sum_{x \in \Lambda} \rho \nabla Q(W + I) I
    + 4\lambda \delta^2\sum_{x \in \Lambda} \rho \wick{(W + I)^3} I
    -2(1-m^2) \delta^2\sum_{x \in \Lambda}\rho(W+I)I
    \right]\\
&\quad + \E_{\zeta^Q}\left[
    \int_0^1 \delta^2\sum_{x \in \Lambda}
        \rho \abs{\nabla_{\delta}\dot{I}_s}^2 \diff s
    + \int_0^1 \delta^2\sum_{x\in \Lambda}
        \rho \dot{I}_s^2 \diff s
    + \int_0^1 \delta^2\sum_{x\in \Lambda}
        ([\rho,\nabla_{\delta}]\dot{I}_s) \nabla_{\delta}\dot{I}_s
        \diff s
    \right]\!,
\end{split}
\end{equation}
See \cite[Lemma~13]{barashkov_variational_2023} for further details.
Recalling the computation of $\nabla Q$ from Lemma~\ref{lem:Grad-bottom},
we thus obtain
\begin{equation}\label{eq:EL-phi42}
\begin{split}
0
&= \E_{\zeta^Q} \left[
    24 \lambda q \delta^2\sum_{x \in \Lambda}
        \rho \chi (\psi^{W + I})^2 (W + I) I
    + 4\lambda \delta^2\sum_{x \in \Lambda}
        \rho \wick{(W + I)^3} I \right.\\
&\hspace{5em}\left.
    - 2(1-m^2) \delta^2\sum_{x \in \Lambda}\rho(W+I)I
    + \int_0^1 \delta^2\sum_{x \in \Lambda}
        \rho \abs{\nabla_{\delta}\dot{I}_s}^2 \diff s
    \right]\\
&\quad + \E_{\zeta^Q}\left[
    \int_0^1 \delta^2\sum_{x\in \Lambda}\rho \dot{I}_s^2 \diff s
    + \int_0^1 \delta^2\sum_{x\in \Lambda}
        ([\rho,\nabla_{\delta}]\dot{I}_s) \nabla_{\delta}\dot{I}_s
        \diff s
    \right]\!.
\end{split}
\end{equation}

As discussed above, let us first show an \emph{a priori} bound on $\psi^{W+I}$:

\begin{lemma}\label{lem:estimate-psi-phi42}
The following estimate holds:
\begin{equation}
\begin{split}
\norm{\psi^{W+I}}_{H^1(\Lambda)}^2
&\leq C_\varepsilon \lambda
    \left(
      1+\norm{\chi \wick{W^2}}_{\Holder^{-\varepsilon}(\Lambda)}
       +\norm{\chi^{1/2}W}_{\Holder^{-\varepsilon}(\Lambda)}
        \norm{\chi^{1/2}I}_{H^{1}(\Lambda)}
    \right)^{1/(1-\varepsilon)}\\
&\quad + 2 \lambda\norm{\chi^{1/2}I}_{L^2(\Lambda)}^2 + 2.
\end{split}
\end{equation}
\end{lemma}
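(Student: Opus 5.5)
We compare the Rayleigh quotient of the minimizer $\psi=\psi^{W+I}$ with that of a fixed normalized test function. The aim is to get an upper bound on $\norm{\nabla_\delta\psi}^2$ in terms of the indefinite part of the potential. Write the energy
\[
E(\psi)=\delta^2\sum_{x\in\Lambda}\Bigl[12\lambda\chi\wick{(W+I)^2}\psi^2+\tfrac12\abs{\nabla_\delta\psi}^2+m^2\psi^2\Bigr].
\]
Testing the infimum with some fixed $\psi_0$ of unit norm supported away from $\operatorname{supp}\chi$ (possible since $L\geq1$ and $\chi$ has bounded support; if the torus is too small, use a constant function instead) gives $E(\psi)\leq E(\psi_0)\leq C$. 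Here $C$ is a universal constant, and $m^2\leq 1$ can be assumed, or else the $m^2$ term is simply dropped from the left side.

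Expanding $\wick{(W+I)^2}=\wick{W^2}+2WI+I^2$ and using $I^2\geq0$ only on the favourable side, we get
\[
\tfrac12\norm{\nabla_\delta\psi}^2\leq C+12\lambda\Bigl|\langle \chi\wick{W^2},\psi^2\rangle\Bigr|+24\lambda\Bigl|\langle \chi W I,\psi^2\rangle\Bigr|.
\]
The $2\lambda\norm{\chi^{1/2}I}^2_{L^2}$ term in the statement suggests that the cross term is split partly by Young's inequality against the positive $I^2\psi^2$ term. Alternatively, and I plan this, the cross term is treated by duality as below, and the $I^2$ term is used to absorb a piece.

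The main step is a duality estimate. For a distribution $f\in\Holder^{-\varepsilon}$, we have $|\langle f,\psi^2\rangle|\leq \norm{f}_{\Holder^{-\varepsilon}}\norm{\psi^2}_{B^{\varepsilon}_{1,1}}$. In turn $\norm{\psi^2}_{B^\varepsilon_{1,1}}\lesssim \norm{\psi}_{L^2}\norm{\psi}_{H^{2\varepsilon}}$ (a product estimate in 2D, up to adjusting $\varepsilon$). Interpolation gives $\norm{\psi}_{H^{2\varepsilon}}\leq \norm{\psi}_{L^2}^{1-2\varepsilon}\norm{\psi}_{H^1}^{2\varepsilon}$, so with $\norm\psi_{L^2}=1$ we get $|\langle \chi\wick{W^2},\psi^2\rangle|\lesssim\norm{\chi\wick{W^2}}_{\Holder^{-\varepsilon}}\norm{\psi}_{H^1}^{2\varepsilon}$.

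For the cross term, write $\chi WI\psi^2=(\chi^{1/2}W)(\chi^{1/2}I\psi^2)$. Then estimate $\norm{\chi^{1/2}I\psi^2}_{B^{\varepsilon}_{1,1}}\lesssim\norm{\chi^{1/2}I}_{H^1}\norm{\psi}_{H^1}^{\theta}$ for small $\theta$, via Sobolev embedding in 2D. Here $H^1$ embeds in every $L^p$, and fractional Leibniz rules are used, which is the step needing most care.

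Collecting terms, with $A=1+\norm{\chi\wick{W^2}}_{\Holder^{-\varepsilon}}+\norm{\chi^{1/2}W}_{\Holder^{-\varepsilon}}\norm{\chi^{1/2}I}_{H^1}$, we obtain
\[
\norm{\psi}_{H^1}^2\leq C+C\lambda A\norm{\psi}_{H^1}^{2\varepsilon}.
\]
This is a bound of the form $y\leq C+ C\lambda A y^{\varepsilon}$. Young's inequality with exponents $1/\varepsilon$ and $1/(1-\varepsilon)$ gives $y\leq 2C+C_\varepsilon(\lambda A)^{1/(1-\varepsilon)}$. To match the stated linear dependence on $\lambda$, I would split as $\lambda A y^\varepsilon\leq \lambda(\tfrac12 y+C_\varepsilon A^{1/(1-\varepsilon)})$ when $\lambda\leq1$. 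For general $\lambda$, the prefactor is absorbed into $C_\varepsilon\lambda$ after rescaling, and the constant $2$ comes from the $L^2$ norm plus the test-function bound.

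The main obstacle is the cross term. It requires bounding $\chi^{1/2}I\psi^2$ in a positive-regularity Besov space with only an $\varepsilon$-power of $\norm{\psi}_{H^1}$, and this is where the $\norm{\chi^{1/2}I}_{L^2}^2$ remainder and the cutoff bookkeeping ($\chi$ versus $\chi^{1/2}$) enter.
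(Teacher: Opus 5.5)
Your proposal is correct and is essentially the paper's argument: compare the energy of the minimizer with that of a test function, discard the nonnegative term $12\lambda\chi I^2\psi^2$ in the lower bound, control the $\wick{W^2}$ and $WI$ terms by duality and interpolation by $C\lambda A\norm{\psi}_{H^1}^{2\varepsilon}$, and close with Young's inequality using $\lambda\le 1$ (the paper needs this too, and it cannot be removed "by rescaling" as you suggest, since $(\lambda A)^{1/(1-\varepsilon)}\not\lesssim \lambda A^{1/(1-\varepsilon)}$ for large $\lambda$). The only difference is the test function: the paper uses the constant $\psi_0=\abs{\Tor^2}^{-1/2}$, and the term $2\lambda\norm{\chi^{1/2}I}_{L^2}^2$ comes exactly from the contribution $12\lambda\chi I^2\psi_0^2$ in that upper bound, not from the cross term as you guessed, so your bump supported off $\operatorname{supp}\chi$ gives a slightly stronger bound when the torus is large enough, while your constant-function fallback reproduces the paper's proof.
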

\begin{proof}
Let $\psi=\psi^{W+I}$ be as in Lemma~\ref{lem:Grad-bottom},
minimizing the quadratic form
\begin{equation}\label{eq:estimate-psi-phi42 qf}
12 \lambda \delta^2 \sum_{x\in\Lambda}\chi \wick{(W+I)^2}\psi^2
    + \frac 1 2 \norm{\nabla_\delta \psi}_{L^2}^2
    + m^2 \norm{\psi}_{L^2}^2.
\end{equation}
Since $\norm{\psi}_{L^2}^2=1$ by assumption,
\begin{equation}
\norm{\nabla_\delta \psi}_{L^2}^2
= \norm{\psi}_{H^1}^2 - 1.
\end{equation}
Moreover by duality and interpolation,
\begin{equation}\label{eq:estimate-psi-phi42 w terms}
\begin{split}
&\mathrel{\phantom{=}}
    12\lambda \delta^2 \bigabs{\sum_{x\in\Lambda}\chi \wick{W^2} \psi^2}
    + 24\lambda \delta^2 \bigabs{\sum_{x\in\Lambda}\chi WI\psi^2}\\
&\leq C\lambda
    \left(\norm{\chi \wick{W^2}}_{\Holder^{-\varepsilon}}
        + \norm{\chi^{1/2}W}_{\Holder^{-\varepsilon}}\norm{\chi^{1/2}I}_{H^{1}} \right)
    \norm{\psi}_{H^\varepsilon}^2.
\end{split}
\end{equation}
Let us set
\begin{equation}
A(W,I) = 1 + \norm{\chi \wick{W^2}}_{\Holder^{-\varepsilon}}
    + \norm{\chi^{1/2}W}_{\Holder^{-\varepsilon}}
      \norm{\chi^{1/2}I}_{H^{1}}.
\end{equation}
Therefore by Young's inequality
\begin{equation}
\begin{split}
\eqref{eq:estimate-psi-phi42 w terms}
&\leq C\lambda A(W,I) \norm{\psi}_{H^1}^{2\varepsilon}\\
&\leq \frac{1}{4} \norm{\psi}_{H^1}^2
    + C_\varepsilon \lambda^{1/(1-\varepsilon)} A(W,I)^{1/(1-\varepsilon)},
\end{split}
\end{equation}
and we get the lower bound
\begin{equation}
\eqref{eq:estimate-psi-phi42 qf}
\geq \frac{1}{4} \norm{\psi}_{H^1}^2
    - C_\varepsilon \lambda^{1/(1-\varepsilon)} A(W,I)^{1/(1-\varepsilon)}
    - 1 + m^2.
\end{equation}

At the same time we can bound~\eqref{eq:estimate-psi-phi42 qf} from above
by plugging some fixed $\psi = \psi_0$ into the quadratic form.
Let us choose $\psi_0 = \abs{\Tor^2}^{-1/2}$.
Then $\norm{\psi_0}_{L^2(\Lambda)} = 1$ and $\psi_0 \leq 1$, so
\begin{equation}
\begin{split}
&\mathrel{\phantom{=}}
    12 \lambda \delta^2 \sum_{x \in \Lambda} \chi \wick{(W + I)^2} \psi_0^2
    + \frac 1 2 \norm{\nabla_\delta \psi_0}_{L^2}^2
    + m^2 \norm{\psi_0}_{L^2}^2\\
&= \frac{\lambda \delta^2}{|\Tor^2|} \sum_{x \in \Lambda} \chi \wick{(W + I)^2} + m^2\\
&\leq \lambda \norm{\chi \wick{W^2}}_{\Holder^{- \varepsilon}}
    + C\lambda \norm{\chi^{1 / 2} W}_{\Holder^{- \varepsilon}}
        \norm{\chi^{1 / 2} I}_{H^{\varepsilon}}
    + \lambda \norm{\chi^{1 / 2} I}^2_{L^2}
    + m^2\\
&\leq C\lambda A(W,I) + \lambda \norm{\chi^{1/2}I}_{L^2}^2 + m^2.
\end{split}
\end{equation}
Here we expanded the Wick power and used duality.
Combining these two estimates gives
\begin{equation}
\frac{1}{4} \norm{\psi}_{H^1}^2
\leq C\lambda A(W,I) + C_\varepsilon \lambda^{1/(1-\varepsilon)} A(W,I)^{1/(1-\varepsilon)}
    + \lambda \norm{\chi^{1/2}I}_{L^2}^2 + 1.
\end{equation}
Since $A(W,I) \geq 1$ and $\lambda \leq 1$, this yields the statement.
\end{proof}

The following lemmas then estimate each term in~\eqref{eq:EL-phi42}
to find a bound on~$I$:

\begin{lemma}\label{Lem:gradient-quadratic-form}
For $\varepsilon>0$ small enough and $\kappa>0$, we have
\begin{equation*}
24 \lambda q \delta^2 \sum_{x \in \Lambda} \chi(\psi^{W+I})^2(W+I)I
\geq -\mathcal{G}_{q,\lambda,m,\varepsilon,\kappa}(W)
    -\kappa m^2\norm{I}^2_{H^{1}(\rho^{1/2})},
\end{equation*}
where
\[
\E \abs{\mathcal{G}_{q,\lambda,m,\varepsilon,\kappa}(W)}
\leq C_{\kappa,\varepsilon} \lambda^2 \left[
    q^2
    + q^{2/(1-4\varepsilon)}
\right] (1 + \abs{\log(m)}^{p}).
\]
in the regime~$\lambda \leq 1$ for some $p$ dependent only on $\varepsilon$.
\end{lemma}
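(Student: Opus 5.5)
We split the left-hand side as
\[
24\lambda q\,\delta^2\sum_{x}\chi(\psi^{W+I})^2 W I
+24\lambda q\,\delta^2\sum_{x}\chi(\psi^{W+I})^2 I^2 .
\]
The second sum is nonnegative, so we drop it. Only the mixed term $24\lambda q\,\delta^2\sum\chi\psi^2 WI$ remains, and we bound its absolute value.

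By duality in Besov spaces and the product rule,
\[
\Bigl|\delta^2\sum\chi\psi^2WI\Bigr|
\lesssim \norm{\chi^{1/2}W}_{\Holder^{-\varepsilon}}\,
\norm{\chi^{1/2}I}_{H^{2\varepsilon}}\,\norm{\psi}_{H^{2\varepsilon}}^2 .
\]
The factor $\psi^2$ lives in $B^{\varepsilon}_{1,1}$, controlled by $\norm{\psi}_{H^{2\varepsilon}}^2$. Interpolating with $\norm{\psi}_{L^2}=1$ gives $\norm{\psi}_{H^{2\varepsilon}}^2\le\norm{\psi}_{H^1}^{4\varepsilon}$.

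We then insert Lemma~\ref{lem:estimate-psi-phi42}:
\[
\norm{\psi}_{H^1}^{4\varepsilon}\lesssim 1+\lambda^{2\varepsilon}\norm{\chi^{1/2}I}_{L^2}^{4\varepsilon}
+\lambda^{4\varepsilon}\,C(W)\bigl(1+\norm{\chi^{1/2}I}_{H^1}\bigr)^{4\varepsilon/(1-\varepsilon)},
\]
where $C(W)$ is a polynomial in $\norm{\chi\wick{W^2}}_{\Holder^{-\varepsilon}}$ and $\norm{\chi^{1/2}W}_{\Holder^{-\varepsilon}}$. Since $\chi$ has compact support and $\rho\gtrsim1$ on $\operatorname{supp}\chi$ (with a constant depending only on $K_1,K_2$, not on $L$), every local norm of $I$ is bounded by $\norm{I}_{H^1(\rho^{1/2})}$. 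The whole mixed term is therefore at most
\[
\lambda q\,\tilde C(W)\bigl(1+\norm{I}_{H^1(\rho^{1/2})}\bigr)^{1+\theta},
\qquad \theta=\tfrac{4\varepsilon}{1-\varepsilon}+O(\varepsilon)<1 .
\]

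Next we apply Young's inequality with exponents $\tfrac{2}{1+\theta}$ and its conjugate. This absorbs $\kappa m^2\norm{I}^2_{H^1(\rho^{1/2})}$ and leaves the remainder
\[
\mathcal G=C_\kappa\,(\lambda q\tilde C(W))^{2/(1-\theta)}\,m^{-2(1+\theta)/(1-\theta)} .
\]
Taking expectations, the Wick powers of $W$ have moments of all orders in $\Holder^{-\varepsilon}$, uniformly in $\delta$ and in $i$ by translation invariance. These moments depend on $m$ only logarithmically; this is where the $1+\abs{\log m}^p$ factor enters. Note that the reference field has mass $1$, but the renormalization for mass $m$ differs from it by a $\log m$ shift.

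The main obstacle is the bookkeeping of the powers of $\lambda$, $q$ and $m$ needed to reach exactly $\lambda^2[q^2+q^{2/(1-4\varepsilon)}]$ rather than something worse. First, we keep the two regimes apart. The contribution coming from the "$1$" in the $\psi$-bound is exactly quadratic in $I$, so it gives $\lambda^2q^2$ times moments of $W$. The contribution coming from the $I$-dependent part of the $\psi$-bound gives the $q^{2/(1-4\varepsilon)}$ term; the extra $\lambda$ powers arising there are discarded using $\lambda\le1$.

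Second, the $m^{-2}$ factors produced by Young's inequality would destroy the claimed bound. To avoid them, we would use the $m^2\norm{\psi}^2$ part of the quadratic form, or $\kappa m^2$ in place of $\kappa$, only in a logarithmically lossy way. Failing that, we would interpret the $m$-dependence as being absorbed into $C_{\kappa,\varepsilon}(1+\abs{\log m}^p)$, under the paper's standing regime $\lambda\le m^{4+\varepsilon}$. The point to verify carefully is that every such power of $m^{-1}$ is compensated by extra powers of $\lambda$.
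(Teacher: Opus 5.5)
Your route is the paper's: drop the nonnegative $\chi(\psi^{W+I})^2I^2$ term, pair the mixed term against $\norm{W}_{\Holder^{-\varepsilon}}$, control $\psi^{W+I}$ only through a small power of $\norm{\psi^{W+I}}_{H^1}$ via interpolation and Lemma~\ref{lem:estimate-psi-phi42}, reduce to powers of $X=\norm{I}_{H^1(\rho^{1/2})}$, and finish with Young's inequality. There is one technical slip. The duality bound with $\norm{\chi^{1/2}I}_{H^{2\varepsilon}}$ is false as written, because three factors in $H^{2\varepsilon}\subset L^{2/(1-2\varepsilon)}$ need not multiply into $L^1$. Put the full $H^1$ norm on $I$, as the paper does; this costs nothing, since you bound it by $X$ anyway.

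The genuine gap is the $m$-dependence in the final step, which you flag but do not close. Absorbing into $\kappa m^2X^2$ leaves the remainder $C\lambda^2q^2\norm{W}^2_{\Holder^{-\varepsilon}(\rho)}/(\kappa m^2)$ from the $X$-linear contribution (the one coming from the constant in the $\psi$-bound). The other contributions leave negative powers $m^{-2(1+\theta)/(1-\theta)}$. None of your remedies removes these:
a power of $m^{-1}$ cannot be hidden in $1+\abs{\log m}^p$;
the leading remainder already carries exactly $\lambda^2$, so there is no spare power of $\lambda$ to trade;
importing $\lambda\le m^{4+\varepsilon}$ would prove a different estimate, under a hypothesis the lemma does not have;
and the $m^2\norm{\psi}^2$ part of the quadratic form plays no role in absorbing $I$.

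In fact no argument gives the bound as stated uniformly in small $m$. Test the pathwise inequality with $I=tg$, where $g$ is smooth with $\langle\chi(\psi^{W})^2W,g\rangle=-1$ and $t\simeq\lambda q/(\kappa m^2\norm{g}^2_{H^1(\rho^{1/2})})$. This shows that any admissible $\mathcal G$ satisfies $\mathcal G\gtrsim\lambda^2q^2/(\kappa m^2\norm{g}^2_{H^1(\rho^{1/2})})$ once $\lambda q\ll m^2$; at fixed $L$, the $I^2$ term and the $I$-dependence of $\psi^{W+I}$ are then of lower order. The paper's own proof has the same hole: after Young against $\kappa m^2X^2$, its displayed remainder $C_\kappa q^2\lambda^2\norm{W}^2_{\Holder^{-\varepsilon}(\rho)}$ silently drops the factor $(\kappa m^2)^{-1}$.

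The correct repair is to change what is absorbed, replacing $\kappa m^2\norm{I}^2_{H^1(\rho^{1/2})}$ by $\kappa\norm{I}^2_{H^1(\rho^{1/2})}$. Lemma~\ref{lemma:apriori} can afford this, because the drift cost relative to the mass-one reference field supplies an order-one multiple of $\norm{I}^2_{H^1(\rho^{1/2})}$. With that change, your Young step gives exactly $C_{\kappa,\varepsilon}\lambda^2[q^2+q^{2/(1-4\varepsilon)}]$ times moments of Wick powers of $W$, provided you keep the $X$-linear and $X^{1+O(\varepsilon)}$ terms separate and use $\lambda\le1$. Those moments are controlled by $1+\abs{\log m}^p$ via Lemma~\ref{wick-power-estimate}. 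Alternatively, keep $\kappa m^2$ and carry the negative powers of $m$ into $\E\abs{\mathcal G}$ and into the downstream parameter regime.
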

\begin{proof}
We begin by observing that
$q \lambda \int \chi(\psi^{W+I})^2 I^2 \dx \geq 0$.
We now estimate the
remaining term
\begin{equation}\label{eq:estimate-product-WI}
\begin{split}
&\mathrel{\phantom{=}} q\lambda\left|\delta^2 \sum_{x \in \Lambda}
    \chi(\psi^{W+I})^2WI\right|\\
&\leq C q\lambda  \norm{W}_{\Holder^{-\varepsilon}(\rho)}
    \norm{\chi(\psi^{W+I})}^{2}_{H^{\varepsilon}(\Lambda)} \norm{I}_{H^{1}(\rho^{1/2})}\\
&\leq C q\lambda  \norm{W}_{\Holder^{-\varepsilon}(\rho)} \left[
    \lambda \left(1+\norm{\chi \wick{W^2}}_{\Holder^{-\varepsilon}}
        +\norm{\chi^{1/2} W}_{\Holder^{-\varepsilon}} \norm{\chi^{1/2}I}_{H^{1}}
    \right)^{1/(1-\varepsilon)} \right.\\
&\hspace{8em} \left.{} + 2\lambda\norm{\chi^{1/2}I}_{L^2}^2 + 2 \right]^{2\varepsilon}
        \norm{I}_{H^{1}(\rho^{1/2})} .
\end{split}
\end{equation}
In the last line we used Lemma~\ref{lem:estimate-psi-phi42} and Besov interpolation.
To continue the estimate, set
\begin{equation}
    X=\norm{I}_{H^1(\rho)},\qquad
    \alpha=\frac{2\varepsilon}{1-\varepsilon}.
\end{equation}
Using concavity of $x\mapsto x^{2\varepsilon}$ and the bound
$\norm{\chi^{1/2}I}_{L^2(\Lambda)}+ \norm{\chi^{1/2}I}_{H^1(\Lambda)}\lesssim X$,
equation~\eqref{eq:estimate-product-WI} is bounded by
\begin{equation}
Cq\lambda \norm{W}_{\Holder^{-\varepsilon}(\rho)}
\left[
    \lambda^{2\varepsilon}
        \!\left(1 +\norm{\chi \wick{W^2}}_{\Holder^{-\varepsilon}(\rho)}^\alpha \right)\! X
    + \lambda^{2\varepsilon} \norm{\chi^{1/2} W}_{\Holder^{-\varepsilon}(\rho)}^\alpha X^{1+\alpha}
    + \lambda^{2\varepsilon} X^{1+4\varepsilon}
    + X
\right]\!.
\end{equation}
We extract~$X^2$ from each term with Young's inequality (which requires $\varepsilon<1/4$).
This gives
\begin{equation}
\begin{split}
&\mathbin{\phantom{+}}
    C_\kappa q^2 \lambda^{2+4\varepsilon} \norm{W}_{\Holder^{-\varepsilon}(\rho)}^2
    \!\left(1 +\norm{\chi \wick{W^2}}_{\Holder^{-\varepsilon}(\rho)}^\alpha \right)^2\\
&+ C_\kappa q^{2/(1-\alpha)} \lambda^{2(1+2\varepsilon)/(1-\alpha)}
    \norm{W}_{\Holder^{-\varepsilon}(\rho)}^{2/(1-\alpha)}
    \norm{\chi^{1/2} W}_{\Holder^{-\varepsilon}(\rho)}^{2\alpha/(1-\alpha)}\\
&+ C_\kappa q^{2/(1-4\varepsilon)} \lambda^{2(1+2\varepsilon)/(1-4\varepsilon)}
    \norm{W}_{\Holder^{-\varepsilon}(\rho)}^{2/(1-4\varepsilon)}\\
&+ C_\kappa q^2 \lambda^2 \norm{W}_{\Holder^{-\varepsilon}(\rho)}^2\\
&+ \kappa m^2 X^2.
\end{split}
\end{equation}
This is the decomposition in the statement.
To compute expectations of the stochastic terms, we use Lemma~\ref{wick-power-estimate}.
It gives $p$ dependent on $\varepsilon$ such that
each expectation is at most $C_\varepsilon (1 + \abs{\log(m)}^{p})$.
Hence $\E \abs{\mathcal{G}_{q,\lambda,m,\varepsilon,\kappa}(W)}$ is bounded by
\begin{equation}
C_{\kappa,\varepsilon} \left[
    q^2 \lambda^{2+4\varepsilon}
    + q^{2/(1-\alpha)} \lambda^{2(1+2\varepsilon)/(1-\alpha)}
    + q^{2/(1-4\varepsilon)} \lambda^{2(1+2\varepsilon)/(1-4\varepsilon)}
    + q^2 \lambda^2
\right] (1 + \abs{\log(m)}^{p}).
\end{equation}
Under the assumption $\lambda \leq 1$ we may use $1-\alpha \leq 1$
to further simplify this expression.
\end{proof}

\begin{lemma}\label{Lem:coercive}
For any $p \geq 8$, there exists $C<\infty$ such that
\begin{equation}\label{eq:coercive-with-mass-shift}
\begin{split}
&\mathrel{\phantom{=}}
    4\lambda \delta^2 \sum_{x \in \Lambda} \rho \wick{(W + I)^3} I
    - 2(1-m^2) \delta^2 \sum_{x \in \Lambda} \rho (W+I) I\\
&\geq 
    3\lambda \norm{I}^4_{L^4 (\rho^{1/4})}
    - m^2 \norm{I}^2_{H^1 (\rho^{1/2})}
    - C\frac{(1-m^2)^2}{\lambda} \norm{\rho}_{L^1(\Lambda)}\\
&\qquad - C \bigg(\frac{\lambda^2}{m^2}
        + \frac{\lambda^3}{m^4}
        + \frac{(1-m^2)^4}{m^2}\bigg)
    \!\bigg(1 + \sum_{i = 1}^3 \norm{\wick{W^i}}_{\Holder^{-\varepsilon}(\rho^{1/8})}^p\bigg).
\end{split}
\end{equation}
\end{lemma}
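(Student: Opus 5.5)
The plan is to expand the Wick cube and the linear term, keep the positive quartic, and absorb every mixed term into a fraction of $\lambda\norm{I}^4_{L^4(\rho^{1/4})}$, into $m^2\norm{I}^2_{H^1(\rho^{1/2})}$, or into a constant. Using the binomial identity for Wick powers (Appendix~\ref{sec:wick}), write
\[
\wick{(W+I)^3}I = I^4 + 3WI^3 + 3\wick{W^2}I^2 + \wick{W^3}I .
\]
The leading term contributes $4\lambda\norm{I}^4_{L^4(\rho^{1/4})}$. We budget $\lambda\norm{I}^4_{L^4(\rho^{1/4})}$ to absorb the error terms, which leaves the claimed $3\lambda\norm{I}^4$. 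The linear piece is
\[
-2(1-m^2)\delta^2\sum_x\rho(W+I)I .
\]
Its part $-2(1-m^2)\sum\rho I^2$ is handled by Young's inequality,
\[
(1-m^2)\rho I^2\le \tfrac{\lambda}{4}\rho I^4 + C\frac{(1-m^2)^2}{\lambda}\rho ,
\]
which gives the $\norm{\rho}_{L^1}$ term. The part $(1-m^2)\sum\rho WI$ is estimated like the stochastic terms below.

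Each mixed term $\sum\rho\wick{W^k}I^{4-k}$, for $k=1,2,3$, and $(1-m^2)\sum\rho WI$ is estimated by duality between $\Holder^{-\varepsilon}$ and $B^{\varepsilon}_{1,1}$ with weights. We split $\rho=\rho^{1/8}\cdot\rho^{7/8}$ and put $\rho^{1/8}$ on the noise. This gives, for instance,
\[
\bigl|\textstyle\sum\rho\,\wick{W^k}I^{4-k}\bigr|\lesssim \norm{\wick{W^k}}_{\Holder^{-\varepsilon}(\rho^{1/8})}\,\norm{\rho^{7/8}I^{4-k}}_{B^{\varepsilon}_{1,1}} .
\]
The product and fractional Leibniz rules, together with the bound on derivatives of the weight (the condition~\eqref{eq:weight derivative}, which makes $\nabla\rho\lesssim\rho$), control the Besov norm. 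One has
\[
\norm{\rho^{7/8}I^{4-k}}_{B^\varepsilon_{1,1}}\lesssim \norm{I}^{3-k}_{L^4(\rho^{1/4})}\,\norm{I}^{1-\theta}_{L^4(\rho^{1/4})}\norm{I}^{\theta}_{H^1(\rho^{1/2})}
\]
with $\theta=O(\varepsilon)$, obtained by interpolation between $L^4$ and $H^1$ for the factor carrying the derivative.

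Then three-term Young inequalities distribute each product into $\tfrac{\lambda}{8}\norm{I}^4_{L^4}$, $\tfrac{m^2}{4}\norm{I}^2_{H^1}$, and a power of the noise norm. The prefactor is tracked term by term. The cubic term $\lambda WI^3$ is bounded by $\eta\lambda I^4+\eta m^2\norm{I}^2_{H^1}$ plus $C\lambda\cdot(\lambda/m^2)^{\ldots}$ times powers of $W$. Balancing homogeneities yields coefficients of the form $\lambda^2/m^2$, $\lambda^3/m^4$, and $(1-m^2)^4/m^2$ for the linear $WI$ term, after using $\lambda I^4$ and $m^2\norm{I}^2_{H^1}$ jointly. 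The resulting exponents on $\norm{\wick{W^i}}$ are at most about $8$ for small $\varepsilon$, and $p\ge8$ allows replacing them by $1+\norm{\cdot}^p$.

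The main obstacle is the exponent bookkeeping in the Young step. The goal is to make sure every mixed term yields only the specific parameter combinations $\lambda^2/m^2$, $\lambda^3/m^4$, and $(1-m^2)^4/m^2$, which may require giving up sharp exponents and using $\lambda\le1$, $m\le1$. The second concern is that the weighted Besov product estimates stay uniform in $\delta$. For that I would rely on the standard lattice paraproduct and weighted estimates used in \cite{barashkov_variational_2023}.
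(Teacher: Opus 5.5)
Your skeleton matches the paper's: Wick binomial expansion, keeping $4\lambda\norm{I}^4_{L^4(\rho^{1/4})}$, pointwise Young on the $(1-m^2)\rho I^2$ term for the $\norm{\rho}_{L^1}$ contribution, duality with $\rho^{1/8}$ on the noise, and Young against the budgets $\lambda\norm{I}^4_{L^4(\rho^{1/4})}$ and $m^2\norm{I}^2_{H^1(\rho^{1/2})}$. The gap is in the bookkeeping you defer, which is the whole content of the lemma. The one product bound you commit to puts the wrong power on the $H^1$ norm.

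Write $a_k=\norm{\wick{W^k}}_{\Holder^{-\varepsilon}(\rho^{1/8})}$, $b=\norm{I}_{L^4(\rho^{1/4})}$ and $c=\norm{I}_{H^1(\rho^{1/2})}$. For a term $\lambda a_k b^s c^t$, three-term Young against $\frac{\lambda}{8}b^4$ and $\frac{m^2}{4}c^2$ leaves
\[
C\lambda^{x}m^{-2(x-1)}a_k^{r}=C\lambda\Bigl(\frac{\lambda}{m^2}\Bigr)^{x-1}a_k^{r},
\qquad
\frac1r=1-\frac s4-\frac t2,
\qquad
x=\Bigl(1-\frac s4\Bigr)r .
\]
This is dominated by $\lambda^2/m^2+\lambda^3/m^4$ uniformly in $\lambda,m$ exactly when $x\in[2,3]$. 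Your choice $t=\theta=O(\varepsilon)$, $s=4-k-\theta$ gives $x=\frac{3+\theta}{3-\theta},\frac{2+\theta}{2-\theta},\frac{1+\theta}{1-\theta}$ for $k=3,2,1$. These are prefactors of order $\lambda$, with noise exponents $r\approx 4/3,2,4$, so the exponent $8$ you quote does not come out of your estimates.

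Such a prefactor is not absorbed by the right-hand side. Let $m\to1$ with $\lambda\to0$ and $1-m^2\le\lambda^2$. Then every coefficient on the right, including $(1-m^2)^4/m^2$ and $(1-m^2)^2\lambda^{-1}\norm{\rho}_{L^1}$, is $O(\lambda^2)$. Your cruder bound would suffice for small $m$, where $(1-m^2)^4/m^2\gtrsim m^{-2}$ absorbs it. It does not prove the lemma as stated.

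The window $x\in[2,3]$ forces order-one powers of $c$: $t=1$ for $\wick{W^3}I$, $t\in[2/3,1]$ for $\wick{W^2}I^2$, and $t\in[1/3,1/2]$ for $WI^3$. The paper takes $t=1,1,1/2$, i.e.\ the bounds $a_3c$, $a_2bc$ and $a_1c^{1/2}b^{5/2}$. Young then gives $\frac{\lambda^2}{m^2}a_3^2$, $\frac{\lambda^3}{m^4}a_2^4$ and $\frac{\lambda^3}{m^4}a_1^8$; the last is why $p\ge 8$.

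One caveat on weights: with exactly $\rho^{1/8}$ on $W$, the endpoint $t=1/2$ uses $\rho^{7/8}=(\rho^{1/4})^{5/2}(\rho^{1/2})^{1/2}$. That leaves no spare weight to close H\"older uniformly in $L$; any $t\in[1/3,1/2)$ avoids this.

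For the linear mixed term, use only the $H^1$ budget, as the paper does:
\[
(1-m^2)a_1c\le\frac{m^2}{4}c^2+C\frac{(1-m^2)^2}{m^2}a_1^2 .
\]
Drawing on $\lambda b^4$ there, as you suggest, only introduces negative powers of $\lambda$. The resulting term fits the displayed right-hand side directly when $m$ is bounded away from $1$. Near $m=1$, split according to whether $a_1^2\le m^2/\lambda$, and use the $\norm{\rho}_{L^1}$ term and the $\frac{\lambda^2}{m^2}a_1^p$ term respectively.
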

\begin{proof}
By the Wick binomial formula,
\begin{equation}
    4\wick{(W + I)^3} I = 4I^4 + 12 W I^3 + 12 \wick{W^2} I^2 + 4 \wick{W^3} I .
\end{equation}
The first term gives $4 \lambda \norm{I}_{L^4(\rho^{1/4})}^4$.
The three mixed terms match the last three terms
in the proof of~\cite[Lemma~21]{barashkov_variational_2023}.
By duality and the weighted product estimates, for $\varepsilon>0$ small enough,
\begin{equation}
\begin{split}
\bigabs{\delta^2 \sum_{x \in \Lambda} \rho \wick{W^3} I \dx}
    &\leq C \norm{\wick{W^3}}_{\Holder^{-\varepsilon}(\rho^{1/8})}
        \norm{I}_{H^1(\rho^{1/2})},\\
\bigabs{\delta^2 \sum_{x \in \Lambda} \rho \wick{W^2} I^2 \dx}
    &\leq C \norm{\wick{W^2}}_{\Holder^{-\varepsilon}(\rho^{1/8})}
        \norm{I}_{L^4(\rho^{1/4})} \norm{I}_{H^1(\rho^{1/2})},\\
\bigabs{\delta^2 \sum_{x \in \Lambda} \rho W I^3 \dx}
    &\leq C \norm{W}_{\Holder^{-\varepsilon}(\rho^{1/8})}
        \norm{I}_{H^1(\rho^{1/2})}^{1/2}
        \norm{I}_{L^4(\rho^{1/4})}^{5/2}.
\end{split}
\end{equation}
Applying Young's inequality with the weights
$\lambda\norm{I}_{L^4(\rho^{1/4})}^4$ and
$m^2\norm{I}_{H^1(\rho^{1/2})}^2$ gives
\begin{equation}
\begin{split}
\lambda \norm{\wick{W^3}}_{\Holder^{-\varepsilon}(\rho^{1/8})}
        \norm{I}_{H^1(\rho^{1/2})}
    &\leq \frac{m^2}{4}\norm{I}_{H^1(\rho^{1/2})}^2
        + C\frac{\lambda^2}{m^2}
        \norm{\wick{W^3}}_{\Holder^{-\varepsilon}(\rho^{1/8})}^2,\\
\lambda \norm{\wick{W^2}}_{\Holder^{-\varepsilon}(\rho^{1/8})}
        \norm{I}_{L^4(\rho^{1/4})} \norm{I}_{H^1(\rho^{1/2})}
    &\leq \frac{\lambda}{4}\norm{I}_{L^4(\rho^{1/4})}^4
        + \frac{m^2}{4}\norm{I}_{H^1(\rho^{1/2})}^2
        + C\frac{\lambda^3}{m^4}
        \norm{\wick{W^2}}_{\Holder^{-\varepsilon}(\rho^{1/8})}^4,\\
\lambda \norm{W}_{\Holder^{-\varepsilon}(\rho^{1/8})}
        \norm{I}_{H^1(\rho^{1/2})}^{1/2}
        \norm{I}_{L^4(\rho^{1/4})}^{5/2}
    &\leq \frac{\lambda}{4}\norm{I}_{L^4(\rho^{1/4})}^4
        + \frac{m^2}{4}\norm{I}_{H^1(\rho^{1/2})}^2
        + C\frac{\lambda^3}{m^4}
        \norm{W}_{\Holder^{-\varepsilon}(\rho^{1/8})}^8.
\end{split}
\end{equation}
It remains to bound the quadratic term of the statement.
For the mixed linear term, duality and Young's inequality give
\begin{equation}
\begin{split}
(1-m^2) \bigabs{\delta^2 \sum_{x\in\Lambda}\rho W I}
&\leq C (1-m^2) \norm{W}_{\Holder^{-\varepsilon}(\rho^{1/2})}
    \norm{I}_{H^1(\rho^{1/2})}\\
&\leq \frac{m^2}{4} \norm{I}_{H^1(\rho^{1/2})}^2
    + \frac{C(1-m^2)^4}{m^2} \norm{W}^{4}_{\Holder^{-\varepsilon}(\rho^{1/8})} .
\end{split}
\end{equation}
For the quadratic we use the pointwise estimate
\begin{equation}
    (1-m^2)\,|I|^2
    \leq \frac{\lambda}{2} |I|^4
        + C\frac{(1-m^2)^2}{\lambda},
\end{equation}
which after multiplying by $\rho$ and summing yields
\begin{equation}
    -(1-m^2)\delta^2 \sum_{x\in\Lambda}\rho I^2
    \geq -\frac{\lambda}{2}\norm{I}_{L^4(\rho^{1/4})}^4
        -C\frac{(1-m^2)^2}{\lambda} \norm{\rho}_{L^1(\Lambda)}.
\end{equation}
The previous mixed cubic estimates used only $\lambda/2$ of the leading
quartic term and $3m^2/4$ of the $H^1$ term. The two estimates above use
another $\lambda/2$ and $m^2/4$, leaving precisely the coercive terms displayed
in~\eqref{eq:coercive-with-mass-shift}. Choosing $p \geq 8$ and using
$a^2+a^4+a^8 \leq C(1+a^p)$ completes the proof.
\end{proof}

\begin{lemma}
\label{lemma:apriori}
For some $C$, $p<\infty$ independent of $\lambda$, $m$, $q$, and~$\delta$,
we have
\[
    3\lambda \E_{\zeta^Q}
      \norm{I}^4_{L^4(\rho^{1/4})}
    + (1 - (1+\kappa) m^2)\E_{\zeta^Q}
      \norm{I}^2_{H^1(\rho^{1/2})}
    \leq C\mathcal{B}_{q,\lambda,m,\varepsilon},
\]
where
\[
\mathcal{B}_{q,\lambda,m,\varepsilon}
\coloneqq
\bigg(\lambda^2 q^2 + \lambda^2 q^{2/(1-4\varepsilon)}
    + \frac{\lambda^2}{m^2}
    + \frac{\lambda^3}{m^4}
    + \frac{(1-m^2)^4}{m^2}\bigg) (1 + \abs{\log(m)}^{p})
+ \frac{(1-m^2)^2}{\lambda} \norm{\rho}_{L^1(\Lambda)}
\]
when $\lambda \leq m < 1$.
\end{lemma}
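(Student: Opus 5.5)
The plan is to take the expectation under $\zeta^Q$ of the Euler--Lagrange identity~\eqref{eq:EL-phi42} and bound each term from below with the two preceding lemmas. This leaves an inequality in which the coercive quantities $\lambda\norm{I}^4_{L^4(\rho^{1/4})}$ and $\norm{I}^2_{H^1(\rho^{1/2})}$ are controlled by stochastic terms, which have finite expectation.

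\textbf{Step 1 (drift terms).} Write $\dot I_s = J_s h_s$ and $I=\int_0^1 \dot I_s\diff s$. The two drift terms
\[
\int_0^1\delta^2\sum\rho\abs{\nabla_\delta\dot I_s}^2\diff s
+\int_0^1\delta^2\sum\rho\dot I_s^2\diff s
\]
dominate $\norm{I}^2_{H^1(\rho^{1/2})}$ by Jensen in $s$ (weighted $H^1$ norm of a time average). The commutator term must be absorbed. Since
\[
[\rho,\nabla_\delta]\dot I_s=-(\nabla_\delta\rho)\,\dot I_s \quad\text{(up to shift errors)},
\]
and $\abs{\nabla_\delta\rho}\leq C K_2^{1/2}K_1\rho$, Cauchy--Schwarz with small $K_2$ gives
\[
\Bigl|\int_0^1\delta^2\sum([\rho,\nabla_\delta]\dot I_s)\nabla_\delta\dot I_s\diff s\Bigr|
\leq \tfrac{1}{2}\cdot(\text{small})\int_0^1\norm{\dot I_s}^2_{H^1(\rho^{1/2})}\diff s.
\]
This is the role of the condition~\eqref{eq:weight derivative}. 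I would arrange the constants so the drift part contributes at least $(1-\eta)\E\norm{I}^2_{H^1(\rho^{1/2})}$, with $\eta$ absorbable into the $1$ in the final coefficient.

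\textbf{Step 2 (insert the lemmas).} Apply Lemma~\ref{Lem:gradient-quadratic-form} to the $Q$-gradient term. It costs $\kappa m^2\norm{I}^2_{H^1(\rho^{1/2})}$ plus $\mathcal G$, with
\[
\E\abs{\mathcal G}\lesssim\lambda^2\bigl(q^2+q^{2/(1-4\varepsilon)}\bigr)(1+\abs{\log m}^p).
\]
Note the weight: the term carries $\rho\chi$ rather than $\chi$, but $\rho\leq1$ and $\rho\gtrsim1$ on $\operatorname{supp}\chi$, so the lemma applies. Apply Lemma~\ref{Lem:coercive} to the quartic and quadratic terms. This gives
\[
3\lambda\norm{I}^4_{L^4(\rho^{1/4})}-m^2\norm{I}^2_{H^1(\rho^{1/2})}
\]
minus the stated error terms.

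\textbf{Step 3 (rearrange and take expectations).} Since the total equals $0$, rearranging gives
\[
3\lambda\E\norm{I}^4+(1-(1+\kappa)m^2)\E\norm{I}^2_{H^1}\leq \E\abs{\mathcal G}+C\bigl(\tfrac{\lambda^2}{m^2}+\tfrac{\lambda^3}{m^4}+\tfrac{(1-m^2)^4}{m^2}\bigr)\E\bigl(1+\textstyle\sum\norm{\wick{W^i}}^p\bigr)+C\tfrac{(1-m^2)^2}{\lambda}\norm{\rho}_{L^1}.
\]
The weighted Wick power moments are bounded by $C(1+\abs{\log m}^{p})$ via Lemma~\ref{wick-power-estimate}; the logarithm comes from the mass-$m$ Wick ordering relative to the mass-$1$ reference field. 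The law of $W$ under $\zeta^Q$ is that of the Gaussian free field, since the first marginal is unchanged. Collecting terms yields $C\mathcal B_{q,\lambda,m,\varepsilon}$. The hypothesis $\lambda\leq m<1$ is used only to simplify powers, e.g.\ $\lambda^{2+4\varepsilon}\leq\lambda^2$.

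\textbf{Main obstacle.} The delicate point is the rigorous justification of~\eqref{eq:EL-phi42} at a minimizer $\zeta^Q\in\overline{\mathcal X}$ that need not be an adapted drift; this was cited from \cite[Lemma~13]{barashkov_variational_2023}. A second delicate point is the commutator absorption in Step 1 uniformly in $\delta$, where lattice shift errors in $\nabla_\delta(\rho f)$ must be controlled by $\sup\abs{\nabla\rho}/\rho$. I would handle the latter via the discrete product rule and the smallness of $K_2$.
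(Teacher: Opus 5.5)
Your proposal is correct and follows the paper's proof essentially step for step: you take expectations in the Euler--Lagrange identity~\eqref{eq:EL-phi42}, absorb the commutator via~\eqref{eq:weight derivative} and Cauchy--Schwarz, use Jensen in $s$ to reach $\norm{I}_{H^1(\rho^{1/2})}^2$, and then apply Lemmas~\ref{Lem:gradient-quadratic-form} and~\ref{Lem:coercive} together with the moment bound of Theorem~\ref{wick-power-estimate} (valid since $W$ is still a free field under $\zeta^Q$) before rearranging.

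Your bookkeeping is slightly tighter than the paper's, which keeps only $\tfrac14$ of the drift term after the commutator estimate yet writes coefficient $1$. Note, however, that neither $\tfrac14$ nor your $1-\eta$ can literally be absorbed into $1-(1+\kappa)m^2$ when $m^2$ is close to $1/(1+\kappa)$: what you actually prove has coefficient $(1-\eta)-(1+\kappa)m^2$, which is harmless in the small-mass regime where the lemma is used.
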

\begin{proof}
Recalling \eqref{eq:EL-phi42},
\begin{equation}\label{eq:EL-phi42-repeat}
\begin{split}
0
&= \E_{\zeta^Q} \left[
    24 \lambda q \delta^2 \sum_{x \in \Lambda}
        \rho \chi (\psi^{W + I})^2 (W + I) I
    + 4\lambda \delta^2 \sum_{x \in \Lambda}
        \rho \wick{(W + I)^3} I \right.\\
&\hspace{5em}\left.
    - 2(1-m^2) \delta^2 \sum_{x \in \Lambda}\rho(W+I)I
    + \int_0^1 \delta^2 \sum_{x \in \Lambda}
        \rho \left[ \abs{\nabla_{\delta}\dot{I}_s}^2 + \dot{I}_s^2 \right] \diff s
    \right]\\
&\quad + \E_{\zeta^Q}
    \int_0^1 \delta^2 \sum_{x\in \Lambda}
        ([\rho,\nabla_{\delta}]\dot{I}_s) \nabla_{\delta}\dot{I}_s
        \diff s.
\end{split}
\end{equation}
Let us first fix the weight.
Since $\rho(x)=(1+ K_2\abs{x}^2)^{-K_1/2}$, we can estimate for example
\begin{equation}
\abs{\partial_{x_i} \rho(x)}
= \bigabs{K_1 K_2 \rho(x) \frac{x_i}{1 + K_2 \abs x^2}}
\leq K_1 K_2^{1/2} \abs{\rho(x)}.
\end{equation}
We can therefore choose $K_2 > 0$ to be small enough (dependent on $K_1$) that
\begin{equation}\label{eq:weight derivative}
    \abs{\nabla_{\delta}\rho(x)}
    \leq \frac 3 2 \abs{\rho(x)}
\end{equation}
uniformly in $\delta \lesssim 1$.
Hence by Cauchy--Schwarz
\begin{equation}
\bigabs{\delta^2 \sum_{x\in\Lambda}
    ([\rho,\nabla_\delta]\dot{I})(x)\nabla_\delta \dot{I}(x)}
\leq \frac{3\delta^2}{4} \sum_{x\in\Lambda}\rho(x) \abs{\nabla_\delta \dot{I}(x)}^2
    + \frac{3\delta^2}{4} \sum_{x\in\Lambda}\rho(x) \dot{I}(x)^2 .
    \label{eq:weight-commutator}
\end{equation}
Taking~$\varepsilon$ sufficiently small, we therefore have
\begin{equation}
\int_{0}^1 \delta^2 \sum_{x\in \Lambda} \bigl[ ([\rho,\nabla_{\delta}]\dot I_s)\nabla_\delta \dot I_{s}
    + \rho\abs{\nabla_{\delta}\dot I_s}^2+ \rho \dot{I}_s^2 \bigr] \diff s
\geq \frac{1}{4} \int_0^1 \delta^2 \sum_{x \in \Lambda}
    \rho \bigl[ \abs{\nabla_{\delta}\dot I_s}^2 + \dot{I}_s^2 \bigr] \diff s.
\end{equation}
Then by Jensen's inequality we may bound
\begin{equation}
\int_0^1 \delta^2 \sum_{x \in \Lambda}
    \rho \bigl[ \abs{\nabla_{\delta}\dot I_s}^2+ \dot{I}_s^2 \bigr] \diff s
\geq \norm{I_{1}}_{H^1(\rho^{1/2})}^2.
\end{equation}
Therefore the last two terms of~\eqref{eq:EL-phi42-repeat}
can be controlled with $\norm{I}_{H^1(\rho^{1/2})}$.
Now the first term of~\eqref{eq:EL-phi42-repeat}
is estimated by Lemma~\ref{Lem:gradient-quadratic-form} as
\begin{equation}
24 \lambda q \delta^2 \E \sum_{x \in \Lambda} \chi(\psi^{W+I})^2(W+I)I
\geq -C_{\kappa,\varepsilon} \lambda^2 \left[ q^2 + q^{2/(1-4\varepsilon)}\right]
        (1 + \abs{\log(m)}^{p})
    -\kappa m^2 \E \norm{I}^2_{H^{1}(\rho^{1/2})},
\end{equation}
and the next two terms by Lemma~\ref{Lem:coercive} and Lemma~\ref{wick-power-estimate} as
\begin{equation}
\begin{split}
&\mathrel{\phantom{=}}
    4\lambda \delta^2 \E \sum_{x \in \Lambda} \rho \wick{(W + I)^3} I
    - 2(1-m^2) \delta^2 \E \sum_{x \in \Lambda} \rho (W+I) I\\
&\geq 
    3\lambda \E \norm{I}^4_{L^4 (\rho^{1/4})}
    - m^2 \E \norm{I}^2_{H^1 (\rho^{1/2})}
    - C\frac{(1-m^2)^2}{\lambda} \norm{\rho}_{L^1(\Lambda)}\\
&\qquad - C \bigg(\frac{\lambda^2}{m^2}
        + \frac{\lambda^3}{m^4}
        + \frac{(1-m^2)^4}{m^2}\bigg) (1 + \abs{\log(m)}^{p})
\end{split}
\end{equation}
for some sufficiently large~$p$.
Therefore we have shown
\begin{equation}
0 \geq 3\lambda \norm{I}_{L^4(\rho^{1/4})}^4
    + (1 - \kappa m^2 - m^2) \E \norm{I}_{H^1(\rho^{1/2})}^2
    - C \mathcal{B}_{q,\lambda,m,\varepsilon},
\end{equation}
which implies the claim once $\kappa$ is taken small enough.
\end{proof}

The following lemma now provides an estimate on~\eqref{eq:quantity}.

\begin{lemma}
\label{estimate-Q-phi4}
Let $q>0$ and let $\varepsilon>0$ be sufficiently small. There exists
$p<\infty$ and a constant $C_{\varepsilon}<\infty$, independent of
$q,\lambda,m,\delta$, such that
\[
q \E_{\zeta^Q} [Q (W + I)]
\leq C_\varepsilon \lambda q (1 + \abs{\log m}^p) \!
    \left[ 1
    + \left(\frac{\mathcal B_{q,\lambda,m,\varepsilon}}{(1 - (1+\kappa) m^2)}\right)^{1/2 + 5\varepsilon}
    \right] \!.
\]
\end{lemma}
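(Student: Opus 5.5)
We bound $Q(W+I)$ from above pathwise, evaluating the variational definition \eqref{eq:Q} at the minimizer $\psi=\psi^{W+I}$ of Lemma~\ref{lem:Grad-bottom}. Dropping the nonnegative terms $\tfrac12\abs{\nabla_\delta\psi}^2+m^2\psi^2$ and using \eqref{eq:lower-bound-Q}, i.e.\ discarding the positive term $12\lambda\chi I^2\psi^2$, we obtain
\[
Q(W+I)\leq 12\lambda\,\bigabs{\delta^2\textstyle\sum_x\chi\wick{W^2}\psi^2}+24\lambda\,\bigabs{\delta^2\textstyle\sum_x\chi WI\psi^2}.
\]
By duality and the product estimates already used in Lemma~\ref{lem:estimate-psi-phi42}, the right-hand side is at most
\[
C\lambda\Bigl(\norm{\chi\wick{W^2}}_{\Holder^{-\varepsilon}}+\norm{\chi^{1/2}W}_{\Holder^{-\varepsilon}}\norm{\chi^{1/2}I}_{H^1}\Bigr)\norm{\psi}_{H^\varepsilon}^2 .
\]
We then interpolate, $\norm{\psi}_{H^\varepsilon}^2\leq\norm{\psi}_{H^1}^{2\varepsilon}$ since $\norm{\psi}_{L^2}=1$, and insert the a priori bound of Lemma~\ref{lem:estimate-psi-phi42}. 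Using $\lambda\leq1$ and concavity of $x\mapsto x^{\varepsilon}$, this gives a bound of the form
\[
Q\leq C_\varepsilon\lambda\,\mathcal W\,(1+X)^{1+c\varepsilon},\qquad X=\norm{I}_{H^1(\rho^{1/2})},
\]
where $\mathcal W$ is a polynomial in $\Holder^{-\varepsilon}(\rho)$ norms of $W$ and $\wick{W^2}$, and $c$ is an absolute constant. Here we use $\norm{\chi^{1/2}I}_{L^2}+\norm{\chi^{1/2}I}_{H^1}\lesssim X$, as in Lemma~\ref{Lem:gradient-quadratic-form}.

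Next we take expectation under $\zeta^Q$ and apply Hölder's inequality to separate the stochastic part from the drift. We raise the $X$ factor to the power $2/(1+c\varepsilon)$, so that only $\E_{\zeta^Q}X^2$ appears; the conjugate exponent falls on $\mathcal W$. This yields
\[
\E_{\zeta^Q}[Q]\leq C_\varepsilon\lambda\bigl(\E\,\mathcal W^{r}\bigr)^{1/r}\bigl(1+\E_{\zeta^Q}X^2\bigr)^{(1+c\varepsilon)/2},
\]
with $r$ large, depending only on $\varepsilon$. Since $W$ has law $\nu_\delta$ under every $\zeta\in\overline{\mathcal X}$, Lemma~\ref{wick-power-estimate} gives $(\E\mathcal W^r)^{1/r}\leq C_\varepsilon(1+\abs{\log m}^p)$. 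Lemma~\ref{lemma:apriori} gives $\E_{\zeta^Q}X^2\leq C\mathcal B_{q,\lambda,m,\varepsilon}/(1-(1+\kappa)m^2)$. After renaming $\varepsilon$ so that $(1+c\varepsilon)/2\leq 1/2+5\varepsilon$ and multiplying by $q$, we obtain the stated bound.

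The main obstacle is exponent bookkeeping. The power of $X$ must stay strictly below $2$, so that Hölder only needs the second moment controlled by Lemma~\ref{lemma:apriori}. This is also why $\varepsilon$ must be small, and it forces tracking the $(\cdot)^{1/(1-\varepsilon)}$ exponent from Lemma~\ref{lem:estimate-psi-phi42} raised to the power $\varepsilon$. A secondary point is that $\norm{\chi^{1/2}W}$ and $\norm{\chi\wick{W^2}}$ are local and are dominated by the weighted norms. Weighted norms are needed for uniformity in $L$, since $\chi$ is compactly supported and $\rho$ is bounded below on $\operatorname{supp}\chi$.
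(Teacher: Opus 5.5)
Your proposal is correct and follows essentially the same route as the paper. You evaluate $Q$ at the minimizer after discarding the nonnegative $I^2$ term, control $\norm{\psi^{W+I}}_{H^1}^{2\varepsilon}$ via Lemma~\ref{lem:estimate-psi-phi42}, and apply H\"older so that only $\E_{\zeta^Q}\norm{I}_{H^1(\rho^{1/2})}^2$ (Lemma~\ref{lemma:apriori}) and the Wick moments (Lemma~\ref{wick-power-estimate}) enter.

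The only difference is organizational. You merge both terms into the single pathwise bound $\lambda\mathcal W(1+X)^{1+2\varepsilon}$ (so $c=2$, since $\varepsilon/(1-\varepsilon)\leq 2\varepsilon$) and apply one H\"older inequality. The paper instead treats the $\wick{W^2}$ term and the cross term separately, and ends with the cruder exponent $1/2+5\varepsilon$.
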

\begin{proof}
It suffices to estimate the two terms on the right-hand side of~\eqref{eq:lower-bound-Q} from below.
Duality and interpolation give
\begin{equation}
\lambda q \left|\delta^2 \sum_{x\in\Lambda}\chi \wick{W^2}(\psi^{W+I})^2\right|
\geq -C_\varepsilon\lambda q 
    \norm{\chi \wick{W^2}}_{\Holder^{-\varepsilon}(\rho)}
    \norm{\psi^{W+I}}_{H^1(\Lambda)}^{2\varepsilon}.
\end{equation}
Applying Lemma~\ref{lem:estimate-psi-phi42}, we get
\begin{equation}\label{eq:W2-psi-bound}
\begin{split}
&\mathrel{\phantom{=}} \lambda q
    \norm{\chi \wick{W^2}}_{\Holder^{-\varepsilon}(\rho)}
    \norm{\psi^{W+I}}_{H^1}^{2\varepsilon}\\
&\leq C_\varepsilon \lambda q
    \norm{\chi \wick{W^2}}_{\Holder^{-\varepsilon}(\rho)}
    \Bigl[\lambda
    \left(
      1+\norm{\wick{W^2}}_{\Holder^{-\varepsilon}(\rho)}
       +\norm{W}_{\Holder^{-\varepsilon}(\rho)}
        \norm{I}_{H^{1}(\rho^{1/2})}
    \right)^{1/(1-\varepsilon)}\\
&\hspace{10em} {} + 2 \lambda\norm{I}_{L^2(\rho^{1/2})}^2 + 2\Bigr]^{\varepsilon}.
\end{split}
\end{equation}
Using the concavity of $x\mapsto x^\varepsilon$ and Hölder's inequality
(again assuming $\lambda \leq 1$),
the expectation of this is bounded by
\begin{equation}
\begin{split}
&C_\varepsilon \lambda q \left[\E_{\zeta^Q} \norm{\chi \wick{W^2}}_{\Holder^{-\varepsilon}(\rho)}^2 \right]^{1/2}\\
&\quad
    \left[ 1
        + \E_{\zeta^Q} \norm{\wick{W^2}}_{\Holder^{-\varepsilon}(\rho)}^{2\varepsilon/(1-\varepsilon)}
        + \left( \E_{\zeta^Q} \norm{W}_{\Holder^{-\varepsilon}(\rho)}^{4\varepsilon/(1-\varepsilon)}
            \E_{\zeta^Q} \norm{I}_{H^1(\rho^{1/2})}^{4\varepsilon/(1-\varepsilon)} \right)^{1/2}
        + 2 \E_{\zeta^Q} \norm{I}_{H^1(\rho^{1/2})}^{2\varepsilon}
    \right]^{1/2}.
\end{split}
\end{equation}
By Lemmas~\ref{lemma:apriori} and~\ref{wick-power-estimate},
there exists $p<\infty$ such that this is bounded by
\begin{equation}
C_\varepsilon \lambda q (1 + \abs{\log m}^p) \!
    \left[ 1
    + \left(\frac{\mathcal B_{q,\lambda,m,\varepsilon}}{(1 - (1+\kappa) m^2)}\right)^{\varepsilon/2(1-\varepsilon)}
    + \left(\frac{\mathcal B_{q,\lambda,m,\varepsilon}}{(1 - (1+\kappa) m^2)}\right)^{\varepsilon/2}
    \right].
\end{equation}
Since $\varepsilon$ is small, we can neglect the final term of the bracket.

In the cross term we can estimate by Besov product estimates, interpolation, and Hölder
\begin{equation}
\begin{split}
\lambda q \E \bigabs{\delta^2 \sum_{x\in\Lambda}\chi W I(\psi^{W+I})^2}
&\leq C_\varepsilon \lambda q \norm{W}_{\Holder^{-\varepsilon}(\rho)}
    \norm{\psi^{W+I}}_{H^{2\varepsilon}(\Lambda)}
    \norm{I}_{H^1(\rho^{1/2})}\\
&\leq C_\varepsilon \lambda q \norm{W}_{\Holder^{-\varepsilon}(\rho)}
    \norm{\psi^{W+I}}_{H^1(\Lambda)}^{2\varepsilon}
    \norm{I}_{H^1(\rho^{1/2})}\\
&\leq C_\varepsilon \lambda q \left[
    \E \norm{W}_{\Holder^{-\varepsilon}(\rho)}^4
    \E \norm{\psi^{W+I}}_{H^1(\Lambda)}^{8\varepsilon}
    \right]^{1/4}
    \left[ \E \norm{I}_{H^1(\rho^{1/2})}^2 \right]^{1/2}.
\end{split}
\end{equation}
Again by Lemmas~\ref{lem:estimate-psi-phi42},~\ref{lemma:apriori}, and~\ref{wick-power-estimate}
this is at most
\begin{equation}
C_\varepsilon \lambda q (1 + \abs{\log m}^p) \!
    \left[ 1
    + \left(\frac{\mathcal B_{q,\lambda,m,\varepsilon}}{(1 - (1+\kappa) m^2)}\right)^{4\varepsilon/(1-\varepsilon)}
    \right]
    \left(\frac{\mathcal B_{q,\lambda,m,\varepsilon}}{(1 - (1+\kappa) m^2)}\right)^{1/2}.
\end{equation}
Taking $\varepsilon$ small enough and combining these bounds gives the statement.
\end{proof}

We are now ready to prove Theorem~\ref{thm:integrability} for the $\phi^4_2$ model.
To verify Assumption~\ref{ass:potential-intro}, we need to set $q = 4/\sigma$ in~\eqref{eq:quantity},
where $\sigma$ is the log-Sobolev constant of the measure.
By~\cite{bauerschmidt_logsobolev_2024}, we then have $q \simeq m^{-2}$.
Since we have assumed $m < 1$,
we have
\begin{equation}
\mathcal{B}_{q,\lambda,m,\varepsilon}
\lesssim \frac{\lambda^2}{m^{4/(1-4\varepsilon)}} (1 + \abs{\log m}^p) + \frac{1}{\lambda}.
\end{equation}
If we choose $\lambda \leq m^{4+20\varepsilon}$,
the expression in Lemma~\ref{estimate-Q-phi4} is then bounded uniformly in all parameters.
Together with~\eqref{eq:quantity}, this gives the desired bound on
$\log \int e^{qQ(u)}\diff\mu_\delta(u)$.

\subsection{The sine-Gordon model}\label{sec:sine-gordon measure}

In this section we treat the case of the sine-Gordon model,
where $Q$ is given by
\begin{equation}
Q(u) = -\inf_{\norm{\psi}_{L^2} = 1} \delta^2 \sum_{x \in \Lambda} \left[
    -\beta^2 \lambda\, \chi(x)    \wick{ \cos(\beta u(x))} \psi(x)^2
    +  \frac 12 \abs{\nabla_\delta \psi(x)}^2 + m^2 \psi(x)^2
\right].
\end{equation}
We make some changes to our notation here
in order to use the scale decomposition of the measure.
We replace the regularizing operator by the heat kernel
\begin{equation}
J_t = \left[ \frac{1}{t^2} e^{- (m^2 - \Laplace_\delta) / t} \right]^{1/2},
\end{equation}
and correspondingly parametrize the regularization by $t \in [0, \infty)$.
As before we set
\begin{equation}
I_t(h) = \int^t_0 J_s h_s \diff s,\qquad
W_t = \int^t_0 J_s \diff X_s
\end{equation}
with this new operator.
Again $W_\infty$ has the law~$\nu_\delta$,
the Gaussian free field of covariance $(m^2 - \Laplace_\delta)^{-1}$ on $\Lambda_\delta$.
We will set $W=W_\infty$, $I = I_{\infty}[h]$ for brevity of notation.

\begin{lemma}\label{thm:heat kernel bounds}
The two-dimensional heat kernel $J_t$ satisfies the following estimates:
\begin{enumerate}
\item[(a)] $\displaystyle \bignorm{\int_0^\infty J_t f_t \diff t}_{\Holder^1(\Lambda)}
    \lesssim \sup_{t \geq 0}\, \inorm{t}^{1/2+\varepsilon} \norm{f_t}_{L^\infty(\Lambda)}$,
    where $\varepsilon > 0$.
\item[(b)] $\displaystyle \norm{J_t f}_{L^\infty(\Lambda)} \lesssim \inorm{t}^{-1} \norm{f}_{L^\infty(\Lambda)}$.
\end{enumerate}
Let us denote by~$C_t(x,y)$ and $C^{\ast 2}_t(x,y)$ the convolution kernels
of~$J_t$ and $J_t^2$.
They satisfy:
\begin{enumerate}
\item[(c)] $\norm{C_t}_{L^1(\Lambda)} \lesssim \inorm{t}^{-1}$ independently of~$m$.
\item[(d)] $\norm{C^{\ast 2}_t}_{L^1(\Lambda)} \lesssim m^{-2} \inorm{t}^{-2}$.
\end{enumerate}
\end{lemma}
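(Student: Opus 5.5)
The plan is to make everything explicit through the semigroup. Since $J_t$ is a Fourier multiplier, $J_t=t^{-1}e^{-(m^2-\Laplace_\delta)/(2t)}=t^{-1}e^{-m^2/(2t)}\,e^{\Laplace_\delta/(2t)}$. Its convolution kernel is therefore
\[
C_t=t^{-1}e^{-m^2/(2t)}\,p_{1/(2t)},
\]
where $p_s$ is the periodic lattice heat kernel at time $s$. Likewise $C^{\ast2}_t=t^{-2}e^{-m^2/t}p_{1/t}$. The key facts about $p_s$ on $\Lambda_\delta$ are positivity and unit mass, $\delta^2\sum_x p_s(x)=1$. One also needs the gradient bound $\norm{\nabla_\delta p_s}_{L^1(\Lambda)}\lesssim s^{-1/2}$, uniformly in $\delta$ and $L$. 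This follows from the continuous-time random-walk representation, or from the standard discrete Gaussian bounds on the kernel and its difference. Once these are in hand, every item is a scalar computation in $t$.

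For (c) and (d), unit mass gives exactly $\norm{C_t}_{L^1}=t^{-1}e^{-m^2/(2t)}$ and $\norm{C^{\ast2}_t}_{L^1}=t^{-2}e^{-m^2/t}$. For $t\geq1$ these are $\leq t^{-1}$ and $\leq t^{-2}$, with no dependence on $m$. For $t\leq1$ I would use $\sup_{s>0}s^ke^{-a s}\lesssim a^{-k}$ with $s=1/t$. This bounds $t^{-1}e^{-m^2/(2t)}$ by $Cm^{-2}$ and $t^{-2}e^{-m^2/t}$ by $Cm^{-4}$. This small-$t$ regime is the step I expect to be the obstacle. The literal claims ``$\lesssim\inorm{t}^{-1}$ independently of $m$'' and ``$m^{-2}\inorm{t}^{-2}$'' only follow here if $m$ is bounded below, or if the constant is allowed to depend on $m$ for $t\lesssim m^2$. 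So I would first check against the later usage whether only $t\geq1$, or only integrated quantities, are needed. If so, I would state (c) and (d) in the form $\min(t^{-1},Cm^{-2})$ and $\min(t^{-2},Cm^{-4})$. Item (b) then follows immediately from (c) by Young's convolution inequality, $\norm{C_t\ast f}_{L^\infty}\leq\norm{C_t}_{L^1}\norm{f}_{L^\infty}$.

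For (a), I would bound the $L^\infty$ part and the discrete-gradient part of the $\Holder^1$ norm separately. The gradient part is where the smoothing matters. Using $\norm{\nabla_\delta(C_t\ast f)}_{L^\infty}\leq\norm{\nabla_\delta C_t}_{L^1}\norm{f}_{L^\infty}$ and the heat-kernel gradient bound with $s=1/(2t)$, one gets
\[
\norm{\nabla_\delta J_tf_t}_{L^\infty}\lesssim t^{-1}e^{-m^2/(2t)}t^{1/2}\norm{f_t}_{L^\infty}\leq t^{-1/2}\norm{f_t}_{L^\infty}.
\]
Multiplying and dividing by $\inorm{t}^{1/2+\varepsilon}$, the integrand is controlled by $t^{-1/2}\inorm{t}^{-1/2-\varepsilon}\sup_s\inorm{s}^{1/2+\varepsilon}\norm{f_s}_{L^\infty}$. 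This is integrable near $0$ (like $t^{-1/2}$) and near $\infty$ (like $t^{-1-\varepsilon}$), which gives the gradient bound uniformly in $m$, $\delta$ and $L$.

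The $L^\infty$ part uses the (c)-type bound $t^{-1}e^{-m^2/(2t)}\inorm{t}^{-1/2-\varepsilon}$. This is integrable at infinity, and near $0$ thanks to the exponential factor, with only a $\log(1/m)$-type dependence. Alternatively, one can use the sharper $\min(t^{-1},m^{-2})$ from the previous paragraph.
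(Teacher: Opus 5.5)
Your argument is correct. In substance it is the paper's proof made explicit. The paper deduces (b) from (c) exactly as you do, calls (c) and (d) ``direct computations with the convolution kernel representation'', and for (a) only cites \cite[Lemma~A.5]{gubinelli_fbsde_2026}. You instead prove (a) directly from the bound $\norm{\nabla_\delta p_s}_{L^1}\lesssim s^{-1/2}$ on the lattice heat kernel. One step there is used silently: the embedding $\norm{g}_{\Holder^1(\Lambda)}\lesssim\norm{g}_{L^\infty}+\norm{\nabla_\delta g}_{L^\infty}$, uniformly in $\delta$. It is standard, but you should state it.

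What the explicit route buys is that it settles the obstacle you anticipated, and you are right about it.

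\begin{itemize}
\item Positivity and unit mass give exactly $\norm{C_t}_{L^1}=t^{-1}e^{-m^2/(2t)}$ and $\norm{C^{\ast2}_t}_{L^1}=t^{-2}e^{-m^2/t}$. At $t=m^2$ these equal $e^{-1/2}m^{-2}$ and $e^{-1}m^{-4}$.
\item Hence, as $m\to0$, (c) cannot hold independently of $m$, and (d) cannot hold with an $m$-independent constant in front of $m^{-2}$.
\item Testing (a) with the spatially constant $f_t=\inorm{t}^{-1/2-\varepsilon}$ shows that the $\log(1/m)$ in your $L^\infty$ part is unavoidable.
\end{itemize}

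The paper's one-line proof does not see this. Your bounds $\min(t^{-1},Cm^{-2})$ and $\min(t^{-2},Cm^{-4})$ are the correct replacements.

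On your question about later usage, only $t\geq1$ does not suffice. The proof of Theorem~\ref{prop:coupling} integrates over all $t\in[0,\infty)$, both in the Gr\"onwall step and in the $\Holder^1$ bound on $I_\infty$. It even bounds $\lambda_t^2t^{-2}$ by $\lambda^2\inorm{t}^{-1-\varepsilon}$, which fails for $t<1$ since $c_t\geq0$ gives $\lambda_t\geq\lambda$. So the small-$t$ regime is genuinely needed. With the corrected bounds, the tracked $m$-dependence of $A(m,\lambda)$ picks up further negative powers of $m$, while all statements at fixed $m$ survive.
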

\begin{proof}
The first estimate is a special case of~\cite[Lemma~A.5]{gubinelli_fbsde_2026},
(b) follows from~(c),
and~(c) and~(d) are direct computations with the convolution kernel representation;
see~\cite[Appendix~A]{bauerschmidt_logsobolev_2021}.
\end{proof}

Our strategy will be different from the $\phi^4$ case. 
We will proceed in the following steps:
First we derive a coupling between the sine-Gordon field and the Gaussian free field. 
Then we will use this coupling to bound $Q(u)$ under the sine-Gordon field,
in terms of an observable under the~GFF.
Lastly we prove exponential tails for that observable. 

\begin{theorem}\label{prop:coupling}
Let $\mu_{\beta}$ be the lattice sine-Gordon measure for $\beta^2 < 4 \pi$,
and $\rho$ a weight as in Section~\ref{sec:phi42 measure}.
There exists a probability space $\Prob$ and random variables
$W \in L^2 (\Prob;\, \Holder^{- \varepsilon}(\Lambda, \rho))$
and $I \in L^{\infty} (\Prob;\, \Holder^1(\Lambda))$ such that
\[
\mu_{\beta} = \operatorname{Law}(W + I), \quad \nu = \operatorname{Law}(W).
\]
such that $\sup_{L} \sup_{\delta} \norm{I}_{L^{\infty} (\Prob, \Holder^1 (\Lambda))}
\leq A(m,\lambda)$,
where $A(m,\lambda) = C\lambda
    + \frac{C \lambda^2}{m^2} \exp \!\left( \frac{C \lambda}{m^2} \right)$.
\end{theorem}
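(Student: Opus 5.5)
The natural route is the Polchinski/FBSDE coupling of Bauerschmidt--Bodineau and Gubinelli--Meyer, built on the heat-kernel scale decomposition just introduced. Let $v_t$ be the renormalized effective potential at scale $t$,
\[
e^{-v_t(\phi)} = \int e^{-\lambda\delta^2\sum_x \wick{\cos(\beta(\phi+\zeta))}}\, \diff\nu^{(t,\infty)}(\zeta),
\]
where $\nu^{(t,\infty)}$ is the Gaussian field with covariance $\int_t^\infty J_s^2\diff s$. We then solve the forward SDE
\[
\Phi_t = W_t - \int_0^t J_s^2\, \nabla v_s(\Phi_s)\diff s ,
\]
so that $\Phi_\infty$ has law $\mu_\beta$. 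Equivalently, the drift is $h_s = -J_s\nabla v_s(\Phi_s)$ in the Boué--Dupuis picture of Theorem~\ref{thm:boue-dupuis}. Setting $W=W_\infty$ and $I=-\int_0^\infty J_s^2\nabla v_s(\Phi_s)\diff s$ gives $\mu_\beta=\operatorname{Law}(W+I)$ and $\nu=\operatorname{Law}(W)$ by construction. The whole content of the theorem is therefore the deterministic bound $\norm{I}_{\Holder^1}\le A(m,\lambda)$, uniformly in $\Phi$, $L$ and $\delta$.

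For that we need a uniform bound on the renormalized gradient of the form
\[
\norm{J_t\nabla v_t}_{L^\infty} \lesssim \inorm{t}^{-1/2-\varepsilon}\, A(m,\lambda).
\]
Once this holds, Lemma~\ref{thm:heat kernel bounds}(a) applied with $f_t = J_t\nabla v_t(\Phi_t)$ gives $\norm{I}_{\Holder^1}\lesssim A(m,\lambda)$ directly. To get the gradient bound, we write the Polchinski equation
\[
\partial_t v_t = \tfrac12 \langle \dot C_t, \nabla^2 v_t\rangle - \tfrac12 \langle\nabla v_t, \dot C_t\nabla v_t\rangle ,
\]
with $\dot C_t = J_t^2$, and decompose $v_t$ into a first-order part and a remainder.

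The first-order part is $v^{(1)}_t(\phi) = \lambda\delta^2\sum_x e^{-\frac{\beta^2}{2}\Gamma_t}\cos(\beta\phi(x))$, where $\Gamma_t$ is the variance removed up to scale $t$, i.e.\ Wick ordering at scale $t$. It satisfies $\norm{\nabla v^{(1)}_t}_{L^\infty}\lesssim \lambda\, t^{-\beta^2/4\pi}$, with the convention that $\Gamma_t\sim \frac{1}{2\pi}\log t$ at small scales. Combining this with Lemma~\ref{thm:heat kernel bounds}(b),(c) gives $\norm{J_t\nabla v^{(1)}_t}_{L^\infty}\lesssim \lambda \inorm{t}^{-1}t^{-\beta^2/4\pi}$. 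This is integrable in the sense required by Lemma~\ref{thm:heat kernel bounds}(a) exactly when $\beta^2<4\pi$, and it produces the $C\lambda$ term in $A$.

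The remainder $r_t = v_t - v^{(1)}_t$ solves a Duhamel equation whose source is the quadratic term $\langle\nabla v_t,\dot C_t\nabla v_t\rangle$. We control it by a Gronwall argument in a norm measuring $\norm{\nabla r_t}_{L^\infty}$ and $\norm{\nabla^2 r_t}_{L^\infty\to L^\infty}$. The quadratic term, controlled through Lemma~\ref{thm:heat kernel bounds}(d), contributes $\lambda^2 m^{-2}$. The exponential factor $\exp(C\lambda/m^2)$ arises from Gronwall applied to the linear term $\langle\nabla v^{(1)}_t,\dot C_t\nabla r_t\rangle$, whose time-integrated coefficient is of order $\lambda\int\norm{C^{*2}_t}_{L^1}\diff t\lesssim \lambda m^{-2}$. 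Equivalently, we can bound $\nabla^2 v_t$ through the convexity/Brascamp--Lieb style estimate of \cite{bauerschmidt_logsobolev_2021} and propagate it along the flow.

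The main obstacle is the small-scale regime $2\pi\le\beta^2<4\pi$. There, the naive $L^\infty$ bound on the quadratic term $\langle\nabla v^{(1)},\dot C\nabla v^{(1)}\rangle$ is not integrable near $t=0$, because it behaves like $\lambda^2 t^{-\beta^2/2\pi}$. To handle this, we would use the oscillation of $\cos(\beta\phi(x))\cos(\beta\phi(y))$: rewriting the product through $\cos(\beta(\phi(x)\pm\phi(y)))$, only the neutral (dipole) component with short separation survives. That component gains a factor $|x-y|^{\beta^2/2\pi}$ from Wick reordering, which restores integrability for $\beta^2<4\pi$, as in the second-order analysis of \cite{bauerschmidt_logsobolev_2021}. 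We would then absorb the resulting second-order term into the explicit part of $v_t$ before closing the Gronwall argument, uniformly in $L$ and $\delta$, since all kernel bounds in Lemma~\ref{thm:heat kernel bounds} are local.
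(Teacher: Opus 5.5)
Your construction is the paper's own. The F\"ollmer/Polchinski drift $h_s=-J_s\nabla v_s(\Phi_s)$ is the optimal Bou\'e--Dupuis drift $h_t=-\lambda\beta J_t\E[\wick{\sin(\beta W_\infty+\beta I_\infty)}\mid\mathcal F_t]$ of Gubinelli--Meyer. Your $\nabla v^{(1)}_t$ is, up to sign, the paper's $F_t(\varphi)=\lambda\beta e^{\beta^2c_t/2}\sin(\beta\varphi)$, and $\nabla r_t(\Phi_t)$ is its remainder $R_t$. The Gronwall step producing $\exp(C\lambda/m^2)$, followed by Lemma~\ref{thm:heat kernel bounds}(a), is also the same.

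\textbf{Mixed scale conventions.} Your bookkeeping combines two parametrizations. The coupling $\lambda t^{-\beta^2/4\pi}$ with $\Gamma_t\sim\frac{1}{2\pi}\log t$, and the sign of your Polchinski equation, come from the Bauerschmidt--Bodineau convention. There $t$ is a length and the ultraviolet is $t\to0$. The kernels $J_t$, $C_t$, $C^{\ast2}_t$ of Lemma~\ref{thm:heat kernel bounds} instead have the ultraviolet at $t\to\infty$. Combined as you wrote them, $\inorm{t}^{1/2+\varepsilon}\lambda\inorm{t}^{-1}t^{-\beta^2/4\pi}$ is unbounded as $t\to0$ for every $\beta>0$, so your first-order step fails as stated. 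In the paper's convention, $\lambda_t=\lambda e^{\beta^2c_t/2}$ grows like $\inorm{t}^{\beta^2/8\pi}$ as $t\to\infty$. Hence $\norm{J_tF_t}_{L^\infty}\lesssim\lambda\inorm{t}^{\beta^2/8\pi-1}$, and Lemma~\ref{thm:heat kernel bounds}(a) applies for every $\beta^2<4\pi$ once $\varepsilon$ is small.

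\textbf{The obstacle at $2\pi$ is spurious.} The naive bound is
$\norm{D\langle F_t,J_t^2F_t\rangle}_{L^\infty}\le\lambda_t^2\beta^3\norm{C^{\ast2}_t}_{L^1}\lesssim\lambda^2m^{-2}\inorm{t}^{\beta^2/4\pi-2}$,
which is integrable for all $\beta^2<4\pi$. In length variables it reads $\lambda^2t^{-\beta^2/2\pi}\norm{\dot C_t}_{L^1}\sim\lambda^2t^{1-\beta^2/2\pi}$. Your $\lambda^2t^{-\beta^2/2\pi}$ simply drops the factor $\norm{\dot C_t}_{L^1}\sim t$. You had in fact already asserted, via Lemma~\ref{thm:heat kernel bounds}(d), that this term contributes $\lambda^2m^{-2}$. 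No threshold occurs at $2\pi$, and no dipole or second-order resummation is needed below $4\pi$; that is only required for $4\pi\le\beta^2<6\pi$.

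\textbf{The real gap: closing the remainder estimate.} Differentiating the equation for $r_t$ produces two terms: the linear term $\nabla^2v^{(1)}_t\,\dot C_t\nabla r_t$, and the advection term $\nabla^2 r_t\,\dot C_t\nabla v_t$. With a Duhamel formula for the free Gaussian semigroup, you then need $\nabla^2 r_t$, whose own estimate involves $\nabla^3 r_t$. So a Gronwall in $(\norm{\nabla r_t},\norm{\nabla^2 r_t})$ does not close as stated. A convexity/Brascamp--Lieb argument gives quadratic-form bounds on $\nabla^2 v_t$. It does not give the $\ell^\infty\to\ell^\infty$ control of $\nabla^2 r_t$ that an $L^\infty$ bound on the advection term would need.

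The paper avoids this by working along the interacting process $X_t=W_t+I_t$. With $R_t=\lambda\beta\E[\wick{\sin(\beta X_\infty)}\mid\mathcal F_t]-F_t(X_t)$, It\^o's formula gives the BSDE
$R_t=\E[\int_t^\infty H_s(X_s)+DF_s(X_s)J_s^2R_s\diff s\mid\mathcal F_t]$.
Here the advection is absorbed into the dynamics of $X$. The only linear coefficient is the multiplication operator $DF_s=\lambda_s\beta^2\cos(\beta\,\cdot)$. Gronwall then closes in $\norm{R_t}_{L^\infty(\Prob;L^\infty)}$ alone, with no Hessian bounds on the remainder.
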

\begin{proof}
This statement has essentially been proved
in~\cite{barashkov_stochastic_2022,gubinelli_fbsde_2026,bauerschmidt_logsobolev_2021,bauerschmidt_maximum_2022}.
We reprove it here in the context of lattice approximations and uniformly in the volume,
while also tracking the dependence on~$\lambda$ and~$m$.
We will opt for the method of Gubinelli and Meyer~\cite{gubinelli_fbsde_2026}.

By~\cite[Lemma~2.5]{gubinelli_fbsde_2026},
there exists an optimal~$h\in \mathbb{H}_a$ in Theorem~\ref{thm:boue-dupuis}
that satisfies almost surely the equation
\begin{equation}\label{eq:FBDSE}
h_{t}
=- \lambda\beta J_{t} \E [ \wick{\sin (\beta W_{\infty} + \beta I_{\infty}[h])} \mid \mathcal{F}_t ] .
\end{equation}
One can show \cite{barashkov_variational_2023,gubinelli_fbsde_2026}
that then $\mu_{\beta} = \operatorname{Law}(W+I)$.
Note that in this case we do not need to pass to the space $\overline{\mathcal{X}}$
to ensure existence of the minimizer.

Denoting $X_t = W_t + I_t$, we can rewrite~\eqref{eq:FBDSE} as a system.
For any $F\in C^2(\R_+ \times \R^\Lambda, \R^\Lambda)$,
we set $R_t = \lambda\beta \E [ \wick{\sin (\beta W_{\infty} + \beta I_{\infty}[h])} \mid \mathcal{F}_t ]
    - F_t (W_t+I_t(h))$.
As discussed in~\cite[Section~2.3]{gubinelli_fbsde_2026},
$R_t$ then solves the system
\begin{equation}
\begin{split}
R_t &= \E \left[ \int^{\infty}_t H_s (X_s) + \mathrm{D} F_s (X_s) J^2_s R_s \diff s
    \,\middle|\, \mathcal{F}_t \right]\!,\\
H_t &\coloneqq \partial_t F_t + \frac{1}{2} \operatorname{Tr} (J^2_t \mathrm{D}^2 F_t)
    -\frac{1}{2} \mathrm{D} \langle F_t, J^2_t F_t \rangle_{L^2 (\Lambda)} .
\end{split}
\end{equation}

In our setting we will set
\begin{equation}
F_t (\varphi) = \lambda \beta e^{\frac{\beta^2 c_{t}}{2}} \sin (\beta \varphi),
\end{equation}
where $c_t = \E [W_{t}(0)^2] = \log(t \wedge \delta^2)/4\pi + O(1)$.
(This estimate follows by adapting~\cite[Appendix~A]{gubinelli_fbsde_2026} to the periodic setup.)
It can be checked that this satisfies
\begin{equation}
\partial_t F_t + \frac{1}{2} \text{Tr} (J^2_t \mathrm{D}^2 F_t) = 0, \qquad
F_{\infty} (\varphi) = \lambda \beta \wick{\sin (\beta \varphi)}.
\end{equation}
In addition we have, with the notation $\lambda_{t}=\lambda e^{\frac{\beta^2 c_{t}}{2}}$, the identity
\begin{equation}
\langle F_t, J^2_t F_t \rangle_{L^2 (\Lambda)}
= \lambda^2_t \beta^2 \iint \sin (\beta \varphi) (x) C^{\ast 2}_t (x - y) \sin (\beta \varphi) (y) \dx\dy,
\end{equation}
where $C^{\ast 2}_t$ is as in Lemma~\ref{thm:heat kernel bounds}.
(For brevity we replace the sums over $\Lambda$ with integrals in convolutions.)
So taking the gradient we have
\begin{equation}
H_t = -\frac 1 2 \mathrm{D} \langle F_t, J^2_t F_t \rangle_{L^2 (\Lambda)}
= -\lambda^2_t \beta^3
    \iint \cos (\beta \varphi) (x) C^{\ast 2}_t (x - y) \sin (\beta \varphi) (y) \dx\dy.
\end{equation}
Observe that this implies the estimate
\begin{equation} \label{eq:estimate-bilinear}
\begin{split}
\norm{\mathrm{D} \langle F_t, J^2_t F_t \rangle_{L^2 (\Lambda)}}_{L^{\infty}(\Lambda)}
&\leq \lambda^2_t \beta^3 \norm{C^{\ast 2}_t}_{L^1}\\
&\leq \frac{1}{m^2} \beta^3 \lambda^2_t t^{-2}\\
&\leq \frac{1}{m^2} \lambda^2 \beta^3 \inorm{t}^{-1 - \varepsilon},
\end{split}
\end{equation}
where $\varepsilon$ is taken small enough.
In particular $R$ satisfies the bound
\begin{equation}
\begin{split}
\norm{R_t}_{L^{\infty}(\Prob, L^{\infty} (\Lambda))}
&\leq \frac{C \lambda^2}{m^2} \beta^3 + \lambda \beta^2
    \int_t^\infty \langle s \rangle^{\beta^2 / 4 \pi}
        \norm{J^2_s R_s}_{L^{\infty}(\Prob, L^{\infty} (\Lambda))} \diff s\\
&\leq \frac{C \lambda^2}{m^2} \beta^3 + \frac{\lambda \beta^2}{m^2}
    \int_t^\infty \langle s \rangle^{- 2 - \varepsilon + \beta^2 / 4 \pi}
        \norm{R_s}_{L^{\infty}(\Prob, L^{\infty} (\Lambda))} \diff s.
\end{split}
\end{equation}
From Grönwall's lemma we conclude
\begin{equation}
\norm{R_t}_{L^{\infty} (\Prob, L^{\infty} (\Lambda))}
\leq \frac{C \lambda^2}{m^2} \beta^3 \exp \left( \frac{C \lambda \beta^2}{m^2} \right)
\eqqcolon \tilde{A}(m,\lambda,\beta).
\end{equation}
This, \eqref{eq:estimate-bilinear}, and the heat kernel bounds
together imply that
\begin{equation}
\begin{split}
\norm{I_{\infty}(h)}_{\Holder^1}
&= \bignorm{\int_0^\infty  J^2_t (F_t(W_t+I_t(h))+R_t) \diff t}_{\Holder^1}\\
&\leq \sup_{t\geq 0} \left( \inorm{t}^{1/2+\varepsilon}
    \norm{J_t (F_t(W_t+I_t(h))+R_t)}_{L^\infty} \right)\\
&\leq C \sup_{t\geq 0}\left( \lambda \beta \inorm{t}^{1/2+\varepsilon+ \beta^2/8\pi-1}
    + \inorm{t}^{-1/2+\varepsilon} \norm{R_{t}}_{L^{\infty}}\right)\\
&\leq  C \lambda \beta + \tilde{A}(m,\lambda,\beta).
\end{split}
\end{equation}
We can dissolve $\beta < 4\pi$ into the implicit constant,
and this expression then gives our $A(m,\lambda)$.
\end{proof}

\begin{lemma}[Estimate on the Hessian]\label{lemma:bottom-product}
Let $W$ be a Gaussian free field
and $I \in L^{\infty} (\Prob, \Holder^{1 - \varepsilon}(\Lambda))$
such that $\norm{I}_{L^{\infty} (\Prob,\, \Holder^{1 - \varepsilon})} \leq  A(m,\lambda)$.
Then 
\[
\begin{split}
&\mathrel{\phantom{=}} \inf_{\norm{\psi}_{L^2} = 1}
    -\lambda \beta^2 \delta^2 \sum_{x \in \Lambda} \chi \wick{\cos(\beta W + \beta I)} \psi^2
    + m^2 \delta^2 \sum_{x \in \Lambda} \psi^2
    + \frac{\delta^2}{2} \sum_{x \in \Lambda} \abs{\nabla_{\delta} \psi}^2\\
&\geq -  \frac{\lambda A(m,\lambda)^{2-\varepsilon'}}{m^{2-\varepsilon'}}
   \left[ \norm{\wick{\sin (\beta W)}}^{2 - \varepsilon'}_{\Holder^{- \beta^2 / 4 \pi - \varepsilon} (\rho)}+\norm{\wick{\cos (\beta W)}}^{2 - \varepsilon'}_{\Holder_{\delta}^{- \beta^2 / 4 \pi - \varepsilon} (\rho)}\right]
\end{split}
\]
for $\varepsilon' > 0$ small enough.
\end{lemma}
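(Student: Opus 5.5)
We expand the Wick-ordered cosine of a shifted field with the addition formula. Since $I$ is a deterministic-type shift that needs no renormalization,
\[
\wick{\cos(\beta W+\beta I)}=\cos(\beta I)\wick{\cos(\beta W)}-\sin(\beta I)\wick{\sin(\beta W)}
\]
pointwise on the lattice. This follows from the definition of Wick exponentials in Appendix~\ref{sec:wick}: $\wick{e^{i\beta(W+I)}}=e^{i\beta I}\wick{e^{i\beta W}}$. The interaction term in the quadratic form thus becomes a sum of two pairings of the distributions $\wick{\cos(\beta W)}$ and $\wick{\sin(\beta W)}$ against the test functions $g_1=\chi\cos(\beta I)\psi^2$ and $g_2=\chi\sin(\beta I)\psi^2$. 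It then suffices to bound each pairing from below by a fraction of the kinetic and mass terms plus a $W$-dependent constant.

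\textbf{Duality step.} Set $s=\beta^2/4\pi+\varepsilon<1$. By weighted Besov duality,
\[
\Bigl|\delta^2\sum_x \wick{\cos(\beta W)}\,g_1\Bigr|\lesssim \norm{\wick{\cos(\beta W)}}_{\Holder^{-s}(\rho)}\norm{g_1}_{B^{s}_{1,1}(\rho^{-1})}.
\]
The weight is harmless because $\chi$ has compact support, so $\rho^{-1}$ is bounded on $\operatorname{supp}\chi$ (the constant depends only on the location of $\chi$, and translation invariance handles the $Q_{\delta,i}$). By the product estimate,
\[
\norm{\chi\cos(\beta I)\psi^2}_{B^s_{1,1}}\lesssim \norm{\chi\cos(\beta I)}_{\Holder^{s+\varepsilon}}\norm{\psi^2}_{B^s_{1,1}}.
\]
We also use $\norm{\psi^2}_{B^s_{1,1}}\lesssim\norm{\psi}_{L^2}\norm{\psi}_{H^{s+\varepsilon}}$. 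Since $\cos$ and $\sin$ are Lipschitz and $s+\varepsilon<1-\varepsilon$, we get $\norm{\cos(\beta I)}_{\Holder^{s+\varepsilon}}\lesssim 1+A(m,\lambda)$. This is where the $A$-dependence enters. For small $A$ the factor $1$ matters, but one can track it and absorb it into $A\ge C\lambda$. The point of the $I\in\Holder^{1-\varepsilon}$ hypothesis is precisely that it gives enough regularity to multiply a distribution of regularity $-s>-1$.

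\textbf{Interpolation and Young.} With $\norm{\psi}_{L^2}=1$, interpolation gives $\norm{\psi}_{H^{s+\varepsilon}}\le\norm{\psi}_{L^2}^{1-\theta}\norm{\psi}_{H^1}^{\theta}$ with $\theta=s+\varepsilon$. We then apply Young's inequality to
\[
\lambda\beta^2 A\,\norm{\wick{\cos(\beta W)}}\,\norm{\psi}_{H^1}^{\theta},
\]
absorbing it into $\tfrac{\delta^2}{2}\sum\abs{\nabla_\delta\psi}^2+m^2\norm{\psi}^2\ge\tfrac{m^2}{2}\norm{\psi}_{H^1}^2$ for $m\le1$. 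The conjugate exponent is $2/(2-\theta)$, and this produces a term of size $(\lambda A X)^{2/(2-\theta)}m^{-2\theta/(2-\theta)}$, where $X$ is the Besov norm.

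\textbf{Main obstacle: matching exponents.} This last step is the main obstacle. The statement has exponent $2-\varepsilon'$ on $A$, $X$ and $m^{-1}$, but only a single power of $\lambda$, which suggests the intended split differs from a naive Young. I would instead interpolate $\norm{\psi}_{H^{s'}}$ with the exponent chosen close to $1$ (take $s'=1-\varepsilon'/2$, admissible since $s<1$). Then $\theta\approx1$ and Young with exponents $2/\theta$ and $2/(2-\theta)$ gives powers $2/(2-\theta)\approx 2-\varepsilon'$ after renaming. The leftover power of $\lambda$ can be bounded by $\lambda$ in the regime $\lambda\le1$ (or absorbed into $A\ge C\lambda$), yielding
\[
\frac{\lambda A^{2-\varepsilon'}}{m^{2-\varepsilon'}}\Bigl(X_{\sin}^{2-\varepsilon'}+X_{\cos}^{2-\varepsilon'}\Bigr).
\]
Summing the sine and cosine contributions gives the claimed lower bound.
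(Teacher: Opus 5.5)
\textbf{Verdict: genuine gap, in the cosine part.}

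Your route is the paper's. You use the addition formula for $\wick{\cos(\beta W+\beta I)}$. You then bound each pairing by the $\Holder^{-\beta^2/4\pi-\varepsilon}(\rho)$ norm of $\wick{\cos(\beta W)}$ or $\wick{\sin(\beta W)}$, times a factor controlled by $\norm{I}_{\Holder^{1-\varepsilon}}$, times a power of $\norm{\psi}_{H^1}$ below $2$. Finally you apply Young against $m^2\norm{\psi}_{L^2}^2+\frac12\norm{\nabla_\delta\psi}_{L^2}^2$. Using a $B^s_{1,1}$ product estimate for $\chi\cos(\beta I)\psi^2$ instead of the paper's paraproduct split of $\psi\,\wick{\sin(\beta W)}$ is cosmetic. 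Taking $\theta$ close to $1$ is what the paper's $\norm{\psi}_{H^1}^{1-\varepsilon}$ does.

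For the sine part this gives the stated form: $\abs{\sin(\beta I)}\le\beta\abs{I}$ yields $\norm{\sin(\beta I)}_{\Holder^{s+\varepsilon}}\lesssim A$ with no additive constant. The cosine part fails. Write $X_c=\norm{\wick{\cos(\beta W)}}_{\Holder^{-\beta^2/4\pi-\varepsilon}(\rho)}$. Young gives $C(\lambda(1+A)X_c)^{2-\varepsilon'}m^{-(2-\varepsilon')}$. Absorbing the $1$ through $A\geq C\lambda$ uses up the whole factor $\lambda^{2-\varepsilon'}$, so no prefactor $\lambda$ is left. Reaching $\lambda A^{2-\varepsilon'}$ would require $\lambda^{1-\varepsilon'}(1+A)^{2-\varepsilon'}\lesssim A^{2-\varepsilon'}$. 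This is false when $A\asymp\lambda\to0$, i.e.\ for small $\lambda\le m^2$. You cannot both absorb the $1$ into $A$ and keep the leftover power of $\lambda$ as a prefactor.

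This cannot be repaired, because for the cosine part the displayed inequality is false when $A$ is small, even up to a constant:
\begin{enumerate}
\item Both sides are continuous in $W$, and the lattice GFF has full support. So the inequality would have to hold at $W\equiv0$ with $I=0$.
\item There $\wick{\sin(\beta W)}=0$ and $\wick{\cos(\beta W)}\equiv c\coloneqq e^{\beta^2a/2}$, so $X_c\asymp c$.
\item Testing with a fixed normalized bump $\psi$ on $\{\chi=1\}$ bounds the left side by $m^2+O(1)-\lambda\beta^2c$.
\item Take $\lambda\le m^2$ small, so $A\asymp\lambda$. Choose $\delta$ with $c\asymp(m/\lambda)^2$, which is possible since $a=\frac{1}{2\pi}\log\delta^{-1}+O(1)$.
\item The left side is then of order $-m^2/\lambda$. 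The claimed bound is of order $-m^{2-\varepsilon'}\lambda^{\varepsilon'-1}$, smaller in size by the factor $(\lambda/m)^{\varepsilon'}\to0$.
\end{enumerate}

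The paper's proof has the same hole, hidden in \emph{the second term is analogous}. Its Lipschitz bound $\norm{\sin(\beta I)}_{\Holder^{1-\varepsilon}}\lesssim\norm{I}_{\Holder^{1-\varepsilon}}$ has no cosine analogue. Its first paraproduct term carries no factor $A$, yet is silently merged into $\lambda CA\norm{\wick{\sin(\beta W)}}\norm{\psi}_{H^1}^{1-\varepsilon}$.

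What both arguments actually prove, for $m\le1$, is the bound with $\lambda A^{2-\varepsilon'}$ replaced by $C\lambda^{2-\varepsilon'}(1+A)^{2-\varepsilon'}$. This is all Theorem~\ref{thm:sine-gordon finished} needs, since $A=O(1)$ for $\lambda\le m^2$.
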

\begin{proof}
We may write
\begin{equation}
\begin{split}
&\mathrel{\phantom{=}} \delta^2 \sum_{x \in \Lambda} \chi  \wick{\cos (\beta W + \beta I)} \psi^2\\
&= -\delta^2 \sum_{x \in \Lambda} \chi \wick{\sin (\beta W)} \sin (\beta I) \psi^2
    + \delta^2 \sum_{x \in \Lambda} \chi \wick{\cos (\beta W)} \cos (\beta I) \psi^2.
\end{split}
\end{equation}
We will now estimate the first term.
The second term is analogous.
We decompose the term as two paraproducts
\begin{equation}\label{eq:bottom-product paraproducts}
\begin{split}
&\mathrel{\phantom{=}} \delta^2 \sum_{x \in \Lambda} \chi \wick{\sin (\beta W)} \sin (\beta I) \psi^2\\
&= \delta^2 \sum_{x \in \Lambda} \chi [\psi \succeq \wick{\sin (\beta W)}] \sin (\beta I) \psi
    + \delta^2 \sum_{x \in \Lambda} \chi [\psi \prec \wick{\sin (\beta W)}] \sin (\beta I) \psi.
\end{split}
\end{equation}
Since we have assumed $\beta^2 / 4 \pi < 1$, we can estimate for $\varepsilon$ small that
\begin{equation}
\begin{split}
&\mathrel{\phantom{=}} \delta^2 \sum_{x \in \Lambda} \chi [\psi \succeq \wick{\sin (\beta W)}] \sin (\beta I) \psi\\
&\leq \norm{\psi}_{H^{1 - \varepsilon}(\Lambda)}
    \norm{\wick{\sin (\beta W)}}_{\Holder^{- \beta^2 / 4 \pi - \varepsilon} (\rho)}
    \norm{\sin (\beta I) \psi}_{L^2(\Lambda)}\\
&\leq \norm{\psi}_{H^{1 - \varepsilon}(\Lambda)}
    \norm{\wick{\sin (\beta W)}}_{\Holder^{- \beta^2 / 4 \pi - \varepsilon} (\rho)}.
\end{split}
\end{equation}
Now estimating the second term of~\eqref{eq:bottom-product paraproducts},
\begin{equation}
\begin{split}
&\mathrel{\phantom{=}} \delta^2 \sum_{x \in \Lambda} \chi (\psi \prec \wick{\sin (\beta W)}) \sin (\beta I) \psi\\
&\leq \norm{\psi}_{L^2(\Lambda)}
    \norm{\wick{\sin (\beta W)}}_{\Holder^{- \beta^2 / 4 \pi - \varepsilon} (\rho)}
    \norm{\sin(\beta I) \psi}_{H^{1 - \varepsilon}(\Lambda)}\\
&\lesssim \norm{\wick{\sin (\beta W)}}_{\Holder^{-\beta^2 / 4 \pi - \varepsilon} (\rho)}
    \norm{\psi}_{H^{1 - \varepsilon}(\Lambda)}
    \norm{I}_{\Holder^{1 - \varepsilon}(\Lambda)}\\
&\lesssim A(m,\lambda) \norm{\wick{\sin (\beta W)}}_{\Holder^{- \beta^2 / 4 \pi - \varepsilon} (\rho)}
    \norm{\psi}_{H^{1 - \varepsilon}(\Lambda)},
\end{split}
\end{equation}
where the estimate
$\norm{\sin(\beta I)}_{\Holder^{1-\varepsilon}} \lesssim \norm{I}_{\Holder^{1-\varepsilon}}$
follows by Lipschitz continuity.
Collecting the terms we obtain the bound
\begin{equation}
\lambda \beta^2 \bigabs{ \delta^2 \sum_{x \in \Lambda} \chi \psi^2 \wick{\sin (\beta W)} \sin (\beta I) }
\leq \lambda C A(m,\lambda)
    \norm{\wick{\sin (\beta W)}}_{\Holder^{- \beta^2 / 4 \pi - \varepsilon} (\rho)}
    \norm{\psi}^{1 - \varepsilon}_{H^1}.
\end{equation}
Thus by Young's inequality we get 
\begin{equation}
\begin{split}
&\mathrel{\phantom{=}} \lambda \beta^2 \delta^2 \sum_{x \in \Lambda} \chi \psi^2 \wick{\cos (\beta W + \beta I)}
    + m^2 \delta^2 \sum_{x \in \Lambda} \psi^2
    + \frac{\delta^2}{2} \sum_{x \in \Lambda} \abs{\nabla_{\delta} \psi}^2\\
&\gtrsim - \frac{\lambda A(m,\lambda)^{2-\varepsilon'}}{m^{2-\varepsilon'}} \left(
    \norm{\wick{\sin (\beta W)}}^{2 - \varepsilon'}_{\Holder^{- \beta^2 / 4 \pi - \varepsilon}(\rho)}
    + \norm{\wick{\cos (\beta W)}}^{2 - \varepsilon'}_{\Holder^{- \beta^2 / 4 \pi - \varepsilon}(\rho)} \right).
\end{split}
\end{equation}
\end{proof}

To conclude it remains to show that $(\lambda/m)^2
\norm{\wick{\sin (\beta W)}}^{2 - \varepsilon'}_{\Holder^{- \beta^2 / 4 \pi - \varepsilon} (\rho)}$,
as well as the analogous cosine,
are exponentially integrable.
For this we will need the following lemma.

\begin{lemma}
\label{lem:estimate-sin-product}
For $\beta^2 < 4\pi$, we can bound
\[
\begin{split}
&\mathrel{\phantom{=}}
\norm{\wick{\cos (\beta W + \beta I)}}_{\Holder^{- \beta^2 / 4\pi - \varepsilon} (\rho)}
+ \norm{\wick{\sin (\beta W + \beta I)}}_{\Holder^{- \beta^2 / 4\pi - \varepsilon} (\rho)}\\
&\lesssim \left(
    \norm{\wick{\cos (\beta W)}}_{\Holder^{- \beta^2 / 4\pi - \varepsilon}(\rho)}
    + \norm{\wick{\sin (\beta W)}}_{\Holder^{- \beta^2 / 4\pi - \varepsilon}(\rho)}
    \right) (1 + \norm{I}_{H^1(\Tor^2)}).
\end{split}
\]
\end{lemma}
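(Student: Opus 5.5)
We expand the Wick exponential and reduce the claim to a single product estimate between a distribution of regularity $-s$, where $s\coloneqq\beta^2/4\pi+\varepsilon<1$, and a bounded $H^1$ function. By the Wick addition formula (Appendix~\ref{sec:wick}), $\wick{\mathrm e^{\mathrm i\beta(W+I)}}=\wick{\mathrm e^{\mathrm i\beta W}}\,\mathrm e^{\mathrm i\beta I}$, since $I$ is deterministic given $W$ and needs no renormalization. Taking real and imaginary parts gives
\[
\wick{\cos(\beta W+\beta I)}=\wick{\cos\beta W}\cos\beta I-\wick{\sin\beta W}\sin\beta I,
\]
and the analogous identity for the sine. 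It therefore suffices to prove, for $f\in\Holder^{-s}(\rho)$ and $g\in\{\cos\beta I,\sin\beta I\}$,
\[
\norm{fg}_{\Holder^{-s}(\rho)}\lesssim\norm{f}_{\Holder^{-s}(\rho)}\bigl(\norm{g}_{L^\infty}+\norm{g}_{H^1}\bigr),
\]
together with the bound $\norm{g}_{L^\infty}+\norm{g}_{H^1}\lesssim 1+\norm{I}_{H^1(\Tor^2)}$.

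The second bound is elementary. We have $\abs{g}\le1$, and $\abs{\nabla_\delta g}\le\beta\abs{\nabla_\delta I}$ by the mean value theorem applied to the discrete differences. The $L^2$ part of $\norm{g}_{H^1}$ is $\le\abs{\Tor^2}^{1/2}$. Here I would absorb this volume factor into the weight, that is, measure $g$ in $H^1$ with a weight $\rho^{1/2}$ on the $L^2$ part, which is summable. Alternatively, it is harmless at the fixed-volume level at which the lemma is stated.

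For the product we use Bony's decomposition $fg=f\prec g+f\succ g+f\circ g$ on the lattice, with dyadic blocks uniform in $\delta$ as used earlier in Lemma~\ref{lemma:bottom-product}. The three pieces are handled as follows.
\begin{enumerate}
\item[(a)] The paraproduct $g\prec f$ (low frequencies of $g$ against $f$) satisfies $\norm{g\prec f}_{\Holder^{-s}(\rho)}\lesssim\norm{g}_{L^\infty}\norm{f}_{\Holder^{-s}(\rho)}$. This is the standard estimate, valid for any $s$.
\item[(b)] For $f\prec g$ we use the embedding $H^1=B^1_{2,2}\hookrightarrow B^0_{\infty,2}\hookrightarrow B^0_{\infty,\infty}$, which loses exactly $d/2=1$ derivative in dimension two. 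The paraproduct estimate with negative regularity $-s$ on $f$ then gives $f\prec g\in B^{-s}_{\infty,\infty}(\rho)$, with bound $\norm{f}_{\Holder^{-s}(\rho)}\norm{g}_{H^1}$.
\item[(c)] For the resonant term we must not put $g$ into an $L^\infty$-based space of regularity $0$, since $-s+0<0$. Instead we use the mixed-integrability resonant estimate
\[
\norm{f\circ g}_{B^{1-s}_{2,2}(\rho)}\lesssim\norm{f}_{B^{-s}_{\infty,\infty}(\rho)}\norm{g}_{B^{1}_{2,2}},
\]
which is valid because $1-s>0$, and this is exactly where $\beta^2<4\pi$ enters. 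We then return to the $L^\infty$ scale via $B^{1-s}_{2,2}\hookrightarrow B^{-s}_{\infty,\infty}$, again losing one derivative.
\end{enumerate}
Weights are carried by $f$ only. We use $\rho(x)\lesssim\rho(y)\inorm{x-y}^{K_1}$ and the rapid decay of the Littlewood--Paley kernels, so that the weighted estimates follow from the unweighted ones.

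The main obstacle is the resonant term (c), together with the weighted embedding $B^{1-s}_{2,2}(\rho)\hookrightarrow B^{-s}_{\infty,\infty}(\rho)$ holding uniformly in $\delta$ and $L$. I would prove this embedding blockwise by Bernstein's inequality $\norm{\Delta_j u}_{L^\infty}\lesssim 2^{j}\norm{\Delta_j u}_{L^2}$, adapted to the weight by localizing on unit boxes. On the lattice, only blocks with $2^j\lesssim\delta^{-1}$ occur, so the constants are uniform. Summing the three contributions and inserting the bound for $g$ yields the lemma.
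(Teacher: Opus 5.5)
Your proposal is correct and follows essentially the same route as the paper: the Wick/trigonometric identity, then Bony's decomposition with the paraproducts controlled by $\norm{g}_{L^\infty}$, then the resonant term estimated in $H^{1-s}$ using $1-s>0$ before embedding back into $\Holder^{-s}(\rho)$, and finally the chain rule for $\norm{\cos(\beta I)}_{H^1}$ (your detour via $H^1\hookrightarrow B^0_{\infty,\infty}$ in (b) is valid but unnecessary, since $\abs{g}\le 1$ already suffices there). The volume factor you flag, which the paper passes over, can be removed without changing the norm on the right-hand side: bound the finitely many resonant terms involving the lowest Littlewood--Paley block of $g$ by $\norm{g}_{L^\infty}$, so that only $\norm{\nabla_\delta g}_{L^2}\le\beta\norm{\nabla_\delta I}_{L^2}$ is needed.
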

\begin{proof}
By the usual trigonometric identity and Definition~\ref{def:wick},
\begin{equation}
\wick{\cos (\beta W + \beta I)}
= \wick{\cos (\beta W)} \cos (\beta I) - \wick{\sin (\beta W)} \sin (\beta I).
\end{equation}
We again use a paraproduct decomposition (Theorem~\ref{thm:besov paraproducts}).
First,
\begin{equation}
\norm{\wick{\cos(\beta W)} \prec \cos(\beta I)}_{\Holder^{-\beta^2/4\pi - \epsilon}(\rho)}
\lesssim 
    \norm{\wick{\cos(\beta W)}}_{\Holder^{-\beta^2/4\pi - \epsilon}(\rho)}
    \norm{\cos(\beta I)}_{L^\infty(\Tor^2)},
\end{equation}
where the second factor is trivially bounded.
The same estimate holds for the opposite paraproduct as well.
The resonant term is then estimated with
\begin{equation}
\begin{split}
\norm{\wick{\cos(\beta W)} \odot \cos(\beta I)}_{\Holder^{-\beta^2/4\pi - \epsilon}(\rho)}
&\lesssim \norm{\wick{\cos(\beta W)} \odot \cos(\beta I)}_{H^{1 - \beta^2/4\pi - \epsilon}(\rho)}\\
&\lesssim \norm{\wick{\cos(\beta W)}}_{\Holder^{- \beta^2/4\pi - \epsilon}(\rho)}
    \norm{\cos(\beta I)}_{H^1(\Tor^2)}
\end{split}
\end{equation}
and the chain rule.
The estimate on the sine is analogous. 
\end{proof}

\begin{lemma} \label{lem:estimate-sin-moment}
There exists $p<\infty$ such that for any $q>0$ the following estimate holds:
\begin{equation*}
\log \E \exp \left(
    q \norm{\wick{\sin (\beta W)}}^{2 - \varepsilon'}_{\Holder^{-\beta^2/4\pi-\varepsilon}(\rho)}
  + q \norm{\wick{\cos (\beta W)}}^{2 - \varepsilon'}_{\Holder^{-\beta^2/4\pi-\varepsilon}(\rho)} \right)
\lesssim \left(\frac{q}{(m\wedge1)^{2+\beta^2/2\pi}}\right)^p.
\end{equation*}
\end{lemma}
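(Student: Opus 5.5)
The key observation is that the exponent $2-\varepsilon'$ is strictly less than $2$. So it suffices to show that the random variable
\[
Y \coloneqq \norm{\wick{\sin(\beta W)}}_{\Holder^{-\beta^2/4\pi-\varepsilon}(\rho)}+\norm{\wick{\cos(\beta W)}}_{\Holder^{-\beta^2/4\pi-\varepsilon}(\rho)}
\]
has Gaussian-type (or at least sub-Gaussian up to the power $2$) tails with variance proxy controlled by a power of $m^{-1}$. The first step is a Young inequality. Since $2-\varepsilon'<2$, for any $\eta>0$,
\[
qY^{2-\varepsilon'}\leq \eta Y^2 + C_{\varepsilon'}\,q^{2/\varepsilon'}\eta^{-(2-\varepsilon')/\varepsilon'}.
\]
This reduces the claim to showing $\log\E\exp(\eta Y^2)\lesssim 1$ for some $\eta \gtrsim (m\wedge 1)^{p_0}$ with explicit $p_0$. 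With $\eta\simeq (m\wedge1)^{2+\beta^2/2\pi}$ this yields a bound of the form $(q/(m\wedge1)^{2+\beta^2/2\pi})^p$ with $p=2/\varepsilon'$. Alternatively, and more robustly, I will bound all moments $\E Y^{k}\leq (Ck)^{k/2}\,M^{k}$ and sum the exponential series.

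To get the moment bounds, I use the standard Besov--Kolmogorov route. I write the weighted Hölder norm via Littlewood--Paley blocks,
\[
\norm{f}_{\Holder^{s}(\rho)}=\sup_{j}\sup_x 2^{js}\rho(x)\abs{\Delta_j f(x)},
\]
and bound it by a weighted Besov $B^{s+2/k}_{k,k}(\rho)$ norm through the embedding $B^{s'}_{k,k}(\rho^{1/k\cdot})\hookrightarrow \Holder^{s'-2/k}(\rho)$. Here the integrability of powers of $\rho$ (large $K_1$) ensures that the volume $L$ does not enter. The $k$-th moment then reduces to pointwise quantities $\E\abs{\Delta_j\wick{e^{i\beta W}}(x)}^{k}$. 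For a Gaussian chaos-type object like the imaginary exponential one cannot use hypercontractivity directly, since it is not in a finite chaos. Instead I compute the $2n$-th moments exactly through the Coulomb gas identity
\[
\E\prod_{a=1}^{n}\wick{e^{i\beta W(x_a)}}\wick{e^{-i\beta W(y_a)}}=\exp\Bigl(-\beta^2\sum_{a<b}\bigl[G(x_a-x_b)+G(y_a-y_b)\bigr]+\beta^2\sum_{a,b}G(x_a-y_b)\Bigr),
\]
with $G$ the lattice Green function of $m^2-\Laplace_\delta$. I then apply Onsager's inequality (the charge-neutral configuration is bounded by nearest-neighbour pairings) together with $G(x)\leq -\frac{1}{2\pi}\log\abs{x}+C+\frac{1}{2\pi}\abs{\log (m\wedge1)}$. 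This gives $\E\abs{\Delta_j \wick{\cos(\beta W)}(x)}^{2n}\leq C^n n^{n\beta^2/4\pi}\,(m\wedge1)^{-n\beta^2/2\pi}\,2^{2jn\beta^2/4\pi}$, which is summable in $j$ after the extra $-\varepsilon$ regularity. This is the step I expect to be the main obstacle. The difficulty is tracking the combinatorial dependence on $n$ in the Onsager bound uniformly in $\delta$ for $\beta^2<4\pi$, where the nearest-neighbour pair integrals $\int\abs{x}^{-\beta^2/2\pi}\dx$ converge locally precisely because $\beta^2/2\pi<2$. It is also where the mass-dependent power $(m\wedge1)^{-\beta^2/2\pi}$ arises: it comes from the additive constant in $G$ contributing $e^{\beta^2 \abs{\log m}/4\pi}$ per charge.

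Combining, I expect $\E Y^{2n}\leq C^n n^{n\theta}(m\wedge1)^{-n(2+\beta^2/2\pi)}$ with some $\theta<1$ (since $\beta^2/4\pi<1$). The extra $m^{-2}$ absorbs the zero-mode/large-scale blocks of the massive field on the torus. Because $\theta(2-\varepsilon')/2<1$ for $\varepsilon'$ small, the moments of $Y^{2-\varepsilon'}$ grow slower than factorially. Summing the exponential series, or using Young's inequality as above, then gives
\[
\log\E\exp(qY^{2-\varepsilon'})\lesssim \bigl(q/(m\wedge1)^{2+\beta^2/2\pi}\bigr)^p
\]
for suitable $p$ depending on $\varepsilon'$ and $\theta$. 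Finally, the sum of the sine and cosine terms is handled by Cauchy--Schwarz.
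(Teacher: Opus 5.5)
Your reduction steps are sound: the Young inequality, the Besov embedding with integrability index $2n$ (constants uniform in $n$, $\int\rho^{2n}\le\int\rho$, and the sum over $j$ convergent once $2n\varepsilon>2$), and summing the exponential series. The gap is the step you flag yourself, and the whole argument rests on it. You need $\E\abs{\Delta_j\wick{\cos(\beta W)}(x)}^{2n}\le C^n n^{n\theta}(\cdots)$ with $\theta<2/(2-\varepsilon')$, i.e.\ essentially Gaussian growth, since $2-\varepsilon'$ is close to $2$. This bound is asserted, not proved, and it does not follow from the mechanism you describe. Onsager's inequality gives a weight $\prod_i d_i^{-\beta^2/4\pi}$, with $d_i$ the nearest-neighbour distance. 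Integrating this against $\prod_i\abs{K_j(x-x_i)}$ by the natural union bound over nearest-neighbour assignments costs a factor of order $(2n)^{2n}$, i.e.\ $\theta=2$. That only gives exponential tails for each block, which is not enough to integrate $\exp(qY^{2-\varepsilon'})$. The growth $n^{n\beta^2/4\pi}$ you quote is a genuinely nontrivial combinatorial estimate for imaginary chaos (Junnila--Saksman--Webb, in the continuum). On top of it you would need an Onsager-type inequality for the massive lattice Green function uniformly in $\delta$, where the classical proof via Newton's theorem is not available. None of this is supplied.

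You do not need Coulomb-gas combinatorics at all: sub-Gaussian blocks suffice, and they follow from the martingale representation already used in the proof of Theorem~\ref{moment-gff-sine}. There $\wick{\sin(\beta W_\infty)}=\beta\int_0^\infty\wick{\cos(\beta W_t)}\diff W_t$, with $\abs{\wick{\cos(\beta W_t)}}\le e^{\beta^2c_t/2}$ deterministic and $C^{\ast2}_t\ge0$. Hence the quadratic variation of $\Delta_j\wick{\sin(\beta W)}(x)$ is bounded almost surely by the deterministic quantity computed there. The exponential martingale inequality (or BDG with constant $O(\sqrt p)$) then gives $\E\abs{\Delta_j\wick{\sin(\beta W)}(x)}^p\le(C\sqrt p\,\sigma_j)^p$, and your embedding and Young steps conclude.

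The paper avoids moment growth altogether. It applies the Bou\'e--Dupuis formula and the deterministic shift bound of Lemma~\ref{lem:estimate-sin-product}, $Y(W+I)\lesssim Y(W)(1+\norm{I}_{H^1})$. Using $(m^2\wedge1)\norm{I[h]}_{H^1}^2\lesssim\int_0^\infty\norm{h_t}_{L^2}^2\diff t$, it absorbs the drift by Young, leaving only the single fixed moment $\E Y^{4/\varepsilon'}$ from Theorem~\ref{moment-gff-sine}. That absorption is where $2-\varepsilon'<2$ is essential. It is also where the factor $(m\wedge1)^{-2}$ comes from, as a Cameron--Martin cost rather than a zero-mode or large-scale correction, so your explanation of that factor is off; in your route it is simply slack. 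Completed with the sharp bound, your route would give more (tails lighter than Gaussian and a better power of $m$), but at a much higher price.
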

\begin{proof}
To show exponential tails we apply the Bou{\'e}--Dupuis formula to obtain
\begin{equation}\label{eq:estimate-sin-moment bd}
\begin{split}
&\mathrel{\phantom{=}} \log \E \exp \left(
    q \norm{\wick{\sin (\beta W)}}^{2 - \varepsilon'}_{\Holder^{-\beta^2/4\pi-\varepsilon}(\rho)}
  + q \norm{\wick{\cos (\beta W)}}^{2 - \varepsilon'}_{\Holder^{-\beta^2/4\pi-\varepsilon}(\rho)} \right)\\
&= \inf_{h \in \mathbb{H}_a} \E \left[
    q \norm{\wick{\sin (\beta W + \beta I[h])}}^{2-\varepsilon'}_{\Holder^{-\beta^2/4\pi-\varepsilon}(\rho)}
    + q \norm{\wick{\cos (\beta W+\beta I[h])}}^{2 - \varepsilon'}_{\Holder^{-\beta^2/4\pi-\varepsilon}(\rho)}
    + \frac{1}{2} \int^{\infty}_0 \!\! \norm{h_t}^2_{L^2} \diff t \right]\!.
\end{split}
\end{equation}
We now apply Lemma~\ref{lem:estimate-sin-product} to estimate the expectation from below by
\begin{equation}
-C q \! \left[
    \norm{\wick{\cos (\beta W)}}_{\Holder^{- \beta^2 / 4\pi - \varepsilon}(\rho)}
    + \norm{\wick{\sin (\beta W)}}_{\Holder^{- \beta^2 / 4\pi - \varepsilon}(\rho)}
\right]^{2 - \varepsilon'} \!\! (1+ \norm{I[h]}_{H^1})^{2 - \varepsilon'}
+ \frac{1}{2} \int^{\infty}_0 \!\! \norm{h_t}^2_{L^2} \diff t.
\end{equation}
The definition of $J_s$ and Lemma~\ref{thm:besov hs norm} imply
\begin{equation}
\begin{split}
(m^2 \wedge 1) \norm{I[h]}_{H^1}^2
&\simeq \sum_{k \in \hat\Lambda} (m^2 \wedge 1) (1 + \abs k^2) \left[ \int_0^\infty
    \frac 1 t \exp\!\left( -\frac{m^2 + \abs k^2}{2t} \right) \hat h(t, k) \diff t \right]\\
&\leq \sum_{k \in \hat\Lambda} ((m^2 \wedge 1) + \abs k^2) \left[ \int_0^\infty
    \frac{1}{t^2} \exp\!\left( -\frac{m^2 + \abs k^2}{t} \right) \diff t \right]
    \!\left[ \int_0^\infty \hat h(t, k)^2 \diff t \right]\\
&\leq \int_0^\infty \norm{h_t}_{L^2}^2 \diff t.
\end{split}
\end{equation}
We therefore obtain by Young's inequality that
\begin{equation}
\begin{split}
\eqref{eq:estimate-sin-moment bd}
&\geq -C [q / (m^2 \wedge 1)]^{2/\varepsilon'} \E \left[ 1
        + \norm{\wick{\cos (\beta W)}}_{\Holder^{- \beta^2 / 4\pi - \varepsilon}(\rho)}
        + \norm{\wick{\sin (\beta W)}}_{\Holder^{- \beta^2 / 4\pi - \varepsilon}(\rho)}
    \right]^{4/\varepsilon'}\\
&\qquad{} - \frac 1 4 (m^2 \wedge 1) \norm{I[h]}_{H^1}^2
    + \frac 1 2 \int_0^\infty \norm{h_t}_{L^2}^2 \diff t\\
&\geq -C \left( \frac{q}{(m \wedge 1)^{2+\beta^2/2\pi}} \right)^{2/\varepsilon'},
\end{split}
\end{equation}
where we evaluated the expectation with Lemma~\ref{moment-gff-sine}.
\end{proof}

Thus we can conclude the following theorem.

\begin{theorem}\label{thm:sine-gordon finished}
There exists $p \in (1,\infty)$, independent of $m$, $\lambda$, and $\beta$,
such that $Q(u)$ defined by~\eqref{eq:Q} satisfies
\[
\log \int \exp\left(\frac{4}{\sigma}Q(u)\right) \diff\mu_{\beta}(u)
\lesssim \left(
    \frac{4\lambda A(m,\lambda)^{2-\varepsilon'}}
        {\sigma (m \wedge 1)^{4-\varepsilon'+\beta^2/2\pi}}
    \right)^p .
\]
\end{theorem}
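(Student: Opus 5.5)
We combine the three preceding ingredients: the coupling of Theorem~\ref{prop:coupling}, the deterministic Hessian bound of Lemma~\ref{lemma:bottom-product}, and the exponential moment bound of Lemma~\ref{lem:estimate-sin-moment}. The point is that the coupling turns an exponential moment under $\mu_\beta$ into one under the Gaussian free field $\nu$. There, the only random input is the Hölder--Besov norm of the Wick exponentials of $W$.

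First, by Theorem~\ref{prop:coupling} we realise $u=W+I$ with $W\sim\nu$ and $\norm{I}_{\Holder^1}\le A(m,\lambda)$ almost surely, uniformly in $\delta$ and $L$. Hence
\[
\int \exp\Bigl(\tfrac{4}{\sigma}Q(u)\Bigr)\diff\mu_\beta(u)=\E\exp\Bigl(\tfrac4\sigma Q(W+I)\Bigr).
\]
The $\Holder^1$ bound implies the $\Holder^{1-\varepsilon}$ bound required in Lemma~\ref{lemma:bottom-product}. Since $Q(W+I)$ is minus the infimum appearing in that lemma, we obtain pointwise in $\omega$
\[
Q(W+I)\le \frac{\lambda A(m,\lambda)^{2-\varepsilon'}}{m^{2-\varepsilon'}}\Bigl[\norm{\wick{\sin(\beta W)}}^{2-\varepsilon'}_{\Holder^{-\beta^2/4\pi-\varepsilon}(\rho)}+\norm{\wick{\cos(\beta W)}}^{2-\varepsilon'}_{\Holder^{-\beta^2/4\pi-\varepsilon}(\rho)}\Bigr].
\]
Here the implicit constant is absorbed into $C$. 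For $m\ge1$ we should use $m^2\ge (m\wedge1)^2$ in the Young step, which explains the $(m\wedge1)$ in the statement.

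Next, we apply Lemma~\ref{lem:estimate-sin-moment} with
\[
q=\frac{4\lambda A(m,\lambda)^{2-\varepsilon'}}{\sigma (m\wedge 1)^{2-\varepsilon'}}.
\]
This gives
\[
\log\E\exp\Bigl(\tfrac4\sigma Q(W+I)\Bigr)\lesssim\Bigl(\frac{q}{(m\wedge1)^{2+\beta^2/2\pi}}\Bigr)^p.
\]
The right-hand side is exactly the claimed expression, since $(m\wedge1)^{2-\varepsilon'}(m\wedge1)^{2+\beta^2/2\pi}=(m\wedge1)^{4-\varepsilon'+\beta^2/2\pi}$. The exponent $p$ comes from that lemma, namely $p=2/\varepsilon'$, and we fix $\varepsilon'$ small so that $p$ does not depend on $m$, $\lambda$ or $\beta$.

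The main point to check is that the constants in Lemma~\ref{lemma:bottom-product} and Lemma~\ref{lem:estimate-sin-moment} are uniform in $\delta$ and $L$, so that the resulting bound is too. In particular, the weighted norms with weight $\rho$ must dominate the localised quantities involving $\chi$. For a localisation centred at a general $i\in\Z^2$ we would translate $\rho$ to be centred at $i$. By translation invariance of $\nu$, the moment bound is the same for every $i$, which gives the supremum over $i$ required in Theorem~\ref{thm:integrability}(i).

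A second subtlety is that the Wick exponentials of $W$ are taken with respect to the covariance $(m^2-\Laplace_\delta)^{-1}$ of Section~\ref{sec:sine-gordon measure}, which matches the normalisation of $\nu$ used in the coupling. So no extra renormalisation factor appears.
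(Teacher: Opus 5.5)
Your proposal is correct and follows the paper's own proof: you realise $\mu_\beta$ via the coupling of Theorem~\ref{prop:coupling}, bound $Q(W+I)$ pathwise by Lemma~\ref{lemma:bottom-product}, and conclude with Lemma~\ref{lem:estimate-sin-moment} applied with $q$ proportional to $\lambda A(m,\lambda)^{2-\varepsilon'}/(\sigma(m\wedge1)^{2-\varepsilon'})$. Your one-sided upper bound on $Q(W+I)$, with the \emph{sum} of the sine and cosine norms, is exactly what is needed; the paper's displayed bound on $|Q(W+I)|$, which shows a product of the two norms, is evidently a typo.
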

\begin{proof}
By Theorem~\ref{prop:coupling} we may write
\begin{equation}
\int \exp \left( \frac{4}{\sigma} Q (u) \right) \diff \mu_\beta (u)
= \E \left[ \exp \left( \frac{4}{\sigma} Q (W + I) \right) \right],
\end{equation}
and by Lemma~\ref{lemma:bottom-product} we have
\begin{equation}
\abs{Q (W + I)}
\leq \frac{\lambda A(m,\lambda)^{2-\varepsilon'}}{(m\wedge1)^{2-\varepsilon'}} \left[
    \norm{\wick{\sin(\beta W)}} ^{2 - \varepsilon'}_{\Holder^{- \beta^2 / 4 \pi - \varepsilon} (\rho)}
    \norm{\wick{\cos(\beta W)}} ^{2 - \varepsilon'}_{\Holder^{- \beta^2 / 4 \pi - \varepsilon} (\rho)}
    \right],
\end{equation}
from which we can conclude by Lemma~\ref{lem:estimate-sin-moment}.
\end{proof}

The constant~$A$ is bounded by a constant when $\lambda \leq m^2$.
By~\cite{bauerschmidt_logsobolev_2021}, the log-Sobolev constant satisfies $\sigma \gtrsim m^2$.
Hence the right-hand side in Theorem~\ref{thm:sine-gordon finished} is bounded by a constant
when $\lambda \leq m^{6 + \beta^2/2\pi}$.
This finishes the proof of Theorem~\ref{thm:integrability}.

\section{The \texorpdfstring{$\Phi^4_3$}{Phi43} model}\label{sec:phi43}
In this section we prove Theorem~\ref{thm:3 dimensions}.
As in that theorem, we fix the domain size $L = 1$.

To lighten the notation we assume $0<\lambda< m^2\leq 1$.
One can obtain $R(q,\lambda,m)<\infty$ for general $\lambda$ and $m$ with minor modifications of the argument. 

\subsection{An SPDE method for estimating tails}

For the $\Phi_3^4$ measure, we use a different method to estimate the tail of
$Q$. Besides providing an alternative approach, this lets us reuse results on
stochastic quantization.
In particular, we rely on \cite{gubinelli_pde_2021}, which studies the lattice
approximation for which \cite{bauerschmidt_logsobolev_2024} establishes
logarithmic Sobolev inequalities.

Recall that the $\Phi_3^4$ measure on a lattice $\Lambda \subseteq \delta \Z^3 \cap \Tor^3$ is given by
\begin{equation}
\diff\mu = \frac{1}{Z} \exp \left(- \lambda \delta^3\sum_{x \in \Lambda} \wick{u^4}
    - \lambda^2 a_{\delta} \delta^3 \sum_{x \in \Lambda} \wick{u^2}
    - \frac{m^2}{2} \delta^3 \sum_{x \in \Lambda} u^2
    - \frac{1}{2} \delta^3\sum_{x \in \Lambda} \abs{\nabla_{\delta} u}^2 \right) \diff u,
\end{equation}
where $Z$ is a normalization factor chosen such that $\mu$ is a probability measure.
For $q > 0$, let $\nu^{q Q}$ be the measure defined by
\begin{equation}
\diff \nu^{q Q} = \frac{1}{Z^{qQ}} \exp (q Q (u)) \diff \mu,
\qquad Z^{q Q} = \int \exp (q Q (u)) \diff \mu.
\end{equation}
For fixed $\delta$, the function $Q$ is bounded above, so $Z^{qQ}<\infty$.
Our strategy is based on the following lemma.

\begin{lemma}\label{lem:hairer-steele}
\[ Z^{q Q} \leq \exp \!\left( \int q Q (u) \diff \nu^{q Q} (u) \right)\!. \]
\end{lemma}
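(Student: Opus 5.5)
This is Jensen's inequality (equivalently, nonnegativity of relative entropy) applied to the tilted measure $\nu^{qQ}$. Write $F = qQ$. For fixed $\delta$ the function $Q$ is bounded above and continuous: it is minus the bottom eigenvalue of a finite-dimensional symmetric matrix depending continuously on $u$. Hence $Z^{qQ}$ is finite and positive, and $\nu^{qQ}$ is a well-defined probability measure that is mutually absolutely continuous with $\mu$, with density
\[
\frac{\diff\nu^{qQ}}{\diff\mu} = \frac{e^{F}}{Z^{qQ}}.
\]

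The key step is the identity
\[
\log Z^{qQ} = \int F \diff\nu^{qQ} - \int \log\frac{\diff\nu^{qQ}}{\diff\mu}\diff\nu^{qQ},
\]
which follows by integrating the logarithm of the density against $\nu^{qQ}$. The last integral is the relative entropy $\mathrm{Ent}(\nu^{qQ}\mid\mu)$, and it is nonnegative by Jensen's inequality applied to the convex function $x\log x$ under $\mu$. Dropping it and exponentiating gives the claimed inequality.

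Equivalently, one can apply Jensen directly under $\nu^{qQ}$:
\[
\log \int e^{-F}\diff\nu^{qQ} \geq -\int F\diff\nu^{qQ}.
\]
Since $\int e^{-F}\diff\nu^{qQ} = 1/Z^{qQ}$, this is the same bound.

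The only point requiring care is integrability, namely that $\int F\diff\nu^{qQ}$ is well defined. This holds because $F$ is bounded above. It is also bounded below on the relevant support: $Q \geq -\lambda_{\max}$, and any lower growth is polynomial in $u$ while $\mu$ has Gaussian-quartic tails, so $F \in L^1(\nu^{qQ})$. If needed, one can truncate $F$ from below at level $-N$, apply the argument, and pass to the limit $N\to\infty$ by monotone convergence. Beyond this, there is no real obstacle.
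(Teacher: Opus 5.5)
Your proposal is correct and is essentially the paper's proof. The paper applies Jensen's inequality under $\nu^{qQ}$ to $e^{-qQ}$ and uses $\int e^{-qQ}\diff\nu^{qQ}=1/Z^{qQ}$, exactly as in your second formulation, and your relative-entropy version is the same inequality rewritten. Your integrability discussion is fine but more than you need: $qQ$ bounded above already makes $qQ\,e^{qQ}$ bounded, so $\int qQ\diff\nu^{qQ}=(Z^{qQ})^{-1}\int qQ\,e^{qQ}\diff\mu$ is automatically finite.
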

\begin{proof}
Since $\diff\nu^{qQ}=e^{qQ}\diff\mu/Z^{qQ}$, we have
\[
\int e^{-qQ(u)}\diff\nu^{qQ}(u)=\frac{1}{Z^{qQ}}.
\]
Jensen's inequality therefore gives
\[
-\int qQ(u)\diff\nu^{qQ}(u)
\leq \log\int e^{-qQ(u)}\diff\nu^{qQ}(u)
=-\log Z^{qQ},
\]
which is the claimed bound.
\end{proof}

To estimate the right-hand side,
we realize $\nu^{qQ}$ as the stationary
measure of the following equation, where $\xi$ is space-time white noise on~$\Lambda$:
\begin{equation}\label{eq:sq-phi43}
(\partial_t + m^2 - \Laplace_\delta) u + 4 \lambda \wick{u^3}
    + 2 \lambda^2 a_{\delta} u - q \nabla Q (u)
= \sqrt{2} \xi.
\end{equation}
We will make use of the following
proposition from \cite{gubinelli_pde_2021}.
For $K \colon \Lambda \to \R$, we define
\begin{equation}
\mathcal{E} (K) = \lambda \delta^3 \sum_{x \in \Lambda} K^4 (x)
    + m^2 \norm{K}^2_{H^{1 - \varepsilon}(\Lambda)}.
\end{equation}

\begin{proposition}
\label{prop-gub-hov}
Let $u = W + W^{[3]} + K$, where $W$ and $W^{[3]}$ are the
stationary solutions to
\[
(\partial_t+m^2-\Laplace_\delta)W
    =\sqrt2 \, \proj_{\geq1}\xi, \qquad
(\partial_t+m^2-\Laplace_\delta)W^{[3]}
    =-4\lambda\wick{W^3}.
\]
and $K$ is a stationary process. Then
\[
\E \left[ \delta^3 \sum_{x \in \Lambda}
    \big((m^2-\Laplace_\delta)K
    +4\lambda\bigl(\wick{u^3}-\wick{W^3}\bigr)
    +2\lambda^2a_{\delta}u\big)K \right]
\geq \E \mathcal{E} (K) - \E \mathcal{B}^3 (W, \lambda, m).
\]
where $\mathcal{B}^3 (W, \lambda, m)$ is a stochastic term that satisfies
\[
\E \abs{\mathcal{B}^3 (W, \lambda, m)} \leq p (\lambda, m)
\]
where $p (\lambda, m)$ is a rational function of $\lambda$ and $m$ such that
$\abs{p (\lambda, m)} \leq C$ as long as $\lambda \leq m^2 \leq 1$.
\end{proposition}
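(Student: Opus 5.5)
The plan is to follow the energy-estimate approach to lattice $\Phi^4_3$ of Gubinelli and Hofmanová, adapted to the lattice regularization of~\cite{gubinelli_pde_2021}. Since $u=W+W^{[3]}+K$ with $W$ and $W^{[3]}$ stationary, $K$ is stationary and solves the remainder equation
\[
(\partial_t+m^2-\Laplace_\delta)K=-4\lambda\bigl(\wick{u^3}-\wick{W^3}\bigr)-2\lambda^2a_\delta u+\Xi,
\]
where $\Xi$ collects the zero-mode noise $\sqrt2(1-\proj_{\geq1})\xi$ together with any additional drift. In the setting of the proposition we only need this identity to pair against $K$.

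The key step is the pairing identity. By stationarity, $\E\norm{K(t)}_{L^2}^2$ is constant in $t$, so $\frac{d}{dt}\E\norm{K}_{L^2}^2=0$. Itô's formula on the zero mode contributes only a deterministic constant, which can be absorbed into $\mathcal B^3$. The left-hand side of the proposition then equals an explicit remainder expectation. Next I would expand
\[
\wick{u^3}-\wick{W^3}=(W^{[3]}+K)^3+3W(W^{[3]}+K)^2+3\wick{W^2}(W^{[3]}+K).
\]
The coercive term $4\lambda\delta^3\sum K^4$ appears from $K^3\cdot K$, and $(m^2-\Laplace_\delta)K\cdot K$ yields $m^2\norm{K}_{H^1}^2\geq m^2\norm{K}_{H^{1-\varepsilon}}^2$, which together give $\mathcal E(K)$.

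The main obstacle is the term $12\lambda\wick{W^2}K\cdot K$ together with $2\lambda^2a_\delta K\cdot K$. The product $\wick{W^2}K^2$ is ill-defined uniformly in $\delta$, because $\wick{W^2}\in\Holder^{-1-\varepsilon}$. I would handle it with the paracontrolled ansatz of~\cite{gubinelli_pde_2021}. One writes $K=-12\lambda\,\mathcal L^{-1}(\wick{W^2}\prec K)+K^\sharp$ (or equivalently includes the second-order tree $W^{[2,1]}$), so that the resonant product $\wick{W^2}\odot\mathcal L^{-1}\wick{W^2}$ appears. Its logarithmic divergence is exactly cancelled by $\lambda^2a_\delta$, leaving a renormalized stochastic object bounded in $\Holder^{-\varepsilon}$. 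I expect this renormalization and commutator step to be the delicate point, and I would import it directly from~\cite{gubinelli_pde_2021} rather than redo the computation.

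All remaining cross terms (for instance $W K^2$, $W^{[3]}K^3$, $\wick{W^2}W^{[3]}K$, and the commutator remainders) are estimated by duality, using Besov product estimates and interpolation between $L^4$ and $H^{1-\varepsilon}$. Each is bounded by $\frac{1}{10}\mathcal E(K)$ plus a polynomial in norms of $W,\wick{W^2},\wick{W^3},W^{[3]}$ and the renormalized tree, with prefactors that are powers of $\lambda$ and $m^{-1}$ obtained from Young's inequality, in the same way as in Lemma~\ref{Lem:coercive}. Collecting these bounds defines $\mathcal B^3$. Gaussian hypercontractivity gives $\E\abs{\mathcal B^3}\leq p(\lambda,m)$ for a rational function $p$. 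Tracking the exponents, each term has the form $\lambda^a m^{-b}$ with $b\leq 2a$; I expect this to hold term by term, with the worst case being $\lambda^3/m^6$. This bound is then controlled by a constant when $\lambda\leq m^2\leq1$.
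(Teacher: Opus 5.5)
Your proposal takes essentially the same route as the paper, whose proof simply invokes the Gubinelli--Hofmanov\'a energy estimate (Lemma~4.4 and Theorem~4.5 of~\cite{gubinelli_pde_2021}) with weight $\rho=1$ and the dependence on $\lambda$ and $m$ tracked; deferring the paracontrolled renormalization to that paper is exactly what the paper does. Two caveats on the parts you sketch yourself: the It\^o identity for $\E\norm{K}_{L^2}^2$ only rewrites the left-hand side in terms of the extra drift, so it does not help with the lower bound (the paper uses it only afterwards, in the proof of Theorem~\ref{thm:3 dimensions}); and $12\lambda\wick{W^2}W^{[3]}K$ cannot be handled by plain duality, because $\wick{W^2}\circ W^{[3]}$ diverges logarithmically as $\delta\to0$, and the same is true of the $W$- and $W^{[3]}$-components of $2\lambda^2a_\delta u$ paired with $K$, which your bookkeeping leaves out; all of these are controlled only through the renormalized products built into the imported estimate.
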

\begin{proof}
This follows as Lemma~4.4 and Theorem~4.5 from~\cite{gubinelli_pde_2021},
with minor modifications to obtain the required behaviour in $\lambda$, $m$,
and setting the weight $\rho = 1$, as we are on a torus of fixed volume. 
\end{proof}

\begin{remark}
We have excluded the zero mode of~$\xi$ from~$W$ to make sure that
$m^2 - \Laplace_\delta$ has a spectral gap uniformly in~$m$.
This will make the bounds on stochastic terms built from $W$ independent of~$m$.
The price to pay is that we will have to include the corresponding zero mode term into~$K$,
however this will not present any significant problems. 
\end{remark}

\subsection{Analysis of the Hessian}

In this section we establish the necessary estimates on $\Hess U$ for the
$\phi_3^4$ model. Recall that in this case
\begin{equation}
\quad \Hess U(u) \varphi = 12 \lambda \wick{u^2} \varphi
    + 2 \lambda^2 a_{\delta} \varphi + m^2 \varphi - \Laplace_\delta \varphi.
\end{equation}
Write $\mathcal {A}_{\delta}(u)(\varphi)=\langle\varphi,\Hess U(u)\varphi\rangle_{L^2}$.
Recall that in three dimensions $\norm{\wick{u^2}}_{\Holder^{- \alpha}}$
is finite only for $\alpha > 1$. Thus we cannot
control $Q(u)$ by the method used in the preceding subsections.

The same regularity issue also occurs in the literature on
the Anderson Hamiltonian; see~\cite{allez_continuous_2015,gubinelli_semilinear_2020,labbe_continuous_2019}.
In fact the scaling we encounter here corresponds precisely to the two-dimensional Anderson Hamiltonian.
We therefore begin by carrying out a paracontrolled analysis of this singular Schr{\"o}dinger operator,
following the aforementioned works on the Anderson Hamiltonian.
This analysis summarised in Proposition~\ref{3d-form-wick-square} below is done only for the Gaussian part.
Then we will treat the non-Gaussian (but more regular) parts as perturbations. 

\begin{definition}
\label{rem:hessian-coefficients}
Set
\[
A_\delta^{\circ2} = (1-\Laplace_\delta)^{-1/2}\wick{W^2},
\qquad
W^{2\diamond2} = A_\delta^{\circ2}-\frac{a_\delta}{72}.
\]
\end{definition}

The coefficient of $W^{2 \diamond2}$ arises as follows:
A square-completion in Proposition~\ref{3d-form-wick-square} produces $-144\lambda^2A_\delta^{\circ2}$,
while the Hessian contributes $2\lambda^2a_\delta$.
Their divergent parts therefore combine as
$-144\lambda^2(A_\delta^{\circ2}-a_\delta/72)$,
and $W^{2 \diamond 2}$ will turn out to be in $\Holder^{-\varepsilon}$ uniformly in $\delta$.

\begin{proposition}[Hessian estimate for Gaussian argument]
\label{3d-form-wick-square}
For every $\varepsilon>0$ small enough, there exists a family of random variables $\Xi_{\delta}>0$ such that:
\begin{itemize}
\item for every $p<\infty$, we have $ \sup_{\delta>0}\E[\Xi_{\delta}^p]\leq C_p$; and
\item for every $\varphi$ with $\norm{\varphi}_{L^2(\Lambda)} = 1$, we have
    \[
    \mathcal {A}_\delta(W)(\varphi)
    \geq (1-\lambda/2)
        \norm{\varphi}_{H^{1-\varepsilon}(\Lambda)}^2
        -C_\varepsilon \lambda^{1-2\varepsilon}\Xi_{\delta}-1.
    \]
\end{itemize}
\end{proposition}
\begin{proof}
Set $h = (1 - \Laplace_\delta)^{1 / 2} \varphi$.
Leaving the renormalization term $2\lambda^2a_{\delta} \varphi^2$ in~$\mathcal A_\delta(W)(\varphi)$
aside for the moment,
and utilizing $\norm{\varphi}_{L^2 (\Lambda)} = 1$, we are left with
\begin{equation}\label{eq:3d-form-wick-square simple hessian}
12\lambda\delta^3 \sum_{x \in \Lambda} \wick{W^2} \varphi^2
    + \delta^3 \sum_{x \in \Lambda} h^2
    - (1 - m^2).
\end{equation}
The last term is bounded below by $-1$ uniformly in $m^2>0$.
We can rewrite the first term as
\begin{equation}\label{eq:3d-form-wick square first}
\begin{split}
&\mathrel{\phantom{=}} 12\lambda \langle \wick{W^2} \varphi, \varphi \rangle_{L^2}\\
&= 12\lambda \langle \wick{W^2} \succ \varphi, \varphi \rangle_{L^2}
    + 12\lambda \langle \wick{W^2} \circ \varphi, \varphi \rangle_{L^2}
    + 12\lambda \langle \wick{W^2} \prec \varphi, \varphi \rangle_{L^2}\\
&= 24\lambda \langle \wick{W^2} \succ \varphi, \varphi \rangle_{L^2}
    + 12\lambda \mathfrak{C}_1 (W, \varphi),
\end{split}
\end{equation}
where
\begin{equation}
\mathfrak{C}_1 (W, \varphi) \coloneqq
    \langle \wick{W^2} \prec \varphi, \varphi \rangle_{L^2}
    + \langle \wick{W^2} \circ \varphi, \varphi \rangle_{L^2}
    - \langle \wick{W^2} \succ \varphi, \varphi \rangle_{L^2}.
\end{equation}
The commutator $\mathfrak{C}_1 (\wick{W^2}, \varphi)$ is estimated in
Lemma~\ref{first-commutator} below.
We then remove the first term in~\eqref{eq:3d-form-wick square first}
with an ansatz for $h$.
We observe that
\begin{equation}
\langle \wick{W^2} \succ \varphi, \varphi \rangle_{L^2}
= \langle (1 - \Laplace_\delta)^{- 1 / 2} (\wick{W^2} \succ \varphi), h \rangle_{L^2}.
\end{equation}
Set
\begin{equation}
Z_\varphi \coloneqq (1-\Laplace_\delta)^{-1/2}(\wick{W^2}\succ\varphi),
\end{equation}
and define $w$ by $h=-12\lambda Z_\varphi+w$.
Then
\begin{equation}
\begin{split}
\norm{h}_{L^2}^2
&= -24\lambda\langle Z_\varphi,w\rangle_{L^2}
    + 144\lambda^2 \norm{Z_\varphi}_{L^2}^2 + \norm{w}_{L^2}^2\\
&= -24\lambda\langle Z_\varphi,h\rangle_{L^2}
    - 144\lambda^2 \norm{Z_\varphi}_{L^2}^2 + \norm{w}_{L^2}^2.
\end{split}
\end{equation}
Thus the first two terms in~\eqref{eq:3d-form-wick-square simple hessian} equal
\begin{equation}
24\lambda\langle\wick{W^2}\succ\varphi,\varphi\rangle_{L^2}
    + \norm{h}_{L^2}^2
= -144\lambda^2 \norm{Z_\varphi}_{L^2}^2
    + \norm{w}_{L^2}^2
    + 12\lambda \mathfrak{C}_1 (W, \varphi).
\end{equation}

We next analyze $\norm{Z_\varphi}_{L^2}^2$.
Recall the notation
$A_\delta^{\circ2}=(1-\Laplace_\delta)^{-1/2}\wick{W^2}$ and set
\begin{equation}
\mathfrak C_2(\wick{W^2},\varphi)
= Z_\varphi - A_\delta^{\circ2} \succ \varphi.
\end{equation}
The commutator estimate \cite[Proposition~A.2]{allez_continuous_2015}
gives for $\alpha \in [0,1-\varepsilon)$ that
\begin{equation}
\norm{\mathfrak C_2(\wick{W^2},\varphi)}_{H^{\alpha}}
\lesssim \norm{\wick{W^2}}_{\Holder^{-1-\varepsilon}} \norm{\varphi}_{H^{\alpha+\varepsilon}}.
\end{equation}
Hence
\begin{equation}\label{eq:3d-form-wick-square a norm}
\norm{Z_\varphi}_{L^2}^2
= \norm{A_\delta^{\circ2} \succ\varphi}_{L^2}^2
    + \norm{\mathfrak C_2(\wick{W^2},\varphi)}_{L^2}^2
    + 2\langle A_\delta^{\circ2} \succ\varphi, \mathfrak C_2(\wick{W^2},\varphi) \rangle_{L^2},
\end{equation}
and using the estimate for $\mathfrak{C}_2$ with Besov interpolation,
\begin{equation}
\begin{split}
&\mathrel{\phantom{=}} \norm{\mathfrak C_2(\wick{W^2},\varphi)}_{L^2}^2
    + 2\langle A_\delta^{\circ2} \succ\varphi,  \mathfrak C_2(\wick{W^2},\varphi) \rangle_{L^2}\\
&\lesssim \norm{\wick{W^2}}_{\Holder^{-1-\varepsilon}}^2 \norm{\varphi}_{H^\varepsilon}^2
    + \norm{(1-\Laplace_\delta)^{-1/2}\wick{W^2} \succ \varphi}_{H^{-2\varepsilon}}
        \norm{\mathfrak C_2(\wick{W^2},\varphi)}_{H^{2\varepsilon}}\\
&\lesssim \norm{\wick{W^2}}_{\Holder^{-1-\varepsilon}}^2
        \norm{\varphi}_{H^{1-\varepsilon}}^{2\varepsilon/(1-\varepsilon)}.
\end{split}
\end{equation}
To deal with the first term of~\eqref{eq:3d-form-wick-square a norm}
we observe that by Lemma~\ref{c3} below,
\begin{equation}
\mathfrak{C}_3(\wick{W^2}, \varphi) \coloneqq
\bignorm{((1 - \Laplace_\delta)^{- 1 / 2} \wick{W^2} \succ \varphi)^2
    - ((1 - \Laplace_\delta)^{- 1 / 2} \wick{W^2})^2 \varphi^2}_{L^1}
\end{equation}
is bounded by $C \norm{\wick{W^2}}^2_{\Holder^{-1-\varepsilon}}
    \norm{\varphi}_{H^{1-\varepsilon}}^{3\varepsilon/(1-\varepsilon)}$.

At this point, we have bounded~\eqref{eq:3d-form-wick square first} from below by
\begin{equation}
-144\lambda^2 \norm{((1 - \Laplace_\delta)^{- 1 / 2} \wick{W^2}) \varphi}_{L^2}^2
    + \norm{w}_{L^2}^2
    - 1
    - C(\lambda + \lambda^2) (1 + \norm{\wick{W^2}}_{\Holder^{- 1 - \varepsilon}}^2) \Bigl[ 
        1 + \norm{\varphi}^{1 + 4 \varepsilon}_{H^{1 - \varepsilon}} \Bigr].
\end{equation}
Let us now add back the renormalization term $2\lambda^2 a_\delta \varphi^2$
to cancel the divergence of the first term.
They combine as
\begin{equation}
-144\lambda^2 \delta^3 \sum_{x \in \Lambda}
    \left(((1 - \Laplace_\delta)^{-1/2}\wick{W^2})^{2} -\frac{a_{\delta}}{72}\right) \varphi^2.
\end{equation}
With the notation of Definition~\ref{rem:hessian-coefficients}, we can estimate
\begin{equation}
\lambda^2 \bigabs{\delta^3 \sum_{x \in \Lambda}
    \left(A_\delta^{\circ2} -\frac{a_{\delta}}{72}\right) \varphi^2}
\lesssim \lambda^2 \norm{W^{2\diamond2}}_{\Holder^{-\varepsilon}} \norm{\varphi}_{H^{\varepsilon}}^2
\lesssim \lambda^2 \norm{W^{2\diamond2}}_{\Holder^{-\varepsilon}} \norm{\varphi}_{H^{1-\varepsilon}}^{3\varepsilon}.
\end{equation}
Therefore we have showed
\begin{equation}\label{eq:anderson-proper}
\mathcal{A}_{\delta}(W)(\varphi) \geq \norm{w}_{L^2}^2 + R(W, \varphi) - 1,
\end{equation}
where the remainder satisfies  
\begin{equation}
\abs{R (W, \varphi)}
\leq C (\lambda + \lambda^2) (1 + \norm{W^{2 \diamond 2}}_{\Holder^{- \varepsilon}}
        + \norm{\wick{W^2}}^2_{\Holder^{- 1 - \varepsilon}})
    (1 + \norm{\varphi}^{1+4\varepsilon}_{{H^{1 - \varepsilon}} }).
\end{equation}
  
Let us abbreviate
\begin{equation}
X \coloneqq 1 + \norm{W^{2\diamond2}}_{\Holder^{-\varepsilon}}
    + \norm{\wick{W^2}}_{\Holder^{-1-\varepsilon}}^2.
\end{equation}
We also use the definitions of~$h$ and~$w$, paraproduct estimates,
and the assumption $\norm{\varphi}_{L^2} = 1$, to observe
\begin{equation}
\begin{split}
\norm{\varphi}_{{H^{1 - \varepsilon}} }
&\leq 12\lambda \norm{(1 - \Laplace_\delta)^{- 1 / 2} (\wick{W^2} \succ \varphi)}_{H^{- \varepsilon}}
    + \norm{w}_{L^2}\\
&\leq C \lambda \norm{\wick{W^2}}_{\Holder^{- 1 - \varepsilon}}
    \norm{\varphi}_{L^2}
    + \norm{w}_{L^2}\\
&\leq C \lambda \norm{\wick{W^2}}_{\Holder^{- 1 - \varepsilon}}
    + \norm{w}_{L^2}.
\end{split}
\end{equation}
Therefore the remainder can be further estimated as
\begin{equation}
R(W,\varphi)+\norm{w}_{L^2}^2
\geq \norm{w}_{L^2}^2
    -C(\lambda+\lambda^2)X
    (1 + \norm{\wick{W^2}}_{\Holder^{-1-\varepsilon}}^{1+4\varepsilon}
        + \norm{w}_{L^2}^{1+4\varepsilon}).
\end{equation}
As we assumed $\lambda \leq 1$, the final term can be estimated with Young's inequality as
\begin{equation}
-C_\varepsilon \lambda \left[
    X (1 + \norm{\wick{W^2}}_{\Holder^{-1-\varepsilon}}^{1+4\varepsilon})
    + X^{r_\varepsilon} \right]
- \frac \lambda 2 \norm{w}_{L^2}^2,
\quad\text{where } r_\varepsilon = \frac{2}{1 - 4\varepsilon}.
\end{equation}
The stochastic quantities $X$ and $\norm{\wick{W^2}}_{\Holder^{-1-\varepsilon}}$
have moments of all orders; see~\cite[Section~4.1]{gubinelli_pde_2021}
for the corresponding lattice stochastic estimates.
Hence we can replace the stochastic terms with a random variable $\Xi$
that has uniform-in-$\delta$ bounds on moments.

Finally, by definition
\begin{equation}
\norm{w}_{L^2}^2
\geq \norm{\varphi}_{H^{1-\varepsilon}}^2
    -144\lambda^2\norm{\wick{W^2}}_{\Holder^{-1-\varepsilon}}^2.
\end{equation}
Absorbing the last stochastic term into $\Xi$ proves the estimate.
\end{proof}

\begin{lemma}\label{first-commutator}
Let
\[
\mathfrak{C}_1 (\wick{W^2}, \varphi)
\coloneqq \langle \wick{W^2} \prec \varphi, \varphi \rangle_{L^2 (\Lambda)}
    + \langle \wick{W^2} \circ \varphi, \varphi \rangle_{L^2 (\Lambda)}
    - \langle \wick{W^2} \succ \varphi, \varphi \rangle_{L^2 (\Lambda)}.
\]
The following estimate holds for $\varphi \in H^{1 - \varepsilon}$
satisfying $\norm{\varphi}_{L^2} \leq 1$:
\[
\abs{\mathfrak{C}_1 (\wick{W^2}, \varphi)}
\lesssim (1 + \norm{\wick{W^2}}_{\Holder^{- 1 - \varepsilon}})
    \norm{\varphi}^{1 + 4 \varepsilon}_{H^{1 - \varepsilon}}.
\]
\end{lemma}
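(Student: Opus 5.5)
The plan is to reduce $\mathfrak{C}_1$ to a resonant term and a symmetric paraproduct commutator, using that every quantity appearing involves $\varphi$ twice and $\wick{W^2}$ once. Write $f=\wick{W^2}$. We follow the convention implicit in the proof of Proposition~\ref{3d-form-wick-square}: $f\succ\varphi$ is the paraproduct in which $f$ carries the high frequencies and $\varphi$ the low ones, so $f\succ\varphi=\varphi\prec f$ in the usual notation. Since $f\prec\varphi+f\circ\varphi+f\succ\varphi=f\varphi$, we have
\[
\mathfrak{C}_1(f,\varphi)=\langle f,\varphi^2\rangle-2\langle f\succ\varphi,\varphi\rangle .
\]
Next we decompose the square symmetrically as $\varphi^2=2\,\varphi\prec\varphi+\varphi\circ\varphi$ (up to the lowest-frequency blocks, which are harmless). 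This gives
\[
\mathfrak{C}_1(f,\varphi)=\langle f,\varphi\circ\varphi\rangle
+2\bigl(\langle f,\varphi\prec\varphi\rangle-\langle \varphi\prec f,\varphi\rangle\bigr).
\]

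For the resonant term we use duality between $\Holder^{-1-\varepsilon}=B^{-1-\varepsilon}_{\infty,\infty}$ and $B^{1+\varepsilon}_{1,1}$. The resonant product estimate in $L^1$-type Besov spaces (from Theorem~\ref{thm:besov paraproducts} and the appendix) then yields
\[
\abs{\langle f,\varphi\circ\varphi\rangle}\lesssim \norm{f}_{\Holder^{-1-\varepsilon}}\norm{\varphi}_{H^{1/2+2\varepsilon}}^2 .
\]
Interpolating between $L^2$ and $H^{1-\varepsilon}$ and using $\norm{\varphi}_{L^2}\le1$, we get $\norm{\varphi}_{H^{1/2+2\varepsilon}}^2\le\norm{\varphi}_{H^{1-\varepsilon}}^{(1+4\varepsilon)/(1-\varepsilon)}$. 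This exponent is $1+O(\varepsilon)$; after renaming $\varepsilon$ it is at most $1+4\varepsilon$ when $\norm{\varphi}_{H^{1-\varepsilon}}\ge1$. The case $\norm{\varphi}_{H^{1-\varepsilon}}<1$ is covered by the constant $1$ in the prefactor $(1+\norm{f}_{\Holder^{-1-\varepsilon}})$.

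The second bracket is the main point and the expected obstacle. Written in Littlewood--Paley blocks,
\[
\langle \varphi\prec f,\varphi\rangle=\sum_j\int S_{j-1}\varphi\,\Delta_jf\,\varphi,\qquad
\langle f,\varphi\prec\varphi\rangle=\sum_j\int S_{j-1}\varphi\,\Delta_j\varphi\, f .
\]
In both sums the low-frequency factor $S_{j-1}\varphi$ multiplies a pair of comparable-frequency factors, one copy of $f$ and one of $\varphi$. The difference is therefore a sum of near-diagonal interactions in which the low-frequency cutoff is shifted by a bounded number of dyadic scales. The tool here is the trilinear commutator lemma of Gubinelli--Imkeller--Perkowski type, with the same mechanism as the commutator in \cite[Proposition~A.2]{allez_continuous_2015}. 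It states that
\[
\abs{\langle a\prec b,c\rangle-\langle b,a\prec c\rangle}\lesssim\norm{a}_{H^{\alpha}}\norm{b}_{\Holder^{\beta}}\norm{c}_{H^{\gamma}}
\qquad\text{whenever }\alpha+\beta+\gamma>0,
\]
and we would reprove it on the lattice if it is not available in the appendix. We apply it with $a=c=\varphi$, $b=f$, $\beta=-1-\varepsilon$ and $\alpha=\gamma=1/2+2\varepsilon$, so that $\alpha+\beta+\gamma=3\varepsilon>0$. The same interpolation as before then bounds this term by $\norm{f}_{\Holder^{-1-\varepsilon}}\norm{\varphi}_{H^{1-\varepsilon}}^{1+4\varepsilon}$ after adjusting $\varepsilon$.

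The delicate part is checking that the cutoff mismatch between $S_{j-1}$ and the frequency supports of the two comparable factors produces only finitely many shifted blocks, each of which is controlled like a resonant product. We would verify this directly by expanding both sums in dyadic blocks. The finitely many lowest blocks, where the paraproduct decomposition degenerates, are bounded trivially by $\norm{f}_{\Holder^{-1-\varepsilon}}\norm{\varphi}_{L^2}^2$. Collecting the three contributions gives the stated bound.
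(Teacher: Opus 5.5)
Your argument is correct, but it is organised differently from the paper's. With $f=\wick{W^2}$, the paper splits $\mathfrak C_1$ into two pieces. The first is $\langle f\prec\varphi,\varphi\rangle$, bounded directly by the paraproduct estimate and duality by $\norm{f}_{\Holder^{-1-\varepsilon}}\norm{\varphi}_{H^{1-\varepsilon}}\norm{\varphi}_{H^{3\varepsilon}}$. The second is $\langle f\circ\varphi,\varphi\rangle-\langle f\succ\varphi,\varphi\rangle$, for which the paper quotes \cite[Lemma~A.6]{gubinelli_semilinear_2020} with the same asymmetric assignment of regularities.

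You instead use $\mathfrak C_1=\langle f,\varphi^2\rangle-2\langle\varphi\prec f,\varphi\rangle$ and the symmetrisation $\varphi^2=2\,\varphi\prec\varphi+\varphi\circ\varphi$. This leaves the resonant term $\langle f,\varphi\circ\varphi\rangle$, whose duality bound is fine, and the difference $\langle f,\varphi\prec\varphi\rangle-\langle\varphi\prec f,\varphi\rangle$.

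Your trilinear bound for that difference holds for general $a,b,c$. It is also more elementary than the GIP or \cite[Proposition~A.2]{allez_continuous_2015} mechanism you invoke, since no regularity gain from the low-frequency factor is needed. Write $i,j,k$ for the dyadic indices of $a,b,c$. A triple appears in the first sum but not the second exactly when $i\le j-2$ and $i\ge k-1$. The Fourier support of $\Delta_i a\,\Delta_j b$ then forces $\abs{k-j}\le N$, hence $j-N-1\le i\le j-2$; the other case is symmetric. So only triples of mutually comparable frequencies survive, each bounded by $\norm{\Delta_i a}_{L^2}\norm{\Delta_j b}_{L^\infty}\norm{\Delta_k c}_{L^2}$, and $\alpha+\beta+\gamma>0$ is all that is needed. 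This is exactly what your last paragraph describes. Your route is therefore self-contained and uses only support properties of the (lattice) Littlewood--Paley blocks, whereas the paper's is shorter because it imports a published lemma.

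Two bookkeeping corrections.

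\begin{enumerate}
\item You cannot rename $\varepsilon$, because the same $\varepsilon$ appears in $\Holder^{-1-\varepsilon}$, in $H^{1-\varepsilon}$ and in the exponent. Instead take $\alpha=\gamma=\tfrac12+\varepsilon$ (so $\alpha+\beta+\gamma=\varepsilon$) in both terms, or use the paper's split $1-\varepsilon$, $3\varepsilon$. Interpolation with $\norm{\varphi}_{L^2}\le1$ then gives the exponent $(1+2\varepsilon)/(1-\varepsilon)\le1+4\varepsilon$ for $\varepsilon\le\tfrac14$.
\item The case $\norm{\varphi}_{H^{1-\varepsilon}}<1$ is not handled by the constant $1$ in the prefactor. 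It follows from $\norm{\varphi}_{H^{s}}\lesssim\norm{\varphi}_{H^{1-\varepsilon}}$ for $s\le1-\varepsilon$, which bounds these terms by $\norm{f}_{\Holder^{-1-\varepsilon}}\norm{\varphi}_{H^{1-\varepsilon}}^{2}\le\norm{f}_{\Holder^{-1-\varepsilon}}\norm{\varphi}_{H^{1-\varepsilon}}^{1+4\varepsilon}$.
\end{enumerate}
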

\begin{proof}
In the first term we have by duality,
the paraproduct estimate (Lemma~\ref{thm:besov paraproducts}),
and interpolation
\begin{equation}
\langle \wick{W^2} \prec \varphi, \varphi \rangle_{L^2 (\Lambda)}
\lesssim \norm{\wick{W^2}}_{\Holder^{- 1 - \varepsilon}}
    \norm{\varphi}_{H^{1 - \varepsilon}}
    \norm{\varphi}_{H^{3\varepsilon}}
\lesssim (1 + \norm{\wick{W^2}}_{\Holder^{- 1 - \varepsilon}})
    \norm{\varphi}^{1 + 4 \varepsilon}_{H^{1 - \varepsilon}}.
\end{equation}
For the second term we apply \cite[Lemma~A.6]{gubinelli_semilinear_2020} to get
\begin{equation}
\begin{split}
\bigabs{\langle\wick{W^2}\circ\varphi,\varphi\rangle_{L^2}
    - \langle\wick{W^2}\succ\varphi,\varphi\rangle_{L^2}}
&\lesssim
    \norm{\wick{W^2}}_{\Holder^{-1-\varepsilon}}
    \norm{\varphi}_{H^{1-\varepsilon}}
    \norm{\varphi}_{H^{3\varepsilon}}\\
&\lesssim
    \left(1+\norm{\wick{W^2}}_{\Holder^{-1-\varepsilon}}\right)
    \norm{\varphi}_{H^{1-\varepsilon}}^{1+4\varepsilon}.
\end{split}
\end{equation}
\end{proof}

\begin{lemma}\label{c3}
The commutator term
\[
\mathfrak{C}_3 (\wick{W^2}, \varphi)
\coloneqq \delta^3 \sum_{x \in \Lambda} \Bigl[ ((1 - \Laplace_\delta)^{- 1 / 2} \wick{W^2}
    \succ \varphi)^2 - ((1 - \Laplace_\delta)^{- 1 / 2} \wick{W^2})^2 \varphi^2 \Bigr]
\]
satisfies
\[
\abs{\mathfrak{C}_3 (\wick{W^2}, \varphi)}
\lesssim \norm{\wick{W^2}}^2_{\Holder^{-1-\varepsilon}}
     \norm{\varphi}_{H^{3\varepsilon}} \norm{\varphi}_{L^2}.
\]
\end{lemma}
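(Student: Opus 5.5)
The plan is to reduce the quantity to a bilinear pairing in which all the positive regularity falls on one factor, controlled by $\norm{\varphi}_{H^{3\varepsilon}}$, while the other factor needs only $\norm{\varphi}_{L^2}$.

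Throughout, write $A\coloneqq A_\delta^{\circ2}=(1-\Laplace_\delta)^{-1/2}\wick{W^2}$. By Schauder, $\norm{A}_{\Holder^{-\varepsilon}}\lesssim\norm{\wick{W^2}}_{\Holder^{-1-\varepsilon}}$. Use Bony's decomposition $A\varphi=A\succ\varphi+A\circ\varphi+A\prec\varphi$, in the convention where $A\succ\varphi$ has $A$ at high frequency and $\varphi$ at low frequency. Set
\[
\mathcal R\varphi\coloneqq A\varphi-A\succ\varphi=A\circ\varphi+A\prec\varphi .
\]
Pointwise on the lattice, $(A\varphi)^2-(A\succ\varphi)^2=\mathcal R\varphi\,(2A\succ\varphi+\mathcal R\varphi)$, so
\[
\mathfrak C_3(\wick{W^2},\varphi)=-2\langle \mathcal R\varphi,A\succ\varphi\rangle_{L^2}-\langle\mathcal R\varphi,\mathcal R\varphi\rangle_{L^2}.
\]
Each pairing will be bounded by $\norm{\mathcal R\varphi}_{H^{2\varepsilon}}\cdot\norm{\,\cdot\,}_{H^{-2\varepsilon}}$, with all positive regularity put on $\mathcal R\varphi$.

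\textbf{Step 1 (regular factor).} The paraproduct and resonant estimates of Theorem~\ref{thm:besov paraproducts} give
\[
\norm{A\prec\varphi}_{H^{2\varepsilon}}\lesssim\norm{A}_{\Holder^{-\varepsilon}}\norm{\varphi}_{H^{3\varepsilon}},
\qquad
\norm{A\circ\varphi}_{H^{2\varepsilon}}\lesssim\norm{A}_{\Holder^{-\varepsilon}}\norm{\varphi}_{H^{3\varepsilon}}.
\]
The resonant bound applies because $-\varepsilon+3\varepsilon>0$. Hence $\norm{\mathcal R\varphi}_{H^{2\varepsilon}}\lesssim\norm{\wick{W^2}}_{\Holder^{-1-\varepsilon}}\norm{\varphi}_{H^{3\varepsilon}}$. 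This is the only place where $\norm{\varphi}_{H^{3\varepsilon}}$ enters.

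\textbf{Step 2 (rough factors in $H^{-2\varepsilon}$ using only $L^2$).} For the two paraproducts, the standard estimates with $\varphi\in L^2=B^0_{2,2}$ give
\[
\norm{A\succ\varphi}_{H^{-\varepsilon}}+\norm{A\prec\varphi}_{H^{-2\varepsilon}}\lesssim\norm{A}_{\Holder^{-\varepsilon}}\norm{\varphi}_{L^2}.
\]
These need no summability condition. Combined with Step~1, this handles $\langle\mathcal R\varphi,A\succ\varphi\rangle$ and the part $\langle\mathcal R\varphi,A\prec\varphi\rangle$ of the quadratic term, in each case with the bound $\norm{\wick{W^2}}_{\Holder^{-1-\varepsilon}}^2\norm{\varphi}_{H^{3\varepsilon}}\norm{\varphi}_{L^2}$.

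\textbf{Step 3 (main obstacle).} The remaining piece is $\langle\mathcal R\varphi,A\circ\varphi\rangle$, where the resonant product must be controlled with only $\varphi\in L^2$. Here the sum of regularities, $-\varepsilon+0$, is negative, so the plain resonant estimate fails. I would try two routes.

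The first route exploits the smallness of the loss. Blockwise, $\norm{\Delta_j A\,\Delta_{j'}\varphi}_{L^2}\lesssim2^{j\varepsilon}\norm{A}_{\Holder^{-\varepsilon}}\norm{\Delta_{j'}\varphi}_{L^2}$ for $\abs{j-j'}\le1$. Pair this directly against the blocks of $\mathcal R\varphi\in H^{2\varepsilon}$, without first forming $A\circ\varphi$ in a negative Sobolev space. The output of the resonant block $j$ lives at frequencies $\lesssim2^j$, and $\mathcal R\varphi$ at frequency $2^k\le2^j$ carries the weight $2^{-2k\varepsilon}$. After summation, the $2^{j\varepsilon}$ loss is absorbed by the $H^{2\varepsilon}$ gain of $\mathcal R\varphi$ through Cauchy--Schwarz in $j$. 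The lattice is used only through the uniform-in-$\delta$ Littlewood--Paley estimates of the appendix.

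If this summation turns out to be borderline, the fallback is the second route. Rewrite $\langle\mathcal R\varphi,A\circ\varphi\rangle=\langle A\,\mathcal R\varphi,\varphi\rangle$ minus paraproduct corrections. Then estimate $A\,\mathcal R\varphi\in L^2$ by the product rule for $\Holder^{-\varepsilon}\times H^{2\varepsilon}$, trading a further $\varepsilon$ of regularity from Step~1 by running it with $H^{3\varepsilon}\to H^{2\varepsilon+\varepsilon'}$, $\varepsilon'<\varepsilon$.

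Collecting the terms gives $\abs{\mathfrak C_3}\lesssim\norm{\wick{W^2}}^2_{\Holder^{-1-\varepsilon}}\norm{\varphi}_{H^{3\varepsilon}}\norm{\varphi}_{L^2}$, as stated.
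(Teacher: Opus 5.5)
Your decomposition $(A\varphi)^2-(A\succ\varphi)^2=\mathcal R\varphi\,(2A\succ\varphi+\mathcal R\varphi)$ and your treatment of the cross term (Steps 1--2) are exactly the paper's proof. The paper then disposes of $(\mathcal R\varphi)^2$ with the word ``analogously''. You are right that the resonant piece $A\circ\varphi$ is not analogous: it cannot be put in any negative Sobolev space using only $\norm{\varphi}_{L^2}$. But neither route in Step~3 closes this, so the proposal has a gap there.

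\emph{Route~1.} The $H^{2\varepsilon}$ gain of $\mathcal R\varphi$ sits at the output frequencies $2^k\le 2^j$, where $2^{-2k\varepsilon}\ge 2^{-2j\varepsilon}$. So it cannot absorb the loss $2^{j\varepsilon}$ at the input frequency $2^j$. What your summation actually produces is $\sum_j 2^{j\varepsilon}\norm{\Delta_j\varphi}_{L^2}$, a $B^{\varepsilon}_{2,1}$ norm of $\varphi$. Since the argument uses only $\mathcal R\varphi\in H^{2\varepsilon}$, it would prove $\abs{\langle g,A\circ\varphi\rangle}\lesssim\norm{g}_{H^{2\varepsilon}}\norm{A}_{\Holder^{-\varepsilon}}\norm{\varphi}_{L^2}$, and this is false. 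For $g\equiv 1$ the pairing is, up to finitely many low modes, $\langle A,\varphi\rangle$. Take $A=\sum_{j\le J}2^{j\varepsilon}\cos(k_j\cdot x)$ and $\varphi=\cos(k_J\cdot x)$ with $\abs{k_j}\sim 2^j$. Then $\norm{A}_{\Holder^{-\varepsilon}}\lesssim 1$ and $\norm{\varphi}_{L^2}\lesssim 1$, but $\langle A,\varphi\rangle\sim 2^{J\varepsilon}$, which is unbounded as $\delta\to 0$.

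\emph{Route~2} fails for the same reason. $\norm{Ag}_{L^2}$ is not controlled by $\norm{A}_{\Holder^{-\varepsilon}}$ times any Sobolev norm of $g$. The paraproduct in which $A$ carries the high frequencies has regularity exactly $-\varepsilon$, however smooth $g$ is. Again take $g\equiv1$; for the actual $A=(1-\Laplace_\delta)^{-1/2}\wick{W^2}$ in three dimensions, $\E\norm{A}_{L^2}^2$ even diverges logarithmically. So the product rule only gives $A\,\mathcal R\varphi\in H^{-\varepsilon}$, and moving more regularity onto $\mathcal R\varphi$ does not help.

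The missing idea is that whichever copy of $\varphi$ enters the resonant product must carry strictly more than $\varepsilon$ derivatives. You therefore cannot keep one factor at pure $L^2$. Instead, split the available regularity evenly between the two copies and recombine by interpolation.

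Concretely, treat the whole quadratic term at once. Use the paraproduct and resonant estimates of Theorem~\ref{thm:besov paraproducts} with $\varphi\in H^{3\varepsilon/2}$ (the resonant condition is $-\varepsilon+3\varepsilon/2>0$) together with Schauder:
\[
\norm{\mathcal R\varphi}_{L^2}^2\le\norm{\mathcal R\varphi}_{H^{\varepsilon/2}}^2
\lesssim\norm{A}_{\Holder^{-\varepsilon}}^2\norm{\varphi}_{H^{3\varepsilon/2}}^2
\lesssim\norm{\wick{W^2}}_{\Holder^{-1-\varepsilon}}^2\norm{\varphi}_{L^2}\norm{\varphi}_{H^{3\varepsilon}}.
\]
The last step is $\norm{\varphi}_{H^{3\varepsilon/2}}^2\lesssim\norm{\varphi}_{L^2}\norm{\varphi}_{H^{3\varepsilon}}$, which is Cauchy--Schwarz on the Fourier side via Lemma~\ref{thm:besov hs norm}. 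This replaces Step~3 and the quadratic half of Step~2. Combined with your bound on the cross term, it proves the lemma and supplies the argument behind the paper's ``analogously''.
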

\begin{proof}
With $A_\delta^{\circ2}=(1-\Laplace_\delta)^{-1/2}\wick{W^2}$, we note that
\begin{equation}
(A_\delta^{\circ2})^2\varphi^2 - (A_\delta^{\circ2} \succ \varphi)^2
= (A_\delta^{\circ2} \preccurlyeq \varphi)^2
    + 2(A_\delta^{\circ2} \succ \varphi)(A_\delta^{\circ2} \preccurlyeq \varphi).
\end{equation}
Now the second term can be estimated by
\begin{equation}
\begin{split}
&\mathrel{\phantom{=}} \bigabs{\delta^3 \sum_{x \in \Lambda}
    ((1 - \Laplace_\delta)^{- 1 / 2} \wick{W^2} \succ \varphi)
    ((1 - \Laplace_\delta)^{- 1 / 2} \wick{W^2} \preccurlyeq \varphi)}\\
&\lesssim \norm{(1 - \Laplace_\delta)^{- 1 / 2} \wick{W^2} \succ \varphi}_{H^{-2\varepsilon}}
    \norm{(1 - \Laplace_\delta)^{- 1 / 2} \wick{W^2} \preccurlyeq \varphi}_{H^{2\varepsilon}}\\
&\lesssim \norm{(1 - \Laplace_\delta)^{- 1 / 2} \wick{W^2}}_{\Holder^{- \varepsilon}}
    \norm{\varphi}_{L^2}
    \norm{(1 -\Laplace_\delta)^{- 1 / 2} \wick{W^2}}_{\Holder^{-\varepsilon}}
    \norm{\varphi}_{H^{3 \varepsilon}}\\
&\lesssim \norm{\wick{W^2}}^2_{\Holder^{- 1 - \varepsilon}}
    \norm{\varphi}_{H^{3 \varepsilon}}
    \norm{\varphi}_{L^2}.
\end{split}
\end{equation}
The first term can be bounded analogously, and we can conclude. 
\end{proof}

We now turn towards deducing a lower bound for $\mathcal {A}$ with an argument
that is a Gaussian combined with a more regular perturbation. 

\begin{lemma}[Hessian estimate for $\Phi^4_3$]
\label{lem:lower-bound}
With $\Xi$ as in Proposition~\ref{3d-form-wick-square} and $\beta \geq 0$ we have
\[
\beta \mathcal{A} (W + W^{[3]} + K) (\psi)
\geq \beta ((1-\lambda/2) \norm{\psi}^2_{H^{1 - \varepsilon}} - 1)
    - \frac{1}{8} \norm{\psi}^2_{H^{1 - \varepsilon}}
    - \mathcal{B}^1 (W, \lambda, m, \beta) - \mathcal{E} (K),
\]
where we assume $\norm{\psi}_{L^2} = 1$ and denote for some $\varepsilon' > 0$ small
\begin{align*}
\mathcal B^1(W,\lambda,m,\beta)
    &\coloneqq C\Bigl(\lambda\beta\norm{WW^{[3]}}_{\Holder^{-1/2-\varepsilon}}\Bigr)^3
    +C\Bigl(\beta\lambda^{3/4}
        \norm{W}_{\Holder^{-1/2-\varepsilon}}\Bigr)^{8/(1-\varepsilon')}\\
    &\quad +C\Bigl(\beta\lambda^{7/8}m^{-1/2}
    \norm{W}_{\Holder^{-1/2-\varepsilon}}\Bigr)^{16/(1-\varepsilon')}
    +\beta C_\varepsilon \lambda^{1-2\varepsilon}(1+\lambda)^{r_\varepsilon}\Xi.
\end{align*}
\end{lemma}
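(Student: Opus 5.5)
We write $u = W + W^{[3]} + K$ and expand the Wick square using the binomial formula:
\[
\wick{u^2} = \wick{W^2} + 2WW^{[3]} + 2WK + (W^{[3]}+K)^2 .
\]
This gives
\[
\mathcal A(u)(\psi) = \mathcal A(W)(\psi) + 24\lambda\langle WW^{[3]}\psi,\psi\rangle + 24\lambda\langle WK\psi,\psi\rangle + 12\lambda\langle (W^{[3]}+K)^2\psi,\psi\rangle .
\]
The last term is nonnegative and can be dropped. For the Gaussian part we apply Proposition~\ref{3d-form-wick-square}, multiplied by $\beta$. This produces $\beta((1-\lambda/2)\norm{\psi}_{H^{1-\varepsilon}}^2-1)-\beta C_\varepsilon\lambda^{1-2\varepsilon}\Xi$, which accounts for the last term of $\mathcal B^1$. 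The factor $(1+\lambda)^{r_\varepsilon}$ in that term is harmless bookkeeping from the $X^{r_\varepsilon}$ term in the proof. It remains to control the two cross terms by $\tfrac18\norm{\psi}_{H^{1-\varepsilon}}^2+\mathcal E(K)+\mathcal B^1$.

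\emph{The $WW^{[3]}$ term.} By duality, $\abs{\langle WW^{[3]}\psi,\psi\rangle}\lesssim\norm{WW^{[3]}}_{\Holder^{-1/2-\varepsilon}}\norm{\psi^2}_{B^{1/2+\varepsilon}_{1,1}}$. We bound $\norm{\psi^2}_{B^{1/2+\varepsilon}_{1,1}}\lesssim\norm{\psi}_{H^{1/2+2\varepsilon}}\norm{\psi}_{L^2}$, and interpolate between $L^2$ (where $\norm{\psi}_{L^2}=1$) and $H^{1-\varepsilon}$. This yields $\norm{\psi}_{H^{1-\varepsilon}}^{\theta}$ with $\theta\approx 1/2+3\varepsilon<2/3$. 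Young's inequality with exponents $3/\theta'$ and conjugate, where we choose the interpolation slightly wastefully so that the power on $\norm{\psi}^2$ sits at exponent $2/3$, gives
\[
\tfrac1{24}\norm{\psi}_{H^{1-\varepsilon}}^2 + C(\lambda\beta\norm{WW^{[3]}}_{\Holder^{-1/2-\varepsilon}})^3 .
\]
This is the first term of $\mathcal B^1$.

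\emph{The $WK$ term; this is the main obstacle.} Here $K$ must be absorbed by $\mathcal E(K)=\lambda\norm{K}_{L^4}^4+m^2\norm{K}_{H^{1-\varepsilon}}^2$ without the factor $\beta$. By duality,
\[
\abs{\langle WK\psi^2\rangle}\lesssim\norm{W}_{\Holder^{-1/2-\varepsilon}}\norm{K\psi^2}_{B^{1/2+\varepsilon}_{1,1}} .
\]
We use the fractional Leibniz rule to bound $\norm{K\psi^2}_{B^{1/2+\varepsilon}_{1,1}}$ by two pieces: $\norm{K}_{H^{1/2+2\varepsilon}}\norm{\psi}_{L^4}^2$ and $\norm{K}_{L^4}\norm{\psi}_{L^{8/3}}\norm{\psi}_{H^{1/2+2\varepsilon}}$ (in suitable $L^p$ variants).

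\emph{First piece.} We interpolate $K$ between $L^4$ and $H^{1-\varepsilon}$, and interpolate $\psi$ in $H^{3/4}\subset L^4$ (in 3D) between $L^2$ and $H^{1-\varepsilon}$. Then Young's inequality distributes the weights among $\lambda\norm{K}_{L^4}^4$, $m^2\norm{K}_{H^{1-\varepsilon}}^2$ and $\norm{\psi}_{H^{1-\varepsilon}}^2$. The bookkeeping is as follows. When only the $L^4$ norm of $K$ is used, the prefactor is $\lambda^{-1/4}$, which leads to $(\beta\lambda\cdot\lambda^{-1/4}\norm{W})^{8/(1-\varepsilon')}$. When half an $H^{1-\varepsilon}$ factor is used, the prefactor is $\lambda^{-1/8}m^{-1/2}$, which leads to the $16/(1-\varepsilon')$ exponent term. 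These match the second and third terms of $\mathcal B^1$.

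The key is to check that the total homogeneity in $(\psi,K)$ stays strictly below the Young thresholds. We have quartic weight on $K$ and quadratic weight on $\psi$, so we need $\tfrac14a+\tfrac12b<1$, where $a$ is the power of $K$ and $b$ the power of $\norm{\psi}_{H^{1-\varepsilon}}$. The $\varepsilon'$ loss in the exponents is where the interpolation gives slack. If the Leibniz split is too costly, I would instead first pass the $1/2+\varepsilon$ derivatives entirely onto $\psi^2$ and use $\norm{K}_{L^4}$ only, falling back on the $H^{1-\varepsilon}$ part of $\mathcal E$ for the high-frequency part of $K$.

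Finally, summing the three Young splittings with constants chosen so that the $\psi$-contributions total $\tfrac18\norm{\psi}^2_{H^{1-\varepsilon}}$ and the $K$-contributions total at most $\mathcal E(K)$ gives the claim.
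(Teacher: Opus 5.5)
Your overall structure matches the paper's: binomial expansion of $\wick{u^2}$, dropping the nonnegative $(W^{[3]}+K)^2$ term, Proposition~\ref{3d-form-wick-square} for the Gaussian part, duality with interpolation and Young for $WW^{[3]}$, and fractional Leibniz with a three-factor Young inequality for $WK$. The $WW^{[3]}$ step is fine. The $WK$ estimate, which you yourself flag as the main obstacle, does not close with the spaces you specify, so there is a genuine gap.

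\textbf{Your first Leibniz piece cannot be absorbed.} This is the piece $\norm{K}_{H^{1/2+2\varepsilon}}\norm{\psi}_{L^4}^2$. For $K$ in an $L^2$-based space with half a derivative, the cheapest control by $\mathcal E(K)$ is
\[
\norm{K}_{H^{1/2+2\varepsilon}}\lesssim\norm{K}_{L^4}^{1/2-}\norm{K}_{H^{1-\varepsilon}}^{1/2+}.
\]
For $\psi$, the sharp three-dimensional embedding $H^{3/4}\subset L^4$ gives $\norm{\psi}_{L^4}^2\lesssim\norm{\psi}_{H^{1-\varepsilon}}^{3/2+}$. Young's inequality against $\lambda\norm{K}_{L^4}^4$, $m^2\norm{K}_{H^{1-\varepsilon}}^2$ and $\norm{\psi}_{H^{1-\varepsilon}}^2$ needs the reciprocal exponents to sum below $1$. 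Here they sum to $\tfrac18+\tfrac14+\tfrac34=\tfrac98$. Nothing is left for the factor $\lambda\beta\norm{W}_{\Holder^{-1/2-\varepsilon}}$, so this piece cannot be absorbed at all, let alone produce the $16/(1-\varepsilon')$ power you claim.

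Your criterion $\tfrac14a+\tfrac12b<1$ is also miscalibrated. A power $a_2$ on $\norm{K}_{H^{1-\varepsilon}}$ costs $a_2/2$, because that part of $\mathcal E(K)$ is quadratic. Even by your own criterion, this piece sits exactly at the borderline value $1$ before the $\varepsilon$-losses.

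\textbf{Your second piece has a Hölder mismatch, which is fixable.} As written, $\tfrac14+\tfrac38+\tfrac12>1$. It should read $\norm{K}_{L^4}\norm{\psi}_{W^{1/2+\varepsilon,8/3}}\norm{\psi}_{L^{8/3}}$, as in the paper, or alternatively $\norm{K}_{L^4}\norm{\psi}_{H^{1/2+2\varepsilon}}\norm{\psi}_{L^4}$. Either gives $\norm{\psi}_{H^{1-\varepsilon}}^{5/4+}$, a budget $\tfrac14+\tfrac58=\tfrac78$, and hence the exponent $8$.

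\textbf{The missing idea is the choice of Hölder exponents when the derivatives fall on $K$.} Put $K$ exactly in the midpoint interpolation space between $L^4$ and $H^{1-\varepsilon}$, namely $W^{1/2+\varepsilon,8/3-\varepsilon_1}$, so that
\[
\norm{K}_{W^{1/2+\varepsilon,8/3-\varepsilon_1}}\lesssim\norm{K}_{L^4}^{1/2}\norm{K}_{H^{1-\varepsilon}}^{1/2}.
\]
This keeps the extra integrability that $L^4$ provides, which your $H^{1/2}$ choice throws away. It leaves $\psi^2$ only in $L^{8/5+}$, so
\[
\norm{\psi}_{L^{16/5+}}^2\lesssim\norm{\psi}_{H^{9/16+}}^2\lesssim\norm{\psi}_{H^{1-\varepsilon}}^{9/8+}.
\]
The budget is now $\tfrac18+\tfrac14+\tfrac{9}{16}=\tfrac{15}{16}$. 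The leftover $\tfrac1{16}$ is exactly what produces $\bigl(\beta\lambda^{7/8}m^{-1/2}\norm{W}_{\Holder^{-1/2-\varepsilon}}\bigr)^{16/(1-\varepsilon')}$.

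\textbf{Your fallback does not fix this as stated.} Moving all $1/2+\varepsilon$ derivatives onto $\psi^2$ is not possible. The high-frequency-$K$ paraproduct $K\succ\psi^2$ tested against $W\in\Holder^{-1/2-\varepsilon}$ needs regularity of $K$. A frequency splitting of $K$ could be made to work, but only with the same balancing of $\psi$-norms on the two pieces, and you have not carried that out.
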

\begin{proof}
We have already shown in Lemma~\ref{3d-form-wick-square} that
\begin{equation}
\beta\mathcal A(W)(\psi)
\geq \beta (1-\lambda/2) \norm{\psi}_{H^{1-\varepsilon}}^2
    - \beta C_\varepsilon \lambda^{1-2\varepsilon} (1+\lambda)^{r_\varepsilon}\Xi
    - \beta.
\end{equation}
By definition of $\mathcal A$ we get
\begin{equation}
\begin{split}
\mathcal{A} (W + W^{[3]} + K) (\psi) - \mathcal{A} (W) (\psi)
&=24\lambda\delta^3 \sum_{x \in \Lambda} W W^{[3]} \psi^2
    +24\lambda\delta^3 \sum_{x \in \Lambda} W K\psi^2\\
&\quad+12\lambda\delta^3 \sum_{x\in\Lambda}(W^{[3]}+K)^2\psi^2,
\end{split}
\end{equation}
and since we are deriving a lower bound,
we can ignore the last term as non-negative.
The first term is estimated, after multiplying by $\beta$,
with duality and interpolation as
\begin{equation}
\begin{split}
24\lambda \beta \bigabs{\delta^3 \sum_{x \in \Lambda} W W^{[3]} \psi^2}
&\leq 24 \lambda \beta \norm{W W^{[3]}}_{\Holder^{- 1 / 2 - \varepsilon}}
    \norm{\psi^2}_{B^{1 / 2 + \varepsilon}_{1,1}}\\
&\leq 24 \lambda \beta \norm{W W^{[3]}}_{\Holder^{- 1 / 2 - \varepsilon}}
    \norm{\psi}^{1 + \varepsilon}_{H^{1 - \varepsilon}}\\
&\leq C_\kappa (\lambda \beta \norm{W W^{[3]}}_{\Holder^{- 1 / 2 - \varepsilon}})^3
    + \kappa \norm{\psi}^2_{H^{1 - \varepsilon}}.
\end{split}
\end{equation}

For the second term, the fractional Leibniz rule
(see e.g.~\cite{benyi_fractional_2024}) gives the following estimate,
where the positive parameters $\varepsilon_i$ record arbitrarily
small, interdependent, losses in the Sobolev embeddings:
\begin{equation}\label{eq:estiamte-3d-KW-1}
\begin{split}
&\mathrel{\phantom{=}}
    24 \lambda \beta \bigabs{\delta^3 \sum_{x \in \Lambda} W K \psi^2}\\
&\lesssim \lambda\beta \norm{W}_{\Holder^{-1/2-\varepsilon}}
    \norm{K \psi^2}_{B^{1/2+\varepsilon}_{1,1}}\\
&\lesssim \lambda\beta \norm{W}_{\Holder^{-1/2-\varepsilon}} \Bigl(
    \norm{K}_{L^4} \norm{\psi^2}_{W^{1/2+\varepsilon, 4/3}}
    + \norm{K}_{W^{1/2+\varepsilon, 8/3-\varepsilon_1}}
        \norm{\psi^2}_{L^{8/5+\varepsilon_2}}
\Bigr)\\
&\lesssim \lambda\beta \norm{W}_{\Holder^{-1/2-\varepsilon}} \Bigl(
    \norm{K}_{L^4} \norm{\psi}_{W^{1/2+\varepsilon, 8/3}} \norm{\psi}_{L^{8/3}}
    + \norm{K}_{W^{1/2+\varepsilon, 8/3-\varepsilon_1}}
        \norm{\psi}_{L^{16/5+\varepsilon_3}}^2
\Bigr).
\end{split}
\end{equation}
We now estimate the first term as
\begin{equation}\label{eq:estiamte-3d-KW-2}
\begin{split}
&\mathrel{\phantom{=}} \lambda\beta \norm{W}_{\Holder^{-1/2-\varepsilon}}
    \norm{K}_{L^4} \norm{\psi}_{W^{1/2+\varepsilon, 8/3}} \norm{\psi}_{L^{8/3}}\\
&\lesssim \lambda\beta \norm{W}_{\Holder^{-1/2-\varepsilon}}
    \norm{K}_{L^4} \norm{\psi}_{H^{7/8+\varepsilon}} \norm{\psi}_{H^{3/8}}\\
&\leq \lambda\beta \norm{W}_{\Holder^{-1/2-\varepsilon}}
    \norm{K}_{L^4} \norm{\psi}_{H^{1-\varepsilon}}^{5/4+\varepsilon_4}\\
&\leq C_\kappa (\beta \lambda^{3/4} \norm{W}_{\Holder^{-1/2-\varepsilon}})^{8/(1-4\varepsilon_4)}
    + \kappa \lambda \norm{K}_{L^4}^4
    + \kappa \norm{\psi}_{H^{1-\varepsilon}}^2.
\end{split}
\end{equation}
In the second term we estimate
\begin{equation}\label{eq:estiamte-3d-KW-3}
\begin{split}
&\mathrel{\phantom{=}} \lambda\beta \norm{W}_{\Holder^{-1/2-\varepsilon}}
    \norm{K}_{W^{1/2+\varepsilon, 8/3-\varepsilon_1}}
    \norm{\psi}_{L^{16/5+\varepsilon_3}}^2\\
&\lesssim \lambda\beta \norm{W}_{\Holder^{-1/2-\varepsilon}}
    \norm{K}_{L^4}^{1/2}
    \norm{K}_{H^{1-\varepsilon}}^{1/2}
    \norm{\psi}_{H^{1-\varepsilon}}^{9/8+\varepsilon_5}\\
&= \lambda^{7/8}\beta m^{-1/2} \norm{W}_{\Holder^{-1/2-\varepsilon}}
    \lambda^{1/8} \norm{K}_{L^4}^{1/2}
    m^{1/2} \norm{K}_{H^{1-\varepsilon}}^{1/2}
    \norm{\psi}_{H^{1-\varepsilon}}^{9/8+\varepsilon_5}\\
&\leq C_\kappa (\lambda^{7/8}\beta m^{-1/2} \norm{W}_{\Holder^{-1/2-\varepsilon}})^{16/(1-8\varepsilon_5)}
    + \kappa \lambda \norm{K}_{L^4}^{4}
    + \kappa m^2 \norm{K}_{H^{1-\varepsilon}}^2
    + \kappa \norm{\psi}_{H^{1-\varepsilon}}^2.
\end{split}
\end{equation}
This requires that $\varepsilon$ is small enough compared to $\varepsilon_1$.
Setting $\kappa = 1/24$ gives the stated lower bound.
\end{proof}

\begin{remark}
If we had applied the Boué--Dupuis formula in this setting to deduce exponential tails,
we could conclude with the above lemma as well as the considerations from~\cite{barashkov_variational_2020}.
However this would require to adapt the stochastic estimates in~\cite{barashkov_variational_2020} to the lattice setting,
which is why we opt to use the SPDE-based method where most of the required estimates have already been established.

Since the gradient of $Q(u)$ appears in the method we employ,
we are obliged to deduce some estimates on the ground state before we can proceed.
This will be carried out in the following. 
\end{remark}

\begin{lemma}[Construction of the competitor]\label{lem:competitor}
There exists $\varphi \in L^2 (\Lambda)$ such that
\[
\norm{\varphi}_{\Holder^{1 - \varepsilon}}
    \leq C(1+\norm{\wick{W^2}}_{\Holder^{-1-\varepsilon}}^p),
\qquad \norm{\varphi}_{L^2} = 1,
\]
and $p < \infty$ such that
\[
\abs{\mathcal A(W)(\varphi)}
\leq C\left(1+\norm{W^{2\diamond2}}_{\Holder^{-\varepsilon}}
    + \norm{\wick{W^2}}_{\Holder^{-1-\varepsilon}}\right)^p.
\]
\end{lemma}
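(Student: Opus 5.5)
We construct $\varphi$ by a paracontrolled ansatz adapted to the square completion in Proposition~\ref{3d-form-wick-square}. There, writing $h=(1-\Laplace_\delta)^{1/2}\varphi=-12\lambda Z_\varphi+w$, the quadratic form reduces to $\norm{w}_{L^2}^2$ plus controlled remainders. The natural competitor makes $w$ small, i.e.\ it is the fixed point
\[
\varphi_0=-12\lambda(1-\Laplace_\delta)^{-1}\bigl(\wick{W^2}\succ\varphi_0\bigr)+(1-\Laplace_\delta)^{-1/2}w_0,
\]
with $w_0$ a fixed smooth function, for instance the image of the constant. Set $\varphi=\varphi_0/\norm{\varphi_0}_{L^2}$.

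To solve this, we avoid smallness of $\lambda$. Instead we introduce a frequency cutoff in the paraproduct: replace $\wick{W^2}\succ\varphi$ by its part with $\wick{W^2}$ localized at dyadic frequencies $\geq N$. Schauder estimates give
\[
\norm{(1-\Laplace_\delta)^{-1}(\wick{W^2}\succ_{>N}\varphi)}_{\Holder^{1-\varepsilon}}\lesssim N^{-\varepsilon}\norm{\wick{W^2}}_{\Holder^{-1-\varepsilon/2}}\norm{\varphi}_{L^\infty}.
\]
Choosing $N\simeq(1+\norm{\wick{W^2}}_{\Holder^{-1-\varepsilon}})^{2/\varepsilon}$ makes the map a contraction in $\Holder^{1-\varepsilon}$. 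This yields $\varphi_0$ with $\norm{\varphi_0}_{\Holder^{1-\varepsilon}}\lesssim 1$. The low-frequency part $\wick{W^2}\succ_{\leq N}\varphi$ is smooth, with norm $\lesssim N^{C}\norm{\wick{W^2}}_{\Holder^{-1-\varepsilon}}$, and is absorbed into the ansatz as an error $w$. Hence $\norm{w}_{L^2}\lesssim(1+\norm{\wick{W^2}}_{\Holder^{-1-\varepsilon}})^{p}$. For the normalization, $\norm{\varphi_0}_{L^2}\geq\norm{w_0}_{L^2}-CN^{-\varepsilon}\cdots\geq1/2$ after enlarging $N$. This gives the $\Holder^{1-\varepsilon}$ bound with a polynomial power $p$.

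For the quadratic form we rerun the identity from the proof of Proposition~\ref{3d-form-wick-square}:
\[
\mathcal A(W)(\varphi)=\norm{w}_{L^2}^2+R(W,\varphi)-(1-m^2).
\]
The remainder $R$ collects $\mathfrak C_1$, $\mathfrak C_2$, $\mathfrak C_3$ and the renormalized term $-144\lambda^2\langle W^{2\diamond2},\varphi^2\rangle$. Previously these were estimated one-sidedly, but each bound (Lemmas~\ref{first-commutator} and~\ref{c3}, the $\mathfrak C_2$ estimate, and duality for $W^{2\diamond2}$) is in fact two-sided in absolute value. They are bounded by $(1+\norm{W^{2\diamond2}}_{\Holder^{-\varepsilon}}+\norm{\wick{W^2}}_{\Holder^{-1-\varepsilon}}^2)(1+\norm{\varphi}_{H^{1-\varepsilon}}^{2})$. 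Inserting $\norm{\varphi}_{H^{1-\varepsilon}}\lesssim\norm{\varphi}_{\Holder^{1-\varepsilon}}$ and the bound on $w$ gives $\abs{\mathcal A(W)(\varphi)}\leq C(1+\cdots)^p$.

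The main obstacle is the high-frequency cutoff in the paraproduct. The commutator identities of Proposition~\ref{3d-form-wick-square} were written with the full paraproduct $\succ$, so the modified ansatz creates the additional term $12\lambda\langle\wick{W^2}\succ_{\leq N}\varphi,\varphi\rangle$. We plan to bound it crudely by $N^{C}\norm{\wick{W^2}}_{\Holder^{-1-\varepsilon}}$, which is polynomial in the noise norm, and to check that the cross terms in $\norm{h}^2$ stay of this form. If the dependence of the cutoff on $W$ causes trouble, a fallback is to take $\lambda$-dependent cutoffs, or simply $N$ large and deterministic on the event where the noise norm is at most $M$, and to sum over dyadic $M$.
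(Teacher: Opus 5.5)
Your proposal is correct and follows the paper's proof essentially step for step. The paper solves $\varphi=-12\lambda(1-\Laplace_\delta)^{-1}\proj_{>N}(\wick{W^2}\succ\varphi)+1$ with $N$ polynomial in $\norm{\wick{W^2}}_{\Holder^{-1-\varepsilon}}$, notes that $w=1+12\lambda(1-\Laplace_\delta)^{-1/2}\proj_{\leq N}(\wick{W^2}\succ\varphi)$ then has polynomially bounded $L^2$ norm, and concludes from the square-completion identity of Proposition~\ref{3d-form-wick-square} with the remainder bounded in absolute value. The differences are cosmetic: the paper cuts off the output rather than the dyadic frequency of $\wick{W^2}$, and it gets $\norm{\varphi}_{L^2}\geq 1$ from the zero mean of $\proj_{>N}$ instead of your perturbative $\geq 1/2$.

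The ``main obstacle'' you flag does not arise. Since $w=h+12\lambda Z_\varphi$ is defined with the full paraproduct, the identity $\mathcal A(W)(\varphi)=\norm{w}_{L^2}^2+R(W,\varphi)-(1-m^2)$ holds for every unit $\varphi$, and the low-frequency piece is already contained in $w$. So no extra term, and no fallback with deterministic cutoffs, is needed.
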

\begin{proof}
Let $\varphi$ solve
\begin{equation}\label{eq:competitor fixpoint}
\varphi = -12\lambda (1 - \Laplace_\delta)^{-1} \proj_{>N} (\wick{W^2}\succ \varphi) + 1.
\end{equation}
This equation can be solved by fixed point in~$\Holder^{1 -  \varepsilon}(\Lambda)$
whenever $N \gg \norm{\wick{W^2}}^p_{\Holder^{- 1 -  \varepsilon}}$ for $p$ sufficiently large,
and one gets
\begin{equation}
\norm{\varphi}_{\Holder^{1 - \varepsilon}}
\leq C (1 + \norm{\wick{W^2}}^p_{\Holder^{- 1 - \varepsilon}}).
\end{equation}
Then we have $\norm{\varphi}_{L^2} \geq 1$, since by Jensen's
inequality and the fact that the high-frequency projection has zero mean,
\begin{equation}
\norm{\varphi}_{L^2}
\geq \delta^3 \sum_{x \in \Lambda} \varphi(x)
= 12\lambda \delta^3 \underbrace{\sum_{x \in \Lambda}
        (1 - \Laplace_\delta)^{-1} \proj_{>N} (\wick{W^2}\succ \varphi)(x)}_{=0}
    {} + \delta^3 \sum_{x \in \Lambda} 1.
\end{equation}

Set $h=(1-\Laplace_\delta)^{1/2}\varphi$ and, as in the proof of Lemma~\ref{3d-form-wick-square},
\begin{equation}
w = h+12\lambda(1-\Laplace_\delta)^{-1/2} (\wick{W^2}\succ\varphi).
\end{equation}
Applying $(1-\Laplace_\delta)^{1/2}$ to the fixed-point equation~\eqref{eq:competitor fixpoint} gives
\begin{equation}
w = 1 + 12\lambda(1-\Laplace_\delta)^{-1/2} \proj_{\leq N}(\wick{W^2}\succ\varphi).
\end{equation}
Hence the low-frequency multiplier estimate and the paraproduct estimate
yield
\begin{equation}
\norm{w}_{L^2}
\leq 1+C\lambda N^{2\varepsilon}
    \norm{\wick{W^2}}_{\Holder^{-1-\varepsilon}} \norm{\varphi}_{L^2}
\leq C\left(1+\norm{\wick{W^2}}_{\Holder^{-1-\varepsilon}}\right)^p,
\end{equation}
where we increase~$p$ if necessary.
Finally, the identity~\eqref{eq:anderson-proper} and its remainder estimate give
\begin{equation}
\abs{\mathcal A(W)(\varphi)}
\leq \norm{w}_{L^2}^2 + \abs{R(W,\varphi)} + 1
\leq C\left(1+\norm{W^{2\diamond2}}_{\Holder^{-\varepsilon}}
    + \norm{\wick{W^2}}_{\Holder^{-1-\varepsilon}}\right)^p.
\end{equation}
This upper bound is preserved when we normalize $\varphi$ to have unit $L^2$~norm.
\end{proof}

\begin{proposition}[Bound on ground state]
\label{bound-3d-ground}
Let $\psi^{W + I}$ minimize $\mathcal{A} (W + W^{[3]}+ K) (\psi)$
over $\{\psi \colon \norm{\psi}_{L^2}=1\}$.
Then, for every $\eta \in (0, 1)$ and $\lambda>0$,
\[
\norm{\psi^{W + I}}^2_{H^{1 - \varepsilon}}
\leq \eta \mathcal{E}(K) + \mathcal{B}^2_{\eta} (W, \lambda, m),
\]
where
\[
\begin{split}
\mathcal B^2_{\eta}(W,\lambda,m)
&= \frac{C}{\eta} \Bigl( 1
    + \norm{\wick{W^2}}_{\Holder^{-1-\varepsilon}}
    + \norm{W^{2\diamond2}}_{\Holder^{-\varepsilon}}\\
&\qquad\qquad + \norm{WW^{[3]}}_{\Holder^{-1/2-\varepsilon}}
    + \norm{W}_{\Holder^{-1/2-\varepsilon}}^2
    + \norm{W^{[3]}}_{L^2}^2 \Bigr)^p\\
&\qquad +\mathcal B^1(W,\lambda,m, 16/\eta)
\end{split}
\]
for $p$ sufficiently large.
\end{proposition}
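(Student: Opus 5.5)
The plan is to compare the minimal value of the quadratic form with its value at the competitor of Lemma~\ref{lem:competitor}, and to bound that minimal value from below with Lemma~\ref{lem:lower-bound}, choosing the parameter $\beta$ there large in terms of $\eta$.

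\textbf{Upper bound.} Write $\psi=\psi^{W+I}$ and $u=W+W^{[3]}+K$. By minimality, $\mathcal A(u)(\psi)\leq\mathcal A(u)(\varphi)$ for the competitor $\varphi$ of Lemma~\ref{lem:competitor}. Expanding as in the proof of Lemma~\ref{lem:lower-bound},
\[
\mathcal A(u)(\varphi)=\mathcal A(W)(\varphi)+24\lambda\langle WW^{[3]},\varphi^2\rangle+24\lambda\langle WK,\varphi^2\rangle+12\lambda\langle (W^{[3]}+K)^2,\varphi^2\rangle .
\]
Since $\norm{\varphi}_{\Holder^{1-\varepsilon}}\lesssim(1+\norm{\wick{W^2}}_{\Holder^{-1-\varepsilon}})^p$, every term pairs $\varphi^2$, bounded in $L^\infty$ and in $B^{1/2+\varepsilon}_{1,1}$, against a distribution of regularity at least $-1/2-\varepsilon$. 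This gives:
\begin{itemize}
\item $\abs{\mathcal A(W)(\varphi)}$ is controlled by Lemma~\ref{lem:competitor};
\item $\abs{\langle WW^{[3]},\varphi^2\rangle}\lesssim\norm{WW^{[3]}}_{\Holder^{-1/2-\varepsilon}}\norm{\varphi}_{\Holder^{1-\varepsilon}}^2$;
\item $\abs{\langle WK,\varphi^2\rangle}\lesssim \norm{W}_{\Holder^{-1/2-\varepsilon}}\norm{K}_{H^{1-\varepsilon}}\norm{\varphi}_{\Holder^{1-\varepsilon}}^2$;
\item the quadratic term is at most $C\norm{\varphi}_\infty^2(\norm{W^{[3]}}_{L^2}^2+\norm{K}_{L^2}^2)$.
\end{itemize}
In the last two bullets I apply Young's inequality so that the $K$-dependence becomes $\theta\,\mathcal E(K)$ with $\theta\simeq\eta/\beta$, plus a polynomial in the stochastic norms. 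These are exactly the norms listed in $\mathcal B^2_\eta$. The price is a $\theta^{-1}$ factor, which is why $C/\eta$ appears there.

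\textbf{Lower bound and conclusion.} Apply Lemma~\ref{lem:lower-bound} with $\beta=16/\eta$:
\[
\beta\mathcal A(u)(\psi)\geq\beta\bigl((1-\lambda/2)\norm{\psi}_{H^{1-\varepsilon}}^2-1\bigr)-\tfrac18\norm{\psi}_{H^{1-\varepsilon}}^2-\mathcal B^1(W,\lambda,m,16/\eta)-\mathcal E(K).
\]
Combining with the upper bound multiplied by $\beta$, and using $1-\lambda/2\geq 1/2$ (since $\lambda<1$) and $\beta/2-1/8\geq \beta/4$, gives
\[
\tfrac{\beta}{4}\norm{\psi}_{H^{1-\varepsilon}}^2\leq \beta+\beta\,\mathcal A(u)(\varphi)+\mathcal B^1+\mathcal E(K).
\]
Dividing by $\beta/4=4/\eta$ turns the $\mathcal E(K)$ term into $(\eta/4)\mathcal E(K)$. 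The $\mathcal B^1/\beta$ term is harmless since $\beta\geq1$. The $\beta\,\mathcal A(u)(\varphi)$ term becomes $4\mathcal A(u)(\varphi)$, whose $K$-part I have already arranged to be at most $(\eta/2)\mathcal E(K)$ plus stochastic terms.

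\textbf{Main obstacle.} The delicate step is the $WK$ cross term in the upper bound. Making its $K$-part as small as $\eta\,\mathcal E(K)$ requires $\mathcal E(K)$ to dominate $\norm{K}_{H^{1-\varepsilon}}^2$, but $\mathcal E$ only contains $m^2\norm{K}^2_{H^{1-\varepsilon}}$ and $\lambda\norm{K}_{L^4}^4$. This introduces factors of $m^{-1}$ and $\lambda^{-1}$ into the stochastic constant, and I would absorb them into $C$ and $p$. If that is not acceptable, I would instead estimate $\langle WK,\varphi^2\rangle$ using only $\norm{K}_{L^4}$, at the cost of a weaker $\varphi$ norm. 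That would mirror the $\lambda^{3/4}$ bookkeeping in Lemma~\ref{lem:lower-bound}, and I expect the powers of $\lambda$ to balance in the same way there.
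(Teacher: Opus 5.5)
This is the paper's argument: the same competitor from Lemma~\ref{lem:competitor}, the same expansion and duality bounds, Lemma~\ref{lem:lower-bound} with exactly $\beta=16/\eta$, and Young's inequality to push the $K$-terms into $\eta\,\mathcal E(K)$.

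The obstacle you flag disappears once you keep the prefactors $24\lambda$ and $12\lambda$ from the expansion. The paper's $\mathcal S_{W,K}$ also discards these prefactors, so its final Young step tacitly needs the same observation. Using the standing assumption $\lambda<m^2\leq1$ and unit volume, you get $24\lambda\norm{W}_{\Holder^{-1/2-\varepsilon}}\norm{K}_{H^{1-\varepsilon}}\norm{\varphi}_{\Holder^{1-\varepsilon}}^2\leq\theta m^2\norm{K}_{H^{1-\varepsilon}}^2+C\theta^{-1}\lambda\norm{W}_{\Holder^{-1/2-\varepsilon}}^2\norm{\varphi}_{\Holder^{1-\varepsilon}}^4$, because $\lambda^2/m^2\leq\lambda$. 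Similarly, $24\lambda\norm{\varphi}_{L^\infty}^2\norm{K}_{L^2}^2\leq\theta\lambda\norm{K}_{L^4}^4+C\theta^{-1}\lambda\norm{\varphi}_{L^\infty}^4$.

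So with $\theta$ of order $\eta$ you get $C/\eta$ with $C$ free of $m$ and $\lambda$. Do not take $\theta\simeq\eta/\beta$, which would cost $\eta^{-2}$. You should also not absorb negative powers of $m$ or $\lambda$ into $C$: the uniform part of Theorem~\ref{thm:3 dimensions} needs exactly this independence.
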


\begin{remark}
The product $W W^{[3]}$ does not possess enough regularity to be well-defined
in a deterministic way in the $\delta \rightarrow 0$ limit.
However one can show, as for example in~\cite[Section~4]{gubinelli_pde_2021}, that
$\sup_{\delta} \E \norm{W W^{[3]}}^p_{\Holder^{- 1 / 2 - \varepsilon}} < \infty$
for any $p < \infty$. 
\end{remark}

\begin{proof}[Proof of Proposition~\ref{bound-3d-ground}]
We begin by deriving an upper bound on $\mathcal{A} (W + W^{[3]} + K) (\psi)$.
If we plug $\varphi$ from Lemma~\ref{lem:competitor} into the functional, we get
\begin{equation}
\begin{split}
\mathcal{A} (W + W^{[3]} + K) (\varphi)
&= \mathcal{A}(W) (\varphi)
    + 24\lambda\delta^3 \sum_{x \in \Lambda} W W^{[3]} \varphi^2\\
&\quad + 24\lambda\delta^3 \sum_{x \in \Lambda} W K \varphi^2
    + 12\lambda\delta^3 \sum_{x \in \Lambda} (W^{[3]} + K)^2 \varphi^2.
\end{split}
\end{equation}
By Lemma~\ref{lem:competitor},
\begin{equation}
\mathcal A(W)(\varphi)
\leq C\left(1+\norm{W^{2\diamond2}}_{\Holder^{-\varepsilon}}
    + \norm{\wick{W^2}}_{\Holder^{-1-\varepsilon}}\right)^p,
\end{equation}
and the same upper bound holds for $\norm{\varphi}_{\Holder^{1-\varepsilon}}$.
The remaining terms satisfy
\begin{equation}
\begin{split}
\bigabs{\langle WW^{[3]},\varphi^2\rangle}
    &\lesssim \norm{WW^{[3]}}_{\Holder^{-1/2-\varepsilon}}
    \norm{\varphi}_{\Holder^{1-\varepsilon}}^2,\\
\abs{\langle WK,\varphi^2\rangle}
    &\lesssim \norm{K}_{H^{1/2+\varepsilon}}
    \norm{W}_{\Holder^{-1/2-\varepsilon}}
    \norm{\varphi}_{\Holder^{1-\varepsilon}}^2,\\
\norm{(W^{[3]}+K)\varphi}_{L^2}^2
    &\lesssim \norm{\varphi}_{\Holder^{1-\varepsilon}}^2
    \left(\norm{W^{[3]}}_{L^2}^2 + \norm{K}_{L^2}^2\right).
\end{split}
\end{equation}
Define
\begin{equation}
\mathcal R_W
= 1 + \norm{\wick{W^2}}_{\Holder^{-1-\varepsilon}}
    + \norm{W^{2\diamond2}}_{\Holder^{-\varepsilon}}.
\end{equation}
Also set
\begin{equation}
\mathcal S_{W,K}
= 1 + \norm{WW^{[3]}}_{\Holder^{-1/2-\varepsilon}}
    + \norm{W^{[3]}}_{L^2}^2
    + \norm{K}_{L^2}^2
    + \norm{K}_{H^{1/2+\varepsilon}} \norm{W}_{\Holder^{-1/2-\varepsilon}}.
\end{equation}
Combining these estimates gives
\begin{equation}
\mathcal A(W+W^{[3]}+K)(\varphi)
\leq \mathcal R_W^{2p} \mathcal S_{W,K}.
\end{equation}
We now use Lemma~\ref{lem:lower-bound}, choosing $\beta_0 = 16/\eta$.
The lemma gives
\begin{equation}
\begin{split}
&\mathrel{\phantom{=}} \mathcal{A} (W + W^{[3]} + K) (\psi^{W + I})\\
&\geq \left(\frac 1 4 \norm{\psi}^2_{H^{1 - \varepsilon}} - 1\right)
- \frac{1}{8\beta_0} \norm{\psi}^2_{H^{1 - \varepsilon}}
- \beta_0^{-1} \mathcal{B}^1 (W, \lambda, m, \beta_0) - \beta_0^{-1}\mathcal{E} (K)\\
&\geq \frac{1}{8}\norm{\psi^{W + I}}^2_{H^{1 - \varepsilon}}
    - \beta_0^{-1}\mathcal{B}^1 (W, \lambda, m,\beta_0)
    - \beta_0^{-1}\mathcal{E} (K)
    - 1.
\end{split}
\end{equation}
Thus we obtain
\begin{equation}
\frac 1 8 \norm{\psi^{W+I}}_{H^{1-\varepsilon}}^2
\leq \mathcal R_W^{2p} \mathcal S_{W,K}
    + \frac{\eta}{16} \mathcal{B}^1 (W, \lambda, m, 16/\eta)
    + \frac{\eta}{16} \mathcal{E} (K)
    + 1.
\end{equation}
Finally, we can use Young's inequality to estimate $\mathcal R_W^{2p} \mathcal S_{W,K}$
with $\tfrac C \eta \mathcal R_W^{4p} + \tfrac{\eta}{16} \mathcal E(K)$.
This finishes the proof.
\end{proof}

We now have all the ingredients to establish Theorem~\ref{thm:3 dimensions}.

\begin{proof}[Proof of Theorem~\ref{thm:3 dimensions}]
Let $\psi^u$ be a normalized ground state of $\mathcal A(u)$. Since
\begin{equation}
Q(u) = -\inf_{\norm{\psi}_{L^2}=1} \mathcal A(u)(\psi),
\end{equation}
the Feynman--Hellmann formula gives, keeping in mind the decomposition $u=W+W^{[3]}+K$, that
\begin{equation}\label{eq:Q-gradient}
\begin{split}
\langle\nabla Q(u),K\rangle
&= -24\lambda\delta^3 \sum_{x\in\Lambda} u(x)K(x)(\psi^u(x))^2.\\
&= -24\lambda\delta^3 \sum_{x\in\Lambda} (W+W^{[3]}+K)(x)K(x)(\psi^u(x))^2\\
&\geq -24\lambda\delta^3 \sum_{x\in\Lambda} (W+W^{[3]})(x)K(x)(\psi^u(x))^2.
\end{split}
\end{equation}
We now apply the same reasoning as in the proof of Lemma~\ref{lem:lower-bound},
more precisely equations~\eqref{eq:estiamte-3d-KW-1},
\eqref{eq:estiamte-3d-KW-2}, and~\eqref{eq:estiamte-3d-KW-3},
as well as those inequalities with $W$ replaced by $W^{[3]}$ to get 
\begin{equation}\label{eq:tilt-pairing}
\begin{split}
&\mathrel{\phantom{=}} q \langle\nabla Q(u),K\rangle  \\ 
&\geq  -C_\eta (\lambda^{7/8}q m^{-1/2} (\norm{W}_{\Holder^{-1/2-\varepsilon}}+\norm{W^{[3]}}_{\Holder^{-1/2-\varepsilon}}))^{16/(1-8\varepsilon_5)}
- \eta \mathcal{E}(K)
- \eta \norm{\psi}_{H^{1-\varepsilon}}^2\\
&\geq -\eta \mathcal{E}(K)- \mathcal{B}^2_{\eta}(W,\lambda,m)
    -C_\eta [\lambda^{7/8}q m^{-1/2} (\norm{W}_{\Holder^{-1/2-\varepsilon}}+\norm{W^{[3]}}_{\Holder^{-1/2-\varepsilon}})]^{16/(1-\varepsilon)},
\end{split}
\end{equation}
where in the last line we have applied Proposition~\ref{bound-3d-ground}.
We set
\begin{equation}
\mathcal{B}^4_{\eta}(W,\lambda,m,q)
\coloneqq C_\eta [\lambda^{7/8}\beta m^{-1/2} (\norm{W}_{\Holder^{-1/2-\varepsilon}}
    +\norm{W^{[3]}}_{\Holder^{-1/2-\varepsilon}})]^{16/(1-8\varepsilon)}.
\end{equation}

We now derive the equation satisfied by $K$, which  follows from the decomposition $K=u-W-W^{[3]}$. Writing
$\mathrm L_\delta=\partial_t+m^2-\Laplace_\delta$ and
$\proj_0 = \mathrm{Id} - \proj_{\geq1}$, we obtain
\begin{equation}
\mathrm L_\delta K
    + 4\lambda\bigl(\wick{u^3}-\wick{W^3}\bigr)
    + 2\lambda^2 a_\delta u
    - q\nabla Q(u)
= \sqrt2\, \proj_0\xi.
\label{eq:K-equation}
\end{equation}
Using Itô's formula we obtain
\begin{equation}
0 = \frac12 \frac{\diff}{\diff t} \E \norm{K}_{L^2}^2
= \E\langle F_K,K\rangle_{L^2} - \operatorname{Tr}_{L^2}(P_0),
\end{equation}
where
\begin{equation}
F_K = (m^2-\Laplace_\delta)K
    + 4\lambda\bigl(\wick{u^3}-\wick{W^3}\bigr)
    + 2\lambda^2a_\delta u-q\nabla Q(u).
\end{equation}
Since $\proj_0$ projects onto the normalized constant mode,
$\operatorname{Tr}_{L^2}(P_0)=1$.
Proposition~\ref{prop-gub-hov} and~\eqref{eq:tilt-pairing} therefore give
\begin{equation}
(1-\eta)\E\mathcal E(K)
\leq 1+\E\!\left[ \mathcal B^3(W,\lambda,m,q)
    + \mathcal B^2_{\eta}(W,\lambda,m)
    + \mathcal B^4_{\eta}(W,\lambda,m)\right].
\label{inequality-energy}
\end{equation}
Finally observe that if $\norm{\psi}_{L^2}=1$,
we have $\norm{\psi}^2_{H^{1-\varepsilon}}\geq 1$, so
\begin{equation}
q \left((1-\lambda/2) \norm{\psi}^2_{H^{1-\varepsilon}}-1\right)
\geq -\frac{\lambda q}{2}.
\end{equation}
Thus we obtain, applying Lemma~\ref{lem:lower-bound},
\begin{equation}
\begin{split}
q\int Q(u)\,\diff\nu^{qQ}(u)
&= -q\,\E\inf_{\|\psi\|_2=1}\mathcal A
       (W+W^{[3]}+K)(\psi)\\
&\leq C\, \E\!\left[1 +\lambda q/2 +\mathcal E(K)
    + \mathcal B^1(W,\lambda,m,q)\right],
\end{split}
\end{equation}
where the inequality is the negative of the lower bound in Lemma~\ref{lem:lower-bound}.
Using the energy estimate~\eqref{inequality-energy} and the asserted uniform moments of~$\mathcal B^i$,
this is the required \emph{upper} bound on $q\int Q\,d\nu^{qQ}$.
It can then be fed into Lemma~\ref{lem:hairer-steele} to bound $Z^{qQ}$, and obtain $\mathcal{R}$.

Finally by examining each $\mathcal B^i$, we observe that $\mathcal R$ indeed satisfies
\begin{equation}
\sup_{q, m, \lambda}  \mathcal{R}(q,m,\lambda) < \infty
\end{equation}
in the regime $\lambda\lesssim m^{3}$, $m\leq 1$, $q\lesssim m^{-2}$.
\end{proof}

\appendix
\section{Besov spaces on the lattice}\label{sec:besov}

We summarize here our conventions and some basic results on
Fourier analysis and Besov spaces on lattices.
For the most part we follow~\cite{martin_paracontrolled_2019,mourrat_convergence_2017}
but use somewhat different notation.

Given the periodic lattice $\Lambda \coloneqq \delta \Z^d \cap [0, L)^d$,
the Fourier transform is defined on the reciprocal lattice
\begin{equation}
\hat\Lambda \coloneqq \frac{2\pi}{L} \Z^d \cap \Bigl[-\frac\pi\delta, \frac\pi\delta \Bigr)^d
\end{equation}
via the formula
\begin{equation}
\hat f(k) = \left( \frac \delta L \right)^{\! d} \sum_{x \in \Lambda}
    \exp(-i k \cdot x) f(x),
\quad k \in \hat\Lambda.
\end{equation}
Conversely the inverse Fourier transform is defined as
\begin{equation}
f(x) = \sum_{k \in \hat\Lambda}
    \exp(i k \cdot x) \hat f(k),
\quad x \in \Lambda.
\end{equation}
In this convention Parseval's identity reads
\begin{equation}\label{eq:parseval}
\delta^d \sum_{x \in \Lambda} \abs{f(x)}^2
= L^d \sum_{k \in \hat\Lambda} \abs{\hat f(k)}^2.
\end{equation}

Let us then define Besov spaces.
In order to get uniform-in-volume bounds in Theorem~\ref{thm:integrability},
we equip subsets of $\R^2$ with a weight.
Even though many classes of weights would be possible,
we only consider polynomial weights of form
\begin{equation}
\rho(x) = (1 + \abs x^2)^{-\alpha/2}
\end{equation}
for $\alpha \geq 0$.
In particular $\alpha = 0$ gives a flat weight.

\begin{definition}[Besov spaces]
Let $\{ \Delta_j \}_{j = -1, \ldots, j_\Lambda}$ be the Littlewood--Paley blocks
as in~\cite[Definition~2.17]{martin_paracontrolled_2019}.
The weighted Besov space $B^s_{p,r}(\rho)$ on $\Lambda$ is defined
as the completion of~$C^\infty_c(\Lambda)$ with respect to the norm
\[
\norm{f}_{B^s_{p,r}(\rho)}
\coloneqq \bignorm{2^{js} \norm{\rho \Delta_j f}_{L^p(\Lambda)}}_{\ell^r(j)}.
\]
On $\Tor^d$ and $\R^d$ the definition is analogous.
We abbreviate $H^s \coloneqq B^s_{2,2}$ and $\Holder^s \coloneqq B^s_{\infty, \infty}$.
\end{definition}

\begin{theorem}\label{thm:besov properties}
Let the ambient space be either~$\Lambda_\delta$ or~$\Tor^2$,
let $\rho_1$ and $\rho_2$ be weights,
let $p = 1/p_1 + 1/p_2$ and $r = 1/r_1 + 1/r_2$ be Hölder conjugate pairs in~$[1,\infty]$,
and let $s \in \R$.
Then the following hold:
\begin{enumerate}
\renewcommand{\labelenumi}{(\alph{enumi})}
\item Multiplication: $\norm{fg}_{B^s_{p,r}(\rho_1 \rho_2)}
\lesssim \norm{f}_{B^s_{p_1, r}(\rho_1)} \norm{g}_{B^t_{p_2, r}(\rho_2)}$
    for $t \geq s$ such that $s + t > 0$.
\item Duality: $\norm{\rho_1 \rho_2 fg}_{L^1}
        \lesssim \norm{f}_{B^{-s}_{p_1, r_1}(\rho_1)} \norm{g}_{B^s_{p_2, r_2}(\rho_2)}$,
    where we set $p = r = 1$.
\item Interpolation: let $\theta \in (0,1)$ and
    \[
        s = \theta s_1 + (1-\theta) s_2, \quad
        \frac 1 p = \frac{\theta}{p_1} + \frac{1-\theta}{p_2}, \quad
        \frac 1 r = \frac{\theta}{r_1} + \frac{1-\theta}{r_2}.
    \]
    Then $\norm{f}_{B^s_{p,r}(\rho_1^{\theta} \rho_2^{1-\theta})}
        \leq \norm{f}_{B^{s_1}_{p_1, r_1}(\rho_1)}^\theta
            \norm{f}_{B^{s_2}_{p_2, r_2}(\rho_2)}^{1-\theta}$.
\item Besov embedding: $\norm{f}_{B^s_{p,r}(\rho_1)} \lesssim \norm{f}_{B^t_{q,r}(\rho_1)}$
    where $q < p$ and $t \geq s + d(1/q - 1/p)$.
\item $L^p$ embedding: $\norm{f}_{B^0_{p,\infty}(\rho_1)} \lesssim \norm{f}_{L^p(\rho_1)}$.
\item Second parameter tradeoff:
    $\norm{f}_{B^s_{p,r}(\rho_1)} \lesssim \norm{f}_{B^{s+\varepsilon}_{p,\infty}(\rho_1)}$
    for all $1 \leq r \leq \infty$ and $\varepsilon > 0$.
\end{enumerate}
The implicit constants are uniform in $\delta$ (small enough).
\end{theorem}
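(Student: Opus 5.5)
The statement is Theorem~\ref{thm:besov properties}, a list of standard facts on weighted lattice Besov spaces. The plan is to reduce each item to the corresponding continuum result for polynomial weights, as in~\cite{martin_paracontrolled_2019,mourrat_convergence_2017}, checking only that the constants do not depend on~$\delta$. The basic tools are two. First, the lattice Littlewood--Paley blocks $\Delta_j$ have convolution kernels $K_j$ with $\norm{K_j}_{L^1(\Lambda)}\lesssim1$ and rapid decay at scale~$2^{-j}$, uniformly in $\delta$ and $j\leq j_\Lambda$; the smooth cutoffs near the Brillouin zone boundary are handled via periodization as in~\cite[Definition~2.17]{martin_paracontrolled_2019}. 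Second, polynomial weights satisfy $\rho(x)\lesssim\rho(y)(1+\abs{x-y})^{\alpha}$. Together these give a weighted Young inequality $\norm{\rho\,(K_j*f)}_{L^p}\lesssim\norm{\rho f}_{L^p}$, and likewise Bernstein estimates $\norm{\rho\Delta_jf}_{L^p}\lesssim2^{jd(1/q-1/p)}\norm{\rho\Delta_jf}_{L^q}$ for $q<p$, uniformly in~$\delta$.

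The items then follow in this order.
\begin{enumerate}
\renewcommand{\labelenumi}{(\alph{enumi})}
\item[(e)] Each block is a uniformly $L^1$-bounded convolution, so the weighted Young inequality gives $\sup_j\norm{\rho\Delta_jf}_{L^p}\lesssim\norm{\rho f}_{L^p}$.
\item[(f)] Sum the geometric factor $2^{-j\varepsilon}$ in $\ell^r$.
\item[(d)] This is the weighted Bernstein inequality applied blockwise.
\item[(c)] Apply Hölder's inequality in $x$ to $\rho_1^\theta\rho_2^{1-\theta}\abs{\Delta_jf}$, then again in $\ell^r(j)$.
\item[(b)] Write $\int\rho_1\rho_2fg=\sum_{\abs{i-j}\leq1}\int\rho_1\Delta_if\,\rho_2\Delta_jg$, using almost-orthogonality of the blocks (blocks with $\abs{i-j}>1$ pair to zero). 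Then use Hölder in $x$ with exponents $p_1,p_2$ and in $j$ with exponents $r_1,r_2$.
\item[(a)] Use the Bony decomposition $fg=f\prec g+f\circ g+f\succ g$. Each paraproduct $S_{j-1}f\,\Delta_jg$ has Fourier support in an annulus of size $2^j$, and the resonant term is supported in a ball of size $2^j$. Combining the weighted Young estimate with $\rho_1\rho_2$ splitting over the two factors gives the standard bounds. The paraproducts need only $t\geq s$. The resonant term needs $s+t>0$, in order to sum $\sum_{i\gtrsim j}2^{-i(s+t)}$.
\end{enumerate}

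The main obstacle is the lattice-specific aliasing. On $\Lambda$, the Fourier support of a product $\Delta_if\,\Delta_jg$ is only defined modulo the dual lattice, so frequencies near $\pi/\delta$ wrap around. This breaks the clean annulus and ball support arguments used in (a) and (b) at the top scales $j\approx j_\Lambda$.

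The way around this is to follow~\cite{martin_paracontrolled_2019}: define the paraproducts using the periodic partition of unity on $\hat\Lambda$. Then show that the wrapped contributions only couple blocks with $\abs{i-j_\Lambda}\leq C$. There are boundedly many such blocks, and at those scales $2^{j}\simeq\delta^{-1}$, so the relevant powers of $2^j$ differ by uniformly bounded factors. Hence the aliased terms obey the same estimates with constants uniform in~$\delta$. I would check this first for (b), where it reduces to Parseval~\eqref{eq:parseval}, and then transfer it to the resonant term in (a).
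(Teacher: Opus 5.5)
Your route is the same as the paper's. Its proof only cites the continuum results of \cite[Sections~2.7--2.9]{bahouri_fourier_2011} and says that discrete analogues of Young's inequality carry them to the lattice. Your weighted Young/Bernstein arguments for (a) and (c)--(f), and your reduction of aliasing to the boundedly many top blocks following \cite{martin_paracontrolled_2019}, are a reasonable way of filling that in.

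The step that fails is in (b). The almost-orthogonality $\delta^d\sum_x\Delta_if\,\Delta_jg=0$ for $\abs{i-j}>1$ holds for the \emph{unweighted} pairing, by Parseval~\eqref{eq:parseval}. In your sum $\int\rho_1\Delta_if\,\rho_2\Delta_jg$ the factor $\rho_1\rho_2$ sits inside the pairing, and multiplication by the weight destroys the Fourier localization, so the off-diagonal terms are nonzero. Nor can they be controlled with the only weight property you use, $\rho(x)\lesssim\rho(y)(1+\abs{x-y})^{\alpha}$. On $\Tor^2$ take $\rho_2=1$ and $\rho_1=1+\tfrac12\cos(Nx_1)$, with $N\in(2\pi/L)\Z$ large; $\rho_1$ satisfies that bound with constant $3$ for every $N$. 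Take also $g\equiv1$ and $f=\cos(Nx_1)$. Then $\int\rho_1fg=\tfrac14\abs{\Tor^2}$, whereas $\norm{f}_{B^{-s}_{p_1,r_1}(\rho_1)}\norm{g}_{B^{s}_{p_2,r_2}}\lesssim N^{-s}$ for $s>0$. So any proof of (b) must use derivative bounds on the weights.

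The missing ingredient is the regularity of polynomial weights: $\abs{\partial^\gamma\rho}\lesssim_\gamma\rho$, together with the analogous bound for lattice difference quotients, uniformly in $\delta$. With it you have two options.
\begin{itemize}
\item Prove the norm equivalence $\norm{\rho f}_{B^s_{p,r}}\simeq\norm{f}_{B^s_{p,r}(\rho)}$ by a commutator estimate for $[\Delta_j,\rho]$. Then write $\int\rho_1\rho_2fg=\int(\rho_1f)(\rho_2g)$ and apply your orthogonality argument to this flat pairing.
\item Keep the weights outside the blocks and show that the cross terms decay. For $j\geq i+2$, write $\Delta_j=\tilde\Delta_j\Delta_j$ and move the self-adjoint $\tilde\Delta_j$ onto $\rho_1\rho_2\Delta_if$. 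The smoothness of $\rho_1\rho_2$ then gives a factor $2^{-N(j-i)}$, and Schur's test in $(i,j)$ closes the sum.
\end{itemize}
Items (a) and (c)--(f) are unaffected: there the weight only multiplies $\Delta_k(\cdot)$ from outside, so the Peetre-type bound suffices.

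Two smaller points. First, (b) is only true with $\abs{\int\rho_1\rho_2fg}$ on the left. The literal $L^1$ norm already fails for flat weights with the same $f$ and $g$. You implicitly prove the pairing version, which is the one the paper uses, but you should say so. Second, in (a) the paraproduct $f\succ g$ needs $t>0$, not merely $t\geq s$, in order to bound $\norm{S_{j-1}g}_{L^{p_2}}$. This does follow from the hypotheses in every case.
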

\begin{proof}
The classical reference for these results is~\cite[Sections~2.7--2.9]{bahouri_fourier_2011}.
The lattice versions follow from discrete counterparts of the necessary tools
(like Young's convolution inequality).
\end{proof}

The Besov norm resembles the fractional Sobolev norm
\begin{equation}
\norm{f}_{W^{s,p}(\rho)} \coloneqq \norm{\rho (1 - \Laplace)^{s/2} f}_{L^p},
\quad s \in \R,\; 1 \leq p \leq \infty.
\end{equation}
The two norms are equivalent when $p = r = 2$ and the space is equipped with a flat weight:

\begin{lemma}[Equivalent $H^s$ norm]\label{thm:besov hs norm}
For all $\delta \geq 0$ and $f \in H^s(\Lambda_\delta)$ we have
\[
\norm{f}_{H^s(\Lambda_\delta)}^2
\simeq \norm{(1 - \Laplace_\delta)^{s/2} f}_{L^2(\Lambda_\delta)}^2
\simeq \sum_{k \in \hat\Lambda_\delta} (1 + \abs k^2)^s \abs{\hat f(k)}^2.
\]
\end{lemma}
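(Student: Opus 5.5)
The equivalence is a statement about Fourier multipliers on the finite group $\Lambda_\delta$, so I would reduce everything to Parseval's identity~\eqref{eq:parseval}. I would prove the second equivalence first, as an exact identity up to the factor $L^d$. The operator $(1-\Laplace_\delta)^{s/2}$ is a Fourier multiplier with symbol $(1+\lambda_\delta(k))^{s/2}$, where
\[
\lambda_\delta(k)=\sum_{j=1}^d \frac{4}{\delta^2}\sin^2\!\Bigl(\frac{\delta k_j}{2}\Bigr)
\]
is the symbol of $-\Laplace_\delta$. For $k\in\hat\Lambda_\delta$ we have $\abs{\delta k_j}\leq\pi$, and the elementary inequality $\frac{2}{\pi}\abs{y}\leq\abs{\sin y}\leq\abs y$ for $\abs y\leq\pi/2$ gives
\[
\frac{4}{\pi^2}\abs k^2\leq\lambda_\delta(k)\leq\abs k^2 .
\]
Hence $(1+\lambda_\delta(k))^{s}\simeq(1+\abs k^2)^s$ with constants depending only on $s$, and applying Parseval gives the second equivalence uniformly in $\delta$. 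I read the case $\delta=0$ as the continuum torus, where the symbol is exactly $\abs k^2$.

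For the first equivalence, I would use that the Littlewood--Paley blocks $\Delta_j$ of~\cite[Definition~2.17]{martin_paracontrolled_2019} are Fourier multipliers $\varrho_j(k)$ forming a partition of unity on $\hat\Lambda_\delta$. Each $\varrho_j$ is supported in an annulus $\abs k\sim 2^j$ (a ball for $j=-1$), and at most a bounded number of blocks overlap at any point. The top block $j_\Lambda$ is a truncated annulus near the boundary of the Brillouin zone, but it is still contained in $\abs k\sim 2^{j_\Lambda}\sim\delta^{-1}$. Applying Parseval to each block gives
\[
\norm{f}_{H^s}^2=L^d\sum_k\Bigl(\sum_j 2^{2js}\varrho_j(k)^2\Bigr)\abs{\hat f(k)}^2 .
\]
On the support of $\varrho_j$ we have $2^{2js}\simeq(1+\abs k^2)^s$. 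Since $\sum_j\varrho_j=1$ and at most a bounded number $N$ of terms are nonzero at each $k$, we get
\[
\frac1N\leq\sum_j\varrho_j(k)^2\leq 1 .
\]
Combining these gives the bracket $\simeq(1+\abs k^2)^s$, which proves the first equivalence.

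The only step needing care is uniformity in $\delta$. This concerns the overlap count, the annulus bounds, and the constants in the symbol comparison, especially for the boundary block, where periodization could in principle distort the annulus. I would handle it by taking the definition of the dyadic partition on the lattice at face value: it is built from fixed smooth functions restricted to the Brillouin zone, so the annulus bounds hold with $\delta$-independent constants.
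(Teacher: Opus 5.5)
Your proposal is correct and follows the same route as the paper, which simply invokes Plancherel's identity together with the boundedness and finite overlap of the Littlewood--Paley symbols. You also make explicit two points the paper leaves implicit: the symbol comparison $\frac{4}{\pi^2}\abs{k}^2\leq\lambda_\delta(k)\leq\abs{k}^2$ on the Brillouin zone, and the lower bound $\sum_j\varrho_j(k)^2\geq 1/N$; you are also right that, with the paper's Fourier normalization, the comparison with the Fourier sum carries a factor $L^d$.
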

\begin{proof}
The Fourier symbols of $\Delta_j$ are bounded and only finitely many of them overlap at any point.
Both equivalences then follow from Plancherel's identity.
\end{proof}

We also use the following results on the paraproduct decomposition.
For the definitions, we refer to~\cite{bahouri_fourier_2011,mourrat_global_2017}.

\begin{theorem}[Paraproducts]\label{thm:besov paraproducts}
The product of two smooth functions can be decomposed into two paraproducts and resonant product
\[
fg = f \prec g + f \succ g + f \odot g
\]
as in~\cite[Section~2.8]{bahouri_fourier_2011}.
We also denote $f \preceq g \coloneqq f \prec g + f \odot g$.

For any weights $\rho_1$, $\rho_2$,
and $r$, $p$, $p_1$, $p_2 \in [1, \infty]$ such that $1/p = 1/p_1 + 1/p_2$,
the paraproduct satisfies the estimate
\[
\norm{f \prec g}_{B^{\min(\alpha,0) + \beta}_{p,r}(\rho_1 \rho_2)}
\lesssim \norm{f}_{B^\alpha_{p_1, \infty}(\rho_1)}
    \norm{g}_{B^\beta_{p_2, r}(\rho_2)},
\quad \alpha, \beta \in \R,
\]
and the resonant product satisfies
\[
\norm{f \odot g}_{B^{\alpha + \beta}_{p,r}(\rho_1 \rho_2)}
\lesssim \norm{f}_{B^\alpha_{p_1, \infty}(\rho_1)}
    \norm{g}_{B^\beta_{p_2, r}(\rho_2)},
\quad \alpha + \beta > 0.
\]
\end{theorem}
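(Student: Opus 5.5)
The statement to prove is the paraproduct estimate of Theorem~\ref{thm:besov paraproducts}. I would follow the classical Bony argument of \cite[Section~2.8]{bahouri_fourier_2011}, transplanted to the lattice with weights. The plan is to work block by block in the Littlewood--Paley decomposition $\Delta_j$. I write
\[
f \prec g = \sum_{j} S_{j-1} f \, \Delta_j g, \qquad S_{j-1} = \sum_{i \le j-2} \Delta_i ,
\]
and
\[
f \odot g = \sum_{\abs{i-j}\le 1} \Delta_i f \, \Delta_j g .
\]
The first structural fact to record is spectral localization. On the lattice, the Fourier support of $S_{j-1} f\,\Delta_j g$ lies in an annulus of size $2^j$, and that of $\Delta_i f\,\Delta_j g$ with $\abs{i-j}\le1$ lies in a ball of size $2^j$. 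Both hold modulo aliasing on $\hat\Lambda$, which I would handle as in \cite{martin_paracontrolled_2019}: use their periodized blocks, so that sums of frequencies are taken modulo the dual lattice and only finitely many top blocks near $j_\Lambda$ are affected. Consequently $\Delta_k(f\prec g)$ only sees the terms with $\abs{j-k}\le N_0$, and $\Delta_k(f\odot g)$ only sees the terms with $j\ge k-N_0$.

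The second ingredient is a weighted Young/Bernstein inequality:
\[
\norm{\rho\,\Delta_k F}_{L^p} \lesssim \norm{\rho F}_{L^p},
\]
uniformly in $k$ and $\delta$. This holds because the kernels of $\Delta_k$ decay faster than any polynomial at scale $2^{-k}$ and, for polynomial weights, $\rho(x)\lesssim \rho(y)(1+\abs{x-y})^{\alpha}$. Combined with Hölder for the product weight $\rho_1\rho_2$, this gives
\[
\norm{\rho_1\rho_2\, \Delta_k(S_{j-1}f\,\Delta_j g)}_{L^p} \lesssim \norm{\rho_1 S_{j-1} f}_{L^{p_1}}\, \norm{\rho_2 \Delta_j g}_{L^{p_2}} .
\]
I then need a bound on the low-frequency part. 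I will estimate
\[
\norm{\rho_1 S_{j-1}f}_{L^{p_1}} \le \sum_{i\le j-2} 2^{-i\alpha}\,\norm{f}_{B^\alpha_{p_1,\infty}(\rho_1)} \lesssim c_j \norm{f}_{B^\alpha_{p_1,\infty}(\rho_1)} ,
\]
where $c_j = 2^{-j\alpha}$ if $\alpha<0$ and $c_j = j$ if $\alpha=0$. For $\alpha>0$ I use the bound $\lesssim 1$ only when $\min(\alpha,0)=0$ and $p_1=\infty$; otherwise I fall back to the $L^{p_1}$-boundedness of $S_{j-1}$. The logarithm at $\alpha=0$ is the standard caveat, and I would absorb it via item (f) of Theorem~\ref{thm:besov properties}. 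Multiplying by $2^{k(\min(\alpha,0)+\beta)}$, using $k\simeq j$, and taking $\ell^r$ norms gives the paraproduct bound.

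For the resonant term, the same block estimate gives
\[
2^{k(\alpha+\beta)}\norm{\rho_1\rho_2\Delta_k(f\odot g)}_{L^p} \lesssim \sum_{j\ge k-N_0} 2^{(k-j)(\alpha+\beta)}\; 2^{j\alpha}\norm{\rho_1\Delta_j f}_{L^{p_1}}\; 2^{j\beta}\norm{\rho_2\Delta_{j'} g}_{L^{p_2}} .
\]
Since $\alpha+\beta>0$, the kernel $2^{(k-j)(\alpha+\beta)}\I_{j\ge k-N_0}$ is summable. Young's inequality for sequences in $\ell^r$ then yields the resonant estimate.

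The step I expect to be the main obstacle is uniformity in $\delta$. This concerns both the aliasing at the top scales $j\approx j_\Lambda$ and the weighted kernel bound for the lattice blocks. My first attempt would be to import the discrete kernel estimates of \cite[Section~2]{martin_paracontrolled_2019}. These give $\sum_x\delta^d(1+2^j\abs{x})^{M}\abs{K_j(x)}\lesssim 1$ uniformly, and the top-scale interaction terms are finitely many and can be bounded crudely by the same estimate.
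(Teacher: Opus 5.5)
This is the same route as the paper. Its proof just invokes \cite[Theorem~3.1]{mourrat_global_2017} ``adapted to the lattice'', i.e.\ the weighted Bony argument block by block, made uniform in $\delta$ by discrete kernel bounds and periodized blocks as in \cite{martin_paracontrolled_2019}. Your treatment of aliasing is sound: for $i\le j-2$, a sum of frequencies from the $i$-th and $j$-th blocks, reduced modulo the dual lattice, stays at distance $\gtrsim 2^j$ from it. So the localization $\abs{j-k}\le N_0$ survives, and only the top blocks are affected.

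The step that does not work is the endpoint $\alpha=0$.
\begin{itemize}
\item Item~(f) of Theorem~\ref{thm:besov properties}, or any similar device, only trades an $\varepsilon$ of regularity. It yields $f\prec g\in B^{\beta-\varepsilon}_{p,r}(\rho_1\rho_2)$, not $B^{\beta}_{p,r}(\rho_1\rho_2)$, so the factor $c_j\sim j$ cannot be absorbed.
\item In fact the displayed bound with $\norm{f}_{B^0_{p_1,\infty}(\rho_1)}$ on the right is false at $\alpha=0$. Take $f=\sum_{i\geq0}\Psi(2^i\cdot)$ with $\hat\Psi\geq0$ supported in an annulus. Then $f\in B^0_{\infty,\infty}$, but $S_{j-1}f\gtrsim j$ on $B(0,C2^{-j})$ for $j$ large.
\item Let $g$ be a single wave packet at frequency $\simeq 2^j$ concentrated on that ball. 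Then $f\prec g$ is essentially $S_{j-1}f\,g$, so the logarithmic loss is genuine. On the lattice it is a factor of order $\log(1/\delta)$.
\end{itemize}
The correct endpoint has $\norm{\rho_1 f}_{L^{p_1}}$ in place of $\norm{f}_{B^0_{p_1,\infty}(\rho_1)}$. That is the form the paper actually uses, e.g.\ for the opposite paraproduct in Lemma~\ref{lem:estimate-sin-product}, where $\cos(\beta I)$ is measured in $L^\infty$. Your argument gives this form directly, since your weighted Young bound yields $\norm{\rho_1S_{j-1}f}_{L^{p_1}}\lesssim\norm{\rho_1 f}_{L^{p_1}}$ uniformly in $j$. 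For $\alpha>0$, by contrast, you do not need to restrict to $p_1=\infty$: $\sum_i 2^{-i\alpha}<\infty$ for every $p_1$.

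Finally, the weights are there to make the bounds uniform in $L$. So the comparison $\rho(x)\lesssim\rho(y)(1+\abs{x-y})^{a}$ for $\rho=(1+\abs{x}^2)^{-a/2}$ must be used with the periodic distance, i.e.\ with $\rho$ defined through the distance to the origin on $\Tor_L^d$. With the Euclidean $\abs{x}$ on the fundamental domain $[0,L)^d$, it fails across the seam by a factor of order $L^{a}$.
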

\begin{proof}
\cite[Theorem~3.1]{mourrat_global_2017},
again adapted to the lattice.
\end{proof}

\section{Wick ordering and stochastic estimates}\label{sec:wick}

In this appendix we collect the basic definitions and properties related to Wick renormalization.
For motivation and further details we refer to~\cite[Chapter~1]{simon_euclidean_1974}.

\begin{definition}[Wick ordering]\label{def:wick}
Let $W$ be a Gaussian random variable with mean zero, and let $a = \E W^2$.
We define the first four Wick powers as
\[
\wick{W} = W, \quad
\wick{W^2} = W^2 - a, \quad
\wick{W^3} = W^3 - 3 a W, \quad
\wick{W^4} = W^4 - 6 a W^2 + 3 a^2,
\]
and for any $\beta > 0$ we define
\[
\wick{\sin(\beta W)} = \exp\!\left( \frac{\beta^2 a}{2} \right) \sin(\beta W), \quad
\wick{\cos(\beta W)} = \exp\!\left( \frac{\beta^2 a}{2} \right) \cos(\beta W).
\]
\end{definition}

\begin{theorem}[Wick formula]
Let $X$ and $Y$ be centred Gaussian random variables.
Then
\[
\E [\wick{X^n} \wick{Y^m}] = n! \I_{n=m} (\E [XY])^n
\]
for all $n, m \in \N$.
In particular $\E \wick{X^n} = 0$ for all $n > 0$.
\end{theorem}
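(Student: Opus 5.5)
The plan is to reduce everything to the one-dimensional Gaussian identity obtained by polarization. First, for $X$ and $Y$ jointly centred Gaussian, I consider the exponential generating function $\wick{\exp(sX)} = \exp(sX - s^2 a_X/2)$ with $a_X=\E X^2$. For the Hermite polynomials this is the standard generating identity
\[
\exp\bigl(sX - \tfrac{s^2}{2}a_X\bigr) = \sum_{n\geq 0} \frac{s^n}{n!}\wick{X^n},
\]
which is consistent with Definition~\ref{def:wick} for $n\leq 4$. I take this identity as the definition of $\wick{X^n}$ for general $n$, and check it against Definition~\ref{def:wick} by expanding to fourth order.

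Next I compute the joint generating function directly from the Gaussian moment generating function:
\[
\E\Bigl[\exp\bigl(sX - \tfrac{s^2}{2}a_X\bigr)\exp\bigl(tY - \tfrac{t^2}{2}a_Y\bigr)\Bigr]
= \exp\Bigl(\tfrac12\E[(sX+tY)^2] - \tfrac{s^2}{2}a_X - \tfrac{t^2}{2}a_Y\Bigr)
= \exp\bigl(st\,\E[XY]\bigr).
\]
Expanding both sides as power series in $s,t$ gives
\[
\sum_{n,m}\frac{s^n t^m}{n!\,m!}\,\E[\wick{X^n}\wick{Y^m}] = \sum_{n}\frac{(st)^n}{n!}(\E[XY])^n .
\]
Matching the coefficient of $s^n t^m$ yields $\E[\wick{X^n}\wick{Y^m}] = n!\,\I_{n=m}(\E[XY])^n$. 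Choosing $m=0$, so that $\wick{Y^0}=1$, gives $\E\wick{X^n}=0$ for $n>0$.

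The only point requiring care is justifying the exchange of expectation with the double power series, so that coefficients may be identified. The series converges absolutely, since $\sum_n |s|^n |\wick{X^n}|/n!$ is dominated by $\exp(|s||X| + s^2 a_X/2)$ via the bound $|He_n(x)|\leq n!\,e^{|x|+1/2}$-type estimates, or more simply via the integral representation of Hermite polynomials. Gaussian exponential moments of $|X|$ and $|Y|$ are finite, so Fubini applies and both sides are entire functions of $(s,t)$, which justifies the identification of coefficients. This integrability check is the step I expect to be the only obstacle, and it is routine.
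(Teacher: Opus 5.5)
Your proof is correct and complete. The paper gives no proof of its own: it states the identity as classical and refers to Simon's monograph. Your generating-function argument is the standard derivation.

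The one step you leave slightly vague is fine as it stands. Expanding $e^{sX}e^{-s^2a_X/2}$ as a Cauchy product and applying the triangle inequality termwise gives exactly
\[
\sum_{n\geq 0}\frac{|s|^n}{n!}\,\bigl|\wick{X^n}\bigr| \leq \exp\bigl(|s||X| + \tfrac{s^2}{2}a_X\bigr),
\]
and together with $\E[e^{|s||X|+|t||Y|}]<\infty$ this justifies Fubini and the identification of coefficients.

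You were right to assume $(X,Y)$ jointly Gaussian. The statement leaves this implicit, but it is necessary. Take $X\sim N(0,1)$ and set $Y=X$ on $\{|X|<c\}$ and $Y=-X$ elsewhere. Then $Y$ is standard Gaussian with $Y^2=X^2$, so $\E[\wick{X^2}\wick{Y^2}]=2$, whereas $|\E[XY]|<1$ for every $c\in(0,\infty)$.
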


The Gaussian free field is almost surely distribution-valued in two or more dimensions.
The Wick powers and sines/cosines of the GFF are however well-defined
as limits of smooth approximations.
See for example~\cite[Section~5]{mourrat_global_2017}.

For brevity, we use the integral notation for convolutions
even though $\dx$ and $\dy$ refer to the normalized point measures on~$\Lambda_\delta$.

\begin{theorem}[Moments of two-dimensional GFF]\label{wick-power-estimate}
Let $\Lambda_\delta \subset \R^2$,
and let $W$ be distributed according to the Gaussian free field on~$\Lambda_\delta$
with mass~$m^2$.
Then for $k \in \N$ and $p < \infty$ we have
\[
\E \norm{\wick{W^k}}_{\Holder^{-\kappa}(\rho)}^p \lesssim 1 + \abs{\log(m)}^{pk/2}.
\]
Here we assume that $\rho$ is sufficiently integrable over $\Lambda_\delta$.
\end{theorem}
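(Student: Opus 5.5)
The plan is to reduce the statement to moment bounds on the Littlewood--Paley blocks of $\wick{W^k}$, via Gaussian hypercontractivity and a Kolmogorov/Besov embedding argument. The weight $\rho$ absorbs the growth of the number of lattice cells with the volume.

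First I would pass from $\Holder^{-\kappa}(\rho)$ to a Besov norm with finite integrability. For $q$ large and $\kappa'<\kappa$, the Besov embedding of Theorem~\ref{thm:besov properties}(d) (weighted analogue) gives
\[
\norm{f}_{\Holder^{-\kappa}(\rho)}\lesssim\norm{f}_{B^{-\kappa+2/q}_{q,q}(\rho')},
\]
with $\rho'$ a slightly modified power of $\rho$, chosen so that $\rho'^{\,q}$ is integrable. It then suffices to take $q\geq p$ and bound
\[
\E\norm{\wick{W^k}}_{B^{-\kappa'}_{q,q}(\rho')}^q=\sum_j2^{-jq\kappa'}\,\delta^2\sum_x\rho'(x)^q\,\E\abs{\Delta_j\wick{W^k}(x)}^q ;
\]
the case $p<q$ follows by Jensen's inequality. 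Since $\Delta_j\wick{W^k}(x)$ lies in the $k$-th Wiener chaos, hypercontractivity (Nelson's estimate) gives
\[
\E\abs{\Delta_j\wick{W^k}(x)}^q\leq(q-1)^{qk/2}\bigl(\E\abs{\Delta_j\wick{W^k}(x)}^2\bigr)^{q/2}.
\]
Thus everything reduces to a second-moment bound uniform in $x$, $\delta$ and $L$.

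For the second moment I would use the Wick formula, writing $\Delta_j=\eta_j(x-\cdot)*$:
\[
\E\abs{\Delta_j\wick{W^k}(x)}^2=k!\iint\eta_j(x-y)\eta_j(x-z)\,G_m(y-z)^k\,\dy\,\dz,
\]
where $G_m$ is the lattice Green's function of $m^2-\Laplace_\delta$ on the periodic torus. The key input is the kernel bound
\[
\abs{G_m(y)}\lesssim 1+\abs{\log m}+\log\bigl(1+(\abs y\vee\delta)^{-1}\bigr),
\]
uniform in $\delta$ and $L\geq1$ (the periodic sum of whole-space kernels decays exponentially beyond scale $m^{-1}$, contributing only $O(1+\abs{\log m})$). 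Raising this to the $k$-th power, I would split it as $(1+\abs{\log m})^k$ plus the purely logarithmic local part. Since $\norm{\eta_j}_{L^1}\lesssim1$, the local part integrates against $\eta_j\otimes\eta_j$ to give at most $j^k\lesssim_\kappa 2^{j\kappa'}$. Hence
\[
\E\abs{\Delta_j\wick{W^k}(x)}^2\lesssim 2^{j\kappa'}(1+\abs{\log m})^k .
\]

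Inserting this into the sum, the factor $2^{-jq\kappa'}2^{jq\kappa'/2}$ is summable in $j$, and the $x$-sum is controlled by $\int\rho'^{\,q}<\infty$. This gives $\E\norm{\cdot}^q\lesssim(1+\abs{\log m})^{qk/2}$, and Jensen's inequality then yields the claim with exponent $pk/2$. The main obstacle is the uniform Green's function estimate on the lattice torus. It has to hold simultaneously in $\delta$, $L$ and small $m$, including the zero mode, which contributes $1/(m^2L^2)$. I would try to handle that last term by noting that it is harmless for fixed $L$, and otherwise by isolating the zero-mode Gaussian separately. I would be prepared to find that this step forces a genuine $L$-dependence or a modified power of $\abs{\log m}$.
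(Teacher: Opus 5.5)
Your route is the same as the paper's. Both arguments embed into a weighted $B^{\cdot}_{q,q}$ space, then use hypercontractivity and translation invariance to reduce to one Littlewood--Paley block at one point. Both then apply the Wick formula to get $k!\iint\eta_j\eta_j G_m^k$, and finish with a logarithmic bound on $G_m$ plus kernel scaling. One small point: to get $(1+j)^k$ for the local part uniformly in $\delta$, you also need $\norm{\eta_j}_{L^\infty}\lesssim 2^{2j}$, not only $\norm{\eta_j}_{L^1}\lesssim1$. That is routine.

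The step that fails is the one you flagged, and neither of your remedies can repair it. You claim the periodization contributes only $O(1+\abs{\log m})$; this holds only for $mL\gtrsim1$. For $mL<1$ the image sum is of order $L^{-2}\int_{\R^2}G^{\R^2}_m=(mL)^{-2}$, which is the variance $\E\abs{\hat W(0)}^2$ of the zero mode. This is not an artefact of the method. Write $\chi_{-1}$ for the (even) symbol of $\Delta_{-1}$, with $\chi_{-1}(0)=1$. Then
\[
\E\abs{\Delta_{-1}\wick{W^k}(0)}^2
=k!\sum_{\ell_1,\dots,\ell_k\in\hat\Lambda}\chi_{-1}(\ell_1+\dots+\ell_k)^2\prod_{i=1}^k\E\abs{\hat W(\ell_i)}^2
\geq k!\,(mL)^{-2k}.
\]
The inequality holds because every term is nonnegative, and the term $\ell_1=\dots=\ell_k=0$ alone gives the right-hand side. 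Hence, for $p\geq2$,
\[
\E\norm{\wick{W^k}}_{\Holder^{-\kappa}(\rho)}^p\geq\rho(0)^p\,\E\abs{\Delta_{-1}\wick{W^k}(0)}^p\gtrsim\rho(0)^p(mL)^{-pk}.
\]
This is incompatible with $1+\abs{\log m}^{pk/2}$ as $m\to0$ at fixed $L$.

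So the zero mode is not harmless at fixed $L$: it is a power of $m^{-1}$, not a logarithm. Isolating it cannot help either, because it genuinely enters the norm. What your argument actually proves is the stated bound under the extra condition $mL\geq1$, where the image sum is $O(1)$. For general $m,L$ it proves the bound $(1+\abs{\log m}+(mL)^{-2})^{pk/2}$. The paper's proof has the same gap. It asserts that the Green's function is of order $\log(m\abs{x-y})$ near the diagonal and rapidly decaying away from it, but on the torus this holds only when $mL\gtrsim1$.
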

\begin{proof}
We first use Theorem~\ref{thm:besov properties}
to go from $\Holder^{-\kappa}$ to $B_{p,p}^{-\kappa/2}$ for $p$ large enough.
We then use hypercontractivity, translation invariance, and Wick formula
to estimate
\begin{equation}
\begin{split}
\E \norm{\wick{W^k}}_{B_{p,p}^{-\kappa}(\rho)}^p
&\lesssim \sum_{j \geq -1} 2^{-pj\kappa} [\E (\Delta_j \wick{W^k}(0))^2]^{p/2} \int \rho(x)^p \dx\\
&\lesssim \sum_{j \geq -1} 2^{-pj\kappa} \int K_j(-x) K_j(-y) \E [\wick{W^k}(x) \, \wick{W^k}(y)]^{p/2} \dx \dy\\
&\lesssim \sum_{j \geq -1} 2^{-pj\kappa} \int K_j(-x) K_j(-y) G(x,y)^{pk/2} \dx \dy,
\end{split}
\end{equation}
where $K_j$ are the kernels of the Littlewood--Paley blocks $\Delta_j$,
and $G$ is the Green's function of the free field.

The Green's function is of order $\log(m \abs{x-y})$ near the diagonal
and rapidly decaying away from it~\cite[Chapter~7]{glimm_quantum_1981}.
On the other hand, the kernels $K_j$ are uniformly bounded in $L^1$ norm
and of order $2^j$ in $L^\infty$ norm.
Therefore we can interpolate
\begin{equation}
\norm{K_j}_{L^{q/(q-1)}} \lesssim 2^{j/q}.
\end{equation}
These considerations imply that
\begin{equation}
\E \norm{\wick{W^k}}_{B_{p,p}^{-\kappa}(\rho)}^p
\lesssim \sum_{j \geq -1} 2^{-pj\kappa} (\abs{\log(m)}^{pk/2} + 2^{j/q}).
\end{equation}
By choosing $q$ large enough, we get the claim.
\end{proof}

\begin{theorem}[Moments of two-dimensional GFF sine]\label{moment-gff-sine}
Let $W$ and~$\rho$ be as above, and $\beta^2 < 4\pi$.
Then
\[
\E \norm{\wick{\sin(\beta W)}}_{\Holder^{-\beta^2 / 4\pi -\kappa}(\rho)}^p
\lesssim (m\wedge1)^{-p \beta^2 / 4\pi}
\]
for sufficiently large $1 < p < \infty$. The same bound holds for the cosine. 
\end{theorem}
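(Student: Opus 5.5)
The plan is to follow the proof of Theorem~\ref{wick-power-estimate}, with the polynomial Wick powers replaced by imaginary Gaussian multiplicative chaos. It suffices to treat $\wick{e^{i\beta W}}$, since $\wick{\cos(\beta W)}$ and $\wick{\sin(\beta W)}$ are its real and imaginary parts.

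\emph{Step 1 (reduction to Besov norms).} First use Theorem~\ref{thm:besov properties}(d),(f) to bound
\[
\norm{\wick{\sin(\beta W)}}_{\Holder^{-\beta^2/4\pi-\kappa}(\rho)}
\lesssim \norm{\wick{\sin(\beta W)}}_{B^{-\beta^2/4\pi-\kappa/2}_{p,p}(\rho)}
\]
for $p$ large depending on $\kappa$. Taking expectations of the $p$-th power then gives
\[
\E\norm{\cdot}^p_{B^{-s}_{p,p}(\rho)}\le \sum_j 2^{-pjs}\int \rho(x)^p\,\E|\Delta_j\wick{\sin(\beta W)}(x)|^p\,dx.
\]

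\emph{Step 2 (moment bound).} Hypercontractivity is not available for chaos of infinite order, so the $p$-th moment of $\Delta_j\wick{e^{i\beta W}}(x)$ is instead computed directly, with $p$ even. Writing it as a $p$-fold integral against the kernels $K_j(x-y_k)$, the Gaussian computation gives
\[
\E\prod_{k}\wick{e^{i\varepsilon_k\beta W(y_k)}}=\exp\Bigl(-\beta^2\sum_{k<l}\varepsilon_k\varepsilon_l\, G(y_k,y_l)\Bigr),
\]
where the signs satisfy $\varepsilon_k=\pm1$ with $\sum_k\varepsilon_k=0$. The Green's function satisfies $G(y,y')=-\frac{1}{2\pi}\log\bigl((m\wedge1)|y-y'|\bigr)+O(1)$ near the diagonal and decays rapidly beyond distance $m^{-1}$. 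Inserting this, the integrand is bounded by the Coulomb-gas weight
\[
\prod_{k<l}\bigl((m\wedge1)|y_k-y_l|\bigr)^{\varepsilon_k\varepsilon_l\beta^2/2\pi}.
\]

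\emph{Step 3 (main obstacle).} The charge-neutral Coulomb-gas integral, localized at scale $2^{-j}$ by the $K_j$, has to be controlled. I would use the Onsager/electrostatic inequality, as in the standard sine-Gordon chaos estimates of Bauerschmidt--Bodineau and Gubinelli--Meyer cited in Section~\ref{sec:sine-gordon measure}, which reduces the neutral gas to a product over nearest-neighbour pairings:
\[
\sum_{k<l}\varepsilon_k\varepsilon_l\log|y_k-y_l|\ge -\tfrac12\sum_k\log(\text{distance from } y_k \text{ to its nearest neighbour}).
\]
The condition $\beta^2<4\pi$ makes $|y|^{-\beta^2/4\pi}$ locally integrable in $d=2$, so the integral is finite. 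Scaling shows that each factor contributes $2^{j\beta^2/4\pi}$ and $(m\wedge1)^{-\beta^2/4\pi}$. The result is
\[
\E|\Delta_j\wick{\sin(\beta W)}(x)|^p\lesssim \bigl(2^{j}/(m\wedge1)\bigr)^{p\beta^2/4\pi},
\]
uniformly in $\delta$ and $x$ by translation invariance, and this is the estimate I expect to be hardest.

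\emph{Step 4 (conclusion).} Summing over $j$ with the factor $2^{-pj(\beta^2/4\pi+\kappa/2)}$ gives a convergent geometric series. Integrability of $\rho^p$ then yields the bound $(m\wedge1)^{-p\beta^2/4\pi}$, uniformly in $\delta$ and $L$. The cosine case is identical.
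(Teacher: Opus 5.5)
\paragraph{Comparison with the paper.} Your route differs from the paper's, and it is viable once the errors listed below are fixed.

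The paper never expands $p$-th moments. It writes $W=\int_0^\infty J_t\,\diff B_t$ and uses It\^o's formula to get the martingale representation $\wick{\sin(\beta W_\infty)}=\beta\int_0^\infty\wick{\cos(\beta W_t)}\,\diff W_t$. It then observes that on the lattice $\abs{\wick{\cos(\beta W_t)}}\le e^{\beta^2c_t/2}$ deterministically. Hence the quadratic variation of $\Delta_j\wick{\sin(\beta W)}(z)$ is bounded by the deterministic two-point quantity $\beta^2\int_0^\infty\iint C^{\ast2}_t(x-y)\abs{K_j(x)K_j(y)}\inorm{t}^{\beta^2/2\pi}\dx\dy\diff t$.
\begin{itemize}
\item The time integral produces $\abs{x-y}^{-\beta^2/2\pi}$ plus an $m$-dependent constant.
\item $\beta^2<4\pi$ is used once, for integrability of $\abs{x-y}^{-\beta^2/2\pi}$ against $\abs{K_j}\otimes\abs{K_j}$.
\item Burkholder--Davis--Gundy turns the quadratic-variation bound into the $p$-th moment bound.
\end{itemize}
That argument needs only two-point kernel estimates and is uniform in $\delta$ and $L$ for free.

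Your Coulomb-gas/Onsager route is the classical technique for imaginary chaos and gives the same per-block bound $\bigl(2^j/(m\wedge1)\bigr)^{p\beta^2/4\pi}$. However, it puts all the difficulty into a $p$-point electrostatic inequality for the massive lattice Green's function, uniform in $\delta$, $L$, $m$. The martingale argument avoids this entirely.

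\paragraph{Errors in Steps 2--3.}
\begin{itemize}
\item \emph{Non-neutral configurations.} For the massive field these do not vanish; already $\E\,\wick{e^{i\beta W(y_1)}}\,\wick{e^{i\beta W(y_2)}}=e^{-\beta^2G(y_1,y_2)}\neq0$. So the restriction $\sum_k\varepsilon_k=0$ is false. The neutral Coulomb-gas Onsager inequality does not cover these terms, since it relies on neutrality: $-\log\abs{\cdot}$ is only conditionally positive definite.
\item \emph{Pairwise substitution.} Replacing $G$ by $-\frac1{2\pi}\log\bigl((m\wedge1)\abs{y_k-y_l}\bigr)+O(1)$ is not an upper bound for $e^{\beta^2G}$ when opposite charges are at distance $\gg m^{-1}$. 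This is allowed by the tails of $K_j$ when $L\gg m^{-1}$. The substitution also discards the positive-definite structure that Onsager needs.
\item \emph{Direction of the inequality.} Your displayed Onsager inequality is reversed. What is true, and what you actually use, is $\sum_{k<l}\varepsilon_k\varepsilon_l\log\abs{y_k-y_l}\le-\frac12\sum_k\log r_k+Cp$.
\item \emph{Integrability condition.} It is not local integrability of $\abs{y}^{-\beta^2/4\pi}$, which would allow $\beta^2<8\pi$. Mutual nearest neighbours produce $\abs{y_k-y_l}^{-\beta^2/2\pi}$. You need the combinatorial lemma that $\prod_k r_k^{-a}$, with $a=\beta^2/4\pi<1$, is integrable against $\prod_k\abs{K_j(x-y_k)}$. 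This is exactly where $\beta^2<4\pi$ enters.
\end{itemize}

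The first two issues are fixed together by applying Onsager directly to the positive-definite kernel $G$:
\begin{enumerate}
\item Smear each charge uniformly over a ball of radius $r_k$, taken as half its nearest-neighbour distance and at least $\delta$, giving densities $\varrho_k$.
\item Use $\sum_{k,l}\varepsilon_k\varepsilon_l\langle\varrho_k,G\varrho_l\rangle\ge0$.
\item Prove $\abs{\langle\varrho_k,G\varrho_l\rangle-G(y_k,y_l)}\le C$ uniformly in $\delta$, $L$, $m$.
\end{enumerate}
This treats all sign configurations at once.
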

\begin{proof}
We express $W = W_{\infty} = \int^{\infty}_0 J_t \diff B_t$,
where $J_t$ is as in Section~\ref{sec:sine-gordon measure}
and $B_t$ is an $L^2 (\Lambda)$ cylindrical Brownian motion.
It follows from the Itô formula that
$\wick{\sin (\beta W_\infty)} = \beta \int_0^\infty \wick{\cos (\beta W_t)} \diff W_t$.
Its quadratic variation is given by
\begin{equation}
\begin{split}
&\mathrel{\phantom{=}}
\beta^2 \left\langle \int^{\infty}_0 \wick{\cos (\beta W_t)} (x) \diff W_t (x),
    \int^{\infty}_0 \wick{\cos (\beta W_t)} (y) \diff W_t (y)
\right\rangle\\
& = \beta^2 \int_0^\infty C^{\ast 2}_t (x - y) \wick{\cos (\beta W_t) (x)}
    \wick{\cos (\beta W_t) (y)} \diff t.
\end{split}
\end{equation}
Here $C^{\ast 2}_t$ is the convolution kernel of $J_t^2$,
as in Section~\ref{sec:sine-gordon measure}.

Using this we can deduce that the Littlewood--Paley blocks satisfy
\begin{equation}
\begin{split}
&\mathrel{\phantom{=}} \langle (\Delta_j \wick{\sin (\beta W)}) (z) \rangle\\
&= \beta^2 \int^{\infty}_0 \!\!\! \iint C^{\ast 2}_t(x - y) K_j (x - z) K_j (y - z)
    \wick{\cos(\beta W_t) (x)} \wick{\cos (\beta W_t) (y)} \dx \dy \diff t\\
&\leq \beta^2 \int^{\infty}_0 \!\!\! \iint C^{\ast 2}_t(x - y)
    \abs{K_j (x- z) K_j (y - z)} \inorm{t}^{\beta^2 / 2 \pi} \dx \dy \diff t\\
&= \beta^2 \int^{\infty}_0 \!\!\! \iint C^{\ast 2}_t(x - y)
    \abs{K_j (x) K_j (y)} \inorm{t}^{\beta^2 / 2 \pi} \dx \dy \diff t,
\end{split}
\end{equation}
where $K_j$ is again the convolution kernel of the block $\Delta_j$.

Now it is not hard to show that (see~\cite[Appendix~A]{bauerschmidt_logsobolev_2021})
\begin{equation}
\begin{split}
\int_0^\infty C^{\ast 2}_t(x - y) \inorm{t}^{\beta^2 / 2 \pi} \diff t
&\lesssim \int_0^\infty e^{- \frac{\abs{x - y}^2}{2 t} - m t} \inorm{t}^{\beta^2 / 2 \pi} \diff t\\
&\lesssim \frac{1}{\abs{x-y}^{\beta^2 / 2 \pi}} + m^{\beta^2 / 2 \pi}.
\end{split}
\end{equation}
Now for $2 / q > \beta^2 / 4 \pi$ we can estimate
\begin{equation}
\begin{split}
&\mathrel{\phantom{=}} \iint
    \left( \frac{1}{| x - y |^{\beta^2 / 2 \pi}} + m^{\beta^2 / 2\pi} \right)
    \abs{K_j (x) K_j (y)} \inorm{t}^{\beta^2 / 2 \pi} \dx \dy\\
&\lesssim \norm{K_j}^2_{L^q} + m^{-\beta^2 / 2 \pi} \norm{K_j}^2_{L^1}.
\end{split}
\end{equation}
From this we can conclude by applying the Burkholder--Davis--Gundy inequality:
for $\kappa > \beta^2 / 4 \pi$ and $p$ large enough, we have
\begin{equation}
\begin{split}
&\mathrel{\phantom{=}} \E \norm{\wick{\sin (\beta W)}}^p_{\Holder^{-\kappa}(\rho)}\\
&\lesssim \E \norm{\wick{\sin (\beta W)}}^p_{B_{p, p}^{- \kappa + \varepsilon / 2}(\rho)}\\
&\leq \sum 2^{- j p (\kappa + \varepsilon / 2)}
    \E \norm{\Delta_j \wick{\sin (\beta W)}}_{L^p (\rho)}^p\\
&\leq \sum 2^{- j p (\kappa + \varepsilon / 2)}
    \E \abs{\Delta_j \wick{\sin (\beta W)} (0) }^p\\
&\lesssim \sum 2^{- j p (\kappa + \varepsilon / 2)}
    (\E \langle \Delta_j \wick{\sin (\beta W)} (0) \rangle)^{p / 2}\\
&\lesssim \sum 2^{- j p (\kappa + \varepsilon / 2)}
    (\norm{K_j}^p_{L^q} + m^{-p\beta^2 / 4 \pi} \norm{K_j}^p_{L^1})\\
&\lesssim (m \wedge 1)^{-p \beta^2 / 4 \pi}.
\end{split}
\end{equation}
The proof for the $\cos$ is analogous. 
\end{proof}

\begin{lemma}[Wick binomial rule]\label{thm:wick binomial}
Let $W$ be sampled from the Gaussian free field $\nu$,
and assume that $v \in L^{2q}(\nu;\, B^{2\varepsilon}_{pk,p}(\rho))$ for some $p, q \in [1, \infty)$
and integrable weight $\rho$.
Then
\[
\wick{(W + v)^k} = \sum_{j=0}^k \binom k j \wick{W^j} v^{k-j}
\]
as an element of $L^q(\nu, B^{-\varepsilon}_{p,p}(\rho^k))$.
\end{lemma}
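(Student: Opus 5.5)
The identity is algebraic, so the plan is to first establish it for smooth (lattice) approximations and then pass to the limit in the stated topology. For fixed $\delta>0$ the lattice field $W$ is a finite-dimensional Gaussian vector with variance $a=\E W(x)^2$. For scalars, the Wick powers are the Hermite polynomials $\wick{W^k}=a^{k/2}H_k(W/\sqrt a)$. These satisfy the Appell property $\frac{\diff}{\diff w}\wick{w^k}=k\wick{w^{k-1}}$, which gives the exact Taylor expansion
\[
\wick{(w+v)^k}=\sum_{j=0}^k\binom kj \wick{w^j}\,v^{k-j}
\]
for all real $w,v$. One can check this directly on the four explicit formulas of Definition~\ref{def:wick}, for instance $\wick{(w+v)^2}=w^2+2wv+v^2-a$. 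Applying it pointwise at each $x\in\Lambda$ gives the identity on the lattice, with no limiting procedure needed.

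The remaining task is to show that both sides are genuine elements of $L^q(\nu,B^{-\varepsilon}_{p,p}(\rho^k))$, with a bound uniform in the regularization. This is what the lemma is really used for, e.g.\ in Lemma~\ref{Lem:coercive} and Lemma~\ref{lem:estimate-psi-phi42}, where $\wick{(W+I)^k}$ is expanded. The argument goes term by term.
\begin{itemize}
\item For $j\ge1$, I would apply the multiplication estimate of Theorem~\ref{thm:besov properties}(a) with $s=-\varepsilon$ and $t=2\varepsilon$, so that $s+t>0$. This gives
\[
\norm{\wick{W^j}v^{k-j}}_{B^{-\varepsilon}_{p,p}(\rho^k)}\lesssim\norm{\wick{W^j}}_{B^{-\varepsilon}_{p_1,p}(\rho^j)}\norm{v^{k-j}}_{B^{2\varepsilon}_{p_2,p}(\rho^{k-j})}.
\]
\item Choosing $p_2=pk/(k-j)$, the power $v^{k-j}$ is handled by iterating the same product estimate, now with positive regularity. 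This gives $\norm{v^{k-j}}_{B^{2\varepsilon}_{pk/(k-j),p}}\lesssim\norm v^{k-j}_{B^{2\varepsilon}_{pk,p}}$.
\item Hölder's inequality in $\omega$ then closes the estimate. We use the hypothesis $v\in L^{2q}$ and the fact that $\wick{W^j}$ has moments of all orders in negative Besov spaces (Theorem~\ref{wick-power-estimate}, after using Theorem~\ref{thm:besov properties}(f) to pass between $\Holder^{-\kappa}$ and $B^{-\varepsilon}_{p_1,p}$).
\end{itemize}

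If $v$ is not smooth, one approximates $v$ by smooth $v_n$ converging in $L^{2q}(B^{2\varepsilon}_{pk,p})$. The estimate above is multilinear and continuous, so both sides converge in $L^q(B^{-\varepsilon}_{p,p}(\rho^k))$, and the identity persists.

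The main obstacle is the bookkeeping of the integrability exponents. The $L^{2q}$ norm of $v$ appears to the power $k-j$, so a literal Hölder argument would need $v\in L^{2q(k-j)}$. I would resolve this the way the paper uses the lemma: either assume (or note) that $v$ lies in all $L^r$, or read the hypothesis as requiring $\norm{v}^{k}\in L^{2q}$. The factor $\wick{W^j}$ contributes only an arbitrary moment, so the split is $1/q=1/(2q)+1/(2q)$.
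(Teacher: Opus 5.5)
Your proof is correct, and its core is the same as the paper's. The paper's entire proof is the remark that the cases $k\le 3$ can be checked by hand, which is exactly your pointwise algebraic computation on the lattice; the Appell/Hermite argument just does all $k$ at once, with the Wick ordering on the left taken with respect to the variance of $W$.

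What you add is the term-by-term verification, via Theorem~\ref{thm:besov properties}(a), that both sides lie in $L^q(\nu,B^{-\varepsilon}_{p,p}(\rho^k))$. The paper does not do this. Your exponent bookkeeping is right: the stated hypothesis $v\in L^{2q}$ suffices only for $k\le 2$. For $k=3$ the terms $Wv^2$ and $v^3$ require more moments of $v$, so the statement should be read with a strengthened moment assumption, as you propose. This is harmless for the paper, which only uses the expansion pathwise, e.g.\ in Lemma~\ref{Lem:coercive}.
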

\begin{proof}
We only use the cases $k \leq 3$ that are computable by hand.
\end{proof}

\bibliographystyle{plainnat}
\bibliography{SdNLW-ref}

\end{document}